\theoremstyle{acmplain}
\newcommand{\eat}[1]{}
\newcommand{\cut}[1]{}
\algrenewcommand{\algorithmiccomment}[1]{\hspace{2mm}\blue{$\triangleright$ #1}}
\algrenewcommand\algorithmicrequire{\textbf{Input:}}
\algrenewcommand\algorithmicensure{\textbf{Output:}}
\newcommand\blfootnote[1]{%
  \begingroup
  \renewcommand\thefootnote{}\footnote{#1}%
  \addtocounter{footnote}{-1}%
  \endgroup
}
\newcommand{\distrsensitive}[1]{{\ensuremath{\mathcal{D}_{sens, #1}}}}
\newcommand{\prob}{\textsf{OptSRepr$^2$}}
\newcommand{\subprob}{\textsf{OptSRepair}}
\newcommand{\rsrepair}{RS-repair}
\newcommand{\srepair}{S-repair}
\newcommand{\commonlhs}{common LHS}
\newcommand{\consensusfd}{consensus FD}
\newcommand{\lhschain}{LHS chain}
\newcommand{\lhsmarriage}{LHS marriage}
\newcommand{\rc}{\ensuremath{\rho}}
\newcommand{\fdsatisfies}{\ensuremath{\models}}
\newcommand{\rcsatisfies}{\ensuremath{\models}}
\newcommand{\globalilp}{\textsf{GlobalILP}}
\newcommand{\commonlhsreduction}{\textsf{CommonLHSReduction}}
\newcommand{\consensusreduction}{\textsf{ConsensusReduction}}
\newcommand{\candidateset}{candidate set}
\newcommand{\allsrepair}[1]{\ensuremath{\mathcal{A}_{#1}}}
\newcommand{\candset}[1]{\ensuremath{\mathcal{C}_{#1}}}
\newcommand{\distrdominates}{\succ}
\newcommand{\floor}[1]{{\ensuremath{\lfloor #1 \rfloor}}}
\newcommand{\ceil}[1]{{\ensuremath{\lceil #1 \rceil}}}
\newcommand{\postclean}{\textsf{PostClean}}
\newcommand{\dpalgo}{\textsf{LhsChain-DP}}
\newcommand{\lpgreedy}{\textsf{LP $+$ GreedyRounding}}
\newcommand{\lprepr}{\textsf{LP $+$ ReprRounding}}
\newcommand{\scalableheuristic}{\textsf{FDCleanser}}
\newcommand{\relation}{\ensuremath{R}}
\newcommand{\fdset}{\ensuremath{\Delta}}
\newcommand{\reduce}{\textsf{Reduce}} 
\def\fd#1#2{\mathsf{#1\rightarrow #2}}
\newcommand{\recurrence}[1]{\ensuremath{F_{#1}}}
\newcommand{\repreq}{\ensuremath{=_{Repr}}}
\newcommand{\reprdom}{\ensuremath{\distrdominates_{Repr}}}
\newcommand{\reprinsert}{\ensuremath{\textsf{ReprInsert}}}
\newcommand{\bilp}{\textsc{ILP-Baseline}}
\newcommand{\bapprox}{\textsc{VC-approx-Baseline}}
\newcommand{\bdp}{\textsc{DP-Baseline}}
\newcommand{\bmuse}{\textsc{MuSe-Baseline}}
\newcommand{\bpostclean}{\textsc{PostClean}}
\newcommand{\expglobalilp}{\textsc{\globalilp{}}}
\newcommand{\explpgreedy}{\textsc{\lpgreedy{}}}
\newcommand{\explprepr}{\textsc{\lprepr{}}}
\newcommand{\expdpalgo}{\textsc{\dpalgo{}}}
\newcommand{\expscalableheuristic}{\textsc{\scalableheuristic{}}}
\newcommand{\nretained}[1]{\ensuremath{N_{\text{ret}}(#1)}}
\newcommand{\ndeleted}[1]{\ensuremath{\#_{\text{del}}(#1)}}
\newcommand{\schema}{\ensuremath{\mathcal{S}}}
\newcommand{\dom}{\ensuremath{Dom}}
\def\rrr#1\\{\par
\medskip\hbox{\vbox{\parindent=2em\hsize=6.12in
\hangindent=4em\hangafter=1#1}}}
\definecolor{Gray}{gray}{0.85}
\newcolumntype{a}{>{\columncolor{Gray}}c}
\newcolumntype{e}{>{\columncolor{white}}c}
\newcolumntype{d}{>{\columncolor{Gray}\centering} p{1.9in}}
\newcommand{\red}[1]{{\color{red} {#1}}}
\newcommand{\blue}[1]{{\color{blue} {#1}}}
\newcommand{\paratitle}[1]{\paragraph*{#1}}
\newcommand{\benny}[1]{{\texttt{\color{green} Benny: [{#1}]}}}
\newcommand{\ag}[1]{{\texttt{\color{purple} Amir: [{#1}]}}}
\newcommand{\sr}[1]{{\color{magenta}{\tt SR: #1}}}
\newcommand{\revb}[1]{{\leavevmode\color{Magenta}{#1}}}
\newcommand{\revc}[1]{{\leavevmode\color{red}{#1}}}
\newcommand{\common}[1]{{\leavevmode\color{Plum}{#1}}}
\newtheorem{theorem}{Theorem}
\newtheorem{lemma}{Lemma}
\newtheorem{proposition}[lemma]{Proposition}
\newtheorem{example}[lemma]{Example}
\newtheorem{definition}{Definition}
\def\att#1{\mathsf{#1}}
\newcommand\vldbdoi{XX.XX/XXX.XX}
\newcommand\vldbpages{XXX-XXX}
\newcommand\vldbvolume{18}
\newcommand\vldbissue{2}
\newcommand\vldbyear{2024}
\newcommand\vldbtitle{\shorttitle} 
\newcommand\vldbavailabilityurl{https://github.com/louisja1/RS-repair}
\newcommand\vldbpagestyle{empty}
\newcommand{\yuxi}[1]{{\color{cyan}Yuxi: #1}}
\newcommand{\fs}[1]{{\color{orange}Fangzhu: #1}}
\begin{document}

\title{The Cost of Representation by Subset Repairs}



\author{Yuxi Liu$^{*}$}
\affiliation{Duke University}
\email{yuxi.liu@duke.edu}

\author{Fangzhu Shen$^{*}$}
\affiliation{Duke University}
\email{fangzhu.shen@duke.edu}

\author{Kushagra Ghosh}
\affiliation{Duke University}
\email{kushagra.ghosh@duke.edu}

\author{Amir Gilad}
\affiliation{Hebrew University}
\email{amirg@cs.huji.ac.il}

\author{Benny Kimelfeld}
\affiliation{Technion}
\email{bennyk@cs.technion.ac.il}

\author{Sudeepa Roy}
\affiliation{Duke University}
\email{sudeepa@cs.duke.edu}

\begin{abstract}
Datasets may include errors, and specifically violations of integrity constraints, for various reasons. 
Standard techniques for ``minimal-cost'' database repairing resolve these violations by aiming for minimum change in the data, and in the process, may sway representations of different sub-populations. For instance, the repair may end up deleting more females than males, or more tuples from a certain age group or race, due to varying levels of inconsistency in different sub-populations. Such repaired data can mislead consumers when used for analytics, and can lead to biased decisions for downstream machine learning tasks. 
We study the ``cost of representation'' in subset repairs for functional dependencies. In simple terms, we target the question of how many additional tuples have to be deleted if we want to satisfy not only the integrity constraints but also representation constraints for given sub-populations. We study the complexity of this problem and compare it with the complexity of optimal subset repairs without representations. While the problem is NP-hard in general, we give polynomial-time algorithms for special cases, and efficient heuristics for general cases. We perform a suite of experiments that show the effectiveness of our algorithms in computing or approximating the cost of representation.

%
%
\end{abstract}

\maketitle

\pagestyle{\vldbpagestyle}
\begingroup\small\noindent\raggedright\textbf{PVLDB Reference Format:}\\
Yuxi Liu, Fangzhu Shen, Kushagra Ghosh, Amir Gilad, Benny Kimelfeld, and Sudeepa Roy.
\vldbtitle. PVLDB, \vldbvolume(\vldbissue): \vldbpages, \vldbyear.\\
\href{https://doi.org/\vldbdoi}{doi:\vldbdoi}
\endgroup
\begingroup
\renewcommand\thefootnote{}\footnote{\noindent
This work is licensed under the Creative Commons BY-NC-ND 4.0 International License. Visit \url{https://creativecommons.org/licenses/by-nc-nd/4.0/} to view a copy of this license. For any use beyond those covered by this license, obtain permission by emailing \href{mailto:info@vldb.org}{info@vldb.org}. Copyright is held by the owner/author(s). Publication rights licensed to the VLDB Endowment. \\
\raggedright Proceedings of the VLDB Endowment, Vol. \vldbvolume, No. \vldbissue\ %
ISSN 2150-8097. \\
\href{https://doi.org/\vldbdoi}{doi:\vldbdoi} \\
}\addtocounter{footnote}{-1}\endgroup

\ifdefempty{\vldbavailabilityurl}{}{
\vspace{.3cm}
\begingroup\small\noindent\raggedright\textbf{PVLDB Artifact Availability:}\\
The source code, data, and/or other artifacts have been made available at \url{\vldbavailabilityurl}.
\endgroup
}

\pagestyle{\vldbpagestyle}

\section{Introduction}\label{sec:intro}
\blfootnote{* Both authors contributed equally.}%
Real-world datasets may violate integrity constraints that are expected to hold in the dataset for various reasons such as noisy sources, imprecise extraction, integration of conflicting sources, and synthetic data generation. 
Among the basic data science tasks, repairing such noisy data is considered as one of the most time-consuming and important steps, as it lays the foundation for subsequent tasks that rely on high-quality data~\cite{KandelPHH11,MullerLWPTLDE19}. 
Therefore, the problem of automatic data repairing has been the focus of much prior work~\cite{Afrati2009,RekatsinasCIR17,ChuIP13,BertossiKL13,FaginKK15,GiladDR20,GeertsMPS13-LLunatic,KrishnanWWFG16,livshits2020computing}. 
The existing literature on data repairing typically has the following high-level goal:
given a source database that violates a set of integrity constraints, find the closest database that satisfies the constraints. 
This problem has been studied in many settings, by varying the type of integrity constraints~\cite{bohannon2007conditional,koudas2009metric,ChomickiM05}, changing the database in different forms~\cite{ChuIP13,KolahiL09,BertossiKL13, livshits2020computing,ChomickiM05,LopatenkoB07}, and even relaxing it in various manners~\cite{RekatsinasCIR17,SaIKRR19}. 
Among these, a fundamental and well-studied instance of the problem is that of data repairing with tuple deletions (called {\em subset repair} or {\em \srepair}) \cite{KolahiL09, LivshitsKR18, livshits2020computing, 10.14778/3407790.3407809, DBLP:conf/icdt/MaslowskiW14}, when the integrity constraints are Functional Dependencies (FDs), e.g., a zip code cannot belong to two different cities.  
The aim is to find the minimum number of deletions in the input database so that the resulting database satisfies the FDs. 
In particular, previous work~\cite{LivshitsKR18, livshits2020computing} has 
characterized the tractable and intractable cases in this setting. 

Database repair may drastically sway the proportions of different populations in the data, specifically the proportions of various sensitive sub-populations.
For instance, the repair may end up deleting more females than males, or more tuples from a certain age group, race, or disability status, as illustrated in the sequel in an example. This may happen simply by chance while selecting one of many feasible optimal repairs, or, due to varying levels of inconsistency in different sub-populations in the collected data, which may arise due to varying familiarity with the data collection technology, imputing varying amounts of missing data in different groups due to concerns for ageism and other biases, etc. If data is repaired agnostic to the representations, it can lead to biased decisions for downstream (e.g., ML prediction) tasks \cite{DBLP:conf/edbt/SchelterHKS20, GrafbergerSS21, DBLP:conf/icde/GuhaKSS23}, and mislead consumers when used for analytics.  
Thus, it is only natural to require that the process of data repair that ensures the satisfaction of FDs, will also guarantee desired representation of different groups.  
Recent works have proposed ways of ensuring representation of different sub-populations (especially sensitive ones) and diversity in query results by considering different types of constraints~\cite{ShetiyaSAD22,LiMSJ23}.  
However, to our knowledge, such aspects have not been considered in the context of data repairing. 
\begin{figure}[ht]
\centering
\begin{minipage}{1\linewidth}
    \centering
    \begin{subfigure}{0.45\linewidth}
        \includegraphics[width=0.8\linewidth]{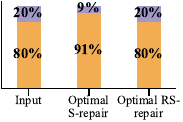}
        \centering
        \caption{Distribution of disability}
        \label{fig:motivating_example_a}
    \end{subfigure}%
    \hfill
    \begin{subfigure}{0.4\linewidth}
        \includegraphics[width=0.6\linewidth]{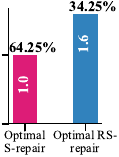}
        \centering
        \caption{Cost of representation}
        \label{fig:motivating_example_b}
    \end{subfigure}
\end{minipage}
\centering
\begin{minipage}{0.8\linewidth}
    \centering
    \includegraphics[width=0.5\linewidth]{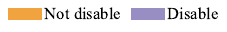}
\end{minipage}
\caption{Disability status and cost of representation for ACS data. In (b), the \% above the bars indicates remaining tuples, and the numbers on the bars show the deletion ratio relative to optimal S-repair (that does not satisfy representation).}
\label{fig:motivating_example}
\Description[]{}
\end{figure}

\cut{
\begin{figure}[t]
        \centering\includegraphics[width = 0.8\columnwidth]{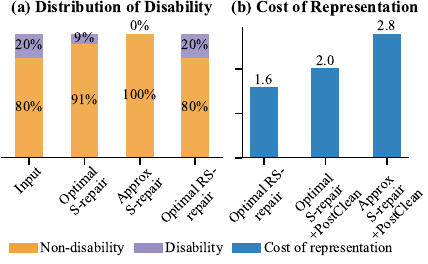}
        \caption{ACS data after repairing}        \label{fig:motivating_example}
        \Description[]{}
\vspace{-0.5cm}
\end{figure}
}
In this work, we introduce the problem of understanding the {cost of representation in data repair}, that considers representations of sub-populations in the repair as a first-class citizen just like the cost of changing the data. 
%
Our framework allows for FDs as well as a novel type of constraint called {\em representation constraint (RC)}. This constraint specifies the proportions of the different values in a sensitive attribute (e.g., gender, race, disability status), such as  ``{\em the percentage of population with disability should be at least 20\%}'', ``{\em the percentages of population with disability vs. non-disability should be exactly 20\%/80\%}'', etc. 
We then formally define the problem of finding an {\em optimal representative \srepair\ (\rsrepair\ for short)} as finding the minimum number of deletions required to satisfy {\em both} the FDs and the RC.
We devise algorithms that consider RCs as an integral part of the repairing process and compare the cost of optimal \srepair\ with the cost of optimal \rsrepair\ to understand the cost of representation that one has to pay for maintaining representations of a sensitive attribute in a dataset after repair.


\def\att#1{\mathsf{#1}}

\begin{example}\label{ex:intro-distribution}  
Consider a noisy dataset constructed from the ACS PUMS, with data collected from the US demographics survey by the Census Bureau. It is expected to satisfy a set of 9 FDs: $\att{Citizenship}$ to $\att{Nativity}$, 
$\att{State}$ to $\att{Division}$, etc.  (more details in \Cref{sec:experiments}). We focus on the sensitive attribute $\att{Disability}$, 
which has a $20\%/80\%$ distribution of disable and non-disable people in the dataset, where the data for the disabled group is 4-times more noisy than the non-disabled group.

%
    Suppose a data analyst wants to repair the dataset by subset repair (\srepair) such that all FDs are satisfied. One can write a simple integer linear program (ILP) to find the maximum \srepair{} so that for each pair of tuples that violate an FD, at least one is removed.  
    Although the 
    ILP method finds an optimal (maximal-size) \srepair,  as shown in \Cref{fig:motivating_example_a},  a side-effect of this repair is the drop in the proportion of people with disabilities 
    from $20\%$ to $9\%$, which makes a minority group less represented further. 
    ILP is not a scalable method for \srepair; if we were to use an efficient approximate algorithm~\cite{livshits2020computing, 10.14778/3407790.3407809}, \cut{($2344$ tuples deleted)} 
    no people with disability would stay in the repaired data. Consequently, both \srepair s may introduce biases against people with disability in downstream tasks that use the repaired datasets.

\end{example}    

\cut{
The above example shows that representation-agnostic \srepair{} methods can badly affect representations of groups. 
{\em Can we enforce a desired representation, say  $20\%/80\%$ as in the original dataset, after obtaining an \srepair{} by removing additional tuples?} In this setting 
for optimal \srepair, we can indeed remove additional fewest non-disable tuples to change the ratio from  $9\%/91\%$ to  $20\%/80\%$ (called ``PostClean'', \Cref{subsec:postclean}), which is referred to as the ``Optimal \srepair + PostClean'' in  \Cref{fig:motivating_example_b} \ag{Remove refs}. 
\Cref{fig:motivating_example_b} shows that ``Optimal \srepair'' retains 64.25\% of the original tuples, but does not satisfy the representation. 
``Optimal \srepair + PostClean'' retains only 27.5\% of the tuples, and removes twice as many tuples as the vanilla optimal \srepair. 
A natural question to ask is if this drastic reduction on the number of remaining tuples (from 64.25\% to 27.5\%) is an artifact of the  ``Optimal \srepair + PostClean'' method, or if we can do better. 
\Cref{fig:motivating_example_a,fig:motivating_example_b} \ag{Remove refs} show that an ``Optimal \rsrepair'' (studied in this paper) that maintains the representation of $20\%/80\%$ in the $\att{Disability}$ attribute and considers FDs and RC together, will delete $1.6$ times more tuples than the\cut{$1.6$ times of tuples than the} optimal \srepair\ and retain 34.25\% of the tuples of the original database, which is better than applying ``PostClean'' on an optimal \srepair. Although retaining only 34.25\% of the original tuples after the optimal \rsrepair\ still looks pessimistic, this is the 
{\em cost of representation} we have to pay in this setting. In other words, for this dataset and noise level, if we insist on preserving the 20\%/80\% ratio of representation for disable and non-disable people in the dataset, we can retain at most 34.25\% of the original tuples, and will be forced to remove 1.6 times the number of tuples than a representation-agnostic optimal \srepair. If we desire to retain more tuples, we have to compromise, either the FDs or the representation after repair\footnote{More settings varying noise and representations are given in the experiments (\cref{sec:experiments}) and full version \cite{fullversion}.}. 
For the approximate \srepair{} method, no disable tuples survive, so  ``Approx \srepair{} + PostClean'' cannot result in the desired distribution for the $\att{Disability}$ attribute until all tuples are removed (2.8 times of optimal \srepair{}).
}

The above example shows that representation-agnostic \srepair{} methods can badly affect representations of groups. 
Our aim is to answer the following question: {\em Can we both obtain an \srepair{} and enforce a desired representation?} \Cref{fig:motivating_example} shows that ``Optimal \srepair'' retains 64.25\% of the original tuples, but does not satisfy the representation, while it is possible to obtain an \rsrepair{} that retains 34.25\% of the tuples. Although retaining only 34.25\% of the original tuples after the optimal \rsrepair\ looks pessimistic, this cost is necessary in this example. In other words, for this dataset and specific noise, if we insist on preserving the 20\%/80\% ratio of representation among disabled and non-disabled people respectively, we can retain at most 34.25\% of the original tuples, and will be forced to remove
1.6 times the number of tuples than a representation-agnostic optimal \srepair\footnote{More settings varying noise and representations are given in the experiments (\cref{sec:experiments}) and full version \cite{fullversion}.}.
Since computing the optimal \rsrepair\ is not always possible, in this paper, we study multiple other approaches for obtaining the \rsrepair s.



\cut{
Consider the ACS PUMS
    containing the data collected from the U.S. demographics survey by the Census Bureau. 
    For this example, consider a sample of size 4000 
    with 11 attributes to satisfy a set of FDs: ${\tt CIT}(citizen)\rightarrow{\tt NATIVITY}$, ${\tt ST}\text{(state)}\rightarrow{\tt DIVISION}$, ${\tt DIVISION}\rightarrow{\tt REGION}$, ${\tt POBP}\text{(place of birth)}\rightarrow{\tt WAOB}\text{(world area of b-}$\\$\text{irth)}$, ${\tt RAC2P}\text{(detailed race code)}\rightarrow{\tt RAC1P}\text{(race code)}$. 
    We add 10\% random noise to this sample by changing some of the values included in the FDs to create a dataset containing 21\% violations.  
    Suppose that we want to preserve the distribution of a sensitive attribute {\tt Disability} (disability status), which has a $20\%/80\%$ distribution of disable and non-disable in the noisy input data, respectively.

    We employ the ILP-based solution from prior work~\cite{LivshitsKTIKR21} to find the optimal subset repair (in short, S-repair) that minimizes the number of deletions while satisfying the set of FDs. 
    Here an optimal S-repair minimizing tuple deletions removes $1473$ tuples to resolve the violations of the FDs. 
    Although the number of deleted tuples is minimized by the ILP method, as shown in \Cref{fig:motivating_example}(a),  a side-effect of this repair is the drop in the proportion of people with disabilities which diminishes from $20\%$ to $9\%$.
    ILP is not a scalable method for S-repair - if we were to use an efficient approximate algorithm~\cite{livshits2020computing}\cut{($2344$ tuples deleted)}, the obtained repair would not contain any member of the minority group, i.e., no people with disability would stay in the repaired data. Both S-repairs may introduce biases in downstream tasks that use the repaired datasets. 
    The optimal \rsrepair\ in this case (in terms of minimizing the number of deletions) that also maintains the representation of $20\%/80\%$ in the {\tt DISABILITY} attribute will delete $1.6$ times of tuples than the vanilla \srepair. 
Conversely, taking the existing optimal or approximate \srepair\ and post-processing it  with additional fewest deletions to make is satisfy the proportions in the {\tt DISABILITY} attribute would lead to $2$ to $2.8$ times additional deletions. 

}

\cut{
    \yuxi{we decide to divide these two paragraphs and move the discussion of repair model to Section 3.}
    \paratitle{Choice of repair model for cost of representation}
    As an alternative approach to \srepair\ that is considered in this paper, 
    prominent 
    ``update-repair'' methods that update values of some attributes to satisfy the constraints vary in their treatment of maintaining representations.
       For example, Holoclean~\cite{RekatsinasCIR17} treats the FDs as soft constraints focusing on data distribution, and in the above example, does not make any changes in the data and does not eliminate any violations. 
        On the other hand, while the repair given by Nadeef~\cite{DallachiesaEEEIOT13} 
        by value updates preserves the distribution of $\att{Disability}$ as this attribute does not participate in the input FDs, updating individual cells generate tuples that are invalid or unlikely. \benny{I am not sure about this point... it seems like we are claiming that update repair is bad in general, and this is a tough point to make in the score of this paper. Update repairs are considered more practical.}
        In this example, Nadeef changed a {\tt (Asian, foreign-born)} person to {\tt (White, native)} while keeping the other attribute values the same {\tt (female, born in Philippines, lives in CA, DISABLE, $\cdots$)}. People in Philippines are likely to be Asian (Brown), and people born in Philippines should be {\tt ``foreign born''} not {\tt ``native''},  so the new tuple does not reflect reality. Deletion-based repair methods keep tuples from the original dataset and does not introduce new conflicts while repairing for the given constraints as shown in this example. For downstream applications where validity of the tuples is important, deletion-based repair methods may work better at the cost of smaller size of the database.  
    
    S-repairs have gone through extensive theoretical analysis and the complexity of optimal S-repair without the aspect of cost of representation is well understood \cite{LivshitsKR18, KolahiL09, 10.14778/3407790.3407809}. 
    This allows us to begin our investigation of finding \rsrepair s by building on the foundations from prior work. This work focuses on theoretical analysis of \rsrepair\ giving useful insights before more complicated questions are explored.\benny{Too apologetic... No need to argue that we are simplistic. Talk about establishing the basic theory of the problem in a simple variant, while we expect richer variants to be studied in future work.} The investigation of Other \benny{Capital... please go over the grammatical errors} models of repair as well as effects on downstream tasks under both deletion and updates are important future work (\Cref{sec:conclusion}). 
    \yuxi{the above change is to justify deletion semantics in meta-review and respond to R1-D2, (partly) R1-D3, and R3-D3.}
}

    \paragraph{Contributions.} In this paper, we focus on understanding the cost of representations for \srepair{}\footnote{We note that ``update repair'' methods, e.g., Holoclean~\cite{RekatsinasCIR17} and Nadeef~\cite{DallachiesaEEEIOT13}, have their own limitations or challenges when considering the cost of representations as discussed in 
    \Cref{sec:conclusion}.}, where the number of deleted tuples is used to measure the repair quality. Algorithms and complexity of \srepair{}, which is NP-hard in general, has been studied by multiple prior works~\cite{LivshitsKR18,GiladDR20,10.14778/3407790.3407809,LopatenkoB07}. 
Whenever finding an optimal (largest) S-repair is an intractable problem, so is the problem of finding an optimal RS-repair. The complexity of finding an optimal S-repair has been studied by \citet{LivshitsKR18, livshits2020computing}, who established a complete classification of complexity (dichotomy) based on the structural properties of the input FD set. We show that finding an RS-repair can be hard even if the FDs are such that an optimal \srepair\ can be computed in polynomial time. 

Next we investigate the complexity of computing an optimal \rsrepair{} for special cases of FDs and RC.
We present a polynomial-time dynamic-programming-based algorithm for a well studied class of FDs, namely the \emph{LHS-chains}~\cite{DBLP:journals/jcss/LivshitsKW21,LivshitsK21}, when the domain of the sensitive attribute is bounded (e.g., gender, race, disability status). 
For the general case, we  phrase the problem as an ILP, and devise heuristic algorithms that produce 
    \rsrepair s by rounding the ILP and by using the algorithm for LHS-chains.
    
    Finally, we perform an experimental study using three\cut{two} real-world datasets. We demonstrate the effect of representation-agnostic \srepair{}s on the representations of the sensitive attribute and show that optimal \rsrepair{}s delete $1\times$ to $2\times$ tuples compared to optimal \srepair{}s in \Cref{subsec:exp_cost_repr}, depending on how the noises are distributed. 
    \cut{compare the size of the optimal RS-repairs and the optimal S-repairs.} We conduct a thorough comparison between our algorithms, existing baselines, and a version of these baselines with post-hoc processing that further deletes tuples until the RC is satisfied,  and  analyze the quality and runtime of different approaches. We show that representation-aware subset-repair algorithms can find superior \rsrepair s in terms of the number of deletions.
In summary, our contributions are as follows.
\emph{(1)} We introduce the problem of finding an optimal RS-repair as a way of measuring the cost of representation.
\emph{(2)} We present complexity analysis of the problem.
\emph{(3)} We devise algorithms, including polynomial-time and ILP algorithms with optimality guarantees, and heuristics.
\emph{and (4)} We conduct a thorough experimental study of our solutions and show their effectiveness compared to the baselines.

    Due to space limitation, all proofs appear in the full version~\cite{fullversion}. 

\eat{
\yuxi{@all, please take a look, need to enrich the intro and add citations.}

Accompanied by the explosion of big data and AI, especially large language model (LLM), much of the data analysis work relies heavily on the intrinsic characteristics of the data. However, high-volume data collection presents challenges, and inconsistencies in terms of integrity constraints (ICs) are unavoidable. These inconsistencies may arise due to imprecise sources, imperfect measurements, or other factors, and the process to resolve these inconsistencies is usually referred to date repair.

In this paper, we focus on one type of integrity constraint, namely functional dependencies (FDs), and explore one of the minimal-cost data repair techniques known as subset repair (S-repair). The techniques \cite{}, targeting at ensuring consistency on a given set of FDs by removing tuples from the relation (equivalently retaining a subset of the relation), fail to preserve the value distribution of some sensitive attributes, like the scenario in \cref{example:motivating}.
\fs{the holoclean example, }
}

\cut{
\begin{example}\label{example:motivating}
    ACS PUMS\footnote{more details in \Cref{sec:experiments}} is a dataset derived from an annual demographics survey program. We utilize \subprob{} in \cite{livshits2020computing} to compute the optimal subset repair for 30 repairing tasks. These inputs vary in the number of rows, the content of relation and desired functional dependency set, but all of them have 80\% 'Native-Born' records and 20\% 'Foreign-Born' records before the repairing. As shown in \cref{fig:motivating_example}, the vertical dashed line in orange shows that the ratio of 'Native-Born' to 'Foreign-Born' is 4 in all original relations. However, all these 30 repairing tasks provide subset repairs whose distribution of NATIVITY drifts significantly. In the worst case, $\frac{|\text{Native-Born}|}{|\text{Foreign-Born}|}$ is larger than 20. 
    \begin{figure}[!h]
        \centering\includegraphics[scale=0.5]{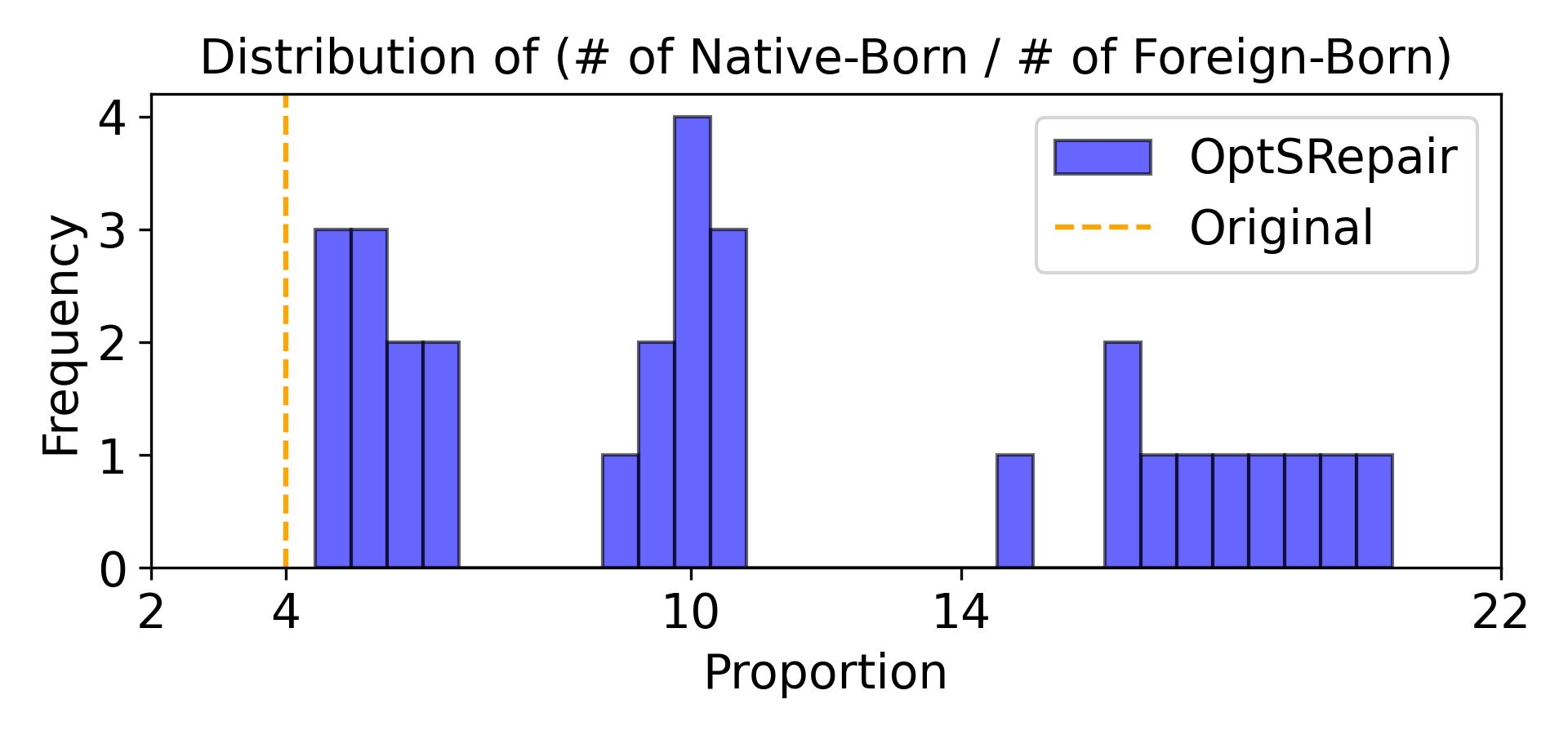}
        \caption{$\frac{|\text{Native-Born}|}{|\text{Foreign-Born}|}$ in original relation and after \subprob{}}
        \label{fig:motivating_example}
    \end{figure}
\end{example}
}


\eat{
Ensuring the representations of sensitive attributes, such as sex or race, is of vital importance in data repairs. Off-representations on these attributes after the repairing can lead to significantly misleading the decision-making processes based on the cleaned dataset. Such a repairing procedure might mislead the usages of the cleaned dataset. And the occurance of off-representations on the sensitive attributes will complex the decision-making processes, which further makes the data analysis results less convincing or less explanable. Therefore, it is never too much to ensure the representations in terms of the sensitive attributes, such as sex, race or so on, in data repairs, because it is always one of the starting steps for any work that involves data. Therefore, prioritizing representation constraints in data repairs is essential, as it forms the foundation for any data-related work. For sure, the other side of the coin is the associated cost led by preserving the representations in data repairs.

\textbf{Our Contributions.}
We formalize the representation of a sensitive attribute based on its frequency in the relation (or in the repair). Specifically, the representation constraints are defined as a set of atomic conditions, each requiring the normalized frequency of a sensitive attribute value to be at least a certain proportion. For example, we can set the representation constraint as "at least $80\%$ native-born and at least $20\%$ foreign-born in the S-repairs" for the scenario of \cref{example:motivating}, if the need of representation is to preserve the value distribution of 'NATIVITY'. Our algorithms are then designed to compute subset repairs that retain as many tuples as possible while preserving the representations of sensitive attributes.

Even though the focus of most related work on subset repairs is not on preserving representation, we propose a highly adaptable post-cleaning algorithm that helps achieving any given representation constraints on any given subset repairs. This algorithm is ready to be deployed by any existing repairing systems as the final step of their procedures, and it is proved to fulfill the representation constraints perfectly with minimum additional deletions. However, we observe scenarios where the post-cleaning process deletes a significant number of tuples to meet representation constraints, which is not ideal to the objectives of data repairs. This degradation is primarily caused by the biased decisions made by the existing systems during their repair procedures. Furthermore, there is no way to dodge the bullet if we treat the repair procedures as black boxes and consider the representation constraints as an afterthought.

Therefore, we incorporate representation consideration into the procedure of data repairing. Although the problem of computing the optimal S-repairs with representation constraints is NP-hard in general, we propose a collection of polynomial-time algorithms for tractable cases and efficient heuristics for general cases. They provide more insights for the trade-off between the gain of representations and the cost of representations in experiments. 

The remainder of this paper is organized as follows: \Cref{sec:background} introduces the terminology and notations, as well as the background of optimal S-repairs without representation constraints. \Cref{sec:model} formalizes the concept of representation and the concept of representative repair. \Cref{sec:complexity} presents the theoretical result as well as the polynomial-time algorithm. \Cref{sec:algorithm} provides various end-to-end choices of algorithms that handles various scenarios. \Cref{sec:extension} discusses the extensions of our problem and algorithms in different aspects. \Cref{sec:experiments} demonstrates our findings empirically. Finally, we discuss the related work in \Cref{sec:related} and make the conclusion in \Cref{sec:conclusion}.

\blue{phrases: \\ we consider data representation as a constraint, which makes it a first-class citizen for constraint-based data repair.\\
 theory and practice to solve a novel problem\\
 
}
}

\section{Preliminaries}\label{sec:background}
In this section, we present the background concepts and notations used in the rest of the paper. 

A relation (or table) $\relation{}$ is a set of tuples over a relation schema $\schema = (A_1, \dots, A_m)$, where 
$A_1, \dots, A_m$ are the attributes of $\relation$. 
A tuple $t$ in \relation\ maps each $A_i$, $1\leq i\leq m$, to a value that we denote by $t[A_i]$. This is referred to as a {\em cell} in $R$ in the sequel.  
We use $|\relation|$ to denote the number of tuples in $\relation$. 
The domain of an attribute $A_i$, denoted by $\dom(A_j)$, is the range of values that $A_i$ can be assigned to by $t$. 
Abusing notation, we denote by $t[X]$ the {\em projection} of tuple $t$ over the set $X$ of attributes, i.e., 
$t[X] = \pi_{X}{t}$.

A functional dependency (FD) is denoted as $X \rightarrow Y$, where  $X$ and $Y$ are disjoint sets of attributes. $X$ is termed the left-hand side (LHS), and $Y$ is the right-hand side (RHS) of the FD. A relation \relation{} satisfies $X \rightarrow Y$ if every pair of tuples that agree on $X$ also agree on $Y$. Formally, for every pair $t_i, t_j$ in \relation{}, if $t_i[X] = t_j[X]$, then $t_i[Y] = t_j[Y]$. 
A relation \relation{} satisfies a set \fdset{} of FDs, denoted $ \relation{} \models \fdset{}$, if it satisfies each FD from  $\fdset{}$.
When $\fdset{}$ is clear from the context, we refer to \relation{} as {\em clean} (respectively, {\em noisy}) if it satisfies (respectively, violates) $\fdset{}$. Without loss of generality\cut{(wlog.\benny{No need for period. Or say w.l.o.g. Are we using this shorthand notation at all?})}, we assume that the RHS $Y$ of each FD is a single attribute, otherwise we break the FD into multiple FDs. Note that the LHS $X$ can contain multiple attributes.
We also assume that the FDs $X \rightarrow Y$ are non-trivial, i.e., $Y \notin X$. 



Two types of database repairs have been mainly studied. A {\em subset repair (\srepair)} \cite{LivshitsKR18, KolahiL09, GiladDR20,10.14778/3407790.3407809}  changes a noisy relation by removing tuples, while an {\em update repair} \cite{livshits2020computing, RekatsinasCIR17, GeertsMPS13-LLunatic} changes values of cells. Each of the two has pros and cons. While the update repair retains the size of the dataset, it may generate invalid tuples (as discussed in \Cref{sec:intro}).
An \srepair\ uses original tuples, but at the cost of losing others. The complexity of computing an optimal \srepair\ is well understood \cite{LivshitsKR18}, whereas the picture for update repairs is still quite partial~\cite{LivshitsKR18, KolahiL09}. 
Hence, we focus on \srepair s and leave the update repairs for future work.
%
Formally, given \relation{} and $\fdset{}$, an \srepair\ is a subset of tuples\footnote{We note that in prior work \cite{LivshitsKR18, DBLP:conf/icdt/MaslowskiW14}, an \srepair{} has been defined as a ``{\em maximal}'' subset $R' \subseteq R$ such that $R' \fdsatisfies \fdset$. We consider even non-maximal subsets as valid \srepair s since in our problem, additional tuples from \srepair s may have to be removed to satisfy both FDs and representations.} $R' \subseteq \relation{}$ such that $R' \models \fdset{}$. $R'$ is called an {\em optimal subset repair} (or {\em optimal \srepair})
if for all \srepair{}s $R''$ of \relation{} given $\fdset{}$, $|R'| \geq |R''|$ (there is a weighted version when there is a weight $w(t)$ associated with each input tuple $t$). 


\paragraph{Computing optimal \srepair s.
}\label{sec:opt-s-repair}


\citet{LivshitsKR18} proposed an exact algorithmic characterization (dichotomy) for computing an optimal \srepair{}.
Moreover, it showed that when 
a specific procedure is 
not able to return an answer, the problem is NP-hard for the input $\fdset{}$ in data complexity \cite{Vardi82}. We briefly discuss these concepts and algorithms as we will use them in the sequel. 

A {\em \consensusfd{}} $\emptyset \rightarrow A$ is an FD where the LHS is the empty set, which means that all values of the attribute $A$ must be the same in the relation.
A {\em \commonlhs{}} of an FD set $\fdset{}$ is an attribute $A$ shared by the LHS of all FDs in $\fdset{}$, e.g., $A$ is a \commonlhs{} in $\fdset{} = \{A \rightarrow B, AC \rightarrow D\}$.  
An {\em \lhsmarriage{}} is a pair of distinct left-hand sides $(X_1, X_2)$ such that  every FD in $\fdset{}$ contains either $X_1$ or $X_2$ (or both), and $cl_{\fdset{}}(X_1) = cl_{\fdset{}}(X_2)$, where $cl_{\fdset{}}(X)$ is the {\em closure} of $X$ under $\fdset{}$, i.e. all attributes that can be inferred starting with $X$ using $\fdset{}$. For instance, $(A, B)$ forms a \lhsmarriage{} in $\fdset{} = \{A \rightarrow B, B \rightarrow A, A \rightarrow C\}$ where $cl_{\fdset{}}(A) = cl_{\fdset{}}(B) = \{A, B, C\}$.

 \eat{
Using the above concepts, three simplification methods of $\fdset{}$ are proposed that preserve the complexity of 
computing the optimal \srepair\
and the polynomial-time solvable cases can be solved recursively by dynamic programming: (i) For a \consensusfd{} $\emptyset \rightarrow A$, return the maximum \srepair{} from 
where the input is $R_a$ and $\fdset{} - \{\emptyset \rightarrow A\}$
among all distinct $a \in \dom{(A)}$, where $R_a= \sigma_{A = a} R$. \blue{(ii) For a \commonlhs{} $A$, return the union of \srepair{}s returned by 
computing the optimal \srepair\ of $R_a$ wrt. $\fdset{}_{-A}$  
from different partitions, where $\fdset{}_{-A}$ removes $A$ from the LHS of every FD in $\fdset{}$. (iii) For a \lhsmarriage{} $(A, B)$, compute the cost of optimal repair 
$w_{a, b}$ as the optimal \srepair\ for $R_{a, b}$ and $\fdset{}_{-(A, B)}$, 
recursively, where $R_{a,b} = \sigma_{A = a, B = b} R$ and $\fdset{}_{-(A, B)}$ removes $A, B$ from every FD in $\fdset{}$. It formulates a bipartite matching instance on all distinct pairs of values $A = a, B = b$, where the edge $(a, b)$ is associated with the weight $w_{a, b}$. The edge set of a maximum matching represents an optimal \srepair{} (after union) for this instance. 
}

The following theorem by \citet{LivshitsKR18} assumes {\em data complexity} \cite{Vardi82} where the input relation $\relation$ is considered as the input, while the schema of $\relation$, \schema, and the set of FDs, \fdset\ are considered fixed. 

\begin{theorem}[Dichotomy of 
computing optimal an \srepair\
from \cite{LivshitsKR18}]\label{thm:optsrepair}
Let $\fdset{}$ be a set of FDs. 
\begin{enumerate}
    \item\label{enum:ptime} If $\fdset{}$ can be reduced to the empty set through the three reductions \consensusfd{}, \commonlhs{}, and \lhsmarriage{}, 
    an optimal \srepair\ 
    can be computed in polynomial time. 
    \item\label{enum:nphard} Otherwise, 
    computing an optimal \srepair\ 
    is NP-hard. 
\end{enumerate}
\end{theorem}

\begin{example}\label{eg:srepair-thm}
    $\fdset{}_1 = \{A \rightarrow B, AC \rightarrow D\}$ can be reduced to $\emptyset$ as follows: first to $\{\emptyset \rightarrow B, C \rightarrow D\}$ (by common LHS), then to $\{C \rightarrow D\}$ (by consensus FD), then to $\emptyset$ (by common LHS and then by consensus FD), hence the optimal S-repair 
    for $\fdset{}_1$ can be computed in polynomial time.
    On the other hand, for $\fdset{}_2 = \{A \rightarrow B, C \rightarrow D\}$ or for $\fdset{}_3 = \{A \rightarrow B, B \rightarrow C\}$, none of the simplifications can be applied. Thus, computing an optimal S-repair 
    for either $\fdset{}_2$ or $\fdset{}_3$ is NP-hard.
\end{example}
For (\ref{enum:ptime}), \cite{LivshitsKR18} provides a polynomial-time algorithm that uses dynamic programming to find an optimal \srepair. For cases that are proved to be NP-hard, \cite{LivshitsKR18} provides a 2-approximation for (\ref{enum:nphard}) by constructing a {\em conflict graph} (a graph on all tuples as the vertices where we add an edge between two tuples if they do not agree on some FD in $\fdset{}$), applying the 2-approximation algorithm for vertex cover~\cite{hartmanis1982computers} on this graph, and deleting tuples that appear in the vertex cover solution. 
}

Using the above concepts, three simplification methods for $\fdset{}$ are proposed, preserving the complexity of computing an optimal \srepair. This leads to a dichotomy: cases where computing an optimal \srepair\ is either polynomial-time solvable or NP-hard. In the polynomial-time cases, dynamic programming can be used to recursively find an optimal \srepair.

\section{Representative Repairs}
\label{sec:model}
In this section, we formally define the
problem of finding an optimal \rsrepair, and give an overview of  complexity and algorithms.
\subsection{Representation of a Sensitive Attribute}\label{sec:repr-sensitive}
\paratitle{Sensitive Attribute.} Without loss of generality, we denote the last attribute $A_s$ of \schema\ as the sensitive attribute\footnote{This initial study of cost of representations by subset repairs considers representations of sub-populations defined on a single sensitive attribute. Representations on a set of sensitive attributes will be an interesting future work (Section~\ref{sec:conclusion}).}. We denote the 
domain of $A_s$ in \relation{} 
as $\dom(A_s) = \{a_1, \cdots, a_k\}$, therefore $k = |\dom(A_s)|$ representing the size. A special case is when $A_s$ is binary, i.e., the sensitive attribute has two values: (1) a minority or protected group of interest (e.g., female, people with disability, and other underrepresented groups), and (2) the non-minority group or others. 

\eat{
\begin{definition}[Value Distribution of Sensitive Attribute]\label{def:distr-sensitive}
For the sensitive attribute $A_s$,  suppose the frequency of $a_\ell$ in $R$ is $n_\ell$, i.e., $n_{\ell} = |\{t \in R~:~ t[A_s] = a_\ell\}|$. The {\bf value distribution} of $A_s$ in $R$ comprises the values $a_\ell$ along with their normalized frequencies $n_\ell' = \frac{n_\ell}{|R|}$, and is represented as: 
$$\distrsensitive{R} =  
(a_1 : n_1', \dots, a_k : n_k') $$
The 
normalized frequency $n_\ell'$ %
of $a_{\ell}$ in $\distrsensitive{R}$ is denoted by $\distrsensitive{R}(a_\ell)$. \qed 
\end{definition}

\begin{example}\label{eg:distr-sensitive}
Let $R(A_1, A_2, A_3)$ be a table with sensitive attribute $A_3$ containing four tuples $\{(1, a, 3), (2, b, 5),
(3, c, 9), (4, d, 3)\}$. Then $\distrsensitive{R} = (3 : \frac{1}{2}, 5 : \frac{1}{4}, 9 : \frac{1}{4})$, and $\distrsensitive{R}(3) = \frac{1}{2}$, 
$\distrsensitive{R}(5) = \distrsensitive{R}(9) = \frac{1}{4}$. 
\end{example}
}



\cut{
    \begin{definition} [Representation Constraint] \label{def:representation_constraint}
    A {\em representation constraint} $\rc{}$ is a distribution of desired proportions of the values of the sensitive attribute values $A_s$ that the representative repair is expected to satisfy, i.e., 
    $$\rc{} = (a_1 : p_1, \dots, a_k : p_k),$$
    where $0 \leq p_1, \dots, p_k = 1$ and $\sum_{i \in [1,k]}{p_i} \leq 1$.\\
    We refer to the desired proportion of value $a_\ell$ as $\rc(a_\ell)$.
    \end{definition}
    
    \begin{definition} [Satisfying Representation Constraint] \label{def:representation}
    Given a relation $R$ with sensitive attribute $A_s$ and a representation constraint $\rc$, a subset $R' \subseteq R$ is said to satisfy \rc{}, denoted $R' \rcsatisfies \rc{}$, if the frequencies in $R'$ for $A_s$ conforms to the desired proportions of the values of $A_s$ in \rc{}, i.e., for every value $a_\ell$ of $A_s$, 
    \begin{equation}
    \distrsensitive{R'}(a_{\ell}) \geq |R'| \times \rc{}(a_\ell).\label{eq:satisfy-rc}
    \end{equation}
    
    There are implicit restrictions for $\rc$ in the following cases respectively:
    \begin{itemize}
        \item If exists $a_{\ell}$ s.t. $\distrsensitive{R'}(a_\ell) > |R'| \times \rc{}(a_\ell)$, then $\sum_{i}{p_i} < 1$.
        \item \textbf{Perfectly satisfying:} $\distrsensitive{R'}(a_\ell) = |R'| \times \rc{}(a_\ell)$ for all $a_{\ell}$ if and only if $\sum_{i}{p_i} = 1.$
    \end{itemize}
    \end{definition}
}

\begin{definition}[Representation Constraint]\label{def:rep_constraint}
Let $\mathcal{S}$ be a schema and $A_s$ a sensitive attribute. 
%
A \emph{lower-bound constraint} is an expression of the form $\%a \geq p$ where $a$ is a value and $p$ is a number in $[0,1]$. A \emph{Representation Constraint} (RC) 
$\rc{}$ is a finite set of lower-bound constraints.
A relation \relation{} satisfies $\%a \geq p$
if at least $p\cdot|R|$ of the tuples $t \in R$ satisfy $t[A_s] = a$. A relation \relation{} satisfies an RC $\rc{}$, denoted $\relation{} \rcsatisfies \rc{}$, if \relation{} satisfies every 
lower-bound constraint in $\rc{}$.

\end{definition}


We assume that \rc\ contains only constraints $\%a \geq p$ where $a$ is in the active domain of $A_s$; otherwise, for $p>0$ the constraint is unsatisfiable by any \srepair, and for $p=0$ the constraint is trivial and can be ignored. 
We also assume that $\rho$ has no redundancy, that is, it contains at most one lower-bound $\%a \geq p$ for every value $a$. We can also assume that the numbers $p$ in $\rc{}$ sum up to at most one, since otherwise the constraint is, again, infeasible. 

\rc{} is called an {\em exact RC} if $\sum_{i=1}^k p_i = 1$ because the only way to satisfy the individual constraints $\%a_i \geq p_i$ is to match $p_i$ exactly, i.e.,  $\%a_i = p_i$ for all $1 \leq i \leq k$. 
We refer to the lower-bound proportion $p_{\ell}$ of value $a_\ell$ 
in \rc{} by $\rc(a_{\ell})$.

For simplicity, our implementation restricts attention so that each proportion ($p_{\ell}$) in the input is a rational number, represented by an integer numerator and an integer denominator.
Moreover, if one does not specify a lower-bound constraint for some $a_{o}$, then we treat it as an trivial lower-bound constraint in the form of $\%a_o \geq 0$ or formally $\rc{(a_o)} = p_o = 0$. 






\cut{
\begin{table*}[ht]
\begin{tabular}{|l|c|l|c|}
\hline
\textbf{Name} & \textbf{Section} & \textbf{Description} & \textbf{Optimal/Heuristic} 
\\ \hline
\dpalgo{}           & \Cref{sec:poly-algo-chain} & \begin{tabular}[c]{@{}l@{}}Works when $\fdset{}$ is an LHS-chain and \rc{} is of fixed size using DP\end{tabular}     & optimal for LHS-chain $\fdset{}$ 
\\ \hline

\globalilp{}         & \Cref{subsec:ilp} & \begin{tabular}[c]{@{}l@{}}Models the problem as an ILP and finds exact solutions\end{tabular} & optimal                    
\\ \hline

\lpgreedy{}        & \Cref{subsec:lp}  & \begin{tabular}[c]{@{}l@{}}Exhaustively apply \commonlhsreduction{} and \consensusreduction{}, \\then apply \lpgreedy{} on each smaller instance\end{tabular}                                                                                                                                     & heuristic                  \\ \hline
\lprepr{}         & \Cref{subsec:lp}  & \begin{tabular}[c]{@{}l@{}}exhaustively apply \commonlhsreduction{} and \consensusreduction{}, \\then apply \lprepr{} on each smaller instance\end{tabular}                                                                                                                     & heuristic                  \\ \hline
\scalableheuristic{} & \Cref{subsec:scalable_heuristic} & \begin{tabular}[c]{@{}l@{}}exhaustively apply \commonlhsreduction{} and \consensusreduction{}, \\then apply \scalableheuristic{} on each smaller instance\end{tabular} & heuristic                  \\ \hline
\end{tabular}
\vspace{3mm}
\caption{A summary of our algorithms for \prob{$(\relation{}, \fdset{}, \rc{})$}}
\label{tab:end_to_end_options}
\end{table*}
}

\cut{
    \yuxi{I don't think we need to worry about this part then}
    \red{
    An immediate question arises regarding how to handle the precision of the fractions $p_1, \dots, p_k$ in $\rc$. In our implementation, as a heuristic in large input tables, we consider two decimal places in the representations of the $p_\ell$-s. For example, if $p_1 = 0.7796$ and $p_2 = 0.1204$, they are rounded to $0.78 = \frac{78}{100}$ and $0.12 = \frac{12}{100}$ respectively.
    
        While complex fractions like $0.7796$ are unlikely to occur if the \rc{} is specified separately, they may appear in scenarios where the \rc{} intends to preserve the original value distribution of $A_s$ in $R$, which may have 7796 and 1204 tuples respectively. Rounding to 2-decimal places may help in deleting fewer tuples in our algorithms, as it suffices for the number of final tuples in $R' \rcsatisfies \rc$ to be a multiple of some value less than or equal to $100$ ($90$ in this example) to satisfy the $\rc$ instead if a multiple of $\leq 10^4$ ($9000$ in this example) following the original precision. 
        
        This can be adapted (say, to 3- or 1-decimal places) with a trade-off between preserving more tuples in the final solutions and satisfying the precision in the original \rc{}. This rounding may not correspond to a normalized distribution summing up to 1.0, e.g., for $\rc{} = (a_1 : 0.3333, a_2 : 0.3333, a_3 : 0.3334)$.
    }
}

\begin{example}\label{eg:satisfy-rc}
Suppose that a relation \relation{} with the schema $\schema{} = (A_1, A_2, A_3)$ and the sensitive attribute $A_3$ contains four tuples $\{(1, a$ $, 3), (2, b, 5), (3, c, 9), (4, d, 3)\}$. The RC is $\rc{} = \{\%3 \geq \frac{1}{3}, \%5 \geq \frac{1}{3}, \%9 \geq \frac{1}{3}\}$. \relation{} does not satisfy \rc{}, but both the subset $R_1 = \{(1, a, 3)$, $(2, b, 5)$, $(3, c, 9)\}$ and the subset $R_2 = \{(2, b, 5), (3, c, 9), (4, d, 3)\}$ satisfy it. 
\end{example}

Next we define an \rsrepair\ 
for a set of FDs and an RC:
\begin{definition}[\rsrepair
]\label{def:repr_optsrepair}
Given a relation \relation{} with the sensitive attribute $A_s$, a set \fdset{} of FDs, and an RC \rc{}, a subset $R' \subseteq \relation{}$ is called an \rsrepair{} (representative subset repair) 
w.r.t. \fdset{} and \rc{} if:
\begin{itemize}[leftmargin=*]
\itemsep0em
    \item $R'$ is an S-repair of $R$, i.e., $R' \models \fdset{}$,
    \item $R'$ satisfies the RC $\rc{}$ on $A_s$, i.e., $R' \models \rc$.
\end{itemize}
We call $R'$ an {\em optimal \rsrepair} of $R$ w.r.t. $\Delta$ and $\rc$ 
if for all \rsrepair s $R''$ of $R$, we have $|R''| \leq |R'|$. 
\end{definition}


\begin{example}
    Continuing \Cref{eg:satisfy-rc}, if \fdset{} = $\{A_1 \rightarrow A_2\}$, then either $R_1$ or $R_2$ in can be an optimal RS-repair of $\relation{}$ w.r.t. \fdset{} and \rc{}.  
\end{example}


We study the problem of computing an optimal RS-repair. For our complexity analysis, we assume that $\mathcal{S}$ and $\fdset$ are fixed, and the input consists of the relation \relation{} and the RC $\rc{}$. 

    \paratitle{Choice of repair model for cost of representation} In this initial study of repairs with representation, we consider \srepair{} (deletion) as the repair model. 
    Multiple prior works have theoretically analyzed  \srepair{}s (without the cost of representation)~\cite{LivshitsKR18, KolahiL09, 10.14778/3407790.3407809, GiladDR20}, and the complexity of achieving an optimal \srepair\ based on the structure of the input FD set is well understood \cite{LivshitsKR18}. We focus on the framework and theoretical analysis for this simpler variant of the repair model with the representation criteria. Additionally, \srepair{}s keep tuples from the original dataset and do not introduce new combinations of values within the same tuple while repairing the data. We use the number of deletions to define the optimal \rsrepair\ as an extension for defining \srepair\ models from prior work. Extensions to other repair models and other cost functions (e.g., based on the effects on downstream tasks) are important and challenging future work (see \Cref{sec:conclusion}).
    
  \cut{ 
    The investigation of other variants, including other repair models and the
    effects on downstream tasks,
    are important future work (\Cref{sec:conclusion}).
    Last, corresponding to our choice of repair model, we use the number of deletions, as a standard metric in the literature of deletion-based approaches \cite{LivshitsKR18,GiladDR20,10.14778/3407790.3407809} 
    , to measure the cost of representations.
} 

\cut{
    \common{
    \paratitle{Choice of Repair Model}
    S-repairs were extensively explored in the literature of data repairing and their theoretical foundation is well-understood~\cite{LivshitsKR18,ChomickiM05,LopatenkoB07,FaginKK15,DBLP:journals/jcss/LivshitsKW21}. This allows us to begin our investigation of finding \rsrepair s by building on a model with a solid knowledge base. 
    Further, by only deleting tuples to obtain repairs, we guarantee that the obtained repair will only contain tuples from the original dataset and not modified ones that may not conform to reality, as shown by our example in \Cref{sec:intro} where the alternative cell updates model was used. 
    Consequently, we choose to begin our study of finding \rsrepair s with the tuple-deletion model and defer the exploration of more complex models and effects on downstream tasks to future work. 
    }
}


\subsubsection{NP-Hardness of Computing Optimal RS-Repairs}\label{sec:hardness}
In this section, we consider the relation $R$ and the RC \rc\ as inputs, while the schema \schema\ and the FD set $\fdset$ are fixed. 
We first note that since 
the problem of finding an optimal \rsrepair\ 
is a generalization of 
the problem of finding an optimal \srepair, 
as expected, in all cases where 
finding an optimal \srepair\
is NP-hard, 
finding an optimal \rsrepair\
is also NP-hard. 
\cut{
(formal proof in extended version \cite{fullversion}). \benny{This is very trivial. No need for proof, and no need for any formal result. It is enough is to say it by text.}
\sr{@Benny: the proof is not complex, but needs some care. for the simplest proof, you can simply set the lower bounds to zero in RC, but the RC will be a trivial RC that does not care about representation. The proof is for a non-trivial RC, where equal proprtions of 0 and 1 for senstive attribute is maintained. I think the proof can be extended to any valid RC with arbitrary proportions, at least any valid exact RC where the sum is 1. }
}
\Cref{thm:hardness} shows that computing an optimal \rsrepair\ is NP-hard even for a single FD. 
We prove this theorem by a reduction from 3-SAT
(proof in \cite{fullversion}).
\begin{restatable}{theorem}{hardnessthm}\label{thm:hardness}
     The problem of finding an optimal RS-repair is NP-hard already for $\mathcal{S}=(A,B,C)$ and $\Delta=\{A\rightarrow B\}$.
\end{restatable}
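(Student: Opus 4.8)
The plan is to reduce 3-SAT to the task of computing an optimal RS-repair for the fixed schema $\mathcal{S}=(A,B,C)$ and $\Delta=\{A\rightarrow B\}$, with $C$ as the sensitive attribute. The point is that all the hardness must come from the interaction between the FD and the RC: for $\{A\rightarrow B\}$ alone, $A$ is a \commonlhs{}, so an optimal \srepair\ is computable in polynomial time by \Cref{thm:optsrepair}; since $C$ does not occur in the FD, it is only the RC on $C$ that can make the problem intractable. The structural fact I would exploit is that, under $A\rightarrow B$, every \srepair\ is obtained by grouping $R$ by its $A$-value and, inside each $A$-group, retaining a subset of the tuples sharing a single $B$-value (one ``side'' of the group); across distinct $A$-values there is no interaction, since two tuples with different $A$-values never jointly violate $A\rightarrow B$. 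Thus the repair makes an independent binary-style choice per $A$-group, which is exactly what we will use to encode a truth assignment.

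Given a 3-CNF formula $\varphi$ over variables $x_1,\dots,x_n$ with clauses $\phi_1,\dots,\phi_m$, I would build $R$ as follows: use an $A$-value $v_i$ per variable $x_i$, two $B$-values $\mathsf{t},\mathsf{f}$, and a $C$-value $y_j$ per clause $\phi_j$. For each occurrence of the literal $x_i$ in $\phi_j$ add the tuple $(v_i,\mathsf{t},y_j)$, and for each occurrence of $\neg x_i$ in $\phi_j$ add $(v_i,\mathsf{f},y_j)$; no other tuples are needed, so $|R|=3m$. Set the RC to $\rho=\{\,\%y_j\geq \tfrac{1}{|R|}\ :\ j\in[m]\,\}$. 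This is a legal RC in the sense required earlier: every $y_j$ lies in the active domain of $C$ (it occurs in three tuples), the lower bounds are one per distinct value, and they sum to $m/|R|=1/3\leq 1$. The construction is clearly polynomial. (If one prefers a reduction that fixes the optimum \emph{size} rather than merely whether it is nonempty, one can add a padding $A$-group $\{(p,\mathsf{t},z)\}$ of fresh $C$-value $z$ with many tuples; this only changes the bookkeeping below.)

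For correctness I would argue two directions. First, inside the $A$-group $v_i$ an \srepair\ keeps tuples with a single $B$-value, hence it either commits to ``$x_i$ true'' (keeps a subset of the $\mathsf{t}$-side), ``$x_i$ false'' (keeps a subset of the $\mathsf{f}$-side), or keeps nothing from $v_i$; over all $i$ this is precisely a (partial) truth assignment $\tau$, and in particular no \srepair\ can honor both $x_i$ and $\neg x_i$. Second, any \srepair\ $R'$ satisfies $R'\subseteq R$, so $|R'|\leq|R|$, and therefore the constraint $\%y_j\geq 1/|R|$ is equivalent, for every \emph{nonempty} $R'$, to ``$R'$ contains at least one tuple with $C=y_j$'' (the required count is $\lceil |R'|/|R|\rceil=1$). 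Such a tuple is exactly a retained literal of $\phi_j$ that $\tau$ sets to true. Hence a nonempty RS-repair exists iff some assignment satisfies every clause, i.e.\ iff $\varphi$ is satisfiable; and in the satisfiable case, fixing a satisfying assignment and retaining \emph{all} tuples on the chosen sides gives a genuine RS-repair of size $\sum_i\max(|V_i^{\mathsf t}|,|V_i^{\mathsf f}|)>0$ (with the padding variant, a predictable size $N$). Since $\varnothing$ is always a trivial RS-repair, an algorithm that returns an optimal RS-repair decides 3-SAT by testing whether the returned repair is nonempty (resp.\ has size $\geq N$); the decision version with threshold $k=1$ is also in NP, hence NP-complete.

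The step I expect to be the real obstacle, and the one to write carefully, is making the RC thresholds translate cleanly into the purely combinatorial condition ``cover every clause at least once'' \emph{without} knowing the size of the repair in advance; this is exactly why each lower bound uses the denominator $|R|$, an a-priori upper bound on $|R'|$, so that the integer count forced by $\%y_j\geq 1/|R|$ is $1$ for any nonempty feasible repair but $0$ is impossible. The remaining bookkeeping—verifying that ``keep a subset of one side'' exhausts the \srepair s of $\{A\rightarrow B\}$, that cross-group choices cannot conjure additional feasible repairs, and that $\rho$ meets the stated well-formedness assumptions—is routine.
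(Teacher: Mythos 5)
Your proposal is correct and takes essentially the same route as the paper: a reduction from 3-SAT with variables encoded as $A$-values, truth values as $B$-values, and clauses as values of the sensitive attribute $C$, so that the FD forces a consistent assignment per variable and the RC forces every clause to be witnessed by a retained true literal. The only difference is in the RC and threshold: the paper uses the exact constraint $\%c_j \geq \frac{1}{m}$ (so its witness repair is trimmed to exactly one tuple per clause and the decision threshold is $T=m$), whereas your lower bounds of $\frac{1}{|R|}$ make any nonempty repair equivalent to ``each clause value appears at least once,'' letting you decide satisfiability by nonemptiness of the optimal RS-repair---an equivalent, slightly simpler bookkeeping of the same argument.
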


It is important to note that if \fdset\ contains a single FD, computing an optimal \srepair\ can be done in polynomial time
\cite{LivshitsKR18}.
The key distinction is on (the size of) the active domain of the sensitive attribute. \Cref{thm:poly-time-chain} in the next subsection describes a tractable scenario when the active domain of the sensitive attribute is bounded.

\subsection{Overview of Our Algorithms for Computing Optimal RS-Repairs}\label{sec:algo-reprepair}
As shown by Livshits et al. \cite{LivshitsKR18}, computing an optimal \srepair{} is poly-time solvable if $\fdset$ can be reduced to $\emptyset$ by repeated applications of three simplification processes: (i) consensus FDs (remove FDs of the form $\emptyset \rightarrow Y$), (ii) common LHS (remove attribute $A$ from $\fdset$, such that $A$ belongs to the LHS of all FDs in $\fdset$), and (iii) LHS marriage, which is slightly more complex. 
We will show that reduction to $\emptyset$ only by the first two simplification processes entails a polynomial time algorithm for computing optimal \rsrepair{}s when the sensitive attribute $A_s$ has a fixed number of distinct values (e.g., for common sensitive attributes gender, race, disability status, etc.). Before we formally state the theorem, we take a closer look at the class of FD sets that reduces to $\emptyset$ by the first two simplifications.

\begin{definition}\label{def:lhs-chain}
    An FD set $\Delta$ is an {\em LHS-chain} \cite{DBLP:conf/pods/LivshitsK17, LivshitsKR18} if for every two FDs $X_1 \rightarrow Y_1$ and $X_2 \rightarrow Y_2$, either $X_1 \subseteq X_2$ or $X_2 \subseteq X_1$ holds. 
\end{definition}
For instance, the FD set $\fdset_1 = \{A \rightarrow B, AC \rightarrow D\}$ is an LHS-chain. LHS-chains have been studied for \srepair{}s in prior work 
\cite{LivshitsKR18, DBLP:conf/pods/LivshitsK17}. \cite{DBLP:conf/pods/LivshitsK17} showed that the class of LHS-chains consists of precisely the FD sets where the \srepair{}s can be counted in polynomial time (assuming {\tt P $\neq$ $\#$P}). 
\cite{LivshitsKR18} observed that FD sets that form an LHS-chain can be simplified to the empty set by repeatedly applying simplifications on only the common LHS and the consensus FD. We show in the following proposition that the converse also holds: 
\begin{restatable}{proposition}{lhschainreductionprop}\label{prop:lhs-chain-reduction}
A set \fdset{} of FDs reduces to the $\emptyset$ by repeated applications of consensus FD and common LHS simplifcations if and only if $\fdset$ is an LHS-chain. 
\end{restatable}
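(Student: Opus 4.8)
The plan is to prove the two implications separately, each by a short induction, using throughout the fact that the common-LHS simplification is applicable to the current FD set only when some attribute $A$ occurs in the left-hand side of \emph{every} FD (and it then deletes $A$ from every left-hand side), while the consensus-FD simplification deletes one or more FDs whose left-hand side is $\emptyset$.

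For the ($\Leftarrow$) direction (which, as noted, is essentially from \cite{LivshitsKR18}), I would establish two facts. First, \emph{progress and termination}: if $\fdset$ is a nonempty LHS-chain and some FD has empty left-hand side, the consensus-FD rule applies; otherwise the left-hand sides of $\fdset$ are totally ordered by $\subseteq$ (this is exactly the LHS-chain condition), hence there is a minimum one $X^\ast \ne \emptyset$, and any attribute $A \in X^\ast$ then lies in every left-hand side, so the common-LHS rule applies with $A$. To see the process actually halts at $\emptyset$ (rather than merely getting stuck), I would track the nonnegative integer measure $\mu(\fdset) = |\fdset| + \sum_{(X\to Y)\in\fdset}|X|$: a consensus-FD step removes at least one FD and so decreases $\mu$ by at least $1$; a common-LHS step does not increase $|\fdset|$ (the operation is applied per FD, to a set) and decreases $\sum_X |X|$ by $|\fdset|\ge 1$; so iteration terminates, and neither rule is applicable only when $\fdset=\emptyset$. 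Second, \emph{closure}: deleting empty-LHS FDs from an LHS-chain leaves an LHS-chain; and if $A$ lies in every left-hand side, then for any two FDs $X_i\to Y_i$ and $X_j\to Y_j$ we have $X_i\subseteq X_j \iff X_i\setminus\{A\}\subseteq X_j\setminus\{A\}$, so deleting $A$ everywhere preserves comparability of left-hand sides; non-triviality is also preserved since $X_i\setminus\{A\}\subseteq X_i$ and $Y_i\notin X_i$. Together these give a reduction of any LHS-chain to $\emptyset$ using only the two rules.

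For the ($\Rightarrow$) direction — the new part — I would induct on the length $n$ of a reduction sequence $\fdset=\fdset_0\to\fdset_1\to\cdots\to\fdset_n=\emptyset$ that uses only consensus-FD and common-LHS steps. The case $n=0$ is vacuous. For $n\ge 1$, the tail $\fdset_1\to\cdots\to\fdset_n$ has length $n-1$, so by the inductive hypothesis $\fdset_1$ is an LHS-chain, and it remains to lift this through the single step $\fdset_0\to\fdset_1$. If that step is a consensus-FD step, then $\fdset_0=\fdset_1\cup F$ with $F$ a set of empty-LHS FDs; any two FDs of $\fdset_0$ still have comparable left-hand sides (two in $\fdset_1$ by hypothesis; otherwise one side is $\emptyset$, which is contained in everything), so $\fdset_0$ is an LHS-chain. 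If that step is a common-LHS step with attribute $A$, then $A$ lies in the left-hand side of every FD of $\fdset_0$ and $\fdset_1=\{(X\setminus\{A\})\to Y:(X\to Y)\in\fdset_0\}$; for FDs $X_i\to Y_i$ and $X_j\to Y_j$ of $\fdset_0$, their images in $\fdset_1$ have comparable left-hand sides, say $X_i\setminus\{A\}\subseteq X_j\setminus\{A\}$, and since $A\in X_i\cap X_j$ this yields $X_i\subseteq X_j$; hence $\fdset_0$ is an LHS-chain, completing the induction.

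I do not expect a deep obstacle here. The only points requiring care are: in the common-LHS case one must invoke the rule's precondition that $A$ belongs to \emph{every} remaining left-hand side, since that is precisely what makes the equivalence $X_i\subseteq X_j\iff X_i\setminus\{A\}\subseteq X_j\setminus\{A\}$ hold; and in the ($\Leftarrow$) direction one must supply the measure/termination argument so that "we can always keep simplifying" actually upgrades to "the set reduces all the way to $\emptyset$". Everything else is routine bookkeeping.
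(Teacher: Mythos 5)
Your proof is correct, but for the direction that the paper actually proves (``only if'') you take a genuinely different route. The paper cites \cite{LivshitsKR18} for the ``if'' direction and then argues the ``only if'' direction by contradiction: it fixes two FDs $f_1: X_1\rightarrow Y_1$ and $f_2: X_2\rightarrow Y_2$ with incomparable left-hand sides, picks witnesses $A_1\in X_1\setminus X_2$ and $A_2\in X_2\setminus X_1$, and observes that $A_1$ can never be removed by a common-LHS step while $f_2$ is present (and symmetrically), so neither FD can ever be eliminated by a consensus step, contradicting reduction to $\emptyset$. You instead run a backward induction on the length of the reduction sequence, showing that each single step \emph{reflects} the LHS-chain property: adding back empty-LHS FDs preserves comparability trivially, and re-inserting a common attribute $A$ preserves it because $A\in X_i\cap X_j$ makes $X_i\subseteq X_j$ equivalent to $X_i\setminus\{A\}\subseteq X_j\setminus\{A\}$ --- the same key observation the paper's argument implicitly relies on, but packaged as a per-step invariant rather than a global stuck-pair argument. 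Your induction is more mechanical and avoids the slightly delicate ``suppose $f_1$ is removed first'' bookkeeping of the paper; the paper's contradiction is shorter. You also re-prove the ``if'' direction in full (greedy choice of a rule for a nonempty chain, closure of the chain property under both rules, and a decreasing measure $|\fdset{}|+\sum|X|$ for termination), which the paper simply attributes to prior work; that extra detail is sound and costs nothing, though strictly speaking only the ``only if'' part is new here.
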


The following theorem states our main algorithmic result.
\begin{restatable}{theorem}{polytimechainthm}\label{thm:poly-time-chain}
    Let $\schema{}$ be a fixed schema and $\fdset{}$ be a fixed FD set that forms an \lhschain{}.  
    Suppose that the domain size of the sensitive attribute $A_s$ is fixed. Then, an optimal \rsrepair{} can be computed in polynomial time.
\end{restatable}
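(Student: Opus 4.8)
\emph{Proof plan.} The plan is to reduce the computation of an optimal \rsrepair{} to enumerating the set of \emph{value profiles} realizable by \srepair{}s, and to compute this set by recursion along the reduction of \fdset{} guaranteed by \Cref{prop:lhs-chain-reduction}. For a subset $R' \subseteq R$, define its profile $\pi(R') = (n_1, \dots, n_k) \in \{0, \dots, |R|\}^k$ where $n_i = |\{t \in R' : t[A_s] = a_i\}|$ and $k = |\dom(A_s)|$. Since \rc{} is a set of lower bounds $\%a_i \geq p_i$, a subset $R'$ with profile $(n_1, \dots, n_k)$ satisfies \rc{} iff $n_i \geq p_i \cdot \sum_j n_j$ for every $i$ (taking $p_i = 0$ for unconstrained values) --- a linear condition that depends on $R'$ only through $\pi(R')$. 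Hence, letting $\mathcal{N}(R, \fdset)$ denote the set of profiles of \srepair{}s of $R$ w.r.t.\ \fdset{}, the size of an optimal \rsrepair{} equals $\max\{\sum_j n_j : (n_1, \dots, n_k) \in \mathcal{N}(R, \fdset),\ n_i \geq p_i \sum_j n_j\ \text{for all } i\}$, and an optimal \rsrepair{} itself can be recovered by a standard traceback once the optimal profile is identified. It therefore suffices to compute $\mathcal{N}(R, \fdset)$, which is a subset of the box $\{0, \dots, |R|\}^k$ and hence of size at most $(|R|+1)^k$, polynomial for fixed $k$.

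I would compute $\mathcal{N}(R, \fdset)$ recursively, mirroring the two simplifications that reduce an \lhschain{} to $\emptyset$. In the base case $\fdset = \emptyset$, every subset is an \srepair{}, so $\mathcal{N}(R, \emptyset) = \prod_{i=1}^{k} \{0, \dots, c_i\}$ with $c_i = |\{t \in R : t[A_s] = a_i\}|$. If \fdset{} has a \consensusfd{} $\emptyset \rightarrow A$, then every \srepair{} lies inside a single class $R_a = \sigma_{A=a} R$ and is an \srepair{} of $(R_a, \fdset \setminus \{\emptyset \rightarrow A\})$, and conversely; hence $\mathcal{N}(R, \fdset) = \bigcup_{a} \mathcal{N}(R_a, \fdset \setminus \{\emptyset \rightarrow A\})$. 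If \fdset{} has a \commonlhs{} $A$, then two tuples disagreeing on $A$ never witness a violation, so a subset $R'$ is an \srepair{} of $(R, \fdset)$ iff $R' \cap R_a$ is an \srepair{} of $(R_a, \fdset_{-A})$ for every $a$, where $\fdset_{-A}$ drops $A$ from every LHS; therefore $\mathcal{N}(R, \fdset)$ is the Minkowski sum $\bigoplus_{a} \mathcal{N}(R_a, \fdset_{-A})$, i.e.\ the set of coordinate-wise sums of one profile chosen from each class. In both reductions the recursive instances are again over \lhschain{}s (a sub-FD-set of an \lhschain{} is an \lhschain{}, and so is $\fdset_{-A}$), and the relations $R_a$ partition $R$, so the recursion terminates with depth bounded by $|\fdset|$, i.e.\ $O(1)$ for fixed \fdset{}.

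For the running time: since \fdset{} is fixed, the recursion tree has $O(1)$ depth, and because the relations at each level partition $R$, there are $O(|R|)$ recursive calls in total. Each stored set $\mathcal{N}(\cdot,\cdot)$ has at most $(|R|+1)^k$ elements; a union is computed in time linear in the total size, and a Minkowski sum over a partition is folded class by class, the running profile set always staying inside $\{0, \dots, |R|\}^k$ because the total count never exceeds $|R|$ --- so each call costs time polynomial in $|R|$ for fixed $k$. The final selection over $\mathcal{N}(R, \fdset)$ against \rc{}, and the top-down reconstruction of an actual optimal \rsrepair{} (distributing, at each \commonlhs{} node, the target counts among the classes consistently with the stored profile sets), are likewise polynomial. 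Overall the algorithm runs in time polynomial in $|R|$ once $\schema{}$, $\fdset{}$, and $k = |\dom(A_s)|$ are fixed.

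The main obstacle --- and the reason fixing the domain of $A_s$ is essential (contrast \Cref{thm:hardness}) --- is that the representation constraint is \emph{global}: it couples all parts of the repair, so it does not commute with a naive divide-and-conquer over the FD reductions. The idea that resolves this is to carry \emph{exact per-value counts} rather than proportions: the count vector is additive under the \commonlhs{} reduction (Minkowski sum) and ``choose-one'' under the \consensusfd{} reduction (union), the space of such vectors is a polynomially sized box when $k$ is fixed, and \rc{} reduces to a single linear test on the final vector. The remaining care is in verifying the two combinatorial identities above and the correctness of the traceback --- routine, but needed to justify the word ``compute'' in the statement.
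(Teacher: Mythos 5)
Your proposal is correct and follows essentially the same route as the paper: a dynamic program over the consensus-FD/common-LHS reduction of the LHS-chain, whose state is the vector of per-value counts of $A_s$, of which there are only $O(|R|^k)$ for fixed $k$. The only (inessential) differences are bookkeeping: you carry the full set of realizable count vectors (unions for consensus FDs, Minkowski sums for the common-LHS partition) and recover an actual repair by traceback, whereas the paper stores one witness \srepair{} per non-dominated count vector (the \candidateset{}, maintained via \reprinsert{}) and finishes by applying \postclean{} to each candidate.
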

\cut{
    \yuxi{this is the old version of our theorem}
    \begin{restatable}{theorem}{polytimechainthm}\label{thm:poly-time-chain}
        Given relation $R$ where the domain of sensitive attribute $\dom(A_s)$ is fixed, a fixed set of FDs $\fdset$, and a representative constraint $\rc$, the problem of computing an optimal \rsrepair{} of \relation{} for \fdset{} and \rc{} can be solved in polynomial time when $\fdset$ forms an LHS-chain, i.e., if \fdset{} can be reduced to $\emptyset$ by repeated applications of consensus FD and common LHS simplifications. 
    \end{restatable}
}

We present and analyze a dynamic programming (DP)-based algorithm \dpalgo{$(\relation{}, \fdset{}, \rc{})$} in \Cref{sec:poly-algo-chain} to prove the above theorem. \dpalgo{} not only gives an optimal algorithm for the special case of LHS-chains, but will also be used in \Cref{sec:general} as a procedure to obtain efficient heuristics for general FD sets where computing an optimal \rsrepair{} can be NP-hard. We give another (non-polynomial-time) optimal algorithm and several polynomial-time heuristics for cases with general FD sets in \Cref{sec:general}.

\par

\cut{
    [[[the following is the old one. move to appendix if confirmed]]]
    
    A natural question is what happens for LHS marriage, which is polynomial-time solvable for standard subset repair using polynomi-al-time algorithms for bipartite matching. We show in the full version \cite{fullversion} that the simplest case of LHS marriage instance for representative repair on $R(A, B, C)$ with $\fdset_0 = \{A \rightarrow B, B \rightarrow A\}$ and $C$ has only two values, is as hard as a well-known problem studied in algorithms called color-balanced matching \cite{mulmuley1987matching,ka2022optimal-AAMAS22}\cut{\red{cite Mulmuley and the other papers}}, which is known to have a randomized polynomial time (RP) algorithm (and therefore, is not NP-hard unless {\tt NP = RP}), but does not have a known deterministic polynomial-time algorithm. 
}

\cut{
In \Cref{sec:poly-algo-chain}, we present a poly-time exact algorithm using dynamic programming for \lhschain{}s. which can be simplified by \commonlhs{} and \consensusfd{}. Furthermore, for arbitrary \fdset{}s, we provide an exact algorithm based on integer linear programming in \Cref{subsec:ilp} and other efficient end-to-end algorithms (heuristics) in \Cref{subsec:lp,subsec:scalable_heuristic}; one of them utilizes the DP-based algorithm from \Cref{sec:poly-algo-chain} as a subroutine. 
We give an overview of our Algorithms in \Cref{tab:end_to_end_options}. 
}


\subsection{Can We Convert an \srepair\ to an \rsrepair?}
\label{subsec:postclean}
As discussed in the introduction for \Cref{ex:intro-distribution}, an intuitive heuristic to compute a RS-repair is (i) first compute an \srepair{} $R'$ (optimal or non-optimal) of $R$ w.r.t. $\fdset$, and (ii)
then delete additional tuples from $R'$ to obtain $R''$ that also satisfies the RC \rc. Following this idea, we present the \postclean{} algorithm,
which takes a relation $R$ and an RC \rc, and returns a maximum subset $R'$ of $R$ such that $R' \rcsatisfies \rc$.
\postclean{} has a dual use in this paper. First, it is used as a subroutine in several algorithms in the later sections when an \srepair{} of $R$ is used as the input relation to \postclean{}. Second, in \Cref{sec:experiments}, we also compose \postclean\ with several known approaches for computing \srepair s to create baselines for our algorithms. 

\cut{
\begin{example}\label{example:post_clean}
\ag{What is this example for?} \yuxi{I put an example here to give an intuition to introduce the idea of satisfying \rc{} through additional deletions. But I also think that it might be not necessary now, let me know if we confirm the comment it out.}
    Consider a relation $\relation{(A, B, \texttt{sex})}$ with a FD set $\fdset{} = \{A \rightarrow B\}$ and a representation constraint $\rc = (\texttt{male}: \frac{1}{2},\ \texttt{female}: \frac{1}{2})$. Suppose that we have a S-repair $R' = \{(1, 2, \texttt{male}), (1, 2, \texttt{female}),\\ (1, 2, \texttt{male}), (2, 2, \texttt{male}), (3, 1, \texttt{male})\}$, which does not satisfy $\rc{}$. 

    To ensure the representation on sex, additional deletions from $R'$ are necessary. Given that the distribution on sex: $\distrsensitive{R'} = (\texttt{male}: \frac{4}{5}, \texttt{female}: \frac{1}{5})$ and the size $|R'| = 5$, it is evident that the best we can do is to remove 3 extra 'male' tuples to ensure half-and-half according to the $\rc{}$. In other words, further deletions from $R'$ are equivalent to retaining a subset of $R'$ with exactly one 'male' tuple and one 'female' tuple, e.g. $\{(1, 2, \texttt{male}),\ (1, 2, \texttt{female})\}$.
\end{example}
}

\cut{
\begin{algorithm}
    \caption{\postclean{}$(R, \rc{})$}\label{alg:post_clean}
    \begin{algorithmic}[1]
        \Require{A relation $R$ and a representation constraint \rc{}}
        \Ensure{A subset $R' \subseteq R$ of largest size such that $R' \rcsatisfies \rc$}
        \For{Size $T$ from $|R|$ to $0$}
            \State $b \gets $ True;
            \For{Every value $a_{\ell}$ in $\dom{(A_s)}$}
                \State $\tau_{\ell} \gets \ceil{T \cdot \rc{(a_{\ell})}}$;
                \If{$\tau_{\ell} > |R| \cdot \distrsensitive{R}(a_{\ell})$}
                    \State 
                    $b \gets$ False;
                    \State \textbf{break}
                \EndIf
            \EndFor
            \State $T_0 \gets \sum_{\ell \in [1,k]}\tau_{\ell}$;
            \If{$b$ is True and $T_0 \leq T$}
                \While{$T_0 < T$}
                    \State Arbitrarily choose an $a_{\ell}$ from $\dom{(A_s)}$ where the corresponding $\tau_{\ell} < |R| \cdot \distrsensitive{R}(a_{\ell})$;
                    \State $\tau_{\ell} \gets \tau_{\ell} + 1$;
                    \State $T_0 \gets T_0 + 1$;
                \EndWhile
                \State $R' \gets \bigcup_{a_{\ell} \in \dom{}(A_s)}$ an arbitrary subset of $\sigma_{A_s = a_{\ell}} R$ of size $\tau_{\ell}$;
                \State \Return $R'$;
            \EndIf
        \EndFor
        \State \Return $\emptyset$;
    \end{algorithmic}
\end{algorithm}
}


\paratitle{Overview of the \postclean{} algorithm} 
The PostClean algorithm 
intuitively works as follows (pseudo-code and analysis are in \cite{fullversion}).
Recall from \Cref{sec:repr-sensitive} that $\rc(a_\ell)$ denotes the lower bound on the fraction of the value $a_\ell$ in the tuples retained by the \rsrepair. Further note that sum of $\rc(a_\ell)$ may be smaller than $1$, i.e., the RC \rc{} may only specify the desired lower bounds for a subset of the sensitive values, and the rest can have arbitrary proportions as long as a minimum set of tuples is removed to obtain the optimal \rsrepair{}. Moreover, the fractions are computed w.r.t. the final repair size $|R'|$ and not w.r.t. the input relation size $|R|$. \postclean{} iterates over all possible sizes $T$ of $R'$ 
from $|R|$ to 1. For each $T$, it checks if the lower bound on the number of tuples with $a_\ell$, i.e., $\tau_{\ell} = \ceil{T \cdot \rc{(a_{\ell})}}$, is greater than the number of tuples with value $a_\ell$ in the original relation $R$. If yes, then no repair $R'$ of size $T$ can satisfy $\rc$, and it goes to the next value of $T$ (or $T \gets T - 1$). Otherwise, if there are sufficient tuples for all sensitive values $a_\ell$, and if the sum of the lower bounds on numbers, formally $T_0 = \sum{\tau_\ell}$ is $\leq T$, then we have a feasible $T$. Finally, the algorithm arbitrarily fills $R'$ with more tuples from $R$ if $T_0 < T$ and returns the final $R'$. Note that if all values of $T$ between $|\relation{}|$ and $1$ are invalid, then an $\emptyset{}$ is returned because it is the only subset of $\relation{}$ that (trivially) satisfies the $\rc$. 
The following states the optimality and runtime of \postclean{}. 


\begin{restatable}{proposition}{postcleanprop}\label{prop:post_clean}
Given $\relation$ and \rc{}, \postclean{$(R, \rc{})$} 
returns in polynomial time a maximum subset $R'$ of $R$ such that $R' \rcsatisfies \rc$.
\end{restatable}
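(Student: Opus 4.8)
The plan is to establish two properties of the subset $R'$ output by \postclean{}$(\relation{}, \rc{})$ and then bound its running time. The two properties are \emph{feasibility} --- that $R' \subseteq \relation{}$ and $R' \rcsatisfies \rc{}$ --- and \emph{optimality} --- that no subset of \relation{} satisfying \rc{} has size strictly larger than $|R'|$. Throughout, write $n_\ell$ for the number of tuples $t \in \relation{}$ with $t[A_s] = a_\ell$, and recall that at a candidate size $T$ the algorithm sets $\tau_\ell = \ceil{T \cdot \rc(a_\ell)}$ and declares $T$ \emph{feasible} when $\tau_\ell \le n_\ell$ for all $\ell$ and $T_0 := \sum_\ell \tau_\ell \le T$.

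For feasibility, suppose the algorithm stops at size $T$ and outputs $R'$. By construction $|R'| = T$, and for each value $a_\ell$ the set $R'$ contains at least $\tau_\ell$ tuples of value $a_\ell$: the initial fill inserts exactly $\tau_\ell$ such tuples (they exist since $\tau_\ell \le n_\ell$), and the subsequent arbitrary fill only adds tuples. The arbitrary fill is well defined because $T_0 \le T \le |\relation{}| = \sum_\ell n_\ell$, so $\sum_\ell(n_\ell - \tau_\ell) = |\relation{}| - T_0 \ge T - T_0$ unused tuples remain to fill the $T - T_0$ free slots. Hence for every lower-bound constraint $\%a_\ell \ge p_\ell$ in \rc{}, the fraction of value $a_\ell$ in $R'$ is at least $\tau_\ell / T = \ceil{T p_\ell}/T \ge p_\ell$, so $R' \rcsatisfies \rc{}$. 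The fallback output $\emptyset$ (used when no $T \ge 1$ is feasible) also satisfies \rc{}, since $\%a_\ell \ge p_\ell$ demands at least $p_\ell \cdot 0 = 0$ tuples.

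For optimality, the one step that needs an idea rather than bookkeeping is showing that the size of \emph{any} feasible repair is among the sizes the algorithm accepts. Let $S \subseteq \relation{}$ with $S \rcsatisfies \rc{}$, put $T^\star = |S|$ and $m_\ell = |\{t \in S : t[A_s] = a_\ell\}|$. For each $a_\ell$ carrying a constraint $\%a_\ell \ge p_\ell$ we have $m_\ell \ge p_\ell T^\star$, and since $m_\ell$ is an integer this sharpens to $m_\ell \ge \ceil{p_\ell T^\star} = \tau_\ell$ (evaluated at $T = T^\star$); for values without a constraint $\rc(a_\ell) = 0$, so $\tau_\ell = 0 \le m_\ell$. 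As $S \subseteq \relation{}$, $\tau_\ell \le m_\ell \le n_\ell$, so the ``enough tuples'' test passes at $T = T^\star$, and $T_0 = \sum_\ell \tau_\ell \le \sum_\ell m_\ell = T^\star$, so the feasibility test passes as well. Thus $T^\star$ is feasible for the algorithm, and since the outer loop scans $T$ from $|\relation{}|$ downward and returns at the first feasible value, $|R'| \ge T^\star = |S|$. As $S$ was arbitrary, $R'$ is a maximum subset of \relation{} satisfying \rc{}.

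Finally, the outer loop runs at most $|\relation{}|+1$ times; each iteration computes $\tau_\ell$ for at most $|\dom(A_s)| \le |\relation{}|$ values (arithmetic on numbers of size polynomial in the input, i.e., the numerators and denominators of the $p_\ell$), does $O(|\relation{}|)$ comparisons, and, at most once, fills $R'$ in $O(|\relation{}|)$ time, for an overall polynomial bound. The main obstacle is the optimality direction: the integrality/ceiling inequality $m_\ell \ge \ceil{p_\ell T^\star}$ together with the check that the arbitrary-fill step is always available when the algorithm reaches it --- both elementary, but they must be spelled out carefully for the ``first feasible size'' argument to be airtight.
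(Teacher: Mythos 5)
Your proof is correct and uses essentially the same ingredients as the paper's: the integrality of the per-value counts, the ceiling bound $m_\ell \ge \ceil{p_\ell T}$, and the fact that the counts sum to $|S|$, so every feasible repair size passes both checks and the downward scan returns a maximum one. The only differences are presentational — you argue optimality directly (the paper argues by contradiction at the failing size) and you additionally spell out feasibility of the returned $R'$, which the paper leaves implicit.
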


\cut{

\begin{proof}
    \underline{\bf Optimality:} By contradiction, suppose that there is a $R'' \subseteq \relation{}$ where $R'' \rcsatisfies \rc{}$ and $|R''| > |R'|$. The cause of $R''$ fails the validation in \Cref{alg:post_clean} when $T = |R''|$ can be either failing the condition check in line 5 or failing the condition check in line 11. If line 5 fails, then there exists an $a_{\ell}$ where $\tau_{\ell} > |\relation{}| \cdot \distrsensitive{\relation{}}(a_{\ell})$. Furthermore, since $R'' \subseteq \relation{}$ indicates that $|\relation{}| \cdot \distrsensitive{\relation{}}(a_{\ell}) \geq |R''| \cdot \distrsensitive{R''}(a_{\ell})$, we have $\tau_{\ell} > |R''| \cdot \distrsensitive{R''}(a_{\ell})$. Additionally, $R'' \rcsatisfies \rc{}$ indicates that $\distrsensitive{R''}(a_{\ell}) \geq \rc{(a_{\ell})}$ for any $a_\ell \in \dom{(A_s)}$. According to the definition, we have $\tau_{\ell} = \ceil{T \cdot \rc{(a_{\ell})}} = \ceil{|R''| \cdot \rc{(a_{\ell})}} \leq \ceil{|R''| \cdot \distrsensitive{R''}(a_{\ell})}$. Combining the two inequality related to $\tau_{\ell}$, we have $\ceil{|R''| \cdot \distrsensitive{R''}(a_{\ell})} > |R''| \cdot \distrsensitive{R''}(a_{\ell})$. However, $|R''| \cdot \distrsensitive{R''}(a_{\ell})$ is exactly an integer representing the number of tuples with value $a_{\ell}$ in $R''$, leading to a contradiction. On the other hand, if line 5 is passed but line 11 fails, then $T_0 > T$. We express $T_0$ and $T$ by $R''$ and $\rc{}$, and get $\sum_{\ell \in [1,k]}{\ceil{|R''| \cdot \rc{(a_{\ell})}}} > |R''|$. Again, $R'' \rcsatisfies \rc$ indicates that $\distrsensitive{R''}(a_{\ell}) \geq \rc{(a_{\ell})}$ for every $a_{\ell}$. We combine these two inequalities and obtain $\sum_{\ell \in [1,k]}{\ceil{|R''| \cdot \distrsensitive{R''}(a_{\ell})}}\\ > |R''|$. Similarly, $|R''| \cdot \distrsensitive{R''}(a_{\ell})$ is exactly an integer, so we can simplify the inequality into $\sum_{\ell \in [1,k]}{|R''| \cdot \distrsensitive{R''}(a_{\ell})} > |R''|$. Given that $\sum_{\ell in [1,k]}{\distrsensitive{R''}(a_{\ell})} = 1$, it comes to a contradiction as well. Hence, we prove that such a $R''$ does not exist.

    \underline{\bf Time complexity:} 
    Firstly, it is important to know that $|R'|$ is bounded by $|\relation{}|$. In line 1, we iterate over $T$ in $O(|\relation{}|)$. Within the loop, we enumerate all distinct $a_{\ell}$ in the active domain of $A_s$, which is $O(k)$. In the most inner part (lines 4-8), we compute $\tau_{\ell}$ and validates it in $O(1)$. The second part within the loop is the process of distributing the slack, which can be bounded in $O(|\relation{}|)$. Finally in line 19, we form $R''$ by sequentially taking samples from $R'$ and concatenations, this step is $O(|\relation{}|)$. Overall, the complexity is $O(|\relation{}|^2)$.
\end{proof}
}

Applying \postclean\ on an optimal \srepair{} may not lead to an optimal \rsrepair{} as illustrated below (and in \Cref{ex:intro-distribution}).

\begin{example}
Consider a relation \relation{} with $\schema{} = (A, B, \att{sex})$, a set \fdset{} of FDs $\{A \rightarrow B\}$, and an exact RC $\rc{} = \{\%\texttt{male}=\frac{1}{2}, \%\texttt{female}= \frac{1}{2}\}$. Let $\relation =$ $\{(1, a, \texttt{male})$, $(1, b, \texttt{female})$, $(2, c, \texttt{male})$, $(2, d, \texttt{female})\}$. An optimal \srepair{} of \relation{} w.r.t. \fdset{} is $R' = \{(1, a, \texttt{male}), (2, c, \texttt{male})\}$. However, $\postclean(R', \rc{})$ returns $\emptyset$ since $R'$ does not have any female tuples. Conversely, $\{(1, a, \texttt{male}), (2, d, \texttt{female})\}$ is an optimal RS-repair, which satisfies both \fdset{} and \rc{}.
\end{example}
\section{Dynamic Programming for LHS-chains}\label{sec:poly-algo-chain}

We now prove Theorem~\ref{thm:poly-time-chain} by presenting a DP-based optimal algorithm, \dpalgo{} (\Cref{alg:dpalgo}), that finds an optimal RS-repair for LHS-chains (\Cref{sec:algo-reprepair}, 
\Cref{def:lhs-chain}) in polynomial time when the sensitive attribute has a fixed domain size. 
By the property of an LHS-chain, $\fdset$ can be reduced to $\emptyset$ by repeated application of only consensus FD simplification and common LHS simplification. 
\par
{\em Overview of \dpalgo{}.}
   For a relation \relation{} and a set \fdset{} of FDs, let $\allsrepair{\relation{}, \fdset} = \{R' \subseteq R ~\mid~ R' \fdsatisfies \fdset\}$ be the set of all S-repairs of \relation{} for \fdset{}.
Intuitively, if we could enumerate all S-repairs $R'$ from $\allsrepair{R, \fdset}$, we could compute $R'' = \postclean(R', \rc)$ (\Cref{subsec:postclean}) for each of them and return the $R''$ with the maximum number of tuples. Since \postclean{} optimally returns a maximum subset satisfying $\rc$ for every $R'$, and since any RS-repair w.r.t. $\fdset$ and $\rc$ must be an S-repair w.r.t. $\fdset$, such an $R''$ is guaranteed to be an optimal \rsrepair{}. 
\begin{algorithm}
\caption{\dpalgo{$(\relation{}, \fdset{}, \rc{})$}}\label{alg:dpalgo}
    \begin{algorithmic}[1]
        \Require{a relation \relation{}, an LHS-chain FD set \fdset{}, and a RC \rc{}}
        \Ensure{an optimal RS-repair of $(\relation{}, \fdset{}, \rc{})$}
        \State $\candset{\relation{}, \fdset{}} \gets \reduce{(\relation{}, \fdset{})}$;
        \Comment{\Cref{alg:reduce} in \Cref{subsec:reduce}}
        \State $S \gets \{\postclean{(R', \rc{})} \mid  R' \in \candset{\relation{}, \fdset{}}\}$;
        \Comment{\Cref{subsec:postclean}}
        \State \Return $\arg\max\limits_{s \in S}{|s|}$;
    \end{algorithmic}
\end{algorithm}
However, even when the domain size of the sensitive attribute $A_s$ is fixed, the size of $\allsrepair{\relation{}, \fdset{}}$ can be exponential in $|\relation{}|$. Therefore, it is expensive to enumerate the set of all S-repairs. Hence, we find a \textit{candidate set} $\candset{\relation{}, \fdset{}} \subseteq \allsrepair{\relation{}, \fdset{}}$ of S-repairs that is sufficient to inspect. Then, we apply \postclean{} to each element of $\candset{\relation{}, \fdset{}}$, and return the final solution having the maximum size. 
We formally define candidate set $\candset{\relation{}, \fdset{}}$ in \Cref{subsec:candidates}, along with associated definitions. The basic idea is that there are no two S-repairs in $\candset{\relation{}, \fdset{}}$ 
where one is ``clearly better'' than the other or that the two ``are equivalent to each other''. Further, 
for any S-repair that is not in the candidate set i.e., $R'' \in \allsrepair{\relation{}, \fdset} \setminus \candset{\relation{}, \fdset{}}$, there is an S-repair $R' \in \candset{\relation{}, \fdset{}}$ that 
is ``clearly better'' or ``equivalent to'' $R''$. 
We prove (in \Cref{lemma:candset}) that an optimal RS-repair can be obtained by applying \postclean{} to each \srepair\ in $\candset{R, \fdset{}}$ and returning the one with maximum size. Moreover, the size of the candidate set is 
$O(|R|^k)$, 
when 
the domain size 
$|\dom(A_s)| = k$ 
is fixed (proofs in  \cite{fullversion}).

\Cref{alg:dpalgo} has two steps: Line 1 computes the candidate set $\candset{\relation{}, \fdset{}}$ by the recursive  \reduce\ procedure (\Cref{alg:reduce} in \Cref{subsec:reduce}), that divides the problem into smaller sub-problems by DP. Then Line 2 applies \postclean{} to each S-repair in $\candset{\relation{}, \fdset{}}$ and returns the maximum output as an optimal RS-repair in Line 3. 
\Cref{subsec:reduce} describes the \reduce{} procedure. Since $\fdset$ is an LHS-chain, it reduces to $\emptyset$ by repeated reductions of consensus FD (\Cref{subsec:consensusreduction}) and common LHS (\Cref{subsec:commonlhsreduction}). 
%
The correctness of \dpalgo{} follows from \Cref{lemma:candset,lem:reduce-correct} stated later.

\begin{restatable}{lemma}{dpalgocomplexitylem}\label{lem:dpalgo_complexity}
     \dpalgo{} terminates in $O(m \cdot |\fdset{}| \cdot k \cdot |\relation{}|^{3k + 2})$ time, where $m$ is the number of attributes in $R$, $|\fdset{}|$ is the number of FDs, and $k = |\dom(A_s)|$ is the domain size of the sensitive attribute $A_s$. 
\end{restatable}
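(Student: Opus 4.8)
The plan is to bound the running time of each of the three lines of \dpalgo{} (\Cref{alg:dpalgo}) and observe that the total is dominated by the recursive \reduce{} call in Line~1. The two remaining lines are easy: by \Cref{lemma:candset} the candidate set $\candset{\relation{}, \fdset{}}$ has size $O(|\relation{}|^k)$, and by \Cref{prop:post_clean} each call $\postclean(R', \rc{})$ takes polynomial time (a direct analysis gives $O(|\relation{}|^2)$); hence Line~2 costs $O(|\relation{}|^{k+2})$ and the final $\arg\max$ in Line~3 costs $O(|\relation{}|^k)$. Both are subsumed by the claimed bound, so it remains to analyze \reduce{}.

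For \reduce{} I would first pin down the shape of its recursion tree. By \Cref{prop:lhs-chain-reduction}, since $\fdset{}$ is an LHS-chain it reduces to $\emptyset$ using only the consensus-FD and common-LHS simplifications, so \reduce{} only ever branches on these two cases and the base case $\fdset{}=\emptyset$; no LHS-marriage / matching subroutine is invoked. The recursion depth is $O(m\cdot|\fdset{}|)$: each consensus step removes one FD (at most $|\fdset{}|$ on any root-to-leaf path), and between consecutive consensus steps there are at most $m$ common-LHS steps, since each such step deletes one attribute from the left-hand side of \emph{every} current FD. Moreover, every reduction step partitions the current relation according to the value of a single attribute, so the non-empty subproblems at any fixed depth operate on pairwise-disjoint sub-relations whose sizes sum to at most $|\relation{}|$; consequently there are $O(|\relation{}|)$ non-empty subproblems per level and $O(m\cdot|\fdset{}|\cdot|\relation{}|)$ reduction nodes overall.

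It remains to bound the work at a single node, which produces that node's candidate set from those of its children; this is where the power of $|\relation{}|$ appears. A consensus-FD node $\emptyset\rightarrow A$ partitions its relation by $A$ and, for each achievable profile of the sensitive values (there are at most $(|\relation{}|+1)^k = O(|\relation{}|^k)$ of them), keeps a best S-repair across the $\le|\relation{}|$ parts, costing $O(k\cdot|\relation{}|^{k+1})$. A common-LHS node $A$ is the hard case: an S-repair of its relation is an \emph{arbitrary} union of S-repairs of the parts, so its candidate set is assembled by folding in the parts one at a time and combining profiles by $k$-dimensional vector addition (a ``knapsack over profiles''); each fold pairs the $O(|\relation{}|^k)$ profiles currently accumulated with the $O(|\relation{}|^k)$ profiles of the next part, spending $O(|\relation{}|)$ per pair to materialize a union S-repair and $O(k)$ on arithmetic, which over $\le|\relation{}|$ parts is $O(k\cdot|\relation{}|^{2k+2})$; the base case $\fdset{}=\emptyset$ merely enumerates the $O(|\relation{}|^k)$ profiles and picks representative tuples for $O(|\relation{}|^{k+1})$. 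Multiplying the dominant (common-LHS) per-node cost by the number of nodes and simplifying gives a bound of the form $O(m\cdot|\fdset{}|\cdot k\cdot|\relation{}|^{2k+3})$, which is $O(m\cdot|\fdset{}|\cdot k\cdot|\relation{}|^{3k+2})$ since $k\ge1$; correctness of the returned subset (that it is an optimal RS-repair) is not needed here, being inherited from \Cref{lemma:candset} and \Cref{lem:reduce-correct}.

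The step I expect to be the main obstacle is the accounting of the common-LHS combination: one must verify that folding (rather than a product over the parts, which would be exponential in the number of parts) suffices, that the accumulated candidate set never exceeds the $O(|\relation{}|^k)$ profile bound at any intermediate fold, and that materializing the union S-repairs can be charged at $O(|\relation{}|)$ per pair without a hidden blow-up. The depth bound, the per-level disjointness argument, and Lines~2--3 are then routine.
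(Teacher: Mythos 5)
Your proposal follows essentially the same route as the paper's proof: bound Lines 2--3 of \dpalgo{} via the $O(|\relation{}|^k)$ candidate-set size and the $O(|\relation{}|^2)$ cost of \postclean{}, then analyze \reduce{} by a recursion tree of height $O(m\cdot|\fdset{}|)$ with disjoint sub-relations (hence $O(m\cdot|\fdset{}|\cdot|\relation{}|)$ nodes), with per-node costs for the consensus-FD and common-LHS reductions and the latter dominating. The one place your accounting diverges from the algorithm as specified is the cost of inserting a combined repair into the accumulating candidate set: you charge $O(k)$ ``arithmetic'' per pair, implicitly assuming a profile-indexed structure, whereas the paper's \reprinsert{} is a linear scan over the current candidate set and costs $O(k\cdot|\relation{}|^k)$ per insertion. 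With that correction, a \commonlhsreduction{} node costs $O(k\cdot|\relation{}|^{3k+1})$ and a \consensusreduction{} node $O(k\cdot|\relation{}|^{2k+1})$ (rather than your $O(k\cdot|\relation{}|^{2k+2})$ and $O(k\cdot|\relation{}|^{k+1})$), and multiplying by the tree size gives exactly the stated $O(m\cdot|\fdset{}|\cdot k\cdot|\relation{}|^{3k+2})$; this is precisely where the exponent $3k+2$ comes from. So the ``main obstacle'' you flag is resolved in the paper not by justifying an $O(k)$-cost merge but by simply paying the full \reprinsert{} cost, and your conclusion (and the structure of your argument) remains valid, since your intermediate undercount only produced extra slack that the claimed bound does not need.
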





\subsection{Candidate Set for Optimal RS-Repairs}\label{subsec:candidates}
Recall that $\allsrepair{\relation{}, \fdset}$ denotes the set of all S-repairs \relation{} w.r.t. \fdset{}. We define a \candidateset{} as the subset of $\allsrepair{R, \fdset}$ such that every \srepair{} in the \candidateset{} is neither {\em representatively dominated} by nor {\em representatively equivalent to} other S-repairs in terms of the sensitive attribute as defined below.
\begin{definition}\label{def:candset}
For a relation \relation{}, FD set \fdset{}, and $R_1, R_2 \in \allsrepair{\relation{}, \fdset}$:
\begin{itemize}[leftmargin=*]
\itemsep0em
    \item 
$R_1$ is {\em representatively equivalent to} $R_2$, denoted $R_1 \repreq R_2$,  if for all $a_\ell \in \dom(A_s)$, $|\sigma_{A_s = a_\ell}  R_1| = |\sigma_{A_s = a_\ell} R_2|$, i.e. the same number of tuples for each sensitive value.
\item $R_1$ {\em representatively dominates} $R_2$, denoted $R_1 \reprdom R_2$, if 
for all $a_\ell \in \dom(A_s)$,  $|\sigma_{A_s = a_\ell}  R_1| \geq |\sigma_{A_s = a_\ell}  R_2|$, and there exists $a_c \in \dom(A_s)$,  $|\sigma_{A_s = a_c}  R_1| > |\sigma_{A_s = a_c}  R_2|$.
\end{itemize}
\end{definition}

\begin{example}\label{example:repreq_and_reprdom}
    Consider three S-repairs for the relation $\relation{}$ with the schema $(A, B, \texttt{race})$ and FD set $\fdset{} = \{A \rightarrow B\}$: (1) $R'_1 = \{(1, 2, \texttt{white})$, $(2, 3, \texttt{black})\}$; (2) $R'_2 = \{(1, 3, \texttt{black})$, $(1, 3, \texttt{white})\}$; and (3) $R'_3 = $ $\{(1, 1, \texttt{black}), (2, 2, \texttt{white})$, $(3, 3, \texttt{asian})\}$. Here $R'_1 \repreq R'_2$ since they have the same number of \texttt{black} and \texttt{white} tuples. $R'_3 \reprdom R'_1$ and $R'_3 \reprdom R'_2$ since $R'_3$ has one more \texttt{asian} than $R'_1$ and $R'_2$.
\end{example}
    
\begin{definition}[\candidateset]\label{def:candidate_set}
    Given a relation \relation{} and any FD set $\fdset$, 
    a {\em \candidateset{}} denoted by \candset{\relation{}, \fdset{}} is a subset of $\allsrepair{R, \fdset}$ such that
    \begin{enumerate}
        \item For all $R_1, R_2 \in \candset{\relation{}, \fdset{}}$, 
        $R_1 \not\repreq R_2$, $R_1 \not\reprdom R_2$, and $R_2 \not\reprdom R_1$
        \item For any $R'' \in \allsrepair{R, \fdset} \setminus \candset{\relation, \fdset}$, there exists $R' \in \candset{\relation, \fdset}$ such that $R' \repreq R''$ or $R' \reprdom R''$.
        
    \end{enumerate}
    Each S-repair $R' \in \candset{\relation{}, \fdset{}}$ is called a {\em candidate}. 
\end{definition}

For the correctness of \dpalgo{}, we use the following lemma.

\begin{restatable}{lemma}{candsetlem}\label{lemma:candset}
    

    For any relation $R$ and any FD set $\fdset$, if $\candset{R, \fdset}$ is computed correctly in Line 1, \dpalgo{} (Algorithm~\ref{alg:dpalgo}) returns an optimal \rsrepair{} of \relation{} w.r.t. \fdset{} and \rc{}. \cut{
        \yuxi{this is separated from this lemma}
        Moreover, the size of the \candidateset{} $|\candset{\relation, \fdset}|$ is $O(|\relation|^k)$, where $k = |\dom{(A_s)}|$ is the domain size of $A_s$ and $|\relation{}|$ is the size of input relation.
    }
\end{restatable}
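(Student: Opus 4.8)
The plan is to prove \Cref{lemma:candset} in two parts: first establish that \postclean{} applied to any candidate returns an \rsrepair{} whose size is at most that of the optimal \rsrepair{}, and second (the crux) show that there exists a candidate $R'$ whose \postclean{} image achieves the optimal size. The first part is immediate: every $R' \in \candset{\relation, \fdset}$ is by definition an \srepair{} w.r.t.\ \fdset{}, and by \Cref{prop:post_clean} the set $\postclean(R', \rc)$ is a subset of $R'$ that satisfies \rc{}; since $\postclean(R',\rc) \subseteq R' \subseteq R$ still satisfies \fdset{} (as \fdset{} is closed under taking subsets), $\postclean(R', \rc)$ is an \rsrepair{}, so $|\postclean(R',\rc)|$ cannot exceed the optimal \rsrepair{} size. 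Hence $\max_{s \in S} |s|$ is at most the optimal size.

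For the reverse inequality, I would start from an optimal \rsrepair{} $R^\star$. Since $R^\star \models \fdset$, it is an \srepair{}, so $R^\star \in \allsrepair{R,\fdset}$. By property (2) of \Cref{def:candidate_set}, either $R^\star \in \candset{R,\fdset}$ already, or there is a candidate $R' \in \candset{R,\fdset}$ with $R' \repreq R^\star$ or $R' \reprdom R^\star$. In all three cases, $|\sigma_{A_s = a_\ell} R'| \geq |\sigma_{A_s = a_\ell} R^\star|$ for every value $a_\ell \in \dom(A_s)$. The key observation is then that \postclean{} behaves monotonically in this ``per-value-count'' partial order: since \postclean{} selects, for each size $T$, a subset with prescribed per-value counts $\tau_\ell = \ceil{T \cdot \rc(a_\ell)}$ and only needs $|\sigma_{A_s = a_\ell} R'| \geq \tau_\ell$ tuples of each value $a_\ell$ available, any size $T$ that is feasible for the \emph{distribution-of-counts} of $R^\star$ is also feasible for that of $R'$. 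I would make this precise by showing that the largest feasible $T$ computed by \postclean{} depends only on the multiset of counts $(|\sigma_{A_s=a_\ell} R'|)_\ell$ and is nondecreasing when these counts increase. Concretely: $R^\star$ itself witnesses that $T = |R^\star|$ passes both feasibility checks in \postclean{} when run on $R^\star$ (indeed $R^\star \models \rc$ means the counts of $R^\star$ already dominate the $\tau_\ell$'s and sum to $|R^\star| \geq T_0$); since each count in $R'$ is at least the corresponding count in $R^\star$, the same $T = |R^\star|$ passes both checks when \postclean{} is run on $R'$, so $|\postclean(R', \rc)| \geq |R^\star|$.

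Combining the two directions gives $\max_{s\in S}|s| = |R^\star|$, and since \dpalgo{} returns $\arg\max_{s \in S}|s|$, it returns a set of the optimal \rsrepair{} size which is itself an \rsrepair{} (being a \postclean{} output on an \srepair{} as argued above) --- hence an optimal \rsrepair{}. The main obstacle I anticipate is making the monotonicity argument for \postclean{} airtight: I need the precise statement that \postclean{}'s output size is a function of (and monotone in) the vector of per-sensitive-value counts of its input relation, which requires carefully tracking how the $\tau_\ell$ thresholds and the slack-distribution loop interact --- in particular confirming that having strictly more tuples of some value never forces \postclean{} into a \emph{smaller} feasible $T$. This is intuitively clear from the pseudocode (more tuples only relaxes the line-5 check and never tightens the $T_0 \leq T$ check, since $T_0$ is determined by $T$ and \rc{} alone), but it is the step that deserves the most care. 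Everything else --- closure of \fdset{} under subsets, the definition-chasing for the three cases of property (2), and invoking \Cref{prop:post_clean} --- is routine.
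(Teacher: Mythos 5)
Your proposal is correct and follows essentially the same route as the paper's proof: both directions hinge on the candidate-set property that every S-repair is representatively equivalent to or dominated by some candidate, combined with the fact that \postclean{}'s achievable size depends only on (and is monotone in) the per-value counts of the sensitive attribute. Your explicit monotone-feasibility argument for $T=|R^\star|$ is a cleaner phrasing of what the paper does via its ``same $\tau_\ell$'s'' and ``backup tuples'' case analysis, but the underlying idea is identical.
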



\paratitle{\reprinsert{}} A subroutine \reprinsert{$(\candset{}, R_0)$} (pseudocode in \cite{fullversion}) will be used in the following subsections. Intuitively, it safely processes an insertion to a set of candidates and maintains the properties in \Cref{def:candidate_set}. Specifically, it takes a set of candidates $\candset{}$, where no representative equivalence or representative dominance exists, and an S-repair $R_0 \in \allsrepair{R, \fdset}$ as inputs. \reprinsert{} compares $R_0$ with every $R' \in \candset{}$. If there is an $R'$ such that $R' \reprdom R_0$ or $R' \repreq R_0$, it returns the existing $\candset{}$. Otherwise it removes all $R'$ from $\candset{}$ where $R_0 \repreq R'$ and returns $\candset{} \cup \{R_0\}$.

\cut{
\begin{algorithm}
\caption{\reprinsert{$(\candset{}, R^*)$} 
}\label{alg:reprinsert}
    \begin{algorithmic}[1]
        \Require{a set \candset{} of candidates where no representative dominance or representative equivalence existing, a candidate $\relation{}^*$ to be inserted}
        \Ensure{the set of candidates after the insertion}
        \For{every $R' \in \candset{}$}
            \If{$R' \reprdom \relation{}^{*}$ or $R' \repreq R^{*}$}
                \State \Return $\candset{}$;
            \ElsIf{$\relation{}^* \reprdom R'$}
                \State $\candset{} \gets \candset{} \setminus \{R'\}$;
            \EndIf
        \EndFor
        \State \Return $\candset{} \cup \relation{}^{*}$;
    \end{algorithmic}
\end{algorithm}

\begin{lemma}\label›
    \Cref{alg:reprinsert} inserts a subset repair $R^*$ to a candidate set $S$ and preserve its property as a candidate set. \reprinsert{} terminates in $O(k \cdot |\relation{}|^k)$. \ag{This again seems rather trivial} \yuxi{we add a few sentences above to show the intuition. But we still put the algorithm and proof here, because this is more formal. To save page, we can move it to full version.}
\end{lemma}

\cut{
\begin{lemma}\label{lemma:reprinsert}
    The subroutine that updates \candidateset\ inserts a subset repair $R^*$ to a candidate set $S$ and preserve its property as a candidate set. \reprinsert\ terminates in $O(k \cdot |\relation{}|^k)$.
\end{lemma}
}
\sr{keep only one of the above}

\begin{proof}[Proof Sketch]
    (Full proof in \cite{fullversion}) \Cref{alg:reprinsert} conducts a linear scan to check and eliminate representative equivalence or representative dominance.
\end{proof}

}

\subsection{Recursive Computation of  Candidate Set}\label{subsec:reduce}
The procedure \reduce{} (\Cref{alg:reduce}) computes a \candidateset{} recursively when $\fdset$ is an LHS-chain. Since an LHS-chain $\fdset$ can be reduced to $\emptyset$ by repeated applications of consensus FD and common LHS, 
\reduce{} calls \consensusreduction{} (\Cref{subsec:consensusreduction}) and  \commonlhsreduction{} (\Cref{subsec:commonlhsreduction}) until $\fdset$ is empty. When $\fdset$ is empty, it returns $\{\relation{}\}$ as the singleton candidate set since \relation{} itself is an S-repair and representatively dominates all other S-repairs. The following lemma states the correctness of the \reduce{} procedure.

\cut{
\begin{enumerate}
\item Partition the relation into sub-relations based on the values of the \commonlhs{} (or the RHS of the \consensusfd{}).
\item Recursively solve the sub-problems using \reduce{}.
\item Merge the \candidateset{}s returned by the sub-problems.
\end{enumerate}
}

\cut{
\begin{figure}[!h]
    \centering
    \includegraphics[width=\columnwidth]{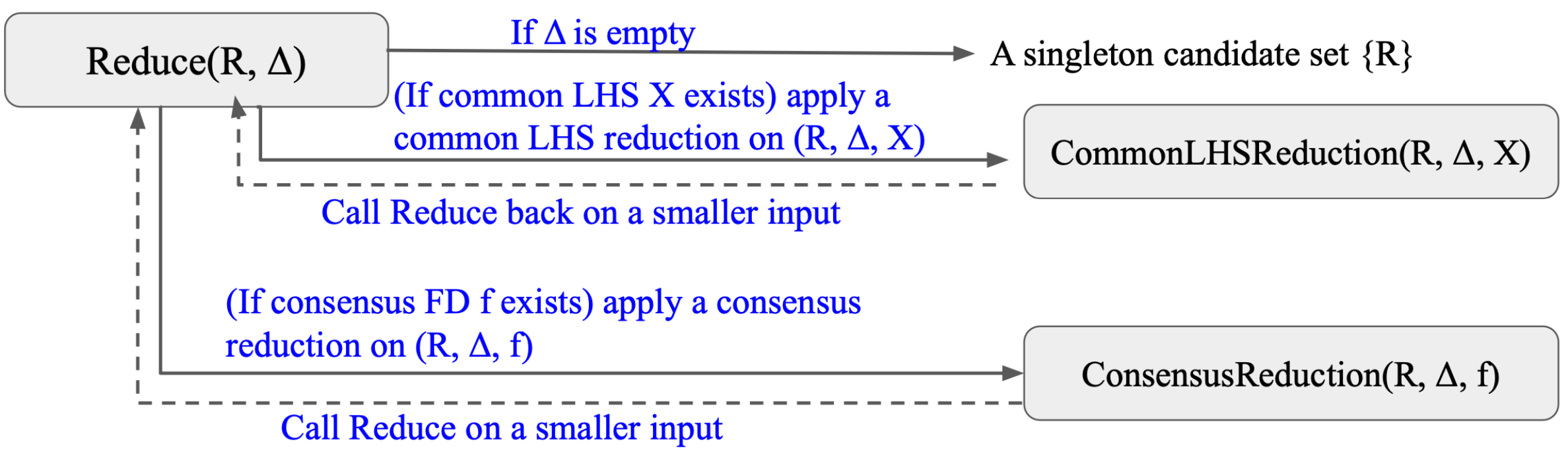}
    \caption{Workflow of \reduce{}$(\relation{}, \fdset{})$}
    \label{fig:reduce_workflow}
    \Description[]{}
\end{figure}
}




\begin{algorithm}[t]
\caption{\reduce{$(\relation{}, \fdset{})$}}\label{alg:reduce}
    \begin{algorithmic}[1]
        \Require{a relation \relation{}, a FD set \fdset{} that forms an \lhschain{}}
        \Ensure{a \candidateset{} $\candset{\relation{}, \fdset{}}$ w.r.t. \relation{} and \fdset{}}
        \If{$\fdset{}$ is empty}\label{line:}
            \State \Return $\candset{\relation{}, \fdset{}} := \{\relation{}\}$;
         \ElsIf{Identify a consensus FD $f: \emptyset \rightarrow Y$}
            \State \Return $\consensusreduction{(\relation{}, \fdset{}, f)}$;
            \Comment{\Cref{alg:consensus_reduction}}
        \ElsIf{Identify a common LHS $X$}
            \State \Return $\commonlhsreduction{(\relation{}, \fdset{}, X)}$;
            \Comment{\Cref{alg:common_lhs_reduction}} 
        \EndIf
    \end{algorithmic}
\end{algorithm}

\cut{
\Cref{fig:reduce_workflow} illustrates the workflow of \reduce{} (\Cref{alg:reduce}). 
\begin{itemize}
    \item If \fdset{} is empty (lines 1--2), indicating no further reductions are possible on \fdset{}, then only the input relation \relation{} itself is the optimal representative repair.
    \item If a \commonlhs{} is identified in \fdset{} (lines 3--4), then \commonlhsreduction{} (formally introduced in \Cref{subsec:commonlhsreduction}) is called.
    \item If a \consensusfd{} is identified in \fdset{} (lines 5--6), then \consensusreduction{} (formally introduced in \Cref{subsec:consensusreduction}) is called.
\end{itemize}
}

\cut{
Although more details of \commonlhsreduction{} and \consensusreduction{} are presented in \Cref{subsec:commonlhsreduction,subsec:consensusreduction}, we give an example of how \reduce{} simplifies \fdset{} and computes the \candidateset{} as follow: 


\begin{example}
    Given a relation \relation{} and a FD set $\fdset{} = \{A \rightarrow B\}$, we run $\reduce{(\relation{}, \fdset{})}$ to compute the \candidateset{}, $\candset{\relation{}, \fdset{}}$. First, column $A$ is a \commonlhs{}, so $\commonlhsreduction{(\relation{}, \fdset{}, A)}$ is called. Inside the call of \commonlhsreduction{}, \relation{} is divided into a number of pieces, $\{\pi_{A = a_{\ell}}{\relation{}} \mid \forall a_{\ell} \in \dom{(A)}\}$. For each of the pieces, we recursively call $\reduce{(\pi_{A = a_{\ell}}{\relation{}}, \{\emptyset \rightarrow B\})}$ to get the \candidateset{} of the relation $\pi_{A = a_{\ell}}{\relation{}}$ and the FD set $\{\emptyset \rightarrow B)\}$. At this time, $\consensusreduction{(\pi_{A = a_{\ell}}{\relation{}}, \{\emptyset \rightarrow B\}, {\emptyset \rightarrow B})}$ is called. Within \consensusreduction{}, we further divide $\pi_{A = a_{\ell}}{\relation{}}$ into smaller pieces, $\{\pi_{A = a_{\ell}, B = b_{\ell}}{\relation{}} \mid \forall a_{\ell} \in \dom{(A)}, b_{\ell} \in \dom(B)\}$. Lastly, each of these pieces will call $\reduce{(\pi_{A = a_{\ell}, B = b_{\ell}}{\relation{}} \mid \forall a_{\ell} \in \dom{(A)}, b_{\ell}, \emptyset)}$ and hit the ending condition since the FD set is now empty. All the candidate sets for different instances are sent back during backtracking and we finally obtain $\candset{\relation{}, \fdset{}}$.
\end{example}
}








\begin{restatable}{lemma}{reducecorrectlem}\label{lem:reduce-correct}
    Given relation $R$ and FD set $\fdset$ that forms an LHS-chain, \reduce{$(\relation{}, \fdset{})$} correctly computes the candidate set $\candset{\relation{}, \fdset}$.
\end{restatable}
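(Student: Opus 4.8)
The plan is to induct on a structural measure of $\fdset{}$. Define $\mu(\fdset{}) = |\fdset{}| + \bigl|\bigcup_{X \rightarrow Y \in \fdset{}} X\bigr|$, the number of FDs plus the number of distinct attributes that appear in some left-hand side; note that $\mu(\fdset{})=0$ exactly when $\fdset{}=\emptyset{}$. First I would check that \reduce{} is well-defined on every LHS-chain: if $\fdset{}\neq\emptyset{}$, list the FDs so that their left-hand sides form a $\subseteq{}$-chain and take the FD $X_{\min}\rightarrow Y$ whose LHS is minimal; if $X_{\min}=\emptyset{}$ it is a consensus FD, and otherwise every attribute of $X_{\min}$ lies in every left-hand side and is hence a common LHS. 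So one of the two branches of \Cref{alg:reduce} always fires (consistent with \Cref{prop:lhs-chain-reduction}). Moreover, each recursive call is on a strictly smaller LHS-chain instance: deleting a consensus FD $f$ leaves an LHS-chain and drops $\mu$ by one, while deleting a common-LHS attribute $X$ from every left-hand side preserves $\subseteq{}$-comparability ($X_1\subseteq X_2$ implies $X_1\setminus\{X\}\subseteq X_2\setminus\{X\}$) and removes $X$ from the union of the LHSes, again dropping $\mu$ by one.

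With this in place the induction is routine. For the base case $\mu(\fdset{})=0$, i.e., $\fdset{}=\emptyset{}$, \reduce{} returns $\{\relation{}\}$; here $\allsrepair{\relation{}, \fdset{}}$ is the full power set of $\relation{}$, so clause~(1) of \Cref{def:candidate_set} is vacuous, and clause~(2) holds because $\relation{}\reprdom R''$ for every $R''\subseteq\relation{}$ with $R''\neq\relation{}$, while $\relation{}$ itself is in the returned set. For the inductive step, suppose $\fdset{}\neq\emptyset{}$. If \reduce{} picks a consensus FD $f:\emptyset\rightarrow Y$, it returns \consensusreduction{}$(\relation{},\fdset{},f)$, which partitions $\relation{}$ by the value of $Y$ and recurses with $\fdset{}\setminus\{f\}$ on each block; by the induction hypothesis those calls return correct candidate sets, and by the correctness of \consensusreduction{} (\Cref{subsec:consensusreduction}) the merged output is a valid $\candset{\relation{}, \fdset{}}$. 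Symmetrically, if \reduce{} picks a common LHS $X$, it returns \commonlhsreduction{}$(\relation{},\fdset{},X)$, which partitions $\relation{}$ by the value of $X$, recurses on each block with $X$ removed from all left-hand sides (a correct call by the induction hypothesis), and by the correctness of \commonlhsreduction{} (\Cref{subsec:commonlhsreduction}) returns a valid $\candset{\relation{}, \fdset{}}$.

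The substantive work is therefore not in this gluing argument but in the two reduction-correctness statements it invokes. There the point is to show that the merge step --- forming combinations of candidates coming from the different blocks of the partition and pruning them with \reprinsert{} --- yields a set meeting \emph{both} clauses of \Cref{def:candidate_set}: no member representatively dominates or is representatively equivalent to another, and every excluded \srepair{} is representatively dominated by or equivalent to some member. I expect the main obstacle to be establishing that, under a consensus FD or a common LHS, every \srepair{} of $\relation{}$ decomposes uniquely into \srepair{}s of the blocks, and that representative domination and equivalence behave coordinatewise with respect to this decomposition, so that the \reprinsert{}-based pruning never discards a candidate needed to dominate or to represent some excluded \srepair{}. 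These facts are developed in \Cref{subsec:consensusreduction,subsec:commonlhsreduction}, and the present lemma follows from them by the induction on $\mu$ above.
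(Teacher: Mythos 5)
Your induction skeleton is essentially the paper's: the paper inducts on the number $w$ of reductions applied, and your measure $\mu(\fdset{})$ together with the check that a nonempty LHS-chain always admits a consensus FD or a common LHS (and stays an LHS-chain after the reduction) is a perfectly fine, slightly more explicit variant of the same gluing argument, and your base case is correct. The problem is that everything you label as ``the substantive work'' is exactly what this lemma's proof is required to supply, and you do not supply it: you defer the correctness of \consensusreduction{} and \commonlhsreduction{} to \Cref{subsec:consensusreduction,subsec:commonlhsreduction}, but those subsections only describe the two procedures and prove nothing about them. In the paper, those two correctness arguments are the body of the proof of this very lemma: for a consensus FD $f:\emptyset\rightarrow Y$ one must argue that any candidate of $(\relation{},\fdset{})$ satisfies $f$, hence lies entirely inside a single block $\relation{}_{y_\ell}$ and is (representatively equivalent to) a candidate of $(\relation{}_{y_\ell},\fdset{}-f)$, and that inserting all block candidates via \reprinsert{} neither violates clause (1) of \Cref{def:candidate_set} nor loses anything needed for clause (2); for a common LHS $X$ one needs an inner induction over the stages $\ell$, using the disjointness of the blocks to decompose any candidate $\bar{R}$ of $(\relation{}_{x_1,\dots,x_\ell},\fdset{})$ as $(\bar{R}\cap\relation{}_{x_\ell})\cup(\bar{R}\cap\relation{}_{x_1,\dots,x_{\ell-1}})$ and to show neither part can be representatively dominated (otherwise replacing it would produce an S-repair dominating $\bar{R}$, contradicting its candidacy). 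As written, your proposal is therefore circular: it proves the lemma from unproved statements that, in the paper, are proved only inside the proof of the lemma itself.

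A smaller sign that the deferred content cannot simply be waved at: you describe both merge steps as ``forming combinations of candidates coming from the different blocks,'' and you claim every S-repair ``decomposes uniquely into S-repairs of the blocks.'' That is the right picture for the common-LHS case, but not for the consensus-FD case, where an S-repair satisfying $\emptyset\rightarrow Y$ is contained in a single block and \consensusreduction{} takes a union of block candidate sets with \reprinsert{} pruning rather than cross-block combinations. Getting these two cases right (and showing that \reprinsert{} preserves both properties of a candidate set through every insertion) is precisely the missing core of the argument; once you write those two case analyses out, your $\mu$-induction closes the proof in the same way the paper's $w$-induction does.
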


\cut{
\begin{proof}[Proof Sketch]
    (Full proof in \cite{fullversion}) The optimality can be proved in two steps: (1) \reduce{} correctly computes the \candidateset{}, $\candset{\relation{}, \Delta}$. This is done by an induction on the number of reductions applied to $\Delta$, depending on the optimality of \commonlhsreduction{} and \consensusreduction{} (\Cref{lemma:commonlhs_correctness,lemma:consensus_correctness}). (2) \dpalgo{} computes the optimal representative repair by applying \postclean{} to $\candset{\relation{}, \Delta}$ (\Cref{lemma:candset}).

    The time complexity consists of three parts: (1) Checking whether $\fdset{}$ is empty, whether a \commonlhs{} exists, or whether a \consensusfd{} exists is in $O(m \cdot |\fdset{}|)$, where $m$ is the number of attributes and $|\fdset{}|$ is the number of FDs. (2) The recurrence relation of \reduce{}, $\recurrence{\text{RDCE}}$, is equal to one of $O(|\relation|)$ (when $\fdset{}$ is empty), \recurrence{\text{COLHS}}, and \recurrence{\text{COSNS}}. Therefore, it is poly-time according to \Cref{lemma:complexity_commonlhs,lemma:complexity_consensus}. (3) \postclean{} on \candidateset{} is poly-time according to \Cref{lemma:candset,prop:post_clean}.
\end{proof}
}


\subsubsection{Reduction for Consensus FD}\label{subsec:consensusreduction}
Consider a consensus FD $f: \emptyset \rightarrow Y$. Within an S-repair, all values of $Y$ should be the same. Suppose that $\dom(Y) = \{y_1, \cdots, y_n\}$ and $\relation{}_{y_{\ell}} = \sigma_{Y = y_\ell} R$ denotes the subset of \relation{} that has the value $Y = y_\ell$. The procedure  \consensusreduction{} (Algorithm~\ref{alg:consensus_reduction}) computes the candidate set $\candset{R_{y_\ell}, \fdset - f}$ by calling $\reduce(\relation{}_{y_\ell}, \fdset - f)$ for every $y_{\ell}$. Note that any S-repair $R' \in \candset{R_{y_\ell}, \fdset - f}
$ is also an S-repair of \relation{} for \fdset{}, i.e., $R' \in \allsrepair{\relation{}, \fdset}$, since the FD $f$ is already taken care of in $\relation{}_{y_\ell}$. Hence, Line 5 combines these sets from smaller problems (i.e., $\reduce(\relation{}_{y_\ell}, \fdset - f)$ for every $y_{\ell} \in \dom{(Y)}$) by inserting their candidates into $\candset{\relation{}, \fdset}$ one by one 
so that the properties of a candidate set is maintained in $\candset{\relation{}, \fdset}$.

\cut{
    \begin{figure}[!h]
        \centering
        \includegraphics[width=\columnwidth]{figures/consensusreduction.png}
        \caption{Notations and Workflow of \consensusreduction{}}
        \label{fig:consensusreduction}
    \end{figure}
    
     The notations used in this section are illustrated in \Cref{fig:consensusreduction}:
    \begin{itemize}
        \item Without loss of generality, we define $\dom{(Y)}$ by the set $\{y_1, y_2, \dots, y_n\}$ ordered ordinally, so $n = |\dom{(Y)}|$.
        \item $\relation{}{y_{\ell}} \coloneq \sigma_{Y = y_{\ell}}{\relation{}}$, the subset of $\relation{}$ where $Y = y_{\ell}$
        \item $\candset{\relation_{y_{\ell}}, \fdset{} - f}$ is the \candidateset{} of $(\relation_{y_{\ell}}, \fdset{} - f)$.
    \end{itemize}
    
    Algorithm~\ref{alg:consensus_reduction} initializes $\candset{\relation{}, \fdset{}}$ as an empty set (Line 1) and then iterates over each distinct value $y_{\ell}$ of $Y$ (Line 2). For each $y_{\ell}$, it recursively calls $\reduce{}$ on the smaller instance $(\relation{}{y_{\ell}}, \fdset{} - f)$ to compute $\candset{\relation{}{y_{\ell}}, \fdset{} - f}$ (Line 3) corresponding to the blue arrows in \Cref{fig:consensusreduction}. Each candidate $R'$ from $\candset{\relation{}{y_{\ell}}, \fdset{} - f}$ is then inserted into $\candset{\relation{}, \fdset{}}$ through \reprinsert{} (Lines 4-6), corresponding to the red arrows in \Cref{fig:consensusreduction}. Finally, the algorithm returns the computed $\candset{\relation{}, \fdset{}}$. Implicitly, we claim that $\candset{\relation{}, \fdset{}}$ is derived from $\bigcup\limits_{y_{\ell} \in \dom{(Y)}}{\candset{\relation_{y_{\ell}}, \fdset{} -f}}$. as shown in Lemma~\ref{lemma:consensus_correctness}:

    \begin{lemma}\label{lemma:consensus_correctness}
    Given a relation $\relation{}$, an FD set $\fdset{}$, and a consensus FD $f: \emptyset \rightarrow Y$ with $\dom{(Y)} = \{y_1, \dots, y_n\}$, \Cref{alg:consensus_reduction} computes $\candset{\relation{}, \fdset{}}$ by adding each candidate from $\bigcup_{y_{\ell} \in \dom{(Y)}} \candset{\relation{}{y_{\ell}}, \fdset{} - f}$ into $\candset{\relation{}, \fdset{}}$ through \reprinsert{}.
    \end{lemma}
    
    \begin{proof}[Proof Sketch]
        (Full proof in \cite{fullversion}) For any candidate $R' \in \candset{\relation{}, \fdset{}}$, $R'$ satisfies $f:\emptyset \rightarrow Y$, implying that all tuples in $R'$ agree on the same $Y$-value, say $y_{\ell}$. Thus, $R'$ is also a candidate in $\candset{\relation{}{y_{\ell}}, \fdset{} - f}$ (or representatively equivalent to one). Conversely, any candidate in $\candset{\relation{}{y_{\ell}}, \fdset{} - f}$ that is not representatively dominated by another candidate in $\candset{\relation{}, \fdset{}}$ must be a candidate in $\candset{\relation{}, \fdset{}}$.
    \end{proof}
    
    
    Similarly, we also analyze the time complexity of \consensusreduction{} through a recurrence relation in Lemma~\ref{lemma:complexity_consensus}, which will be merged to the entire analysis of time cost in \Cref{subsec:dpalgo_correctness_efficiency}.
    \begin{lemma}\label{lemma:complexity_consensus}
        The recurrence relation for \consensusreduction{} (denoted by \recurrence{COSNS}$(\cdot, \cdot)$) is a polynomial of $|\relation{}|, k$ and \recurrence{RDCE}$(\cdot, \cdot)$, where \recurrence{RDCE} is the recurrence relation of \reduce{} that will be discussed in \Cref{subsec:dpalgo_correctness_efficiency}.
    \end{lemma}
    \begin{proof}
        \consensusreduction{} iterates over each distinct $y_{\ell}$ in $O(|\relation{}|)$. Within the loop, it first computes $\candset{\relation{}_{y_{\ell}}, \fdset{} - f}$. Next, in line 4, the algorithm has an $O(|\relation{}|^k)$ (implied by \Cref{lemma:candset}) inner loop for each candidate $R'$ in $\candset{\relation{}_{y_{\ell}}, \fdset{} - f}$. Finally, in the most inner part (line 5), the algorithm utilizes \reprinsert{} to insert $R'$, which is in $O(k \cdot |\relation{}|^k)$ as shown in \Cref{lemma:reprinsert}. Overall, we have $\recurrence{\text{COSNS}}(\relation{}, \fdset{}) = \sum\limits_{y_{\ell} \in \dom{(Y)}} \recurrence{\text{RDCE}}(\relation{}_{y_{\ell}}, \fdset{} - f) + O(k \cdot |\relation{}|^{2k + 1})$.
    \end{proof}
}

\subsubsection{Reduction for Common LHS}\label{subsec:commonlhsreduction}
Consider a common LHS attribute $X$ that appears on the LHS of all FDs in $\fdset$. Suppose that $\dom{(X)} = \{x_1, x_2, \dots, x_n\}$, $\relation{}_{x_\ell} = \sigma_{X = x_\ell}{\relation{}}$ denotes the subsets of $\relation{}$ that have value $X = x_{\ell}$, and $\relation{}_{x_1, \dots, x_{\ell}} = \sigma_{X = x_1 \lor \dots \lor X = x_{\ell}}{\relation{}}$ as an extension.
Also suppose $\fdset_{-X}$ denotes that the common LHS attribute $X$ is removed from all FDs in $\fdset$. If we were to consider S-repairs, we could optimally repair each $R_{x_\ell}$ w.r.t. $\fdset_{-X}$ independently (and recursively), and then take the union of their optimal S-repairs to obtain an optimal S-repair for \relation{} w.r.t. $\fdset$ (as done in \cite{LivshitsKR18}). 
\par
\cut{
    For RS-repairs w.r.t. FD set $\fdset$ and RC $\rc$, we cannot take a simple union of the optimal RS-repairs of $R_{x_\ell}$ w.r.t. $\fdset_{-X}$ and some RC, since repairs $S_{x_\ell}$ of $\relation_{x_\ell}$ will have different distributions of the sensitive attributes $A_m$, and while combining them, we have to ensure that we are satisfying the final RC $\rc$. 
}

Yet, this is not the case for computing RS-repairs, since maximum size is not the only requirement---while we know the final repair will satisfy \rc{}, we do not know what the value distribution of $A_s$ should be in each disjoint piece (e.g., some $\relation_{x_{\ell}}$) of the final repair before we get one. Therefore, in each step of the recursion (either \commonlhsreduction{} here or \consensusreduction{} above), the \candidateset{} preserves all possible distributions of $A_s$ from the S-repairs that could provide the final optimal solution.

\par
\commonlhsreduction{} (\Cref{alg:common_lhs_reduction}) constructs the candidate set $\candset{\relation{}, \fdset}$ recursively from smaller problems by building solutions cumulatively in $n$ stages (the outer loop).
In particular, after stage $\ell$, the algorithm obtains the candidate set $\candset{\relation{}_{x_1, \dots, x_{\ell}}, \fdset{}}$ by combining S-repairs for $\relation{}_{x_1}, \cdots, \relation{}_{x_\ell}$. Note that the union of S-repairs for $\relation{}_{x_1}, \cdots, \relation{}_{x_\ell}$ 
is an S-repair for $\relation{}_{x_1, \dots, x_{\ell}}$ and consequently \relation{} 
w.r.t.~$\fdset$, but we have to ensure that the properties of a candidate set are maintained while combining these S-repairs. Line 3 computes the candidate set $\candset{\relation{}_{x_\ell}, \fdset_{-X}}$ recursively by calling $\reduce(\relation{}_{x_\ell}, \fdset_{-X})$. Since $\candset{\relation{}_{x_1, \dots, x_{\ell-1}}, \fdset{}}$ is already formed in the previous stage, in Lines 4-7, it goes over all combinations of $R' \in \candset{\relation{}_{x_1, \dots, x_{\ell-1}}, \fdset{}}$ and $R'' \in \candset{\relation{}_{x_\ell}, \fdset_{-X}}$, takes their union $R_0 = R' \cup R''$, and inserts it to $\candset{\relation{}_{x_1, \dots, x_{\ell}}, \fdset{}}$ by \reprinsert{} ensuring that the property of a candidate set is maintained. Finally, $\candset{\relation{}_{x_1, \dots, x_{n}}, \fdset{}}$ is returned as the final set $\candset{\relation{}, \fdset}$.

\begin{algorithm}[t]
\caption{\consensusreduction{$(\relation{}, \fdset{}, f)$}}\label{alg:consensus_reduction}
    \begin{algorithmic}[1]
        \Require{a relation \relation{}, a FD set \fdset{}, a consensus FD $f: \emptyset \rightarrow Y$ in \fdset{}}
        \Ensure{A \candidateset{} $\candset{\relation{}, \fdset{}}$}
        \State $\candset{\relation{}, \fdset{}} \gets \emptyset$;
        \For{each value $y_{\ell} \in \dom{(Y)}$} 
        
            \State $\candset{\relation{}_{y_{\ell}}, \fdset{} - f} \gets \reduce{(\relation{}_{y_{\ell}}, \fdset{} - f)}$;\Comment{\Cref{alg:reduce}}
            \For{all $R'$ in $\candset{\relation{}_{y_{\ell}}, \fdset{} - f}$}
                \State $\candset{\relation{}, \fdset{}} \gets \reprinsert{(\candset{\relation{}, \fdset{}}, R')}$; \Comment{\Cref{subsec:candidates}}
            \EndFor
        \EndFor
        \State \Return $\candset{\relation{}, \fdset{}}$.
    \end{algorithmic}
\end{algorithm}
{
\begin{algorithm}[b]
\caption{\commonlhsreduction{$(\relation{}, \fdset{}, X)$}}\label{alg:common_lhs_reduction}
    \begin{algorithmic}[1]
        \Require{A relation \relation{}, a FD set \fdset{}, a \commonlhs{} $X$ for all FDs in \fdset}
        \Ensure{a \candidateset{} $\candset{\relation{}, \fdset{}}$}
        \For{$\ell = 1$ to $n$}\Comment{Suppose $\dom{(X)} = \{x_1, x_2, \dots, x_n\}$}
            \State $\candset{\relation{}_{x_1, \dots, x_{\ell}}, \fdset{}} \gets \emptyset$;\Comment{Initialize a candidate set for $\Delta$ that only considers values $x_1, \cdots, x_\ell$ of $X$}
 
            \State $\candset{\relation{}_{x_{\ell}}, \fdset_{- X}} \gets \reduce{(\relation{}_{x_{\ell}}, \fdset_{- X})}$; \Comment{\Cref{alg:reduce}} 
              \For{all $R'$ in $\candset{\relation{}_{x_1, \dots, x_{\ell - 1}}, \fdset{}}$ and all $R''$ in $\candset{\relation{}_{x_{\ell}}, \fdset_{- X}}$}
                \State{$R_0 \gets R' \cup R''$}; \Comment{Combine prior and current \srepair{}s}
                \State $\candset{\relation{}_{x_1, \dots, x_{\ell}}, \fdset{}}$ $\gets$ $\reprinsert{}(\candset{\relation{}_{x_1, \dots, x_{\ell}}, \fdset{}}, R_0)$; \Comment{\Cref{subsec:candidates}}
            \EndFor
        \EndFor
        \State $\candset{\relation{}, \fdset{}} \gets \candset{\relation{}_{x_1, \dots, x_n}, \fdset{}}$
        \State \Return $\candset{\relation{}, \fdset{}}$.
    \end{algorithmic}
\end{algorithm}
}

\cut{
Intuitively, we splits a relation into multiple disjoint sub-relations according to $X$-values, where $X$ is the common LHS column. Then, we utilize \reduce{} to compute the \candidateset{} of each of these sub-relations along with a FD set without $X$. Third, we take a candidate from each of these candidate sets and unionize them to compute $\candset{\relation{}, \fdset{}}$.
\begin{figure}[!h]
    \centering
    \includegraphics[width=\columnwidth]{figures/commonlhsreduction.png}
    \caption{Notations and Workflow of \commonlhsreduction{}}
    \label{fig:commonlhsreduction}
\end{figure}

Algorithm~\ref{alg:common_lhs_reduction} outlines \commonlhsreduction{}, a sub-routine of \reduce{} that constructs the \candidateset{} $\candset{\relation{}, \fdset{}}$ when a \commonlhs{} $X$ exists in $\fdset{}$. The algorithm iterates over each distinct value $x_{\ell}$ of $X$ and computes $\candset{\relation{}{x_1, \dots, x_{\ell}}, \fdset{}}$ incrementally. The notations used in this section are illustrated in \Cref{fig:commonlhsreduction}:
\begin{itemize}
    \item Without loss of generality, we define $\dom{(X)}$ by the set $\{x_1, x_2, \dots, x_n\}$ ordered ordinally, so $n = |\dom{(X)}|$.
    \item $\relation{}{x{\ell}} \coloneq \sigma_{X = x_{\ell}}{\relation{}}$, the subset of $\relation{}$ where $X = x_{\ell}$.
    \item Similarly, $\relation{}{x_1, \dots, x{\ell}} \coloneq \bigcup_{i=1}^{\ell} \sigma_{X = x_i}{\relation{}}$, the subset of $\relation{}$ where $X \in {x_1, \dots, x_{\ell}}$.
    \item $\candset{\relation_{x_{\ell}}, \fdset_{- X}}$ is the \candidateset{} of $(\relation_{x_{\ell}}, \fdset_{- X})$.
    \item $\candset{\relation_{x_1, \dots, x_{\ell}}, \fdset{}}$ is the \candidateset{} w.r.t. $\relation_{x_1, \dots, x_{\ell}}$ and \fdset{}.
\end{itemize}
\Cref{alg:common_lhs_reduction} iterative computes $\candset{\relation_{x_1, \dots, x_{\ell}}, \fdset{}}$ for $\ell \in [1, n]$. In Line 2, it recursively calls $\reduce{}$ on the smaller instance $(\relation{}{x_{\ell}}, \fdset_{- X})$ store the result in $\candset{\relation_{x_{\ell}}, \fdset_{- X}}$, corresponding to the blue arrow lines in \Cref{fig:commonlhsreduction}. Lines 3-7 compute $\candset{\relation{}{x_1, \dots, x{\ell}}, \fdset{}}$ by merging $\candset{\relation{}{x{\ell}}, \fdset_{- X}}$ and $\candset{\relation{}{x_1, \dots, x{\ell - 1}}, \fdset{}}$ (red arrows in \Cref{fig:commonlhsreduction}). This is done by inserting the union of all pairs of candidates $(R', R'')$ from $\candset{\relation{}_{x_{\ell}}, \fdset_{- X}}$ and $\candset{\relation{}_{x_1, \dots, x_{\ell - 1}}, \fdset{}}$ into $\candset{\relation{}{x_1, \dots, x{\ell}}, \fdset{}}$ using \reprinsert{}. Finally, the algorithm returns $\candset{\relation{}_{x_1, \dots, x_n}, \fdset{}}$, which is equivalent to $\candset{\relation{}, \fdset{}}$.


The correctness of Algorithm~\ref{alg:common_lhs_reduction} is established by Lemma~\ref{lemma:commonlhs_correctness}.

\begin{lemma}\label{lemma:commonlhs_correctness}
    Given a relation \relation{}, a FD set \fdset{} and a \commonlhs{} $X$ with $\dom{(X)} = \{x_1, \dots, x_n\}$, \Cref{alg:common_lhs_reduction} computes $\candset{\relation{}, \fdset{}}$ through the following:

    For each $\ell \in [1, n]$, $\candset{\relation{}_{x_1, \dots, x_{\ell}}, \fdset{}}$ is derived from $\candset{\relation{}_{x_1, \dots, x_{\ell - 1}}, \fdset{}}$ and $\candset{\relation{}_{x_{\ell}}, \fdset_{- X}}$ by inserting each candidate in 
$$\{R' \cup R'' \mid R' \in \candset{\relation{}_{x_{\ell}}, \fdset_{- X}}, R'' \in \candset{\relation{}_{x_1, \dots, x_{\ell - 1}}, \fdset{}}\}$$
into $\candset{\relation{}_{x_1, \dots, x_{\ell}}, \fdset{}}$ using \reprinsert{}. 

Finally, $\candset{\relation{}, \fdset{}}$ is set to $\candset{\relation{}_{x_1, \dots, x_n}, \fdset{}}$.
\end{lemma}
\begin{proof}[Proof Sketch]
(Full proof in \cite{fullversion}) The proof is by induction on $\ell$. For $\ell = 1$, any candidate $R'_1 \in \candset{\relation{x_1}, \fdset{}}$ must have a candidate $R'_2 \in \candset{\relation{x_1}, \fdset_{- X}}$ that is representatively equivalent to it. For $\ell > 1$, any candidate $\bar{R} \in \candset{\relation{}{x_1, \dots, x{\ell}}, \fdset{}}$ can be decomposed into two parts: $\bar{R} \cap \relation_{x_{\ell}}$ and $\bar{R} \cap \relation_{x_1, \dots, x_{\ell - 1}}$, which are candidates in $\candset{\relation{}{x_{\ell}}, \fdset_{- X}}$ and $\candset{\relation{}{x_1, \dots, x_{\ell - 1}}, \fdset{}}$ respectively (or representatively equivalent to a candidate in these two candidate sets respectively)
\end{proof}

The time complexity of \commonlhsreduction{} is analyzed using a recurrence relation in Lemma~\ref{lemma:complexity_commonlhs}. However, since \reduce{} and \commonlhsreduction{} call each other recursively, we provide a complete proof for the complexity of the entire \dpalgo{} later in \Cref{subsec:dpalgo_correctness_efficiency}.

\begin{lemma}\label{lemma:complexity_commonlhs}
    The recurrence relation for \commonlhsreduction{} (denoted by \recurrence{COLHS}$(\cdot, \cdot)$) is a polynomial of $|\relation{}|, k$ and \recurrence{RDCE}$(\cdot, \cdot)$, where \recurrence{RDCE} is the recurrence relation of \reduce{} (discussed later in \Cref{subsec:dpalgo_correctness_efficiency}).
\end{lemma}
\begin{proof}
    First, in line 1, \commonlhsreduction{} iterates over each distinct value $x_{\ell}$ in $O(|\relation{}|)$. Within the loop, it calls reduce in line 2, then has an inner loop on all pairs of candidates from $\candset{\relation{}_{x_{\ell}}, \fdset{}}$ and $\candset{\relation{}_{x_1, \dots, x_{\ell - 1}}, \fdset_{- X}}$ respectively, which is in $O(|\relation{}|^{2k})$. Inside the nested loop (line 5), the algorithm calls \reprinsert{} in $O(k \cdot |\relation{}|^k)$ as implied by \Cref{lemma:reprinsert}. Overall, we have $\recurrence{\text{COLHS}}(\relation{}, \fdset{}) = \sum\limits_{x_{\ell} \in \dom{(X)}} \recurrence{\text{RDCE}}(\relation{}_{x_1, \dots, x_\ell}, \fdset_{- X}) + O(k \cdot |\relation{}|^{3k + 1})$.
\end{proof}

}


\section{Algorithms for the General Case} 
\label{sec:general}
Computing optimal \rsrepair{}s for arbitrary $\fdset$ and $\rc$ is NP-hard (\Cref{sec:hardness}). 
We now present a collection of end-to-end algorithms capable of handling general inputs. We begin with an exact algorithm based on integer linear programming (ILP) and then present a heuristic utilizing LP relaxation and rounding. Next, we present another heuristic using procedures from the previous section for LHS-chains as a subroutine (\Cref{subsec:scalable_heuristic}).

\cut{
In the subsequent sub-sections, we specify the details for each end-to-end-algorithm:
\begin{itemize}
    \item Section~\ref{subsec:ilp}: ILP 
    \item Section~\ref{subsec:lp}: LP relaxation.
    \item Section~\ref{subsec:scalable_heuristic}: a scalable heuristic based on dynamic programmings called \scalableheuristic{}
    \item Section~\ref{subsec:end_to_end_framework}: an end-to-end framework combining the techniques.
\end{itemize}
\fs{@Sudeepa: please check the name. Do we need to rename?}
}
\subsection{LP-based Algorithms}\label{subsec:ilp}
\paratitle{ILP-based Optimial Algorithm and LP-based Heuristic} We use $|\relation{}|$ binary random variables $x_1, x_2, \dots, x_{|\relation{}|}$, where $x_i \in \{0,1\}$ denotes whether tuple $t_i \in \relation$ is retained (if $x_i = 1$) or deleted (if $x_i=0$) in the \rsrepair{}. From the result of the following ILP we take the tuples with $x_i = 1$. We refer to this algorithm as \globalilp{}.

\begin{small}
\hrule
\begin{align}\label{ilp:global}
    \textbf{Maximize } \sum_{i \in [1, |\relation{}|]}\!\!{x_i} \quad &\textbf{subject to:}\\
    x_i + x_j \leq 1 \quad & \text{for all conflicting $t_i$ and $t_j$} \nonumber\\
    \sum_{i: t_i[A_s] = a_\ell}{\hspace{-1em}x_i} ~~\geq~~ p_\ell \times \sum_{i}{x_i} \quad &\text{for all } a_\ell \in \dom(A_s) \nonumber\\
    x_i \in \{0,1\} \quad & \text{for all } i \in [1,|\relation{}|] \nonumber 
\end{align}
\hrule
\end{small}


The objective maximizes the number of tuples retained. The first constraint ensures that the solution does not violate $\fdset{}$. The following set of constraints correspond to the RC $\rc$, by ensuring each lower-bound constraint is satisfied, i.e. $\%a_{\ell} \geq p_{\ell}$, where $p_\ell = \rc(a_\ell)$, is satisfied for every $a_{\ell} \in \dom(A_s)$. 

\cut{
\begin{algorithm}
\caption{\globalilp{}$(\relation{}, \fdset{}, \rc{})$}\label{alg:global_ilp}
    \begin{algorithmic}[1]
        \State Build and optimize ILP in \cref{ilp:global};
        \If{Find a solution $\Vec{x} \in \{0, 1\}^{|\relation{}|}$}
            \State \Return $R' \coloneq \{t_i \mid x_i = 1, \forall i \in [1, |\relation{}|]\}$; 
        \EndIf
        \State\Return $\emptyset$;
    \end{algorithmic}
\end{algorithm}

The algorithm is outlined in \Cref{alg:global_ilp}. After solving the ILP in \Cref{ilp:global}, if a feasible solution is found, the algorithm constructs the repair $R'$ by selecting all tuples $t_i$ such that $x_i=1$ and returns $R'$. If no solution is found, it returns an empty set, indicating that no valid representative repair exists. 
}

The ILP in \Cref{ilp:global} can be relaxed to an LP by replacing the integrality constraints $x_i \in \{0,1\}$ with $x_i \in [0,1]$, for every $i \in [1,|\relation{}|]$.\cut{ It can be solved in polynomial time, but may return fractional variable values.}
We propose rounding procedures to derive an integral solution from fractional $x_i$s and refer to the heuristic as \lprepr{} (pseudocode and running time analysis in \cite{fullversion}). 

\paratitle{Limitations} While \globalilp{} provides an exact optimal solution, its scalability is limited by the size 
of the ILP. And ILP in general is not poly-time solvable. Each pair of tuples ($t_i, t_j$) that conflict on some FD introduces a constraint $x_i + x_j \leq 1$ to the ILP, therefore it can have $O(|\relation{}|^2)$ constraints, leading to a large program that does not scale. 
In \Cref{sec:experiments}, we show that \globalilp{} finds optimal \rsrepair{}s but does not scale to large datasets. For \lprepr{}, we observe in \Cref{sec:experiments} that, even with the state-of-the-art LP solvers, solving our LP can be slow and sometimes encounter out-of-memory issues due to large number of constraints. Hence, we propose a DP-base heuristic using ideas from \Cref{sec:poly-algo-chain} to explore the possibility of avoiding solving the large LP.

\cut{
\begin{equation}\label{lp:global}
\begin{aligned}
    \textbf{maximize} \ \ & \sum_{i \in [1, |\relation{}|]}{x_i} &\\
    \textbf{subject to} \ \ & x_i + x_j \leq 1, \quad \text{if $t_i$ and $t_j$ violate some FD in } \fdset{} & \\
    & x_i \in [0, 1] \quad \forall i \in [1,|\relation{}|]\\
    & \frac{\sum_{t_i[A_s] = a_1}{x_i}}{\sum_{i \in [1, |\relation{}|]}{x_i}} \geq p_1, \\
    & \frac{\sum_{t_i[A_s] = a_2}{x_i}}{\sum_{i \in [1, |\relation{}|]}{x_i}} \geq p_2, \\
    & \dots \\
    & \frac{\sum_{t_i[A_s] = a_k}{x_i}}{\sum_{i \in [1, |\relation{}|]}{x_i}} \geq p_k.
\end{aligned}
\end{equation}
}

\cut{
\subsubsection{Greedy Rounding (\lpgreedy{})}

\begin{algorithm}
\caption{\lpgreedy$(\relation{}, \fdset{}, \rc{})$}\label{alg:lpgreedy}
    \begin{algorithmic}[1]
        \State Build and optimize LP;
        \If{Find a solution $\Vec{x} \in \mathbb{R}^{|\relation{}|}$}
            \While {$\exists~0 < x_i < 1$}
                \State Identify $x_i$ that is involved in the smallest number of LP constraints;
                \State $x_i \gets 1$ and $x_j \gets 0$ for all $j$ where there is a constraint $x_i + x_j \leq 1$ exists;
            \EndWhile
            \State $R' \coloneq \{t_i \mid x_i = 1, \forall i \in [1, |\relation{}|]\}$;
            \State \Return \postclean{$(R', \rc{})$};
        \EndIf
        \State \Return $\emptyset$;
    \end{algorithmic}
\end{algorithm}

We propose a greedy algorithm outlined in Algorithm~\ref{alg:lpgreedy} that guarantees no violations on FDs occur and greedily retains more tuples in the repair. The algorithm ends with \postclean{} to ensure satisfying the RC. It is important to note that representation is not considered before line 8. The time cost consists of two parts. The time cost of building and optimizing the LP depends on the number of constraints, which is at most $O(|\relation{}|^2)$ in our case. On the other hand, the rounding step consists of an enumeration of undecided variables and their neighbors, where the complexity of the rounding step is $O(|\relation{}|^2)$.

However, we have observed the performance degradation with the greedy rounding approach, especially when dealing with datasets containing a large number of errors. The greedy rounding fails to preserve the minor sub-groups because the central of rounding is primarily on maximizing the number of tuples retained.

\subsubsection{Representation-Aware Rounding (\lprepr{})}

\begin{algorithm}
\caption{\lprepr$(\relation{}, \fdset{}, \rc{})$}\label{alg:lprepr}
    \begin{algorithmic}[1]
        \State Build and optimize LP;
        \If{Find a solution $\Vec{x} \in \mathbb{R}^{|\relation{}|}$}
            \While {$\exists~ 0 < x_i < 1$}
                \State $a_{\ell} \gets \arg \min_{a_{\ell} \in \dom{(A_s)}} \frac{\sum_{t_i[A_s] = a_{\ell}}{x_i}}{p_c}$;
                \State Identify $x_i$ with $t_i[A_s] = a_{\ell}$ and involved in the smallest number of LP constraints;
                \State $x_i \gets 1$ and $x_j \gets 0$ for all $j$ where there is a constraint $x_i + x_j \leq 1$ exists;
            \EndWhile
            \State $R' \coloneq \{t_i \mid x_i = 1, \forall i \in [1, |\relation{}|]\}$;
            \State \Return \postclean{$(R', \rc{})$};
        \EndIf
        \State \Return $\emptyset$;
    \end{algorithmic}
\end{algorithm}
To alleviate this issue in \lpgreedy{}, we propose a representation-aware enhancement in \lprepr{} as outlined in \cref{alg:lprepr}. The primary modification lies in lines 4 and 5, where the considerations of representation are incorporated by prioritizing the selection of tuples from underrepresented sensitive groups. \cut{during the selection of undecided $x_i$ values.} Inspired by stratified sampling in \cite{neyman1992two}, we introduce $\frac{\sum_{t_i[A_s] = a_{\ell}}{x_i}}{p_c}$, representing the ratio of the tuple with a given $A_s$-value $a_{\ell}$ to the expected proportion $p_{\ell}$. A smaller ratio indicates a less representative group in the retained tuples. From these less representative groups, \lprepr{} chooses $x_i$ associated with the fewest conflict constraints, then employing similar techniques as \lpgreedy{} does. While the computational cost remains primarily on solving the LP for most cases, the rounding takes slightly longer than that in \lpgreedy{} due to the additional representation considerations in this step.

\yuxi{todo: move to extension (or is it okay to put it here @Sudeepa) ----}
Developing a smarter rounding remains a challenging problem for future exploration. The obstacle arises from the large number of constraints of the form $x_i + x_j \leq 1$, resulting in complex conflict graphs with high degree nodes. Additionally, preserving representation during rounding further complicates the task of retaining more tuples. 
}

\subsection{\scalableheuristic{}: A DP-based Algorithm}\label{subsec:scalable_heuristic}


The combinatorial DP-based \scalableheuristic{} algorithm is motivated by the ideas behind the \commonlhsreduction{} and \consensusreduction{} procedures in Section~\ref{sec:poly-algo-chain}. 
An FD set $\fdset{}$ with only one FD 
can be reduced to the empty set using \dpalgo{} by first applying \commonlhsreduction{} and then \consensusreduction{}, and hence is poly-time solvable by Theorem~\ref{thm:poly-time-chain}.  \scalableheuristic{} therefore calls \dpalgo{} with one FD at a time from $\Delta$ until all FDs are taken care of. Further, it prioritizes the FD which has a most frequent LHS column $X$ (among all the columns) in its LHS. This 
greedy approach may not be optimal as demonstrated empirically in \Cref{sec:experiments}.
\begin{algorithm}
\caption{\scalableheuristic{$(\relation{}, \fdset{}, \rc{})$}}\label{alg:scalable_heuristic}
    \begin{algorithmic}[1]
        \While{$\fdset{}$ is not empty}
            \State Select the most frequent LHS column $X$ in \fdset{};
            \State Choose one arbitrary FD $f$ whose LHS contains $X$;
            \State $\relation{} \gets \dpalgo{(\relation{}, \{f\}, \rc{})}$;
            \State $\fdset{} \gets \fdset{} - f$
        \EndWhile
        \State \Return $\relation{}$;
    \end{algorithmic}
\end{algorithm}
As highlighted, 
Since $\fdset$ is fixed and each call to \dpalgo{} runs in polynomial time for fixed $\dom(A_s)$, \scalableheuristic{} terminates in polynomial time for any $(\relation{}, \fdset{}, \rc{})$ where $\dom{(A_s)}$ is fixed. 
\cut{
Nevertheless, \scalableheuristic{} may not always find the optimal representative repair due to the following reasons: 
\begin{itemize}
    \item Most frequent LHS does not necessarily exhibit the property of being a \commonlhs{}, so the optimization base on $f$ alone may lead to a local optimum.
    \item The locally optimal repair computed by \dpalgo{} for a single FD $f$ may not be globally optimal when considering the entire FD set \fdset{}.
\end{itemize}
Despite these limitations, 
}
\scalableheuristic{} provides a practical and scalable heuristic approach for handling large instances of the problem of computing \rsrepair{}s, by leveraging the efficient subroutine, \dpalgo{}.
Its effectiveness and efficiency will be empirically evaluated in Section~\ref{sec:experiments}.


In our implementation, the heuristics described in \Cref{subsec:scalable_heuristic} first employ \consensusreduction{} and \commonlhsreduction{} until no feasible reductions are possible.
This decomposes the problem into sub-problems.
Then the heuristics are applied to the sub-problems and return a single \rsrepair{} (and consequently a singleton candidate set) for each of them. These candidate sets are later merged during the backtracking stage of reductions. Finally, all the end-to-end algorithms will return a candidate set as what \dpalgo{} does, and rely on \postclean{} to ensure satisfying the RC.

\cut{
\subsection{End-to-end Framework}\label{subsec:end_to_end_framework}
\begin{algorithm}[!h]
\caption{\prob$(\relation{}, \fdset{}, \rc{})$ and a option number $o$}\label{alg:end_to_end}
    \begin{algorithmic}[1]
        \If{$\fdset{}$ is empty}
            \State \Return \postclean{$(\relation{}, \rc{})$};
        \ElsIf{$o = 1$}
            \State \Return \globalilp{$(\relation{}, \fdset{}, \rc{})$};
        \ElsIf{$o = 2$ and $\fdset{}$ is a chain}
            \State \Return \dpalgo{$(\relation{}, \fdset{}, \rc{})$};
        \Else
            \If{there is a \commonlhs{} or a \consensusfd{} in $\fdset{}$}
                \State Recursively apply \commonlhsreduction{} and \consensusreduction{};
            \Else
                \If{$o = 3$}
                    \State \Return \lpgreedy{$(\relation{}, \fdset{}, \rc{})$};
                \ElsIf{$o = 4$}
                    \State \Return \lprepr{$(\relation{}, \fdset{}, \rc{})$};
                \ElsIf{$o = 5$}
                    \State \Return \scalableheuristic{$(\relation{}, \fdset{}, \rc{})$};
                \EndIf
            \EndIf
        \EndIf
    \end{algorithmic}
\end{algorithm}

We consolidate the techniques outlined above and propose an end-to-end framework in Algorithm~\ref{alg:end_to_end}, called \prob{}. The framework operates based on the nature of \fdset{} and user preference $o$\cut{option selected by the user}. We summarize all the algorithms shown in previous sections and clarify each option in \Cref{tab:end_to_end_options}. \yuxi{\Cref{tab:end_to_end_options} was moved forward to help showing the structure, if no need, we will move it back here; otherwise we will remove this sentence.}

Both option 1 and 2 offer the optimal representative repair for the given instance. In contrast, the other options utilize heuristics that start with a recursive simplification (if feasible) by applying \commonlhsreduction{} and \consensusreduction{}. This process divides the input instance into smaller sub-instances. Option 3, 4, and 5 then compute an optimal representative repair on these instances (or the original instance if no feasible reductions). While the heuristic may not be globally optimal, they are well scalable.

}

\section{Experiments}\label{sec:experiments}
\cut{
    Our work is driven by the observation that prevailing data repairing methods usually overlook the representation of sensitive attributes, simply because their main objectives typically do not prioritize representation. This observation prompt us to incorporate representation consideration into the procedure of data repairing. 
}
In this section, we evaluate the deletion overhead to preserve representation in subset repairs, 
and the quality and performance of our algorithms. 
In particular, we study the following questions:
\begin{enumerate}[leftmargin=*]
    \item \Cref{subsec:exp_cost_repr}: How many additional tuple deletions are required (i.e., deletion overhead) for computing optimal \rsrepair{}s compared to computing optimal \srepair{}s? 
    \item \Cref{subsec:exp_quality}: How effective is each algorithm in minimizing tuple deletions compared to an optimal \rsrepair{} algorithm (i.e., \rsrepair{} quality)? 
    \item \Cref{subsec:exp_scalability}: What is the runtime cost of our algorithms?
    \item \Cref{subsec:exp_rc}: What is the impact of considering non-exact RCs on the number of deletions, i.e., $\% a \geq p$ instead of $\% a = p$?
\end{enumerate}

\cut{
For example, if the input relation has $80\%$ 'Native' and $20\%$ 'Foreign-born' for the sensitive attribute NATIVITY, the RC would be $\{\%\text{'Native-born'} = \frac{80}{100}, \%\text{'Foreign-born'} = \frac{20}{100}\}$. The sum of the required proportions $=1$.
we set \rc{} according to the value distribution of the sensitive attribute in the input relation.} 
\cut{We gradually lower the required lower-bound proportion for some sensitive values.}

\paratitle{Summary of findings} First, the deletion overhead is high when the noise distribution is imbalanced in the input (more noise in one subgroup), especially when the under-represented group defined by the sensitive attribute has relatively more noise than the majority group.
Second, the DP-based algorithms proposed in this paper, \expscalableheuristic{} for general FD sets (\Cref{sec:general}), and \expdpalgo{} to which the former reduces to for chain FD sets (\Cref{sec:poly-algo-chain}), present the best trade-off of high \rsrepair{} quality and runtime compared to the other alternatives that we have examined. Third, as the constraint on exact RC is relaxed, fewer deletions are needed. 

\cut{
our experiments revealed distinct performance trends across different scenarios. \expglobalilp{} provides optimal \rsrepair{}s for arbitrary inputs but does not scale beyond 10K tuples for ACS data. For chain FD sets, \dpalgo{} generally achieved high effectiveness and efficiency except for the Flight data with sparse \commonlhs{} values (only 2 or 3 tuples sharing the same key), where \expglobalilp{} is more efficient. For non-chain FD sets, \scalableheuristic{} demonstrates its scalability up to 1.5M tuples, maintaining high repair quality ($>94\%$) across various settings. \explprepr{} shows competitive performance for COMPAS data but with higher runtime. \scalableheuristic{} provides the best trade-off between repair quality and runtime cost.}

\cut{
    For example, the RC on the NATIVITY attribute can be relaxed to $\{\%\text{'Native-born'} \geq \frac{80}{100}, \%\text{'Foreign-born'} \geq \frac{15}{100}\}$.\cut{The sum of the required proportions is now 0.95 instead of 1.} Relaxing RCs in this way is expected to reduce the number of deletions needed in the repairing process.
}


\cut{
    To answer the above questions, we implement the algorithms proposed in \cref{sec:algorithm} to strengthen the motivation. And the remainder of this section is organized as follow: in \cref{subsec:exp_basic}, we introduce the basic setting for the experiments. In \cref{subsec:cost_repr}, we quantify the cost of representation, illustrate the disparity between considering representation \textbf{after} repairing and \textbf{during} repairing, and visualize the gain of representation among different algorithms. In \cref{subsec:exp_quality}, we establish the criteria for evaluating the quality of these estimates and analyze associated computational cost. We thoroughly examine the pros and cons of each estimator and propose a comprehensive utilization of our end-to-end framework for different scenarios. Finally in \cref{subsec:exp_scalability}, we further study the limits for the scalable algorithms (estimators) by varying the relation size, size of FD set, and the domain size of the sensitive attributes.
}

\subsection{Setup}\label{subsec:exp_basic}
We implement\footnote{Code publicly available at \url{https://github.com/louisja1/RS-repair}.} our algorithms in Python 3.11 and experiment on a machine with a commodity EPYC CPU 
(AMD EPYC 7R13 48-Core Processor @2.6GHz, Boost 3.73GHz, 192MB L3 cache; 256GiB DDR4-3200 memory). 
We use Gurobi\cite{gurobi} as the LP/ILP solver.
\cut{We use pandas.Dataframe\footnote{https://pandas.pydata.org/docs/reference/api/pandas.DataFrame.html} to maintain the repair candidates. Although finer data structures (e.g. bitmap\footnote{https://pypi.org/project/bitmap/}) or dumping/loading candidates to/from can reduce the memory cost from DP, a more memory-friendly implementation for \dpalgo{} goes beyond the scope of our paper and remains as future work.} 


\subsubsection{Datasets}\label{subsubsec:exp_dataset} We use samples of different sizes from three real datasets (ACS, COMPAS, and Flight) that are commonly used for the study of fairness or data repair. For ACS and COMPAS, the noise is injected (\Cref{sec:expt-noise}), whereas for the Flight dataset, the noise is real and inherent to the dataset. 
Additional experiments with a fourth dataset on  Credit Card Transactions \cite{choksi_credit_2023} appear in the full version~\cite{fullversion} due to space limitations.
\begin{itemize}[leftmargin=*]
    \item ACS-PUMS (in short ACS)~\cite{ding2022retiring}: The American Community Survey Public Use Microdata Sample dataset. We use 9 attributes and samples from 2K to 1.5M in our experiments (described below).
    \item COMPAS~\cite{larson2016compas}: The ProPublica COMPAS recidivism dataset. We use 10 attributes and samples from 4K to 30K in our experiments. 
    \item Flight~\cite{li2015truth,boeckling2024cleaning}\footnote{We download and use the version of data from \citet{boeckling2024cleaning}.}: The Flight data contains information on scheduled and actual departure time and arrival time from different sources, and has been used in prior work of data fusion~\cite{li2015truth} and data cleaning~\cite{RekatsinasCIR17,boeckling2024cleaning}. We use 6 attributes and samples from 2K to 8K.
 \end{itemize}

 \subsubsection{Functional dependencies}\label{sec:expt-fd} 
 We consider two types of FD sets in our experiments: (1) chain FD sets (i.e., LHS-chains, \Cref{def:lhs-chain}, \Cref{sec:poly-algo-chain}),  and (2) non-chain FD sets (\Cref{sec:general}). 
 For ACS and COMPAS, FDs are inferred from the data description documents and verified in the original clean datasets \cite{bureau_pums_nodate}. 
 The FD sets used in all three datasets are shown in \Cref{table:fds}.\footnote{$\att{ST, DIV, CIT, RAC1P, RAC2P, POBP}$, and $\att{WAOB}$ are in short for state code, division, citizenship status, race code 1, race code 2, place of birth, world area of birth respectively for dataset ACS. $\att{DOB, RSL}$, and $\att{RSLT}$ are short for date of birth, recommended level of supervision, and its text respectively for dataset COMPAS. $\att{DF}$ is in short for date collected + flight number for dataset Flight.} For example, the FD $\fd{ST}{DIV}$ for ACS implies that each state has a unique division. 
 For the Flight dataset, FDs are from prior work \cite{boeckling2024cleaning}.  
 For ACS and COMPAS, the set of FDs is a non-chain, and we also use a subset of the FDs that forms a chain for these two datasets. For Flight, the set of FDs forms a chain, so only the chain FD set is considered.
 


\begin{table}[t]
\begin{footnotesize}
\caption{FD sets used in experiments}\label{table:fds}

\begin{tabular}{|c|c|c|}
\hline
{\bf Dataset} & \textbf{Chain FD Set} & \textbf{Non-chain FD Set} \\ 
\hline
\textbf{ACS} & \begin{tabular}{@{}l@{}}\{$\fd{ST}{DIV}$,\\$\fd{DIV}{Region}$\}\end{tabular} & \begin{tabular}{@{}c@{}}\{$\fd{CIT}{Nativity}$,$\fd{ST}{DIV} $, \\$\fd{DIV}{Region}$, \\$\fd{POBP}{WAOB}$,$\fd{RAC2P}{RAC1P}$\}\end{tabular}\\ 
\hline
\textbf{COMPAS} & \begin{tabular}{@{}l@{}}\{\textsf{DecileScore} $\rightarrow$\\ \textsf{ScoreText}\}\end{tabular} & 
\begin{tabular}{@{}c@{}}\{$\fd{DecileScore}{ScoreText}$,\\ $\fd{ScaleID}{DisplayText}$,$\fd{RSL}{RSLT}$,\\  $\fd{DisplayText}{ScaleID}$,$\fd{RSLT}{RSL}$,\\ $\fd{FirstName,LastName,DOB}{Sex}$\}\end{tabular}\\ 
\hline

\multirow{2}{*}{\textbf{Flight}} & \multicolumn{2}{c|}{\textbf{Chain FD Set}} \\
\cline{2-3}
& \multicolumn{2}{c|}{\begin{tabular}{@{}l@{}}\{$\fd{DF}{ActualDeparture}$, $\fd{DF}{ActualArrival}$,\\ $\fd{DF}{ScheduledDeparture}$, $\fd{DF}{ScheduledArrival}$\}\\
\end{tabular}} \\

\hline
\end{tabular}
\end{footnotesize}
\end{table}
    

\subsubsection{Sensitive attribute selection}\label{sec:expt-sensitive}
For the ACS dataset, we consider $\att{Nativity}$ as the sensitive attribute (with the values \texttt{Native-born} and \texttt{Foreign-born}) for the non-chain FD set, and $\att{Region}$ (with the values \texttt{Region-1-2} and \texttt{Region-3-4}) for the chain FD set, so that the 
sensitive attribute is always included in the FD set. ($\att{Nativity}$ is not in the chain FD set.) 
For the COMPAS dataset, we use $\att{Sex}$ as the sensitive attribute with values \texttt{Male} and \texttt{Female} that appear in the dataset.
For the Flight dataset, there is no natural sensitive attribute so we choose the $\att{Source}$ of the data as the sensitive attribute with two values \texttt{wunderground} and \texttt{flightview}. We use binary values of the sensitive attribute in our experiments since we vary both the data and noise distribution of two groups (\Cref{sec:expt-noise}), but our algorithms can handle multiple values of the sensitive attribute. 





\subsubsection{Obtaining noisy input data}\label{sec:expt-noise} 
The Flight dataset has violations of the FD, so we only generate uniform random samples of 2K, 4K, 6K, and 8K tuples as the noisy input data. These samples have slightly different value distributions of $\att{Source}$, ranging from $\{\%\texttt{wunderground} = 58\%, \%\texttt{flightview} = 42\%$\} to $\{\%\texttt{wunderground}$ $ = 54\%, \%\texttt{flightview} = 46\%\}$.
\par
The ACS and COMPAS datasets satisfy their FDs, and  we inject random noise (FD violations) 
similarly to prior work~\cite{GeertsMPS13-LLunatic, DallachiesaEEEIOT13, RekatsinasCIR17}.
We vary two parameters: (1) {\em value distribution of the sensitive attribute}, and (2) {\em relative noise distribution of the two groups defined by the sensitive attribute}, as we describe next.

{\em (1) Value distribution of the sensitive attribute for ACS and COMPAS:} We consider binary values of their sensitive attributes denoted by \texttt{Group-1} and \texttt{Group-2}. 
The notation ``X\%-Y\%'' denotes the value distribution of these two groups, i.e., 
$\frac{\%\text{no. of tuples from {\tt Group-1}}}{\%\text{no. of tuples from {\tt Group-2}}} = \frac{X}{Y}$
where  
$X + Y = 100$. 
We generate stratified samples for these two groups in the ACS and COMPAS datasets where the percentages ``X\%-Y\%'' are varied as ``80\%-20\%'', ``60\%-40\%'' to ``50\%-50\%'' for different sizes of the datasets. Therefore, {\tt Group-1} is called the majority group ({\tt Native-born, Region-1-2, Male} in \Cref{sec:expt-sensitive}) having possibly higher percentage of tuples, and {\tt Group-2} ({\tt Foreign-born, Region-3-4, Female} in \Cref{sec:expt-sensitive}) is called the minority group.
\par
{\em (2) Relative noise distribution of the sensitive attribute in ACS and COMPAS: } 
First, we choose an overall noise level. 
We only change the values of the attributes that appear in the LHS or RHS of FDs (\Cref{table:fds}).
$x \%$ noise level means that $x \%$ of the cells (attribute values) belonging to these attributes from all tuples in the data generated in the previous step have been changed to another value from the domain of the corresponding attribute. Then, we choose a relative noise distribution of the values {\tt Group-1} and {\tt Group-2} of the binary sensitive attribute. The distribution ``x\%-y\%'' means that $\frac{\%\text{noise level of {\tt Group-1}}}{\%\text{noise level of {\tt Group-2}}} = \frac{x}{y}$, where $x+y = 100$. The values of $x\%$-$y\%$ are chosen from ``20\%-80\%'', ``40\%-60\%'', ``50\%-50\%'', ``60\%-40\%'', ``80\%-20\%'' to study different levels of noise in the majority and the minority groups (\Cref{subsec:exp_cost_repr}). For example, 
``20\%-80\%'' means that the noise level of {\tt Group-2} (the minority group, e.g., {\tt Foreign-born}, {\tt Female}, etc.) is four times that of {\tt Group-1} (the majority group, e.g., {\tt Native-born}, {\tt Male}, etc.). Since the sensitive attribute values are changed, the data distribution in the previous step of the majority and the minority group may change following the noise injection. 
\cut{\revb{\Cref{subsec:exp_cost_repr} explores different relative noise distributions between two groups of the sensitive attribute to analyze their impact on deletion overhead. We denote the relative noise distribution For \Cref{subsec:exp_quality,subsec:exp_scalability,subsec:exp_rc}, we change the same number of cells in each sub-population defined by the sensitive attribute.}
\cut{simulating real-world scenarios where data quality may vary across different groups \cite{barocas2016big, lerman2013big}. } 
}

\cut{
We consider the following RCs.
For the ACS dataset, we use an RC on the sensitive attribute $\att{Nativity}$, $\{\%\texttt{Native-born} = \frac{80}{100}, \%\texttt{Foreign-born} = \frac{20}{100}\}$ and another RC on $\att{Region}$, $\{\%\texttt{Region 1 and 2} = \frac{80}{100}, \%\texttt{Region 3 and 4} = \frac{20}{100}\}$ for non-chain FD sets and chain FD sets respectively.  We use different RCs based on the type of FD sets since we want (1) a natural sensitive attribute and (2) the sensitive attribute to always be included in the FD set. Therefore, for non-chain FDs, we choose the $\att{Nativity}$ attribute. However, since this attribute is not included in the chain FDs, we resort to an RC over $\att{Region}$. For the COMPAS dataset, we use an RC on the sensitive attribute $\att{Sex}$, $\{\%\texttt{Male} = \frac{80}{100}, \%\texttt{Female} = \frac{20}{100}\}$.}

\cut{
    \yuxi{@Amir, Sudeepa: this is the old table for FD sets.}
    \begin{table}[!ht]
    \begin{footnotesize}
    \caption{FD sets used in experiments}\label{table:fds}
    \begin{tabular}{|c|c|c|}
    \hline
    {\bf Dataset} & \textbf{chain FD Set} & \textbf{Non-chain FD Set} \\ 
    \hline
    \textbf{ACS} & \begin{tabular}{@{}l@{}}\{$\fd{ST}{DIVISION}$,\\$\fd{ST}{REGION}$\}\end{tabular} & \begin{tabular}{@{}c@{}}\{$\fd{CIT}{NATIVITY}$, \\$\fd{ST}{DIVISION} $, \\$\fd{DIVISION}{REGION}$,\\ $\fd{RAC2P}{RAC1P}$, \\$\fd{POBP}{WAOB}$\}\end{tabular}\\ 
    \hline
    \textbf{COMPAS} & \begin{tabular}{@{}l@{}}\{${DecileScore} \rightarrow$\\ ${ScoreText}$\}\end{tabular} & 
    \begin{tabular}{@{}c@{}}\{$\fd{DecileScore}{ScoreText}$,\\ $\fd{ScaleID}{DisplayText}$,\\  $\fd{DisplayText}{ScaleID}$,\\ $\fd{RSLT}{RSL}$,\\ $\fd{FirstName,LastName,DOB}{Sex}$,\\ $\fd{RSL}{RSLT}$\}\end{tabular}\\ 
    \hline
    \end{tabular}
    \end{footnotesize}
    \end{table}
}
\cut{We consider two types of FD sets: (1) chain FD sets, that is, where the FD set is an LHS-chain, and (2) non-chain FD sets. For the ACS dataset, we employ 2 FDs in the chain FD set and 5 FDs in the non-chain FD set; for the COMPAS dataset, 1 FD is used in the chain FD set case, and 6 FDs are used in the non-chain FD set. All FDs we used in experiments are reported in the full version\cite{fullversion}.}




\subsubsection{Representation constraints}\label{sec:expt-rc}
\cut{For the ACS dataset, we use RCs on sensitive attributes \texttt{NATIVITY} and \texttt{REGION} for the non-chain FD set and chain FD sets, respectively; for the COMPAS dataset, we consider the RC on \texttt{SEX} for both chain and non-chain FD sets. As stated previously, we use exact RCs (\Cref{def:rep_constraint}) in \Cref{subsec:exp_cost_repr,subsec:exp_quality,subsec:exp_scalability}, and we examine general RCs in \Cref{subsec:exp_rc}. The specific proportions of each RC are reported in full version\cite{fullversion}.}
We use exact RCs in \Cref{subsec:exp_cost_repr,subsec:exp_quality,subsec:exp_scalability} to preserve the original distribution of the two groups of the sensitive attribute, and examine RCs with inequality in \Cref{subsec:exp_rc} to preserve the original distribution of the minority group. 

\begin{figure}[t]
\small
\def\figin{$\relation{}, \fdset{}, \rc{}$}
    \centering
    \scalebox{0.8}{
    \input{our-vs-bl.pspdftex}}
    \caption{Our solutions vs. baselines}
    \label{fig:exp_cmp}
    \Description[]{}
\end{figure}

\begin{table*}[!htbp]
\centering
\caption{Deletion overhead for varying representations (80\%-20\% and 50\%-50\%) and relative noise distributions (ACS data, 5\% overall noise, 10k tuples). \cut{The sensitive group distribution "X-Y" means ``$\frac{\%\texttt{Native-born}}{\%\texttt{Foreign-born}} = \frac{X}{Y}$'' for chain FDs set, and ``$\frac{\%\texttt{Region-1-2}}{\%\texttt{Region-3-4}} = \frac{X}{Y}$'' for non-chain FD set. The relative noise distribution "x-y" means $\frac{\%\text{noise of the first group}}{\% \text{noise of the second group}} = \frac{x}{y}$.} ``S-rep. \%'' and ``RS-rep. \%'' stand for deletion percentage of \srepair{} and and for \rsrepair{}, respectively.}
\label{tab:del_overhead_comparison}
\setlength{\tabcolsep}{2pt} 
\footnotesize
\begin{subtable}[!htp]{0.24\textwidth}
    \centering
    \caption{Chain FD set: 80\%-20\%}
    \label{tab:del_overhead_a}
    \begin{tabularx}{\textwidth}{|>{\centering\arraybackslash}p{0.9cm}|XXX|}
        \hline
        noise & del. ratio & S-rep. \% & RS-rep. \% \\
        \hline
        20\%-80\% & 2.112 & 13.52 & 28.55 \\
        40\%-60\% & 1.254 & 13.88 & 17.40 \\
        50\%-50\% & 1.003 & 13.89 & 13.92 \\
        60\%-40\% & 1.044 & 13.91 & 14.52 \\
        80\%-20\% & 1.134 & 13.87 & 15.73 \\
        \hline
    \end{tabularx}
\end{subtable}%
\hfill
\begin{subtable}[!htp]{0.24\textwidth}
    \centering
    \caption{Chain FD set: 50\%-50\%}
    \label{tab:del_overhead_b}
    \begin{tabularx}{\textwidth}{|>{\centering\arraybackslash}p{0.9cm}|XXX|}
        \hline
        noise (\%) & del. ratio & S-rep. \% & RS-rep. \% \\
        \hline
        20\%-80\% & 1.514 & 13.60 & 20.70 \\
        40\%-60\% & 1.152 & 13.80 & 15.90 \\
        50\%-50\% & 1.004 & 13.90 & 13.90 \\
        60\%-40\% & 1.153 & 13.90 & 16.00 \\
        80\%-20\% & 1.511 & 13.60 & 20.60 \\
        \hline
    \end{tabularx}
\end{subtable}%
\hfill
\begin{subtable}[!htp]{0.24\textwidth}
    \centering
    \caption{Non-chain FD set: 80\%-20\%}
    \label{tab:del_overhead_c}
    \begin{tabularx}{\textwidth}{|>{\centering\arraybackslash}p{0.9cm}|XXX|}
        \hline
        noise (\%) & del. ratio & S-rep. \% & RS-rep. \% \\
        \hline
        20\%-80\% & 2.039 & 25.91 & 52.83 \\
        40\%-60\% & 1.197 & 28.65 & 34.30 \\
        50\%-50\% & 1.026 & 28.81 & 29.55 \\
        60\%-40\% & 1.009 & 28.71 & 28.97 \\
        80\%-20\% & 1.199 & 28.02 & 30.80 \\
        \hline
    \end{tabularx}
\end{subtable}
\hfill
\begin{subtable}[!htp]{0.24\textwidth}
    \centering
    \caption{Non-chain FD set: 50\%-50\%}
    \label{tab:del_overhead_d}
    \begin{tabularx}{\textwidth}{|>{\centering\arraybackslash}p{0.9cm}|XXX|}
        \hline
        noise (\%) & del. ratio & S-rep. \% & RS-rep. \% \\
        \hline
        20\%-80\% & 1.423 & 28.63 & 40.74 \\
        40\%-60\% & 1.091 & 30.02 & 32.75 \\
        50\%-50\% & 1.005 & 29.94 & 30.08 \\
        60\%-40\% & 1.071 & 29.56 & 31.64 \\
        80\%-20\% & 1.395 & 27.97 & 39.02 \\
        \hline
    \end{tabularx}
\end{subtable}
\end{table*}

\subsubsection{Algorithms}\label{sec:expt-algo} We classify the repair algorithms into three groups (\Cref{fig:exp_cmp}): (a) \srepair{}s, (b) \srepair{}s with \postclean\ (\Cref{subsec:postclean}) to satisfy the RC, and (c) \rsrepair{} algorithms from \Cref{sec:poly-algo-chain,sec:general}. 
Group (a) includes the following:

\begin{itemize}[leftmargin=*]
    \item \bilp~\cite{LivshitsKTIKR21} (for both chain and non-chain FD sets): Formulates 
    an ILP (with only the first constraint for FD in \Cref{ilp:global}) and computes the {\em optimal} \srepair{} satisfying a given FD set. 
    \item \bapprox~\cite{BARYEHUDA1981198} (for non-chain FD sets only): An 2-approximation algorithm that translates the problem to 
    finding a minimum vertex cover of the conflict graph.
    \item \bdp~\cite{livshits2020computing} (for chain FD sets only): A dynamic programming (DP) algorithm that computes the {\em optimal} \srepair{} in polynomial time when the FD set forms an LHS-chain. 
    \item \bmuse~\cite{GiladDR20} (for both chain and non-chain FD sets): An \srepair{} framework with deletion rules. 
    We use step semantics. 
\end{itemize}

Group (b) is defined as using \postclean{} to post-process the \srepair{}s obtained by the Group (a) approaches. These baselines are denoted as baseline+\bpostclean{}, e.g. \bilp{} $+$ \bpostclean{}. Group (c) in \Cref{fig:exp_cmp} consists of the algorithms proposed in \Cref{sec:poly-algo-chain,sec:general}, which incorporate the RC in their core procedures.

For chain FD sets, we examine the polynomial-time optimal algorithm \expdpalgo{}{} (\Cref{sec:poly-algo-chain}). For non-chain FD sets, we examine the end-to-end heuristic algorithms \cut{\explpgreedy{},}\explprepr{} and \expscalableheuristic{} (\Cref{sec:general}). 
Recall that when applied to chain FD sets, \expscalableheuristic{} reduces to \expdpalgo{}.
We repeat each experiment twice for each data point and take the average.

\eat{
We are interested in three aspects of a repair $R'$ for a relation $R$:\fs{If the purpose of these notations are for clarifying "Cost of Representation" and "Repair Quality \& Efficiency", I think it's too complicated. How to simplified it?}
\begin{enumerate}
    \item\label{item:retained} Number of Retained Tuples: $\nretained{R'} \coloneq |R'|$. 
    \item\label{item:delete} Number of Tuple Deletions: $\ndeleted{R'} \coloneq |R| - |R'|$
    \item Runtime Cost: $T(R')$.
\end{enumerate}
For \Cref{item:retained} and \Cref{item:delete}, when they vary significantly across different settings, we measure the ratios, with denominators clearly specified, instead of the numbers.
}

\subsection{Deletion Overhead of \rsrepair s}\label{subsec:exp_cost_repr}

In this section, we investigate how many additional deletions are required to satisfy the representation of the sensitive attribute by defining {\em deletion overhead} as follows:
\small
\begin{equation}
\text{deletion overhead} = \frac{\ndeleted{\text{an optimal \rsrepair{} of \relation{} w.r.t. \fdset{} and \rc{}}}}{\ndeleted{\text{an optimal \srepair{} of \relation{} w.r.t. \fdset{}}}}
\end{equation}
\normalsize
Here $\ndeleted{}$ is the number of tuple deletions of a repair.
This ratio quantifies the additional deletions required by an optimal \rsrepair{} compared to an optimal \srepair{} for the same $(\relation{}, \fdset{})$. This ratio is at least $1$ as an optimal \rsrepair{} may have to delete more tuples also to satisfy $\rc$. 
A large ratio indicates a high overhead, i.e., many extra deletions are needed to satisfy $\rc$, whereas a ratio close to 1 indicates that an optimal \srepair{} is likely to satisfy $\rc$. We use \bilp{} and \expglobalilp{}  for optimal \srepair{}s and \rsrepair{}s respectively.


\begin{figure}[ht]
    \centering
    \includegraphics[width=\columnwidth]{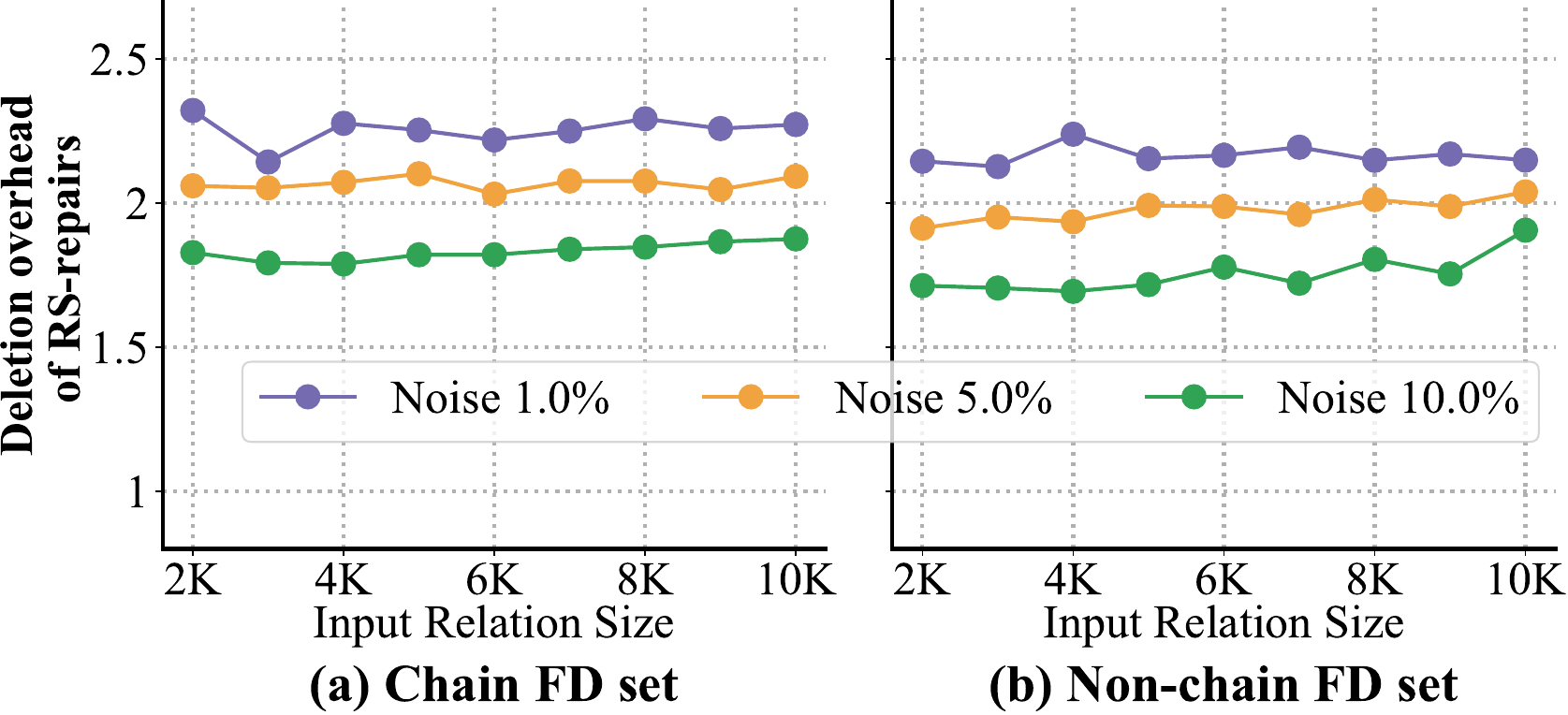}
    \caption{Deletion overhead varying overall noise and input size (ACS, 80\%-20\% value distr., 20\%-80\% relative noise distr.)
    }
    \label{fig:cost_repr_relative}
    \Description[]{}
\end{figure}

\begin{table}[ht]
\centering
\footnotesize
\caption{Deletion overhead and value distribution of sensitive attribute for Flight data with real noise}
\label{tab:del_overhead_flight}
\begin{tabular}{|l|ccc|ccc|}
\hline
\multirow{2}{*}{Size}  & \multicolumn{3}{c|}{Deletion overhead} & \multicolumn{3}{c|}{Value distribution}\\
    \cline{2-7}
      & del. ratio & S-rep. \% & RS-rep. \% & Original & S-rep. & RS-rep.\\
    \hline
    2K & 1.005 & 47.25 & 47.50 & 58\%-42\% & {\bf 69\%-31\%} & 58\%-42\%\\
    4K & 1.006 & 48.20 & 48.50 & 55\%-45\% & {\bf 61\%-39\%} & 55\%-45\%\\
    6K & 1.035 & 48.30 & 50.00 & 54\%-46\% & {\bf 57\%-43\%} & 54\%-46\%\\
    8K & 1.034 & 48.36 & 50.00 & 54\%-46\% & {\bf 65\%-35\%} & 54\%-46\%\\  
    \hline
\end{tabular}

\end{table}

First, we examine the deletion overhead in various scenarios for ACS in \Cref{tab:del_overhead_comparison}.  (Results for COMPAS appear in full version \cite{fullversion} due to space limitations.) \Cref{tab:del_overhead_comparison} also shows the deletion overhead and the percentage of tuples deleted by the optimal \srepair\ and optimal \rsrepair{} for ACS for both chain and non-chain FD sets  (5\% overall noise and 10K tuples). 
We vary the data distribution of the majority and minority groups as 80\%-20\% and 50\%-50\%, and for each vary their relative noise distribution 20\%-80\%, 40\%-60\%, 50\%-50\%, 60\%-40\%, 80\%-20\%, investigating the cases when the minority group has more, equal, and less noise than the majority group. 

We observe the following. (1) When noise is uniform (50\%-50\%), irrespective of the data distribution in \Cref{tab:del_overhead_a,tab:del_overhead_b,tab:del_overhead_c,tab:del_overhead_d}, \rsrepair{} does not delete many additional tuples, hence the deletion overhead is close to 1. As the noise distribution becomes unbalanced, deletion overhead increases as more tuples are deleted for representation.  (2) In \Cref{tab:del_overhead_a,tab:del_overhead_c}, when the data distribution of the majority and minority groups is 80\%-20\%, the overhead for relative noise distribution 20\%-80\% is larger than the overhead for 80\%-20\%. Intuitively, the fraction of noisy tuples from the minority group is much larger requiring more deletions from the minority group compared to the majority group. Hence, many tuples from the majority groups need to be removed to get the data distribution back to 80\%-20\% in the optimal \rsrepair. (3) Although both chain and non-chain FD sets show similar trends on the deletion overhead, the deletion percentages of tuples (S-rep.\% and RS-rep\%) are lower for the chain FD sets. This is because a lower number of attributes is involved in the chain FD set (\Cref{table:fds}), less noise is injected.

Next, in \Cref{fig:cost_repr_relative}, we vary the overall noise level ($1\%, 5\%, 10\%$) and the input size (2K to 10K) for ACS data with 80\%-20\% value distribution and 20\%-80\% relative noise distribution.
The relation size does not significantly impact the deletion overhead of \rsrepair{}s. However, higher noise levels lead to lower deletion overhead, since as noise increases, \srepair{} tends to delete more tuples from both majority and minority groups, almost matching the deletions by \rsrepair{} maintaining their representation, which reduces the ratio. 

\Cref{tab:del_overhead_flight} presents the deletion overhead and percentage for the optimal \srepair{} and \rsrepair{} on Flight, alongside the value distribution of the majority and minority groups before and after repair. (1) Across input sizes (2K to 8K), the deletion overhead remains slightly above 1, indicating significant tuple deletions in both \srepair{} and \rsrepair. (2) Notably, while  \srepair{} does not preserve the original value distribution and disproportionately deletes from the minority group, \rsrepair{} maintains the distribution when the representation constraint is applied.

\cut{
\begin{figure*}[!t]
    \begin{center}
        \minipage{0.5\textwidth}
    	\includegraphics[width= \columnwidth]{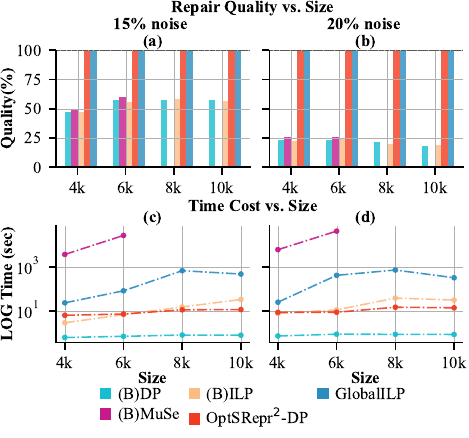}
            \caption{ACS data: chain \fdset{} \ag{Why is there a drop for 10k in globalILP?}}
            \label{fig:reducible_smalldata}
        \endminipage \hfill
        \minipage{0.5\textwidth}
    	\includegraphics[width = \columnwidth]{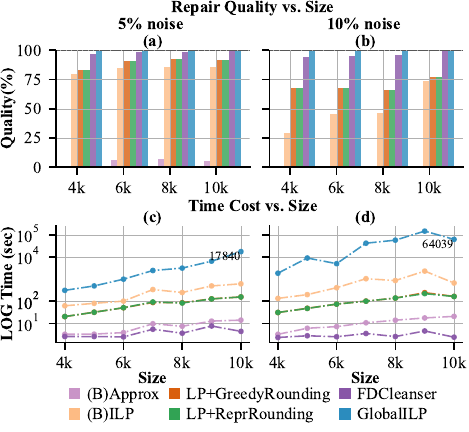}
            \caption{ACS data: non-chain \fdset{} \fs{updated}  \ag{Why is there a drop for 10k in globalILP and FDCleanser?}}
            \label{fig:non_reducible_smalldata}
        \endminipage \hfill
        \minipage{0.5\textwidth}
    	\includegraphics[width= \columnwidth]{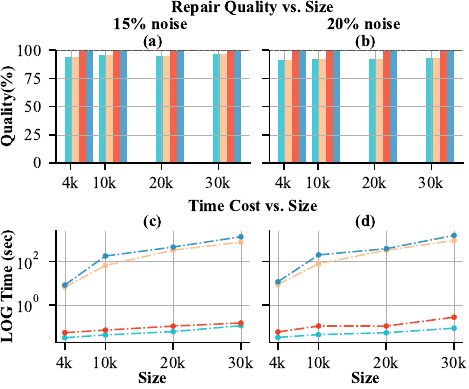}
            \caption{COMPAS data: chain \fdset{}}
            \label{fig:reducible_smalldata_compas}
        \endminipage \hfill
        \minipage{0.5\textwidth}
    	\includegraphics[width = \columnwidth]{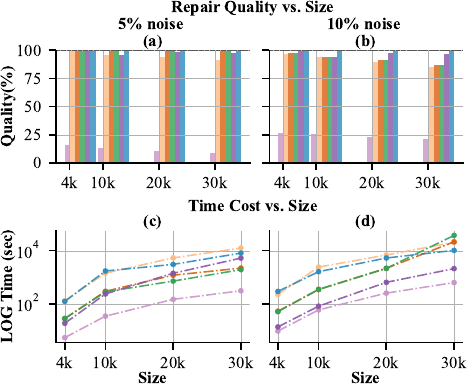}
            \caption{COMPAS data: non-chain \fdset{}}
            \label{fig:non_reducible_smalldata_compas}
        \endminipage
    \end{center}
    \caption{
\fs{remove titles from figures, should be in captions, $15\%$ noise should be in captions, figures should be no titles, refer to each sub-figure}}
\end{figure*}
}



\begin{table}[t]
\caption{Value distributions by optimal \srepair{}s varying overall noise (ACS, 6K tuples, 20\%-80\% relative noise distr.)}\label{tab:dist_sensitive_attr}
    \centering

\footnotesize
\begin{tabular}{|c|c|c|c|c|c|}
\hline
\multirow{2}{*}{FD Set} & \multirow{2}{*}{Original} & \multicolumn{4}{c|}{Overall noise level}\\
\cline{3-6}
& & 1\% & 5\% & 10\% & 15\% \\
\hline
Chain FD set & 80\%-20\% & 81\%-19\% & 84\%-16\% & 88\%-11\% & 92\%-8\% \\
\hline
Non-chain FD set & 80\%-20\% & 82\%-18\% & 89\%-11\% & 96\%-4\% & 100\%-0\% \\
\hline
\end{tabular}%
\end{table}
In \Cref{tab:dist_sensitive_attr}, 
we report the value distributions of the majority and minority groups after optimal \srepair{} varying the noise level for ACS data with 80\%-20\% value and 20\%-80\% relative noise distribution (one example is in \Cref{ex:intro-distribution}).
At $15\%$ noise level, the 
minority group drops from $20\%$ to $8\%$ for chain FDs, and to $0\%$ for non-chain FDs. This shows that \srepair{}s may significantly change the representation of a sensitive attribute and highlights the importance of considering the representation constraint in the repair algorithms.


\subsection{\rsrepair\ Quality of Various Approaches}\label{subsec:exp_quality}

\begin{table*}[t]
\centering
\caption{Repair quality of different algorithms for ACS and COMPAS data ($10\%$ noise level) with chain and non-chain FD sets. Numbers are boldfaced if they are from \expdpalgo{}, \expscalableheuristic{}, or other algorithms that beat them in the same setup.}
\label{tab:quality_acs_compas}
\footnotesize
\setlength{\tabcolsep}{0pt} 
\renewcommand{\arraystretch}{1} 
\begin{subtable}{\textwidth}
    \centering
    \begin{tabularx}{\textwidth}{|l|*{4}{>{\centering\arraybackslash}X}| *{4}{>{\centering\arraybackslash}X}|*{4}{>{\centering\arraybackslash}X}|}
    \hline
    \multicolumn{13}{|c|}{\cellcolor{gray!50}\textbf{(a) ACS: chain FD set}}\\\hline
    {\bf Sensitive attribute distribution} & \multicolumn{4}{c|}{\textbf{80\%-20\%}} & \multicolumn{4}{c|}{\textbf{60\%-40\%}} & \multicolumn{4}{c|}{\textbf{50\%-50\%}} \\
    \hline
    \textbf{Algorithm / Size (K)} & \textbf{4} & \textbf{6} & \textbf{8} & \textbf{10} & \textbf{4} & \textbf{6} & \textbf{8} & \textbf{10} & \textbf{4} & \textbf{6} & \textbf{8} & \textbf{10} \\
    \hline
    \expglobalilp{} & 100 & 100 & 100 & 100 & 100 & 100 & 100 & 100 & 100 & 100 & 100 & 100 \\
    \rowcolor{gray!30} \textbf{\expdpalgo{} (= \expscalableheuristic{})} & \textbf{100} & \textbf{100} & \textbf{100} & \textbf{100} & \textbf{100} & \textbf{100} & \textbf{100} & \textbf{100} & \textbf{100} & \textbf{100} & \textbf{100} & \textbf{100} \\
    \bilp{}+\bpostclean{} & 76.54 & 79.57 & 80.60 & 80.90 & 96.95 & 97.60 & 97.52 & 97.63 & 99.79 & 99.93 & 99.98 & 99.93 \\
    \bdp{}+\bpostclean{} & 76.54 & 79.57 & 80.60 & 80.90 & 96.95 & 97.60 & 97.52 & 97.63 & 99.79 & 99.93 & 99.98 & 99.93 \\
    \bmuse{}+\bpostclean{}& 77.80 & 80.64 & - & - & 96.68 & 65.54 & - & - & 99.66 & 99.91 & - & - \\ 
\hline
\end{tabularx}
\end{subtable}
\begin{subtable}{\textwidth}
    \centering
    \begin{tabularx}{\textwidth}{|l|*{4}{>{\centering\arraybackslash}X}| *{4}{>{\centering\arraybackslash}X}|*{4}{>{\centering\arraybackslash}X}|}
    \hline  
    \multicolumn{13}{|c|}{\cellcolor{gray!50}\textbf{(b) ACS: non-chain FD set}}\\\hline
    {\bf Sensitive attribute distribution} & \multicolumn{4}{c|}{\textbf{80\%-20\%}} & \multicolumn{4}{c|}{\textbf{60\%-40\%}} & \multicolumn{4}{c|}{\textbf{50\%-50\%}} \\
    \hline
    \textbf{Algorithm / Size (K)} & \textbf{4} & \textbf{6} & \textbf{8} & \textbf{10} & \textbf{4} & \textbf{6} & \textbf{8} & \textbf{10} & \textbf{4} & \textbf{6} & \textbf{8} & \textbf{10} \\
    \hline
    \expglobalilp{} & 100 & 100 & 100 & 100 & 100 & 100 & 100 & 100 & 100 & 100 & 100 & 100 \\
    \rowcolor{gray!30} \textbf{\expscalableheuristic{}} & \textbf{94.25} & \textbf{95.41} & \textbf{96.19} & \textbf{99.14} & \textbf{94.45} & \textbf{96.05} & \textbf{96.12} & \textbf{95.60} & \textbf{96.45} & \textbf{97.22} & \textbf{96.65} & \textbf{97.83} \\
    \explprepr{} & 64.94 & 66.39 & 69.25 & 79.14 & 81.33 & 82.26 & 75.33 & 87.75 & \textbf{98.15} & 90.74 & 90.21 & 84.29 \\
    \bilp{}+\bpostclean{} & 29.60 & 45.57 & 46.78 & 73.62 & 84.03 & 86.03 & 85.79 & 85.79 & 95.7 & 96.30 & \textbf{97.57} & \textbf{98.10} \\
    \bapprox{}+\bpostclean{} & 0 & 0 & 0 & 0 & 5.00 & 4.14 & 3.81 & 4.51 & 18.15 & 15.73 & 13.93 & 15.11 \\
    \hline
    \end{tabularx}
\end{subtable}
\begin{subtable}{\textwidth}
    \centering
    \begin{tabularx}{\textwidth}{|l|*{4}{>{\centering\arraybackslash}X}|*{4}{>{\centering\arraybackslash}X}|*{3}{>{\centering\arraybackslash}X}|}
    \hline
    \multicolumn{12}{|c|}{\cellcolor{gray!50}\textbf{(c) COMPAS: chain FD set}}\\\hline
    {\bf Sensitive attribute distribution} & \multicolumn{4}{c|}{\textbf{80\%-20\%}} & \multicolumn{4}{c|}{\textbf{60\%-40\%}} & \multicolumn{3}{c|}{\textbf{50\%-50\%}} \\
    \hline
    \textbf{Algorithm / Size (K)} & \textbf{4} & \textbf{10} & \textbf{20} & \textbf{30} & \textbf{4} & \textbf{10} & \textbf{20} & \textbf{30} & \textbf{4} & \textbf{10} & \textbf{20}\\
    \hline
    \expglobalilp{} & 100 & 100 & 100 & 100 & 100 & 100 & 100 & 100 & 100 & 100 & 100 \\
    \rowcolor{gray!30} \textbf{\expdpalgo{} (= \expscalableheuristic{})} & \textbf{100} & \textbf{100} & \textbf{100} & \textbf{100} & \textbf{100} & \textbf{100} & \textbf{100} & \textbf{100} & \textbf{100} & \textbf{100} & \textbf{100} \\
    \bilp{}+\bpostclean{} & 97.96 & 98.60 & 99.19 & 99.01 & 99.92 & 100 & 100 & 100 & 100 & 100 & 100 \\
    \bdp{}+\bpostclean{} & 97.96 & 98.60 & 99.19 & 99.01 & 99.92 & 100 & 100 & 100 & 100 & 100 & 100 \\
    \bmuse{}+\bpostclean{}& 98.93 & - & - & - & - & - & - & - & - & - & - \\ 
    \hline
    \end{tabularx}
\end{subtable}
\begin{subtable}{\textwidth}
    \centering
    \begin{tabularx}{\textwidth}{|l|*{4}{>{\centering\arraybackslash}X}|*{4}{>{\centering\arraybackslash}X}|*{3}{>{\centering\arraybackslash}X}|}
    \hline
    \multicolumn{12}{|c|}{\cellcolor{gray!50}\textbf{(d) COMPAS: non-chain FD set}}\\\hline
    {\bf Sensitive attribute distribution} & \multicolumn{4}{c|}{\textbf{80\%-20\%}} & \multicolumn{4}{c|}{\textbf{60\%-40\%}} & \multicolumn{3}{c|}{\textbf{50\%-50\%}} \\
    \hline
    \textbf{Algorithm / Size (K)} & \textbf{4} & \textbf{10} & \textbf{20} & \textbf{30} & \textbf{4} & \textbf{10} & \textbf{20} & \textbf{30} & \textbf{4} & \textbf{10} & \textbf{20} \\
    \hline
    \expglobalilp{} & 100 & 100 & 100 & 100 & 100 & 100 & 100 & 100 & 100 & 100 & 100\\
    \rowcolor{gray!30} \textbf{\expscalableheuristic{}} & \textbf{98.94} & \textbf{94.21} & \textbf{97.39} & \textbf{96.45} & \textbf{95.99} & \textbf{96.90} & \textbf{97.39} & \textbf{96.37} & \textbf{97.11} & \textbf{99.24} & \textbf{99.19} \\
    \explprepr{} & 98.58 & \textbf{97.92} & \textbf{98.04} & \textbf{98.44} & \textbf{99.61} & \textbf{99.54} & \textbf{98.25} & \textbf{98.02} & \textbf{100} & \textbf{100} & \textbf{100} \\
    \bilp{}+\bpostclean{} & 96.81 & \textbf{94.36} & 90.01 & 85.36 & \textbf{99.10} & \textbf{98.32} & 96.95 & 95.86 & \textbf{99.55} & 99.17 & \textbf{99.28}\\
    \bapprox{}+\bpostclean{} & 26.24 & 25.96 & 22.81 & 20.85 & 9.95 & 9.81 & 8.73 & 9.01 & 11.12 & 10.40 & 8.91 \\
    \hline
    \end{tabularx}
\end{subtable}
\end{table*}


The algorithm \expglobalilp{} computes an optimal \rsrepair{},
but does not scale (e.g., it took $18$ hours to repair ACS data for the non-chain FD set with $10\%$ noise and 10K tuples). 
In this section, we report the \textit{\rsrepair{} quality} ({\em repair quality} in short) defined below to assess how our proposed algorithms perform 
compared to the optimal \rsrepair. 
The experiments are conducted on relatively small datasets where  \expglobalilp{} terminated in a reasonable time. 
\cut{To find out an alternative algorithm when \expglobalilp{} does not scale, }
\small
\begin{equation}\label{eq:repair-quality}
\text{repair quality} =\frac{|\relation{}| - \ndeleted{\relation'}}{|\relation{}| - \ndeleted{\text{an optimal \rsrepair{} of \relation{} w.r.t. \fdset{} and \rc{}}}}
\end{equation}
\normalsize

The expression compares the number of tuples retained by an \rsrepair{} $R'$ to that of an optimal \rsrepair{} (via \expglobalilp{}), indicating how well $R'$ approximates the optimum.
This quality is upper-bounded by $100\%$, 
and a ratio close to $100\%$ indicates high quality when the output of an \rsrepair{} algorithm is close to optimal. \cut{, deleting as few tuples as the optimal \rsrepair{} does.} 


\cut{Simply, \textit{Repair Quality} is equivalent to \benny{Please avoid using the word "equivalent" unless we mean equivalence in some formal, previously defined meaning.} $\frac{|R'|}{|\prob{(\relation{}, \fdset{}, \rc{})}|}$ which measures the ratio between the number of tuples retained \ag{can you change to an expression with the number of deleted tuples?} by \prob{} and by every exact or heuristic algorithms. Therefore, this ratio is expected to be lower than or equal to 1, simply because the denominator is the maximum possible number of tuples retained (corresponding to the minimum possible number of tuple deletions). Hence, if the ratio is close to 1, then the \textit{Repair Quality} is high; otherwise, it indicates low \textit{Repair Quality}.}




In \Cref{tab:quality_acs_compas},  we examine the repair quality for varying value distributions of the sensitive attribute and FD types for ACS and COMPAS data. Since there are many potential scenarios, we present a set of results (more results are in the full version~\cite{fullversion}) for overall 10\% noise level, value distributions of the majority and minority groups X\%-Y\% (``80\%-20\%'', ``60\%-40\%'', and ``50\%-50\%'') with relative noise distributions Y\%-X\% (i.e., the same number of cells are changed in both majority and minority groups). 
The results by \expscalableheuristic{} proposed in \Cref{subsec:scalable_heuristic} for general FD sets, which is identical to the optimal \expdpalgo{} in \Cref{sec:poly-algo-chain}, are {\bf boldfaced} in all tables. The repair quality by other algorithms that are better than \expscalableheuristic{} (non-chain) or \expdpalgo{} (chain) is also boldfaced. 
\par
We conclude the following.
(1) \expdpalgo{} (=\expscalableheuristic{}) is optimal for chain FDs, so has 100\% repair quality in all settings for chain FDs in ACS and COMPAS (\Cref{tab:quality_acs_compas} (a) and (c)). (2) For non-chain FDs, \expscalableheuristic{} has consistently high repair quality ($> 94\%$) in all settings for both ACS and COMPAS  (\Cref{tab:quality_acs_compas} (b) and (d)). (3) For the baselines that apply \postclean{} after an \srepair{} is obtained (i.e., group (b) in \Cref{fig:exp_cmp}), repair quality significantly varies in different scenarios. They perform relatively better for chain FD sets where \expdpalgo{} already gives optimal results (\bmuse{}+\bpostclean{} exceeded 12 hours for larger data denoted by ``-''), but may give poor quality for non-chain FD sets. For both chain and non-chain FDs, their quality mostly degrades when the value distribution of the sensitive attribute is imbalanced. While \bilp{}+\bpostclean{} has better quality than \expscalableheuristic{} in some non-chain FD settings in \Cref{tab:quality_acs_compas} (b) and (d), ILP is not a scalable method (\Cref{subsec:exp_scalability}). The efficient approximation  \bapprox{}+\bpostclean{} has poor quality. (4) The other LP-rounding-based algorithm \explprepr{} from \Cref{subsec:ilp} also performs better than \expscalableheuristic{} in some settings in \Cref{tab:quality_acs_compas} (b) and (d), especially when the value distribution of the sensitive attribute is close to 50\%-50\%, but is not scalable (\Cref{subsec:exp_scalability}). \Cref{tab:quality_acs_compas} shows that \expscalableheuristic{} gives high quality across datasets and settings with better scalability as shown in \Cref{subsec:exp_scalability}.

\cut{
The baseline methods \bdp{} + \bpostclean{} perform well for chain FD sets, particularly for balanced distributions of the sensitive attribute (50\%-50\% and 60\%-40\%), and they achieve near-optimal performance in the COMPAS data. However, their performance degrades for the more imbalanced 80\%-20\% distribution, dropping to about 76-80\% quality for ACS data and 97\%-99\% for COMPAS data. $\bmuse{}+\postclean{}$ also obtain good quality but is impractical, exceeding 12 hours (denoted as ``-'') for ACS data with 8K tuples and COMPAS data with even 4K tuples under distributions 60\%-40\% and 50\%-50\%. Conversely, our proposed algorithm \expdpalgo{} maintains $100\%$ repair quality throughout as expected.

For ACS with non-chain FD sets, \expscalableheuristic{} demonstrates high and consistent performance, achieving over $94\%$ quality across all distributions. \common{\explprepr{} shows variable performance, obtaining worse quality for 
less balanced distributions, 65\% for 80\%-20\%. The quality of \bilp{}+\postclean{} is between 30\% to 98\%, and improves with more balanced distributions. \bapprox{}+\postclean{} performs poorly, achieving less than 27\% quality. For COMPAS data with non-chain FD sets, \expscalableheuristic{} and \explprepr{} provide high-quality repairs (>94\%), often outperforming \bilp{}+\postclean{}, while \bapprox{}+\postclean{} struggles. Results of the 50\%-50\% distribution with 30K size for COMPAS data are not included since the original dataset has around 60K records, making it impossible to generate input that satisfies both sensitive attribute distribution and size constraints.}
}


\begin{table}[!ht]
\caption{Repair quality for Flight data (chain FD set)}\label{tab:flight_quality}
    \centering
\footnotesize
\begin{tabular}{|l|c|c|c|c|}
\hline
\textbf{Algorithm / Size (K)} & \textbf{2} & \textbf{4} & \textbf{6} & \textbf{8} \\
\hline
\expglobalilp{} & 100 & 100 & 100 & 100 \\
{\bf \expdpalgo{}(= \expscalableheuristic{})} & {\bf 100} & {\bf 100} & {\bf 100} & {\bf 100} \\
\bdp{}+\bpostclean{} & 57.14 & 77.67 & 66.67 & 2.93 \\
\bilp{}+\bpostclean{} & 52.38 & 70.87 & 66.67 & 50.00 \\

\hline
\end{tabular}
\end{table}

\Cref{tab:flight_quality} presents the repair quality for the Flight dataset.
\expglobalilp{} and \expdpalgo{} provide $100\%$ repair quality consistently as expected, while the baselines \bdp{} + \bpostclean{} and \bilp{}+\bpostclean{} have a significantly lower quality across all input sizes (e.g., for $8K$, the quality of \bdp{}+\bpostclean{} is $3.5\%$ and the quality of \bilp{}+ \bpostclean{} is $60\%$).

\cut{
\paratitle{ACS} In \Cref{fig:quality_a,fig:quality_b}, we vary the level of noise ($5\%, 10\%$) and relation size (4K to 10K) for chain FD sets. First, \expglobalilp{} (in blue) demonstrates $100\%$ repair quality, serving as the optimal benchmark. \expdpalgo{} (in red) also maintains $100\%$ repair quality across all cases of the chain FD set as expected\cut{, while scaling much better than \expglobalilp{}}. The baselines \cut{\bapprox{}+\bpostclean{}, \bilp{}+\bpostclean{}, \bdp{}+\bpostclean{}, as well as \bmuse{}+\bpostclean{}}retain 
a high quality $93\%$ under $5\%$ noise, while they soon drop to about $77\%$ compared to optimal \rsrepair{}s for $10\%$ noise.
Additional results for noise levels of $15\%$ and $20\%$ (in~\cite{fullversion}) show that the repair quality of the baselines deteriorates significantly as the level of noise increases.
\bmuse{}+\bpostclean{} is omitted for 8K ($10\%$ noise) and 10K ($5\%$ and $10\%$ noise) due to running for more than 24 hours. 


In \Cref{fig:quality_c,fig:quality_d}, we evaluate the same noise and sizes for non-chain FD sets. \expglobalilp{} continues to provide $100\%$ repair quality, while the heuristics approaches, \revb{\explprepr{} achieves}\cut{\explpgreedy{} and \explprepr{} perform similarly, achieving} about $80\%$ \cut{(resp. $65\%$) }repair quality for $5\%$ \cut{(resp. $10\%$) }noise. \expscalableheuristic{} consistently maintains over $91\%$ quality across all scenarios. On the other hand, the repair quality of \bilp{}+\bpostclean{} deteriorates significantly with high levels of noise, dropping below $50\%$ for input size 4K, 6K, 8K, and $10\%$ noise. \bapprox{}+\bpostclean{} performs with almost $0\%$ repair quality across all scenarios. \bmuse{}+\bpostclean{} runs more than 24 hours even for 4K tuples.

\cut{
    Moreover, an interesting observation is that, unlike the results in \Cref{fig:reducible_smalldata}, the repair quality for all algorithms, except \bapprox{}+\postclean{} and \expglobalilp{}, increase as the input relation size increases. The reason is that larger input relations have more chances to provide a \rsrepair{} (computed by these algorithms) that retains more tuples for the minority group (in terms of the sensitive attribute).
}

\paratitle{COMPAS} In \Cref{fig:quality_e,fig:quality_f,fig:quality_g,fig:quality_h}, we demonstrate the results of the same noise levels ($5\%$, $10\%$) from input relation size 4K to 30K. Interestingly, baselines in group (b) achieve high repair quality in most cases. Except for  \bapprox{}+\bpostclean{} in the non-chain FD set, all algorithms perform over $80\%$ repair quality across all the scenarios. 
For the chain FD set, \bdp{}+\bpostclean{} and \bilp{}+ \bpostclean{} reach $92\%$ repair quality and even achieve $99\%$ repair quality for sizes ranging from 10K to 30K and $5\%$ noise. However, the baselines do not consistently provide $100\%$ repair quality like \expdpalgo{} does. \expglobalilp{}, \expdpalgo{} and \expscalableheuristic{} are still the top performers in most cases for chain and non-chain FD sets.

\common{\paratitle{Flight}
\Cref{fig:quality_flight} presents the results for the Flight datasets, the real-world dataset with natural FD violations. We vary the relation size (4K to 10K) for chain FD sets. \expglobalilp{} and \expdpalgo{} provide $100\%$ repair quality consistently as expected, while the baselines \bdp{}+\bpostclean{} and \bilp{}+\bpostclean{} have a significantly lower quality across all input sizes. Especially For input size $8K$, the quality of \bdp{}+\bpostclean{} is $3.5\%$ and the quality of \bilp{}+ \bpostclean{} is $60\%$. \expglobalilp{} and \expdpalgo{} outperform baseline methods in the real-world noisy dataset. 
}
}

\subsection{Running Time Analysis}\label{subsec:exp_scalability}
\paragraph{Comparison of runtime of different \rsrepair{} methods:} First, we evaluate the runtime performance of the \rsrepair{} algorithms from groups (b) and (c) in \Cref{fig:exp_cmp}. 
\Cref{fig:time_acs_compas_10} shows the runtime results of the ACS and COMPAS for 80\%-20\% value distribution of the sensitive attribute and 10\% overall noise level (runtime for more settings and Flight are in the full version~\cite{fullversion}). If no data points are shown, execution took longer than 12 hours.  (1) The optimal \expglobalilp{} for \rsrepair{} has a high runtime in all settings as ILP is not scalable. (2) For chain FD sets in \Cref{fig:time_b,fig:time_f} the optimal \expdpalgo{} (= \expscalableheuristic{}) is much faster than \expglobalilp{}, e.g., \expglobalilp{} takes $1,046$s for size 10K ACS data, while \expdpalgo{} takes only $12$s, (3) For non-chain FD sets in \Cref{fig:time_d,fig:time_h}, \expscalableheuristic{} has a good scalability. While \bapprox{}+\bpostclean{} has a slightly better runtime than \expscalableheuristic{} in \Cref{fig:time_f}, \Cref{tab:quality_acs_compas} shows its poor quality.
Additionally, for COMPAS and non-chain FD set () in size 30K due to a lagged rounding step. 

\cut{

For ACS and chain FD sets in \Cref{fig:time_b}, \expglobalilp{} is slower than \expdpalgo{}. For example, \expglobalilp{} takes $1,046$s for size 10K with $10\%$ noise, while \expdpalgo{} needs only $12$s for the same quality. \expglobalilp{} is slower than \bilp{}$+$\bpostclean{} as it contains more constraints in the ILP due to the RC. This also results in a longer runtime of \expdpalgo{} compared to \bdp{}$+$\bpostclean{}. In \Cref{fig:time_d} for the non-chain FD set,\cut{\explpgreedy{} and \explprepr{} are both } \revb{\explprepr{} is} faster than \expglobalilp{}. For example, in 10K size with $10\%$ noise, \expglobalilp{} ($64,039$s) takes more than $300$ times the runtime of \cut{\explpgreedy{} ($278$s) and the runtime of }\explprepr{} ($166$s). \expscalableheuristic{}, which consistently takes less than $8$s, demonstrates its good scalability. The \bmuse{}$+$\bpostclean{} is slow across all scenarios. In all cases where no data points are shown, execution takes longer than 12 hours. The COMPAS data (\Cref{fig:time_f,fig:time_h}) shows similar trends for both chain and non-chain FD sets, with \expscalableheuristic{} having stable efficiency across different input sizes. 

}

\cut{
\common{
\Cref{fig:time_flight} shows the runtimes for the Flight dataset that has chain FDs. 
While both \dpalgo{} and \globalilp{} provides the optimal \rsrepair{}, 
\globalilp{} is faster than \dpalgo{}  (and other baselines) for smaller datasets \globalilp{}.
Since the tuples are quite distinct on the attribute "DF" (only 2 or 3 tuples sharing the same key), \dpalgo{} takes a much longer runtime to loop over all new repairs that are potential to be a candidate in the step of \commonlhsreduction{}. 
\dpalgo{} seems to take a longer runtime when the value of \commonlhs{} has a sparse domain.}
}

\cut{
\cut{\revb{[[[remove highlight sentence if confirm]]] \bilp{}+\bpostclean{} and \expglobalilp{} spend more time on repairs for non-chain FD sets than for chain FD sets for both ACS and COMPAS. Moreover, }}the level of noise has a greater impact on runtime for the non-chain FD sets. Last, \bmuse{}$+$\bpostclean{} is slow across all scenarios. In all cases where no data points are shown, execution takes more than 24 hours. 
Next, we detail our results for each scenario.
\paratitle{ACS} In \Cref{fig:time_a,fig:time_b} for the chain FD set, \expglobalilp{} has a longer runtime than \expdpalgo{} across all cases. For example, \expglobalilp{} takes $1,046$s for size 10K with $10\%$ noise, while \expdpalgo{} only needs $12$s to achieve the same repair quality. \expglobalilp{} has a longer runtime than \bilp{}$+$\bpostclean{} does, because it contains more constraints in the ILP due to the RC. This also results in a longer runtime of \expdpalgo{} compared to that of \bdp{}$+$\bpostclean{}. \cut{\bdp{}$+$\bpostclean{} and \bapprox{} $+$ \bpostclean{} have similar runtime (the trend lines overlap). Both of them run faster than other algorithms. }

In \Cref{fig:time_c,fig:time_d} for the non-chain FD set,\cut{\explpgreedy{} and \explprepr{} are both } \revb{\explprepr{} is} faster than \expglobalilp{}. For example, in 10K size with $10\%$ noise, \expglobalilp{} ($64,039$s) takes more than $300$ times the runtime of \cut{\explpgreedy{} ($278$s) and the runtime of }\explprepr{} ($166$s). \expscalableheuristic{}, which consistently takes less than $8$s across all scenarios, demonstrates 
its good scalability.

\paragraph*{COMPAS} In \Cref{fig:time_e,fig:time_f} for the chain FD set, the gap between \expglobalilp{} and \bilp{}$+$\bpostclean{} and the gap between \expdpalgo{} and \bdp{}+\bpostclean{} are much smaller compared to the gaps in \Cref{fig:time_a,fig:time_b}. For example, \expglobalilp{} takes $443$s for size 30K with $10\%$ noise while \bilp{}$+$\bpostclean{} takes $117$s. It indicates that the extra runtime cost to incorporate representation into the procedure of repairing (instead of fully relying on \bpostclean{}) for COMPAS
is lower compared to ACS.

Regarding the non-chain FD set, as shown in \Cref{fig:time_g,fig:time_h}, the runtime of \cut{both \explpgreedy{} and }\explprepr{} increases as the input relation size increases. 
\revb{In the case of a 30K input relation size and $10\%$ noise, \explprepr{} has a longer runtime than \expglobalilp{}. This special pattern is caused by the rounding step, 
which is slower for noise levels over $5\%$ \ag{Verify this sentence}.}
}
\cut{especially since they have a longer runtime than \expglobalilp{} in the case of a 30K input relation size and $10\%$ noise. This pattern is caused by the rounding step, 

which is slower for noise levels over $5\%$.

It also explains the longer runtime of \explprepr{} compared to \explpgreedy{} in the same case. }

\cut{\subsubsection{Scalability analysis of non-ILP Algorithms}\label{subsubsec:exp_scal}
\globalilp{} provides the optimal \rsrepair{}, but its 
hardness 
is pronounced. 
As shown in \Cref{fig:time_d}, \expglobalilp{} takes $18$ hours to run over 10K tuples in the non-chain FD case with $10\%$ noise. }

\begin{table}[b]
\centering 
\footnotesize
\caption{Runtime (mins) for ACS ($5\%$ overall noise, 80\%-20\% value distr. and 20\%-80\% relative noise distr.). ``OOM'' =  out-of-memory issues.}
\label{tab:acs_time_largescale}

\begin{tabularx}{\columnwidth}{l|rrrrrr}
\toprule
\multicolumn{6}{c}{\textbf{Chain FD set}} \\
\midrule
Name/DB Size & 10K & 100K &500K & 1M & 1.5M\\
\midrule
\bdp{}+\bpostclean{} & 0.01 & 0.02 & 0.07 & 0.14 & 0.20 \\
\hline
\textbf{\expdpalgo{}} & 0.14 & 3.52 & 84.76 & \textbf{479.50} & \textbf{2873.09} \\
\midrule
\multicolumn{6}{c}{\textbf{Non-chain FD set}} \\
\midrule
Name/DB Size & 10K & 100K & 500K & 1M & 1.5M \\
\midrule
\bapprox{}+\bpostclean{} & 0.29 & OOM & OOM & OOM & OOM\\
\hline
\explprepr{} & 3.18 & OOM & OOM & OOM & OOM\\
\hline
\textbf{\expscalableheuristic{}} & 0.15 & 10.56 & 142.80 & \textbf{351.79} & \textbf{2049.64}\\
\bottomrule
\end{tabularx}
\end{table}

{\em Scalability for bigger data:} Since ILP is known to be unscalable, in \Cref{tab:acs_time_largescale} we evaluate the scalability of the other methods over bigger samples up to 1.5M\cut{1M} tuples of the ACS dataset.
We see that \expscalableheuristic{} for non-chain FD sets, which is identical to the optimal \dpalgo{} for chain FD sets, can repair larger datasets, while the other methods face out-of-memory issues. While \bdp{}+\bpostclean{} has much better scalability for chain FD sets, as discussed in \Cref{subsec:exp_quality}, it suffers from poor quality. 
For the chain FD set, \expdpalgo{} provides the optimal \rsrepair{} in $8$ hours for 1M tuples and in $48$ hours for 1.5M tuples. For the non-chain FD set, \expscalableheuristic{} takes 5.9 hours for 1M tuples, and 34 hours for 1.5M tuples. Although it takes a long time to repair the data, it returns results, which may be permissible for offline data repair tasks. In contrast, \bapprox{} $+$ \bpostclean{} and \explprepr{} encounter issues with memory usage over 100K tuples, which is as expected given that the number of constraints can be as large as $|R|^2$. 
Yet, both \expdpalgo{} and \expscalableheuristic{} are DP-based algorithms that require significant space and do not scale well. Developing more space- and time-efficient algorithms for \rsrepair{} remains a promising direction for future research. 

\cut{

\dpalgo{} for the chain FD set and the heuristic algorithms \explprepr{} and \expscalableheuristic{} for the non-chain FD set are efficient while providing high-quality repairs. In \Cref{tab:acs_time_largescale}, we examine these scalable algorithms for larges relations up to \revc{1.5M\cut{1M}} tuples on the ACS dataset.
\cut{of $5\%$ noise for the non-chain FD set and $15\%$ noise for the chain FD set on ACS dataset.}
\cut{\bapprox{}+\bpostclean{} and }\bdp{}+\bpostclean{} demonstrates better scalability (less than \revc{$20s$\cut{$15$s}}), but they might provide \rsrepair{}s of low repair quality as analyzed in \Cref{subsec:exp_quality}.\cut{ For example, for ACS dataset with $10\%$ noise shown in \Cref{fig:qua_b}, both provide \rsrepair{}s with quality only around $75\%$.} Our algorithm, \expdpalgo{} consistently provides the optimal \rsrepair{}, i.e., of $100\%$ repair quality, in $8$ hours for 1M tuples \revc{and in $48$ hours for 1.5M tuples}. \cut{Note that \explpgreedy{}, \explprepr{}, and \expscalableheuristic{} perform the same as \expdpalgo{}.}Next, for the non-chain FD set, \bapprox{} $+$ \bpostclean{}\cut{, \explpgreedy{}} and \explprepr{} encounter issues with memory usage over 100K tuples\cut{when $|\relation{}| \geq$ 100K}, which is as expected given that the number of constraints can be as large as $|R|^2$. Moreover, \expscalableheuristic{} is 
well scalable and takes 5.9 hours for 1M tuples \revc{, and 34 hours for 1.5M tuples}. In summary, \expdpalgo{} and \expscalableheuristic{} demonstrate their scalability for chain FD sets and non-chain FD sets respectively.
}

\cut{
\begin{table}[t]
\centering 
\caption{\revc{Runtime cost (minutes) for the ACS with $5\%$ noise. "OOM" means we experienced out-of-memory issues.}}
\label{tab:acs_time_largescale}

\begin{tabularx}{\columnwidth}{l|rrrrrr}
\toprule
\multicolumn{7}{c}{\textbf{Chain FD set}} \\
\midrule
Name/DB Size & 10K & 40K & 100K &500K & 1M & 1.5M\\
\midrule
\makecell{\bapprox{}\\ +\bpostclean{}} & 0.01 & 0.02 & 0.02 & 0.13 & 0.25 &  0.31\\
\hline
\makecell{\bdp{}\\ +\bpostclean{}} & 0.01 & 0.02 & 0.02 & 0.07 & 0.14 & 0.20 \\
\hline
\textbf{\expdpalgo{}} & 0.14 & 0.77 & 3.52 & 84.76 & \textbf{479.50} & \textbf{\revc{2973.09}} \\
\midrule
\multicolumn{7}{c}{\textbf{Non-chain FD set}} \\
\midrule
Name/DB Size & 10K & 40K & 100K & 500K & 1M & 1.5M \\
\midrule
\makecell{\bapprox{} \\+ \bpostclean{}} & 0.29 & 5.46 & OOM & OOM & OOM & OOM\\
\hline
\explprepr{} & 3.18 & 85.82 & OOM & OOM & OOM & OOM\\
\hline
\textbf{\expscalableheuristic{}} & 0.15 & 1.12 & 10.56 & 142.80 & \textbf{351.79} & \textbf{\revc{2049.64}}\\
\bottomrule
\end{tabularx}

\end{table}
}
\cut{
\begin{table}[t]
    \centering \small
    \caption{
    Runtime (in minutes) or the scalable algorithms for the ACS dataset and the chain FD Set with $5\%$ noise. 
    }
    \label{tab:acs_time_largescale_chain}
\begin{tabularx}{\columnwidth}{l|rrrrrr}
\toprule
Name/DB Size & 10K & 40K & 100K &500K & 1M \\
\midrule
\makecell{\bapprox{}\\ +\bpostclean{}} & 0.01 & 0.02 & 0.02 & 0.13 & 0.25 \\
\hline
\makecell{\bdp{}\\ +\bpostclean{}} & 0.01 & 0.02 & 0.02 & 0.07 & 0.14 \\
\hline
\textbf{\expdpalgo{}} & 0.14 & 0.77 & 3.52 & 84.76 & \textbf{479.50}\\
\bottomrule
\end{tabularx}
\end{table}
}

\cut{
\begin{table}[!htbp]
    \centering \small
    \caption{
    Runtime (in minutes) or the scalable algorithms for the ACS dataset and the chain FD Set with $15\%$ noise. 
    }
    \label{tab:acs_time_largescale_chain}
\begin{tabularx}{\columnwidth}{l|rrrrrr}
\toprule
Name/DB Size & 10K & 50K & 100K &500K & 1M \\
\midrule
\makecell{\bapprox{}\\ +\bpostclean{}} & 0.01 & 0.02 & 0.03 & 0.08 & 0.14 \\
\hline
\makecell{\bdp{}\\ +\bpostclean{}} & 0.01 & 0.02 & 0.03 & 0.18 & 0.17 \\
\hline
\textbf{\expdpalgo{}} & 0.20 & 1.09 & 6.82 & 49.51 & \textbf{534.94}\\
\bottomrule
\end{tabularx}
\end{table}
}

\cut{
\begin{table}[b]
    \centering \small
    \caption{
    Runtime (in minutes) on the ACS dataset on non-chain FD set (\Cref{table:fds}) with $5\%$ noise.
    \bapprox{} $+$ \bpostclean{}, \explpgreedy{} and \explprepr{} experience Out-of-Memory issues. 
    }
    \label{tab:acs_time_largescale_non-chain}
    \begin{tabularx}{\columnwidth}{l|rrrrrr}
        \toprule
        Name/DB Size
        & 10K & 40K & 100K & 500K & 1M \\
        \midrule
        \makecell{\bapprox{} \\+ \bpostclean{}} & 0.29 & 5.46 & OOM & OOM & OOM \\
        \hline
        \explpgreedy{} & 3.16 & 71.42 & OOM & OOM & OOM\\
        \hline
        \explprepr{} & 3.18 & 85.82 & OOM &  OOM & OOM \\
        \hline
        \textbf{\expscalableheuristic{}{}} & 0.15 & 1.12 & 10.56 & 142.80 & \textbf{351.79}\\
        \bottomrule
    \end{tabularx}
\end{table}
}

\cut{
    \begin{table}[htb]
    \begin{tabularx}{\columnwidth}{l|cccc}
      \toprule
    Name/DB Size & VC-Approx & LP-Greedy & Rounding & FDCleanser\\
      \midrule
      \textbf{10k} & 0.29 & 3.16 & 3.18 & 0.15 \\
      \textbf{20k} & 2.22 & 24.97 & 22.96 & 0.36 \\
      \textbf{40k} & 5.46 & 71.42 & 85.82 & 1.12 \\
      \textbf{100k} & & & & \textbf{10.56} \\
      \textbf{500k} & & & & \textbf{142.80} \\
      \textbf{1000k} & & & & \textbf{351.79} \\
      \bottomrule
    \end{tabularx}
        \caption{
        Runtime (in minutes) for the scalable algorithms for the ACS dataset, in the Non-chain FD Set (\Cref{table:fds}) with $15\%$ noise.
        }
        \label{tab:acs_time_largerscale_non-chain}
    \end{table}
}

\cut{
    \begin{figure}[htb]
    \begin{center}
        \minipage{\columnwidth}
    	\includegraphics[width=\columnwidth]{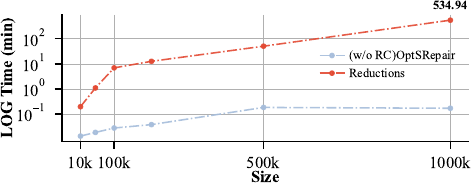}
    	\caption{\small{(ACS, Reducible $\Delta$) Quality vs. dataset size. \ag{What is `reductions'?}}}
    	\label{fig:largedata_reducible}
        \endminipage \vfill
        \minipage{1\columnwidth}
    	\includegraphics[width=1\columnwidth]{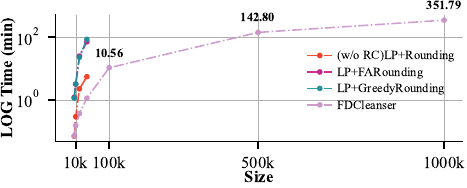}
    	\caption{\small{(ACS, Reducible $\Delta$) Time vs. dataset size.}}
    	\label{fig:largedata_nonreducible}
        \endminipage
    \end{center}
    \end{figure}
}

\subsection{Varying Representation Constraints}\label{subsec:exp_rc}
In this section, we vary the RCs to study the effect on the number of tuples remaining in the database after repair in ACS data. We relax the RC by gradually reducing the proportion of the minority group 
from the original ``$= 20\%$''
to ``$\geq 18\%$'' and `` $\geq15\%$''. 


\cut{
    in repair quality and the percentage of remaining tuples by the proposed methods and baselines that when the constraint to the minority group "Foreign-born" is relaxed to \%Foreign-born $\geq 18\%$ and $15\%$. We consider two scenarios: (1) chain \fdset{} with $20\%$ noise and $8000$ tuples; (2) non-chain \fdset{} with $10\%$ noise and $4000$ tuples.
}

\begin{figure}[t]
\centering
\begin{minipage}{1\linewidth}
\centering
\begin{subfigure}{0.5\linewidth}
\includegraphics[width=\linewidth]{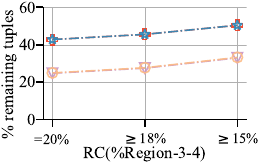}
\centering
\caption{Chain (8K tuples, 15\% noise)}
\label{fig:relax_chain}
\end{subfigure}%
\begin{subfigure}{0.5\linewidth}
\includegraphics[width=\linewidth]{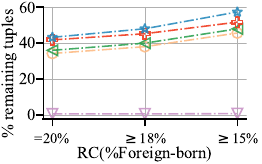}
\centering
\caption{Non-chain (4K tuples, 5\% noise)}
\label{fig:relax_nonchain}
\end{subfigure}%
\end{minipage}
\centering
\begin{minipage}{1\linewidth}
\centering
\includegraphics[width=1\linewidth]{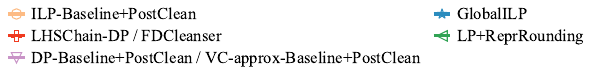}
\end{minipage}
\caption{Percentage of remaining tuples varying the RC (ACS, 80\%-20\% value distr. and 20\%-80\% relative noise distr.)}
\label{fig:relax_rc_remain}
\Description[]{}
\end{figure}

\begin{figure*}[t]
    \centering
    \begin{minipage}{\textwidth}
        \begin{subfigure}{0.25\textwidth}
        \captionsetup{aboveskip=0pt}
            \includegraphics[width=\textwidth]{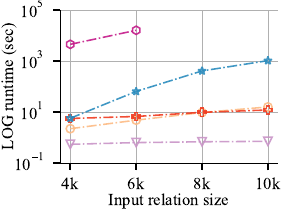}
            \caption{ACS: chain FD set (10\%)}
            \label{fig:time_b}
        \end{subfigure}%
        \begin{subfigure}{0.25\textwidth}
        \captionsetup{aboveskip=0pt}
            \includegraphics[width=\textwidth]{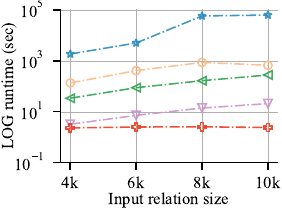}
            \caption{ACS: non-chain FD set (10\%)}
            \label{fig:time_d}
        \end{subfigure}%
        \begin{subfigure}{0.25\textwidth}
        \captionsetup{aboveskip=0pt}
            \includegraphics[width=\textwidth]{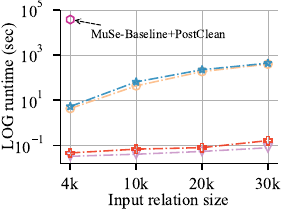}
            \caption{COMPAS: chain FD set (10\%)}
            \label{fig:time_f}
        \end{subfigure}%
        \begin{subfigure}{0.25\textwidth}
        \captionsetup{aboveskip=0pt}
            \includegraphics[width=\textwidth]{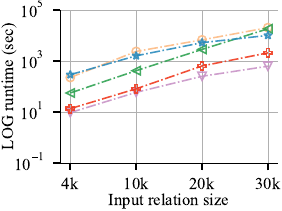}
            \caption{COMPAS: non-chain FD set (10\%)}
            \label{fig:time_h}
        \end{subfigure}%
    \end{minipage}
    \begin{minipage}{0.9\textwidth}
        \centering
        \includegraphics[width = 0.9\textwidth]{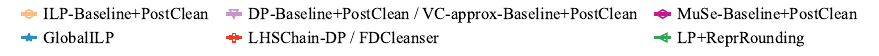}
    \end{minipage}
    \caption{Runtimes for ACS and COMPAS data ($10\%$ noise level, 80\%-20\% value distribution) with chain and non-chain FD sets}
    \label{fig:time_acs_compas_10}
    \Description[]{}
\end{figure*}

\Cref{fig:relax_rc_remain} shows that for both chain and non-chain FD sets, the percentage of remaining tuples 
increases as the 
constraint is relaxed.
The percentage of tuples deleted for non-chain FDs is high even for 5\% noise, since the FDs have 9 attributes (\Cref{table:fds}), 
possibly changing a large number of tuples in noise injection 
(\Cref{sec:expt-noise}), and the 
20\%-80\% 
relative noise has high cost of \rsrepair{} 
(\Cref{subsec:exp_cost_repr}).
\cut{
\revb{Moreover, the RCs force us to delete additional tuples from the majority group to satisfy the required representations, resulting in a substantial percentage of deletions.}
In the end, although algorithms may delete fewer tuples by relaxing RCs, there is a trade-off between the strictness of the RC and the required number of deletions.
}

\section{Related Work}\label{sec:related}


\paragraph*{Data repairing and constraints}
Past work on data repairing~\cite{Afrati2009,RekatsinasCIR17,ChuIP13,BertossiKL13,FaginKK15} aimed at minimizing the number of changes in the database in terms of tuple deletions~\cite{Afrati2009,livshits2020computing,GiladDR20,ChomickiM05,LopatenkoB07,10.14778/3407790.3407809} and value updates~\cite{ChuIP13,RekatsinasCIR17,DallachiesaEEEIOT13,GeertsMPS13-LLunatic,KolahiL09,BertossiKL13}. 
The complexity of the former is better understood~\cite{livshits2020computing}.
We intend to study extensions of our approach to the value-change model as well in future work.
%
The literature 
considered a wide variety of constraints, such as FDs and versions thereof \cite{bohannon2007conditional,koudas2009metric}, denial constraints \cite{ChomickiM05}, tuple-generating dependencies \cite{FaginKMP05,Fagin09a}, and equality-generating dependencies \cite{DBLP:journals/siamcomp/BeeriV84}. 
Different from these, our representation constraints considers the statistical proportions of an attribute {\em over the entire database.}

{\em Representation constraints.} 
Various types of constraints involving probability distributions on the data have been proposed in the literature. This vein of literature focuses on fairness in light of {\em a specific downstream task}.
A predominant part of the work there focuses on {\em algorithmic fairness}~\cite{FeldmanFMSV15,CalmonWVRV17,SalimiRHS19,SalazarNA21,pitoura2022fairness}: given a classification algorithm, and sensitive attributes, determine whether the classification is {\em fair} according to some definition of {\em fairness} \cite{HardtPNS16,equal1990uniform}. 
Specifically, they consider the outcome or predicted attribute in the data and some sensitive attributes (like race or sex), and aim to maintain some equality of conditional probabilities for different values of these attributes. 
Several fairness definitions can be expressed as comparisons of (combinations of) conditional probabilities, 
e.g.,  demographic parity~\cite{DworkHPRZ12,KleinbergMR17} and true positive rate balance~\cite{simoiu2017problem,Chouldechova17}. 
One of our hypotheses is that maintaining representations of different sub-populations 
while repairing the data is a precursor to reducing biases in all data-dependent tasks. Yet, capturing some fairness definitions using complex RCs is an intriguing subject for future work.
\section{Discussion and Future Work}\label{sec:conclusion}


In this paper, we proposed a novel framework for estimating the cost of representation for \srepair{}s, i.e., how many extra tuples have to be deleted in order to satisfy both the FDs as well as a representation constraint (RC) on the values of a sensitive attribute. We studied the complexity of computing an optimal \rsrepair{}, presented poly-time optimal algorithms for FD sets that form LHS-chains for bounded domain of the sensitive attribute, devised efficient heuristics for the general cases, and evaluated our approaches experimentally. 
This paper conducts an initial study of data repair with representations using \srepair, and there are many interesting future directions.
\par
First, one can study the complexity and algorithms for generalization of representations to {\em multiple sensitive attributes with arbitrary overlap with the FDs}.
While the problem will remain NP-hard and ILP can give a solution for smaller data and some settings, a closer study of the complexity, algorithms (e.g., the \postclean{} process), and impact in practice will be challenging. 
We have a detailed discussion on multiple sensitive attributes in the full version \cite{fullversion}.
\par
Second, it will be interesting to study the cost of representations for other repair models, and in particular, for  {update repairs}. 
Update repair cleans the data by updating values of some cells (attributes of tuples), which preserves the original data size unlike \srepair. However, update repair approaches also have their own challenges with or without representations. In  \Cref{ex:intro-distribution} discussed in the introduction, a popular update repair method  Holoclean~\cite{RekatsinasCIR17} does not make any changes in the data, and therefore does not eliminate any violations, since it treats the FDs as soft constraints while preserving the data distribution. On the other hand, another update repair method Nadeef~\cite{DallachiesaEEEIOT13} provides a repair satisfying the FDs, and since the sensitive attribute $\att{Disability}$ does not participate in the FDs, it preserves its representations. However, in our experiment, Nadeef changed a {\tt (Asian, foreign-born)} person to {\tt (White, native)} while keeping the other attribute values the same {\tt (female, born in Philippines, lives in CA, DISABLE, $\cdots$)}. This is not faithful to the reality (People born in Philippines should be {\tt ``foreign born''} not {\tt ``native''} and are likely to be Asian (not White)). Data repair under updates might be inherently connected to preserving distributions of multiple attributes discussed above.

\par
Third, this paper considers the cost of repair preserving representation on the data without considering any tasks where the data is used. A natural extension of our model and a future work is to consider the (still task-independent) weighted cost of tuple deletion where different tuples have different weights or different costs, which can be helpful to the downstream tasks after repairing. 
A more challenging future work is to study both the cost and effect of data repair with and without representation on {\em downstream tasks}, e.g., when the repaired data is used for ML-based tasks like predictions and classification. It will be interesting to see the implication of preserving representations in the input dataset to preserving fairness of the ML tasks for different sub-populations.


\cut{
In this section, we briefly discuss extensions of our algorithms on different aspects. {(1)} RCs that require some sensitive values do not exist in the repairs: we can simply remove the tuples with those values from the relation at the most beginning, then all further repairs operate on the pre-cleaned relation. {(2)} changing RCs on the same pair of \relation{} and \fdset{}: recall that \dpalgo{} computes the candidate set based solely on $\relation{}, \fdset{}$ through \reduce{}. It implicitly implies that this computation does not need to be repeated when only the RC is changing, therefore the only action required is to invoke \postclean{} with the new RCs. {(3)} multiple RCs: our algorithms can be categorized into two groups, ILP-based algorithms and DP-based algorithms. ILP-based algorithms naturally support this feature by adding linear constraints corresponding to the new RCs. For DP-based algorithms, we can extend the definition of $\repreq$ and $\reprdom$ by requiring that a candidate is representatively equivalent to (resp. representatively dominates) the other on every sensitive attribute. \postclean{} also needs to be adjusted accordingly. {(4)} \red{counts to weights: if each tuple has an associated weight, the objective changes to maximizing the sum of the weights of the retained tuples instead of the counts. ILP-based algorithms can naturally support this feature by changing the objective. DP-based algorithms need to choose the one with larger weight if there are two candidates that are representatively equivalent to each other.}  \sr{update}
}


\begin{acks}
 The work of Benny Kimelfeld was funded by the Israel Science Foundation (ISF) under grant 768/19 and the German Research Foundation (DFG) under grant KI 2348/1-1. The work of Amir Gilad was funded by the Israel Science Foundation (ISF) under grant 1702/24 and the Alon Scholarship.
\end{acks}

\clearpage

\bibliographystyle{ACM-Reference-Format}
\bibliography{bibtex}

\clearpage

\appendix
\section{Details in Section~\ref{sec:model}}
We present the missing details of \Cref{sec:model} 
in this section in the following order:
\begin{itemize}
    \item Proof of \Cref{thm:hardness}  (\Cref{sec:repr-sensitive}),
    \item Proof of \Cref{prop:lhs-chain-reduction} (\Cref{sec:algo-reprepair}),
    \item Simple LHS marriage is at least as hard as color-balanced matching (\Cref{sec:algo-reprepair})
    \item Pseudocode of \postclean{} (\Cref{subsec:postclean}),
    \item Proof of \Cref{prop:post_clean} (\Cref{subsec:postclean}).
\end{itemize}
The proof of \Cref{thm:poly-time-chain} is given in \Cref{subsec:proof_thm_poly_time_chain}, since it uses the lemmas proved in \Cref{sec:appendix_polyalgo}.

\cut{
\subsection{Proof of \Cref{prop:optrepair-hard-rrepair-hard}}
\begin{proposition}\label{prop:optrepair-hard-rrepair-hard}
In all cases where 
finding the optimal \srepair\
is NP-hard, 
finding the optimal \rsrepair\
is also NP-hard.
\end{proposition}
\begin{proof}
    We next propose the decision problem of computing an optimal \srepair{}: let $\bar{\schema{}}$ and $\bar{\fdset{}}$ be fixed. Given a relation $\bar{\relation{}}$ and a non-negative integer $\bar{T}$, whether there exists a \srepair{} that retains at least $\bar{T}$ tuples.

    For any instance of the decision problem of computing an optimal \srepair{}, we build an instance for the decision problem of computing an optimal \rsrepair{}:
    \begin{itemize}
        \item \schema{} has one more column $A_m$ than $\bar{\schema{}}$;
        \item $\fdset{} = \bar{\fdset{}}$;
        \item $\relation{}$ has one more binary column $A_m$ than $\bar{\relation{}}$. Without loss of generality, we denote their schema by $\bar{\relation{}}(A_1, \dots, A_{m - 1})$ and $\relation{}(A_1, \dots, A_{m - 1}, A_m)$, respectively. $\relation{}$ consists of two parts: a copy of $\bar{\relation{}}$ with $A_m$-value $0$ and another copy of $\bar{\relation{}}$ with $A_m$-value $1$;
        \item The sensitive attribute $A_s$ is $A_m$ in this case. $\rc{} = \{\%0 \geq \frac{1}{2}, \%1 \geq \frac{1}{2}\}$;
        \item $T = 2\bar{T}$.
    \end{itemize}
    
    We argue that a \srepair{} $\bar{\relation{}}^{'} \subseteq \bar{\relation{}}$ that retains at least $\bar{T}$ tuples can be obtained if and only if \rsrepair{} $R' \subseteq \relation{}$ that retains at least $T$ tuples can be obtained.
    
    \underline{If}. Suppose that we have a polynomial-time algorithm to get an \rsrepair{} that retains at least $T$ tuples. We denote $R'_0 \gets (\sigma_{A_m = 0}{R'})[A_1, \dots, A_{m - 1}]$ and denote $R'_1 \gets (\sigma_{A_m = 1}{\relation{}})[A_1, \dots, A_{m - 1}]$. Here, $\sigma$ is selection and $[]$ is projecting a relation on a set of attributes without eliminating the duplicates. We denote $R'' \gets \arg\max\limits_{R'' \in \{R'_1, R'_2\}}{|R''|}$. Next, we argue that $R''$ is a \srepair{} that retains at least $\bar{T}$ tuples because:
    \begin{itemize}
        \item The sum of their sizes $|R'_0| + |R'_1|$ is $|R'|$, which is at least $T = 2\bar{T}$. Therefore, $\max{(|R'_0|, |R'_1|)} \geq \frac{2\bar{T}}{2} = \bar{T}$. It means that the larger one between $R'_0$ and $R'_1$ retains at least $\bar{T}$ tuples.
        \item Both $R'_0$ and $R'_1$ satisfy $\bar{\fdset{}}$, because $R'$ satisfies $\fdset{}$ and $\fdset{} = \bar{\fdset{}}$. Noting that either $R'_0$ or $R'_1$ is constructed by taking a subset of $R'$ as well as a projection, which does not introduce new violations to $\fdset{}$ (or $\bar{\fdset{}}$).
    \end{itemize}
    
    \underline{Only if.} Suppose that we have a \srepair{} that retains at least $\bar{T}$ tuples. For every tuple $t$ in $\bar{\relation{}}'$, there is:
    \begin{itemize}
        \item a tuple $t_0$ in $\relation{}$, where $\sigma_{A_1, \dots, A_{m - 1}}{t_0} = \sigma_{A_1, \dots, A_{m - 1}}{t}$ and $t_0[A_m] = 0$.
        \item and a tuple $t_1$ in $\relation{}$, where $\sigma_{A_1, \dots, A_{m - 1}}{t_1} = \sigma_{A_1, \dots, A_{m - 1}}{t}$ and $t_1[A_m] = 1$.
    \end{itemize}
    Collect both $t_0$ and $t_1$ for every $t$ to form a $R'$. We argue that $R'$ is an \rsrepair{} that retains at least $T$ tuples because:
    \begin{itemize}
        \item $R' \fdsatisfies \fdset{}$: because $\bar{R}' \fdsatisfies \fdset{}$ and there is no violation between each $t_0$ and each $t_1$ (otherwise $\bar{R}'$ is not a \srepair{}). 
        \item $R' \rcsatisfies \rc{}$: according to the construction of $R'$, the number of $t_0$ is the same as the number of $t_1$, indicating that $R'$ has exactly half $0$s and half $1$s.
        \item $|R'| = 2|\bar{R}'| \geq 2(\bar{T}) = T$. 
    \end{itemize}
    \end{proof}
}

\subsection{Proof of Theorem~\ref{thm:hardness}}
\hardnessthm*
\begin{proof} 
    Recall the decision problem of computing an optimal \rsrepair{}. Suppose \schema{} and \fdset{} are fixed. Given a relation \relation{}, a RC \rc{}, and a non-negative integer $T$, the answer is yes if and only if there exists an \rsrepair{} that retains at least $T$ tuples.
    \par
    We show a reduction from 3-satisfiability (3-SAT) to the decision problem of computing the optimal \rsrepair{}.
    
    Consider an input $\phi$ of 3-CNF with clauses $c_1, c_2, \cdots, c_m$ and variables $v_1, v_2, \cdots, v_n$. The goal is to decide if there is an assignment of each variable to $\{0, 1\}$ (i.e., false or true) so that $\phi$ evaluates to true.

    We construct an instance for our decision problem as follows. The relation $\relation{(A, B, C)}$ contains three attributes $A, B, C$ where the sensitive attribute is $C$. For a variable $v_i$ and clause $c_j$, if the positive literal $v_i$ appears in $c_j$ we add a tuple $(v_i, 1, c_j)$ to \relation{}; otherwise if the negative literal $\overline{v_i}$ appears in $c_j$, we add a tuple $(v_i, 0, c_j)$ to \relation{}. Overall, there are $3m$ tuples in \relation{}. The FD set $\fdset{} = \{A \rightarrow B\}$. The RC $\rc{}$ contains the lower bounds $\%c_j \geq \frac{1}{m}$, for all $c_j \in \dom(C)$. 
    Note that $\rc$ is an exact constraint since there are $m$ clauses $c_j$ and $m \cdot \frac{1}{m} = 1$, i.e., $\%c_{j} = \frac{1}{m}$ must hold for every $C$-value if the repair satisfies \rc{}.
    We set $T = m$.  
    
    We show that $\phi$ has a satisfying assignment if and only if \relation{} has a R\srepair{} $R'$ for $\Delta$ and $\rc$ whose size is at least $T = m$.

    \paragraph*{(If)} Consider any \rsrepair{} $R'$ of $R$ w.r.t. $\Delta, \rc$ such that $|R'| \geq m$.
    $R'$ satisfies \rc{}, so there are at least $\floor{\frac{|R'|}{m}} \geq 1$ tuples in $R'$ such that $t[C] = c_{j}$ for each distinct $c_{j}$. Pick any such tuple $t$. If $t = (v_i, u, c_j)$, we assign the Boolean value $u$ to $v_i$. 
    Since $R'$ satisfies $\fdset{}$, so the assignment to every variable $v_i$ is unique. The variables that are not retained by $R'$ can be assigned to either $0$ or $1$. Such assignment of the variables is a satisfying assignment for $\phi$, since clause $c_j$ is satisfied for every $j \in [1, m]$, and consequently $\phi$ evaluates to true.

    \paragraph*{(Only if)} Suppose that we have a satisfying assignment for variables $v_1, \dots, v_n$ so that $\phi$ evaluate to true, represented by Boolean values $u_1, \dots, u_n$. We form a relation $R'$ with tuples of the form $(v_i, u_i, *)$ where $*$ can take any value. Note that $R'$ satisfies $\Delta = \{A \rightarrow B\}$ since each $v_i$ is associated with a unique $u_i$. Since the assignments $v_i = u_i$ for all $i \in [1,n]$ form a satisfying assignment for $\phi$, every clause $c_\ell$ is satisfied by at least one literal $v_i$, hence each $C$-value $c_{\ell}$ appears at least once in $R'$ and consequently $|R'| \geq m$. For the cases where $|R'| > m$, we ensure that $R'$ has exactly one tuple with value $C = c_{j}$ for every $c_{j} \in \dom{(C)}$ by picking one arbitrary tuple and discarding the rest. Then every value $C = c_j$ for $c_{j} \in \dom{(C)}$ appears exactly once, i.e., $\%c_{j} = \frac{1}{m}$ holds, and therefore $R' \rcsatisfies \rc{}$. 
    
\end{proof}

\subsection{Reduction from Color-Balanced Matching to Simple LHS Marriage}
\label{sec:lhs-marriage}
 As mentioned in \Cref{sec:algo-reprepair}, in this section we show that when the FDs $\fdset$ is a \textit{simple LHS marriage} as defined below, which has been shown to be poly-time solvable for optimal \srepair, is as hard as the color-balanced matching problem studied in the algorithms literature. 

\begin{itemize}
    \item 
{\bf Simple LHS Marriage:} Given relation $R(A, B, C)$, the FD set is $\fdset = \{A \rightarrow B, B \rightarrow A, A \rightarrow C\}$, $C$ is the sensitive attribute with values $c_1, \cdots, c_k$ ($k$ is constant), and the RC is $\rc{}=\{\rc_1, \rc_2, \cdots, \rc_k\}$,  where $\sum_{i}\rc_i = 1$ (an exact RC). We assume set semantics, i.e., $R$ does not have any duplicate tuples.
\end{itemize}
The  \srepair{} problem for $R$ and $\fdset$ has been shown to be polytime-solvable in \cite{LivshitsKR18}.

The {\em exact-color bipartite matching problem} is defined as follows:
\begin{definition}[Exact Colored Matching ~\cite{ka2022optimal-AAMAS22}]\label{def:ecm} 
    Given an instance of an unweighted bipartite graph $G=(U \cup V,E)$ and $k$ colors, where each edge is colored by one color. The color constraint is $\phi = (\phi_1, \phi_2, \cdots, \phi_k)$, where $\phi_i$ is the number of edges in color $i$ in a matching $M$ and $\phi_i\geq0$. The number of colors $k$ is a constant. A matching $M$ is an exact colored matching in $G$ if it satisfies the color constraint $\phi$.
\end{definition}


The exact colored matching problem for two colors is known to be in RNC~\cite{mulmuley1987matching}. A recent study by \citet{ka2022optimal-AAMAS22} shows that the problem for a constant number of colors $k$ has a randomized polynomial-time algorithm by generalizing the algorithm presented in \cite{mulmuley1987matching}, and hence it is not NP-hard unless {\tt NP = RP}.

The following lemma shows that the simple LHS marriage \rsrepair{} is at least as hard as the exact colored matching problem. 


\cut{
We propose a randomized polynomial-time algorithm for the simple LHS marriage \rsrepair{}. First, we prove Item 1 and 2 to show their equivalence. Then, we provide a randomized polynomial-time algorithm to solve the exact colored matching.
}

\cut{
\begin{definition}[Simple LHS Marriage]\label{def:simple_lhs}
    For a \prob{$(\relation{}, \fdset{}, \rc{})$}, a Simple LHS Marriage satisfies that: 1) $\fdset{} = \{X \rightarrow Y, X \rightarrow Y\}$, where $X$ and $Y$ are disjoint sets of attributes; 3) the representative constraint on sensitive attribute $A_{\ell}$ is $\rc{}=\{\rc_1, \rc_2, \cdots, \rc_k\}$, where the number of atomic representative constraint $k$ is constant 3) $\pi_{X \cup Y \cup {A_\ell}}R$ has no duplicates.
\end{definition}
}

\begin{lemma}\label{lem:ecm_to_simple_lhs}
The exact colored matching problem with color constraint \cut{$\phi = (\phi_1, \phi_2, \cdots, \phi_k)$ }can be reduced to the decision problem of finding an optimal \rsrepair{} for simple LHS marriage in polynomial time.
\end{lemma}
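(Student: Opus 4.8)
The plan is to give a direct, gadget-free encoding of an exact colored matching instance as a simple LHS marriage \rsrepair{} instance, in which the \srepair s of the constructed relation are essentially the matchings of the input graph while the (exact) RC pins down the required color profile $\phi$. Concretely, given a bipartite graph $G=(U\cup V,E)$ with an edge-coloring into $k$ colors and a color constraint $\phi=(\phi_1,\dots,\phi_k)$, I would build the relation $R(A,B,C)$ containing, for every edge $e=(u,v)\in E$ colored $i$, the single tuple $(u,v,c_i)$, where $A$-values range over $U$, $B$-values over $V$, and $C$ is the sensitive attribute. The FD set is the fixed $\fdset=\{A\rightarrow B,\ B\rightarrow A,\ A\rightarrow C\}$ of the simple LHS marriage, the RC is $\rho$ with $\rho_i=\phi_i/m$ for $m:=\sum_j\phi_j$ (so $\sum_i\rho_i=1$, i.e.\ $\rho$ is an exact RC; colors with $\phi_i=0$ give the trivial bound $\%c_i\ge 0$, which may be dropped, as may colors not appearing on any edge), and the target size is $T:=m$ (WLOG $m\ge 1$, since $m=0$ is a trivial yes-instance on both sides). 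Since $G$ is simple and each edge carries a single color, distinct edges yield distinct tuples, so the set-semantics assumption of simple LHS marriage holds; the construction is plainly linear-time, and $k$ is constant on both sides.

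The main technical step, which I expect to be routine, is the correspondence between \srepair s of $R$ w.r.t.\ $\fdset$ and matchings of $G$. For any \srepair{} $R'$, the FDs $A\rightarrow B$ and $B\rightarrow A$ force $M(R'):=\{(t[A],t[B]):t\in R'\}$ to be a partial matching of $G$, and $A\rightarrow C$ forbids two distinct tuples of $R'$ sharing an $(A,B)$-pair, so $|R'|=|M(R')|$ and the multiset of $C$-values appearing in $R'$ equals the multiset of edge-colors in $M(R')$; conversely each matching $M$ lifts to the \srepair{} $R'_M=\{(u,v,c_i):(u,v)\in M\text{ is colored }i\}$ with the same size and color multiset. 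Using this, in one direction I would take an exact colored matching $M$ (exactly $\phi_i$ edges of color $i$): then $R'_M$ is an \srepair{} of size $m$ whose number of tuples with $C=c_i$ is $\phi_i=\rho_i m=\rho_i|R'_M|$, so $R'_M$ satisfies the exact RC and is an \rsrepair{} of size $T$. In the other direction I would take any \rsrepair{} $R'$ with $|R'|\ge m$ and pass to $M(R')$: since $R'$ meets the exact RC, $M(R')$ has exactly $\tfrac{\phi_i}{m}|R'|\ge\phi_i$ edges of color $i$ for every $i$, so deleting the surplus edges color by color (a submatching of a matching is again a matching) yields an exact colored matching for $\phi$. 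Thus the constructed \rsrepair{} decision instance is a yes-instance iff the original exact colored matching instance is.

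I do not anticipate a genuine obstacle; the remaining work is bookkeeping: verifying that $\rho$ is a well-formed exact RC (nonnegativity, summing to one, only active-domain values, the degenerate $\phi_i=0$ case), checking the no-duplicates requirement through simplicity of $G$, and --- the one place that actually exploits the structure of the decision problem --- handling its ``$\ge T$'' form via the surplus-deletion argument rather than trying to identify $R'$ with an exact colored matching directly (which would fail whenever $|R'|>m$). Since the reduction is polynomial, indeed linear, in the sizes of $G$ and $\phi$, this proves \Cref{lem:ecm_to_simple_lhs}, and in particular shows that \rsrepair{} for simple LHS marriage is at least as hard as exact colored matching.
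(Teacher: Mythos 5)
Your proposal is correct and follows essentially the same reduction as the paper: the identical relation $R(A,B,C)$ with one tuple per colored edge, the FD set $\{A\rightarrow B, B\rightarrow A, A\rightarrow C\}$, the exact RC $\rho_i=\phi_i/\sum_j\phi_j$, target $T=\sum_j\phi_j$, and the same two-directional argument including the surplus-edge deletion to handle $|R'|>T$. The extra bookkeeping you flag (set semantics via simplicity of $G$, the $\phi_i=0$ case) is fine but does not change the argument.
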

\begin{proof}
    Recall the decision problem of computing an optimal \rsrepair{} from \Cref{thm:hardness}: suppose \schema{} and \fdset{} are fixed. Given a relation \relation{}, a RC \rc{}, and a non-negative integer $T$, the answer is yes if and only if there exists an \rsrepair{} that contains at least $T$ tuples.
    
    Given an instance of the exact colored matching problem with an unweighted bipartite graph $G=(U \cup V,E)$ and a color constraint $\phi = (\phi_1, \phi_2, \cdots, \phi_k)$, we construct an instance of the decision problem of finding an \rsrepair{} for simple LHS marriage as follows:
    \begin{itemize}
        \item Create a relation $R(A, B, C)$, where $C$ is the sensitive attribute with domain $\{c_1, c_2, \cdots, c_k\}$. For each edge $(u, v) \in E$ with color $c_i$, add a tuple $(u, v, c_i)$ to $R$.
        \item Set the FD set $\fdset{} = \{A \rightarrow B, B \rightarrow A, A \rightarrow C\}$.
        \item Set the RC $\rc{} = \{\rc_1, \rc_2, \cdots, \rc_k\}$, where $\rc_i = \frac{\phi_i}{\sum_{i \in [1,k]}\phi_i}$ for $i \in \{1, 2, \cdots, k\}$.
        \item Set $T = \sum_{i \in [1,k]}\phi_i$
    \end{itemize}
    
    We claim that $G$ has an exact colored matching $M$ satisfying $\phi$ if and only if there is an \rsrepair{} $\relation'$ w.r.t. \fdset{} and \rc{} such that $|\relation'| \geq T$.

    \paragraph*{(If)} Given an \rsrepair{} $\relation'$ of $\relation$ w.r.t. \fdset{} and \rc{} with size at least $T$, we can construct an exact colored matching $M$ in $G$ as follows: for each tuple $(u, v, c_i)$ in $\relation'$, add the edge $(u, v)$ with color $c_i$ to $M$. Since $\relation'$ satisfies the FDs in $\fdset{}$, for any two tuples $(u_1, v_1, c_{i_1})$ and $(u_2, v_2, c_{i_2})$ in $\relation'$, if $u_1 = u_2$, then $v_1 = v_2$, and if $v_1 = v_2$, then $u_1 = u_2$. This implies that no two edges in $M$ share a common endpoint, and thus $M$ is a matching. As $\relation'$ satisfies $\rc{}$, the number of edges in $M$ with color $c_i$ is at least $|\relation'| \cdot \rc_i \geq \sum_{i \in [1,k]}\phi_i \cdot \frac{\phi_i}{\sum_{i \in [1,k]}\phi_i} = \phi_i$. If the number of edges in $M$ with $c_i$ is strictly greater than $\phi_i$, we can remove arbitrary edges with color $c_i$ from $M$ to satisfy the exact color constraint $\phi_i$. After this process, $M$ becomes an exact colored matching satisfying $\phi$.

    \paragraph*{(Only if)} Given an exact colored matching $M$ in $G$ satisfying $\phi$, we can construct an \rsrepair{} $\relation'$ of $\relation$ as follows: for each edge $(u, v)$ in $M$ with color $c_i$, add the tuple $(u, v, c_i)$ to $\relation'$. Since $M$ is a matching, for any two tuples $(u_1, v_1, c_{i_1})$ and $(u_2, v_2, c_{i_2})$ in $\relation'$, we have $u_1 \neq u_2$ and $v_1 \neq v_2$. Therefore, $\relation'$ satisfies the FDs in $\fdset{}$. As $M$ satisfies $\phi$, the number of tuples in $\relation'$ with sensitive value $c_i$ is exactly $\phi_i$ for each $c_i$, hence \relation{} satisfies that $\%c_i = \frac{\phi_i}{\sum_{i \in [1,k]}\phi_i} = \rc_i$. Therefore $\relation'$ satisfies both \fdset{} and \rc{}. Moreover,  $|R'| = \sum_{i=1}^k \phi_i = T$, so $R'$ is an \rsrepair{} with size at least $T$.

    The reduction takes polynomial time as it scans the edges in $E$ once to construct the relation \relation{} and the RC $\rc{}$ are directly set based on the given color constraint $\phi$.
\end{proof}





\subsection{Pseudocode and Walkthrough of \postclean{}}\label{sec:appendix-postclean-code}
In this section, \Cref{example:post_clean_running} gives a simple scenario about how \postclean{} works and the pseudocode of the $\postclean(R, \rc)$ procedure from Section~\ref{subsec:postclean} is shown in \Cref{alg:post_clean} followed by a detailed walk-through  of the algorithm.

\begin{example}\label{example:post_clean_running}
    Assume that an input $(\relation{}, \rc{})$ of \postclean{} satisfies: $|\relation{}| = 12, |\sigma_{A_s = \texttt{red}}{\relation{}}| = 2, |\sigma_{A_s = \texttt{blue}}{\relation{}}| = |\sigma_{A_s = \texttt{green}}{\relation{}}| = 5$, and $\rc{} = \{\%\texttt{red}\geq\frac{1}{4}, \%\texttt{blue}\geq \frac{1}{4}, \%\texttt{green}\geq\frac{1}{4}\}$. 
    \postclean{} iterates over $T$ from $12$ to $0$. When $T$ is between 12 and 9, it requires the subset $R'$ has at least $\tau_{\texttt{red}} = \ceil{\frac{T}{4}} = 3$ red tuples, but $R$ cannot fulfill that number. Next, when $T = 8$, $\tau_{\texttt{red}} = \tau_{\texttt{blue}} = \tau_{\texttt{green}} = 2$, which can be supplied by $R$ and thus $T_0 = \tau_{\texttt{red}} + \tau_{\texttt{blue}} + \tau_{\texttt{green}} = 6$. $T_0 < T$ indicates that we have to add two extra tuples that can have arbitrary $A_s$-values from \relation{}. Moreover, adding these two tuples
    will not violate the RC  \rc{}, since $\tau_{\texttt{red}}, \tau_{\texttt{blue}}, \tau_{\texttt{green}}$ are already larger than or equal to $\frac{T}{4}$. Hence, \postclean{} will collect two more blue tuples, or two more green tuples, or one more blue and one more green from $R$.
\end{example}

\begin{algorithm}
    \caption{\postclean{}$(R, \rc{})$}\label{alg:post_clean}
    \begin{algorithmic}[1]
        \Require{A relation $R$ and a RC \rc{}}
        \Ensure{A subset $R' \subseteq R$ of largest size such that $R' \rcsatisfies \rc$}
        \For{Size $T$ from $|R|$ to $1$}
            \State $b \gets $ True;
            \For{Every value $a_{\ell}$ in $\dom{(A_s)}$}
                \State $\tau_{\ell} \gets \ceil{T \cdot \rc{(a_{\ell})}}$;
                \If{$\tau_{\ell} > |\sigma_{A_s = a_{\ell}}{\relation{}}|$}
                    \State 
                    $b \gets$ False;
                    \State \textbf{break}
                \EndIf
            \EndFor
            \State $T_0 \gets \sum_{\ell \in [1,k]}\tau_{\ell}$;
            \If{$b$ is True and $T_0 \leq T$}
                \While{$T_0 < T$}
                    \State Arbitrarily choose an $a_{\ell}$ from $\dom{(A_s)}$ where the corresponding $\tau_{\ell} < |\sigma_{A_s = a_{\ell}}{\relation{}}|)$;
                    \State $\tau_{\ell} \gets \tau_{\ell} + 1$;
                    \State $T_0 \gets T_0 + 1$;
                \EndWhile
                \State $R' \gets \bigcup_{a_{\ell} \in \dom{}(A_s)}$ an arbitrary subset of $\sigma_{A_s = a_{\ell}} R$ of size $\tau_{\ell}$;
                \State \Return $R'$;
            \EndIf
        \EndFor
        \State \Return $\emptyset$;
    \end{algorithmic}
\end{algorithm}

The algorithm computes the size (denoted by $T$) of the maximum subset (denoted by $R'$) of \relation{} that satisfies the \rc{}. To achieve this, the main part of \Cref{alg:post_clean} (lines 1-20) is a loop that enumerates the size $T$ from $|\relation{}|$ to $1$, which terminates once a feasible $R'$ is found. A Boolean variable $b$ is initialized in line 2. In lines 3-9, we check if there are sufficient number of tuples for every value $A_s = a_{\ell}$ in \relation{}. Specifically, $\tau_{\ell}$ is assigned to the lower-bound value for the number of tuples with $A_s = a_{\ell}$.
 Then $\tau_{\ell}$ is compared with $|\sigma_{A_s = a_{\ell}}{\relation{}}|$, where the latter denotes the number of tuples in \relation{} with value $A_s = a_{\ell}$. If the lower-bound requirement is larger than the number of available tuples with $A_s = a_\ell$ (line 5), then the verification fails, $b$ is falsified and the inner loop is stopped early. In this case, the current $T$ is not valid, therefore lines 11-19 will be skipped and the procedure will go to next value of $T$ if $T > 1$; otherwise,  the loop (lines 1-20) is finished and it is the case where $\emptyset$ is the only valid subset of $\relation{}$ that trivially satisfies all the lower-bound constraints. At that point, the $\emptyset{}$ is returned in line 21.
 \par
Next, we discuss the case when $b$ is true for all values $A_s = a_{\ell}$ for the current value of $T$. In line 10, $T_0$ is computed as the sum of $\tau_{\ell}$s, representing the minimum size of $R'$ that satisfies every lower-bound constraint. Note that $T_0$ can be larger than $T$, because $\tau_{\ell}$ takes the ceiling for every $a_{\ell}$. In that case, it is impossible to derive a $R'$ of size $T$ and the algorithm fails in the condition check in line 11. On the other hand, $T_0$ is smaller than or equal to $T$, representing that $R$ does have a subset that satisfies \rc{}. If $T_0 < T$ (line 12), then there is a slack between the expected size $T$ and the $T_0$ tuples by concatenating $\tau_{\ell}$ tuples for every $a_{\ell}$. This slack is fulfilled by retaining additional tuples with arbitrary value $a_{\ell}$ as long as \relation{} has more tuples with $A_s = a_{\ell}$ than what it already took by that $T_0$ tuples (lines 12-16). In line 17, the algorithm forms $R'$ by concatenating the samples of tuples with $A_s = a_{\ell}$ for every $a_{\ell}$ and of size $\tau_{\ell}$.

Here is an example for how \postclean{} works:
\cut{
 The main part of \Cref{alg:post_clean} (lines 1-20) is a loop that enumerates the possible size $T$ of $R'$ from $|\relation{}|$ to $1$, which terminates once a feasible $R'$ is found. A helper boolean variable $b$ is initialized in line 2. In lines 3-9, we check if \relation{} can supply sufficient tuples for every $A_s$-value $a_{\ell}$. Specifically, $\tau_{\ell}$ is computed as the minimum number of tuples with $a_{\ell}$ required to satisfy the atomic representation constraint of $a_{\ell}$, and $\tau_{\ell}$ is further compared with $|\relation{}| \cdot \distrsensitive{\relation{}}(a_{\ell})$, where the latter denotes the maximum supply with value $a_{\ell}$ from \relation{}. If the minimum demand is larger than the maximum supply (line 5), then the verification fails, $b$ is falsified and the inner loop is stopped early. In line 10, $T_0$ is computed as the sum of $\tau_{\ell}$s, representing the minimum size of $R'$ to satisfy every atomic representation constraint. On one hand, $T_0$ can be larger than $T$, because $\tau_{\ell}$ takes the ceiling for every $a_{\ell}$. In that case, it is impossible to derive a $R'$ of size $T$ and the algorithm fails in the condition check in line 11. On the other hand, $T_0$ is smaller than or equal to $T$, representing that $R$ does have a subset that satisfies \rc{}. If $T_0 < T$ (line 12), then there is a slack between the expected size $T$ and the concatenation of $\tau_{\ell}$ tuples for every $a_{\ell}$. However, this slack can be fulfilled by retaining additional tuples with arbitrary value $a_{\ell}$ as long as \relation{} has surplus $a_{\ell}$-tuples than what it already supplies for $\tau_{\ell}$ (lines 12-16). In line 17, the algorithm forms $R'$ by concatenating the samples of value $a_{\ell}$ and quota $\tau_{\ell}$.  
}

\subsection{Proof of Proposition~\ref{prop:post_clean}}
\postcleanprop*

\begin{proof}
We first prove the optimality, and then the polynomial running time.
   \paragraph*{(Optimality)} By contradiction, suppose that there is a $R'' \subseteq \relation{}$ where $R'' \rcsatisfies \rc{}$ and $|R''| > |R'|$. 
   Note that $T = |R''|$ is checked earlier than $T = |R'|$ in the outermost for-loop. Hence, 
    $T = |R''|$ failed the validation in \Cref{alg:post_clean} 
    in line 11, either by $b = \text{false}$ or by $T_0 > T$.    
    
    (i) If $b = \text{false}$,
    then there exists an $a_{\ell}$ where $\tau_{\ell} > |\sigma_{A_s = a_{\ell}}{\relation{}}|$. Furthermore, since $R'' \subseteq \relation{}$ indicates that $|\sigma_{A_s = a_{\ell}}{\relation{}}| \geq |\sigma_{A_s = a_{\ell}}{\relation{}''}|$, we have 
    \begin{equation}\label{equn:postclean-1}
    \tau_{\ell} > |\sigma_{A_s = a_{\ell}}{\relation{}''}|
    \end{equation}
    Additionally, $R'' \rcsatisfies \rc{}$ indicates that
    \begin{equation}\label{equn:postclean-2}
        |\sigma_{A_s = a_{\ell}}{\relation{}''}| \geq |\relation{}''| \cdot \rc{(a_{\ell})}.
    \end{equation}
    According to the value assignment of $\tau_{\ell}$, we have     
    \begin{equation}\label{equn:postclean-3}
        \tau_{\ell} = \ceil{T \cdot \rc{(a_{\ell})}} = \ceil{|R''| \cdot \rc{(a_{\ell})}}
    \end{equation}
    Combining \Cref{equn:postclean-2} and \Cref{equn:postclean-3}, we have
    \begin{equation}\label{equn:postclean-4}
        \tau_{\ell} \leq \ceil{|\sigma_{A_s = a_{\ell}}{\relation{}''}|}.
    \end{equation}
    Combining \Cref{equn:postclean-1} and \Cref{equn:postclean-4} that are related to $\tau_{\ell}$, we have 
    \begin{equation}
        \ceil{|\sigma_{A_s = a_{\ell}}{\relation{}''}|} > |\sigma_{A_s = a_{\ell}}{\relation{}''}|.
    \end{equation} However, $|\sigma_{A_s = a_{\ell}}{\relation{}''}|$ is exactly an integer representing the number of tuples with value $a_{\ell}$ in $R''$, so we have (by removing the ceiling function)
    \begin{equation}
        |\sigma_{A_s = a_{\ell}}{\relation{}''}| > |\sigma_{A_s = a_{\ell}}{\relation{}''}|,
    \end{equation}
    which leads to a contradiction. 
    
    (ii) Consider the other case that $T_0 > T$. We express $T_0$ and $T$ by $R''$ and $\rc{}$:
    \begin{equation}
        T_0 = \sum_{\ell \in [1,k]} \tau_{\ell} = \sum_{\ell \in [1,k]}{\ceil{|R''| \cdot \rc{(a_{\ell})}}},
    \end{equation}
    \begin{equation}
        T = |\relation{}''|.
    \end{equation}
    By substituting them into the inequality $T_0 > T$, we have
    \begin{equation}\label{equn:postclean-10}
        \sum_{\ell \in [1,k]}{\ceil{|R''| \cdot \rc{(a_{\ell})}}} > |R''|.
    \end{equation}
    Again, $R'' \rcsatisfies \rc$ indicates that $|\sigma_{A_s = a_{\ell}}{\relation{}''}| \geq |\relation{}''| \cdot \rc{(a_{\ell})}$. Combining it with \Cref{equn:postclean-10}, we have
    \begin{equation}\label{equn:postclean-11}
        \sum_{\ell \in [1,k]}{\ceil{|\sigma_{A_s = a_{\ell}}{\relation{}''}|}} > |R''|.
    \end{equation} 
    Since $|\sigma_{A_s = a_{\ell}}{\relation{}''}|$ is exactly an integer, we can simplify \Cref{equn:postclean-11} into $\sum_{\ell \in [1,k]}{|\sigma_{A_s = a_{\ell}}{\relation{}''}|} > |R''|$, which implies $|R''| > |R''|$, which is a contradiction.

    This completes the proof of Optimality.

    \paragraph*{(Time complexity)}
    First, it is important to know that $|R'|$ is bounded by $|\relation{}|$. In line 1, we iterate over $T$ in $O(|\relation{}|)$. Within the loop, we enumerate all distinct $a_{\ell}$ in the active domain of $A_s$, which is $O(k) = O(|R|)$. In lines 4-8, we compute $\tau_{\ell}$ and validate it in $O(1)$. The second part within the loop is the process of distributing the slack, which can be bounded in $O(|\relation{}|)$. Finally in line 19, we form $R''$ by sequentially taking samples from $R'$ and concatenations, this step is $O(|\relation{}|)$. Overall, the complexity is $O(|\relation{}|^2)$.
\end{proof}

\subsection{Proof of Proposition~\ref{prop:lhs-chain-reduction}}
\lhschainreductionprop*
\begin{proof}
    Since the `if' direction (an LHS chain reduces to $\emptyset$ by repeated applications of consensus FD and common LHS) was observed in \cite{LivshitsKR18}, we show the `only if' direction here, i.e., show that, if an FD set reduces to $\emptyset$ by repeated applications of consensus FD and common LHS, it must be an LHS chain.
    \par
    We prove `only if' by contradiction. Consider an FD set $\fdset$ that is not an LHS chain but was reduced to $\emptyset$ by consensus FD and common LHS simplifications. Since $\fdset$ is not an LHS chain, it has two FDs $f_1: X_1 \rightarrow Y_1$ and $f_2: X_2 \rightarrow Y_2$ such that $X_1 \not\subseteq X_2$ and $X_2 \not\subseteq X_1$. Hence there is an attribute $A_1 \in X_1$ such that $A_1 \notin X_2$, and an attribute $A_2 \in X_2$ such that $A_2 \notin X_1$. Apply consensus FD and common LHS simplifications on $\fdset$ until no more of either of these two simplifications can be applied. Without loss of generality, suppose $f_1$ was removed altogether from $\fdset$ first by the simplifications, and consider the last step $j$ that removed the last attribute from $X_1$.  Since $A_1 \in X_1 \setminus X_2$, $A_1$ cannot be removed by common LHS simplification in any step $\leq j$. Since at least $A_1$ remains in the LHS of $f_1$ in step $j$, $f_1$ cannot be removed from $\fdset$ by consensus FD since the LHS is not $\emptyset$. This contradicts the assumption that $\fdset$ was reduced to $\emptyset$, hence proved. 
\end{proof}

\section{Details in Section~\ref{sec:poly-algo-chain}}\label{sec:appendix_polyalgo}
We present the missing details of Section~\ref{sec:poly-algo-chain} 
as follows:
\begin{itemize}
    \item Size of Candidate Sets: \Cref{lemma:candset_size}
    \item Proof of \Cref{lem:dpalgo_complexity} (\Cref{sec:poly-algo-chain})
    \item Proof of \Cref{lemma:candset} (\Cref{subsec:candidates})
    \item Pseudocode of \reprinsert{} (\Cref{subsec:candidates})
    \item Proof of \Cref{lem:reduce-correct} (\Cref{subsec:reduce})
    \item Proof of \Cref{thm:poly-time-chain} (\Cref{sec:algo-reprepair})
\end{itemize}
\subsection{Size of Candidate Sets}
{
\begin{lemma}\label{lemma:candset_size}
    Given a relation \relation{}, an FD set \fdset{}, and the domain size $k$ of the sensitive attribute $A_s$ of \relation{}, the size of the \candidateset{} $\candset{\relation{}, \fdset{}}$ is $O(|\relation{}|^k)$.
\end{lemma}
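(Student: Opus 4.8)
The plan is to exploit the fact that, by definition, representative equivalence of two S-repairs depends \emph{only} on their per-value counts of the sensitive attribute, so each candidate can be tagged by a short integer vector drawn from a bounded universe, and then simply count the available tags.

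Concretely, first I would define, for each S-repair $R' \in \candset{\relation{}, \fdset{}}$, its \emph{signature}
$$
v(R') \;=\; \bigl(\,|\sigma_{A_s = a_1} R'|,\; |\sigma_{A_s = a_2} R'|,\; \dots,\; |\sigma_{A_s = a_k} R'|\,\bigr) \;\in\; \{0, 1, \dots, |\relation{}|\}^{k},
$$
which is well defined because $R' \subseteq \relation{}$ and hence $0 \le |\sigma_{A_s = a_\ell} R'| \le |\relation{}|$ for every $\ell$. Next I would observe that the map $R' \mapsto v(R')$ is injective on $\candset{\relation{}, \fdset{}}$: if $v(R_1) = v(R_2)$ for two candidates, then by definition $|\sigma_{A_s=a_\ell}R_1| = |\sigma_{A_s=a_\ell}R_2|$ for all $a_\ell \in \dom(A_s)$, i.e. $R_1 \repreq R_2$, contradicting condition (1) of \Cref{def:candidate_set} unless $R_1 = R_2$. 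Therefore $|\candset{\relation{}, \fdset{}}|$ is at most the number of distinct signatures, which is $(|\relation{}|+1)^k = O(|\relation{}|^k)$ since $k$ is treated as a fixed constant. This gives the claimed bound.

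Finally, I would add a short remark that condition (1) actually gives more: since no candidate representatively dominates another, the set of signatures is an antichain under the componentwise partial order on $\{0,\dots,|\relation{}|\}^k$, so by a standard layer-counting argument its size is in fact $O(|\relation{}|^{k-1})$; however, $O(|\relation{}|^k)$ is all that is needed downstream (e.g., in \Cref{lem:dpalgo_complexity}), so I would keep the statement as is. The only thing to be a little careful about is to make explicit that $k = |\dom(A_s)|$ is a fixed parameter, so that $(|\relation{}|+1)^k$ is genuinely polynomial in $|\relation{}|$; there is no real obstacle here, the argument is a direct counting bound once the signature map is in place.
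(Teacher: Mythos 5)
Your proposal is correct and takes essentially the same route as the paper: both proofs rest on the observation that, since no two candidates may be representatively equivalent, each candidate is identified by its vector of per-value counts of $A_s$, so $|\candset{\relation{},\fdset{}}|$ is bounded by the number of distinct such vectors. The only difference is cosmetic—you bound this count crudely by $(|\relation{}|+1)^k$, while the paper computes the exact product $\prod_{a_\ell \in \dom(A_s)}\bigl(|\sigma_{A_s=a_\ell}\relation{}|+1\bigr)$ and then applies AM--GM—and both give $O(|\relation{}|^k)$ for fixed $k$ (your antichain remark is a valid, unneeded strengthening).
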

\begin{proof}
    With the property that there are no two candidates in \candset{\relation{}, \fdset{}} that are representatively equivalent to each other and a candidate in \candset{\relation{}, \fdset{}} must be a subset of \relation{}, the size of \candset{\relation{}, \fdset{}} is bounded by the maximum number of subsets of \relation{} such that no two subsets have the same number of tuples for every $a_{\ell} \in \dom{(A_s)}$. By the rule of product\footnote{Given a relation with 3 red tuples, 2 blue tuples, and 1 green tuple. For any subest of this relation, it has 0 red, 1 red, 2 red, or 3 red (4 possibilities), independently 0 blue, 1 blue, or 2 blue (3 possibilities), and independently 0 green, 1 green (2 possibilities). By taking the product $4 \times 3 \times 2 = 24$, there are at most $24$ subsets such that each of them has a unique distribution of the color.}, it is 
    \begin{equation}
        \prod\limits_{a_{\ell} \in \dom{}(A_s)}(|\sigma_{A_s = a_{\ell}}{\relation{}}| + 1)    
    \end{equation}
    Specifically, given that the input relation \relation{} has $|\sigma_{A_s = a_{\ell}}{\relation{}}|$ tuples with value $a_{\ell}$ in attribute $A_s$ for every $a_{\ell}$, for every $a_{\ell} \in \dom{(A_s)}$, $(|\sigma_{A_s = a_{\ell}}{\relation{}}| + 1)$ means the possible numbers of tuples with $A_s = a_{\ell}$ from any subsets of \relation{}, i.e. no tuple with $A_s = a_{\ell}$, 1 tuple with $A_s = a_{\ell}$, and so on. 

    Next, we utilize the AM-GM inequality:
    \begin{equation}
        \begin{aligned}
        \prod\limits_{a_{\ell} \in \dom{}(A_s)}(|\sigma_{A_s = a_{\ell}}{\relation{}}| + 1) &\leq (\frac{\sum\limits_{a_{\ell} \in \dom{(A_s)}}(|\sigma_{A_s = a_{\ell}}{\relation{}}| + 1)}{k})^k \\
        &= (\frac{|R| + k}{k})^k \\
        &= (\frac{|R|}{k} + 1)^k
        \end{aligned}
    \end{equation}

    where $|\relation{}|$ is the input relation size and $k = |\dom{(A_s)}|$ is the active domain size of $A_s$, which is fixed. When $|\relation{}| \geq 2$ and $k \geq 2$, i.e., any relation with at least two tuples and at least two distinct sensitive values of $A_s$, we have $(\frac{|R|}{k} + 1)^k = O(|\relation{}^k|)$.
\end{proof}
}

\subsection{Proof of \Cref{lem:dpalgo_complexity}}\label{subsec:proof_lem_dpalgo_complexity}
The time complexity of \reduce{} (\Cref{lem:dpalgo_complexity}) and \dpalgo{} depends on the time complexity of \consensusreduction{} (\Cref{lemma:complexity_consensus}) and that of \commonlhsreduction{} (\Cref{lemma:complexity_commonlhs}). Since \reduce{} and \consensusreduction{} (resp. \commonlhsreduction{}) call each other recursively, the time complexity analysis is based on recurrence relations. The recurrence relations for \reduce{}, \consensusreduction{}, and \commonlhsreduction{} are \recurrence{\text{RDCE}}, \recurrence{\text{COSNS}}, \recurrence{\text{COLHS}} respectively.

\subsubsection{Time Complexity of \consensusreduction{}}
We analyze the time complexity of \consensusreduction{} through a recurrence relation in \Cref{lemma:complexity_consensus}.
\begin{lemma}\label{lemma:complexity_consensus}
    The recurrence relation for \consensusreduction{} is given by the following expression:
    \begin{equation}\label{eq:cosns}
        \recurrence{\text{COSNS}}(\relation{}, \fdset{}) = \sum\limits_{y_{\ell} \in \dom{(Y)}} \recurrence{\text{RDCE}}(\relation{}_{y_{\ell}}, \fdset{} - f) + O(k \cdot |\relation{}|^{2k + 1})
    \end{equation}
    , where $\relation{}_{y_{\ell}} = \sigma_{Y = y_{\ell}}{\relation{}}$, $f$ is the consensus FD and $Y$ is its RHS, $\fdset{} - f$ is removing $f$ from \fdset{}, $|\relation{}|$ is the relation size, $k$ is the domain size of the sensitive attribute $A_s$, and \recurrence{RDCE}$(\cdot, \cdot)$ is the recurrence relation of \reduce{}.
\end{lemma}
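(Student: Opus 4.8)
The plan is to charge the running time of \Cref{alg:consensus_reduction} line by line, isolating the recursive calls to \reduce{} (which produce the summation term of the claimed recurrence) from all remaining bookkeeping (which produces the additive $O(k\cdot|\relation{}|^{2k+1})$ term). First I would note that the outer loop (Line~2) ranges over the distinct values $y_\ell$ that actually appear in the $Y$-column of $\relation{}$, so it runs at most $|\relation{}|$ times; computing all the restrictions $\relation{}_{y_\ell}=\sigma_{Y=y_\ell}\relation{}$ amounts to a single partition of $\relation{}$ and costs only $O(m\cdot|\relation{}|)$ in total, which is dominated by the work done inside the loop.

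Next I would account for the body of a single iteration with value $y_\ell$. Line~3 invokes $\reduce(\relation{}_{y_\ell},\fdset{}-f)$, whose cost is by definition $\recurrence{\text{RDCE}}(\relation{}_{y_\ell},\fdset{}-f)$; summing this over all $y_\ell\in\dom(Y)$ yields exactly the summation term in the claimed recurrence. For Lines~4--6, I would apply \Cref{lemma:candset_size}: the set $\candset{\relation{}_{y_\ell},\fdset{}-f}$ returned by \reduce{} is a candidate set of a sub-relation of $\relation{}$, hence has size $O(|\relation{}_{y_\ell}|^k)=O(|\relation{}|^k)$, so the inner loop executes $O(|\relation{}|^k)$ times. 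Each iteration calls \reprinsert{} on the partially built set $\candset{\relation{}, \fdset{}}$; that set is a candidate set with respect to $\relation{}$, so by \Cref{lemma:candset_size} it too has size $O(|\relation{}|^k)$, and since \reprinsert{} performs a single linear scan over it whose per-element work is an $O(k)$-time comparison of the counts of the $k$ sensitive values, one \reprinsert{} call costs $O(k\cdot|\relation{}|^k)$.

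Putting the pieces together, the non-recursive work done in one outer iteration is $O(|\relation{}|^k)\cdot O(k\cdot|\relation{}|^k)=O(k\cdot|\relation{}|^{2k})$, and multiplying by the $O(|\relation{}|)$ iterations of the outer loop gives the additive term $O(k\cdot|\relation{}|^{2k+1})$; combined with the sum of the \reduce{} calls, this is precisely the stated recurrence. The only point that needs any care is ensuring that the size bound $O(|\relation{}|^k)$ applies simultaneously to the recursively returned candidate sets and to the candidate set being assembled, so that the cost of each \reprinsert{} is controlled — but both are candidate sets of sub-relations of $\relation{}$ (the latter of $\relation{}$ itself), so \Cref{lemma:candset_size} settles it, and everything else is routine line-by-line accounting; I do not foresee a genuine obstacle here.
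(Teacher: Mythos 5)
Your proof is correct and follows essentially the same line-by-line accounting as the paper: the outer loop contributes an $O(|\relation{}|)$ factor, \Cref{lemma:candset_size} bounds each candidate set by $O(|\relation{}|^k)$, and each \reprinsert{} call costs $O(k\cdot|\relation{}|^k)$, yielding the additive $O(k\cdot|\relation{}|^{2k+1})$ term alongside the summed recursive \reduce{} costs. The only cosmetic difference is that you re-derive the \reprinsert{} bound inline, whereas the paper simply cites its separate lemma (\Cref{lem:complexity_of_reprinsert}).
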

\begin{proof}
    First of all, the algorithm involves \candidateset{}s on $\relation{}_{y_{\ell}}$, and $\relation{}_{y_{\ell}}$ is a subset of $\relation{}$, so $|\relation{}_{y_{\ell}}| \leq |\relation{}|$. Without loss of accuracy, the size of the candidate set of $\relation{}_{y_{\ell}}$ is also $O(|\relation{}|^k)$ as implied in \Cref{lemma:candset_size}.

    In lines 2-7, \consensusreduction{} iterates over each distinct $y_{\ell}$ in $O(|\relation{}|)$. Within the loop, it first computes $\candset{\relation{}_{y_{\ell}}, \fdset{} - f}$, which is the \candidateset{} of relation $\relation{}_{y_{\ell}}$ and FD set $\fdset{} - f$ in line 3. Next, in line 4-6, the algorithm enumerates each candidate $R'$ in $\candset{\relation{}_{y_{\ell}}, \fdset{} - f}$ in $O(|\relation{}|^k)$, since the size of a candidate set is $O(|\relation{}|^k)$ (\Cref{lemma:candset_size}). For each candidate $R'$, in line 5, the algorithm utilizes \reprinsert{} to insert $R'$ in $O(k \cdot |\relation{}|^k)$ (\Cref{lem:complexity_of_reprinsert}). Combining them, we obtain \Cref{eq:cosns}.
    \cut{
        Overall, we have 
        \begin{equation}
            \recurrence{\text{COSNS}}(\relation{}, \fdset{}) = \sum\limits_{y_{\ell} \in \dom{(Y)}} \recurrence{\text{RDCE}}(\relation{}_{y_{\ell}}, \fdset{} - f) + O(k \cdot |\relation{}|^{2k + 1}).
        \end{equation}
    }
\end{proof}

\subsubsection{Time Complexity of \commonlhsreduction{}}
The time complexity of \commonlhsreduction{} is analyzed using a recurrence relation in Lemma~\ref{lemma:complexity_commonlhs}.

\begin{lemma}\label{lemma:complexity_commonlhs}
    The recurrence relation for \commonlhsreduction{} is given by the following expression:
    \begin{equation}\label{eq:colhs}
        \recurrence{\text{COLHS}}(\relation{}, \fdset{}) = \sum\limits_{x_{\ell} \in \dom{(X)}} \recurrence{\text{RDCE}}(\relation{}_{x_\ell}, \fdset_{- X}) + O(k \cdot |\relation{}|^{3k + 1})
    \end{equation}
    , where $\relation_{x_{\ell}} = \sigma_{X = x_{\ell}}{\relation{}}$, $X$ is the \commonlhs{} attribute, $\fdset{}_{-X}$ is removing column $X$ from \fdset{}, $|\relation{}|$ is the relation size, $k$ is the domain size of the sensitive attribute $A_s$, and \recurrence{RDCE}$(\cdot, \cdot)$ is the recurrence relation of \reduce{}.
\end{lemma}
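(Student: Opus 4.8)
The plan is to tally the running time of \Cref{alg:common_lhs_reduction} line by line, in the same spirit as the analysis of \consensusreduction{} in \Cref{lemma:complexity_consensus}: we separate the cost of the recursive \reduce{} calls, which becomes the summation term in \Cref{eq:colhs}, from the remaining combinatorial bookkeeping, which becomes the additive $O(k\cdot|\relation{}|^{3k+1})$ term.

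First I would bound the outer loop (lines 1--7): since $X$ is an attribute of the fixed schema \schema{}, the loop runs $n = |\dom(X)| \le |\relation{}|$ times. Within iteration $\ell$, line 3 makes exactly one recursive call $\reduce(\relation{}_{x_\ell}, \fdset_{-X})$, and summing over all $x_\ell \in \dom(X)$ yields the term $\sum_{x_\ell \in \dom(X)} \recurrence{\text{RDCE}}(\relation{}_{x_\ell}, \fdset_{-X})$. To cost the rest of the iteration I would first record candidate-set sizes: because $\relation{}_{x_\ell} \subseteq \relation{}$ and $\relation{}_{x_1,\dots,x_{\ell-1}} \subseteq \relation{}$, \Cref{lemma:candset_size} gives $|\candset{\relation{}_{x_\ell},\fdset_{-X}}| = O(|\relation{}|^k)$ and $|\candset{\relation{}_{x_1,\dots,x_{\ell-1}},\fdset{}}| = O(|\relation{}|^k)$. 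Hence the nested loop (lines 4--6) ranges over $O(|\relation{}|^k)\cdot O(|\relation{}|^k) = O(|\relation{}|^{2k})$ pairs $(R',R'')$; each pass forms $R_0 = R'\cup R''$ in $O(|\relation{}|)$ time (a disjoint union, since $R'$ and $R''$ have disjoint $X$-values) and then calls \reprinsert{}, which costs $O(k\cdot|\relation{}|^k)$ by \Cref{lem:complexity_of_reprinsert} (the same \reprinsert{} bound used in \Cref{lemma:complexity_consensus}). Since $O(|\relation{}|)$ is dominated by $O(k\cdot|\relation{}|^k)$, one pass costs $O(k\cdot|\relation{}|^k)$, the nested loop in a single outer iteration costs $O(|\relation{}|^{2k})\cdot O(k\cdot|\relation{}|^k) = O(k\cdot|\relation{}|^{3k})$, and multiplying by the $n = O(|\relation{}|)$ outer iterations, plus the $O(1)$ work in lines 2, 8 and 9, gives the additive $O(k\cdot|\relation{}|^{3k+1})$ term. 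Combining the two contributions proves \Cref{eq:colhs}.

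The one step that needs more than routine accounting---and which I expect to be the main obstacle---is justifying the $O(|\relation{}|^k)$ bound on the partially-built set $\candset{\relation{}_{x_1,\dots,x_\ell},\fdset{}}$ that is passed to \reprinsert{} inside the nested loop: at that moment this object is merely a set of \srepair s, not yet certified to be a candidate set. I would resolve this by observing that every inserted $R_0 = R'\cup R''$ satisfies $R_0 \subseteq \relation{}_{x_1,\dots,x_\ell}$, and that \reprinsert{} maintains at every step the invariant that no two members of the set are representatively equivalent (\Cref{def:candidate_set}); by the counting argument in the proof of \Cref{lemma:candset_size}, any family of subsets of $\relation{}_{x_1,\dots,x_\ell}$ with pairwise-distinct sensitive-value profiles has size at most $\prod_{a_\ell\in\dom(A_s)}(|\sigma_{A_s=a_\ell}\relation{}_{x_1,\dots,x_\ell}|+1) = O(|\relation{}|^k)$. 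With this in hand, the remainder is a direct line-by-line transcription of the style of \Cref{lemma:complexity_consensus}, so I do not anticipate further difficulties.
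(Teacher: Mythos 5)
Your proposal is correct and follows essentially the same route as the paper's proof: bound both candidate sets by $O(|\relation{}|^k)$ via the counting lemma, charge the recursive \reduce{} calls to the summation term, and multiply the $O(|\relation{}|)$ outer loop by the $O(|\relation{}|^{2k})$ pair enumeration and the $O(k\cdot|\relation{}|^k)$ cost of \reprinsert{} to get the additive $O(k\cdot|\relation{}|^{3k+1})$ term. Your extra justification that the partially-built set $\candset{\relation{}_{x_1,\dots,x_\ell},\fdset{}}$ stays within the $O(|\relation{}|^k)$ bound (via the no-representative-equivalence invariant maintained by \reprinsert{}) is a point the paper simply asserts, so it is a welcome refinement rather than a deviation.
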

\begin{proof}
    First of all, the algorithm involves \candidateset{}s on $\relation{}_{x_{\ell}}$, $\relation{}_{x_1, \dots, x_{\ell - 1}}$ and $\relation{}_{x_1, \dots, x_{\ell}}$. All of them are subsets of $\relation{}$, so their sizes are bounded by $|\relation{}|$. Without loss of accuracy, the size of the candidate set of each of them is also $O(|\relation{}|^k)$ as implied in \Cref{lemma:candset_size}.

    In lines 1-8, \commonlhsreduction{} iterates over each distinct value $x_{\ell}$ in $O(|\relation{}|)$. Within the loop, it computes $\candset{\relation{}_{x_{\ell}}, \fdset{}_{-X}}$, which is the \candidateset{} of relation $\relation{}_{x_{\ell}}$ and FD set $\fdset{}_{-X}$ in line 3. Next, in lines 4-7, it enumerates each candidate $R'$ from $\candset{\relation{}_{x_1, \dots, x_{\ell - 1}}, \fdset}$ and each candidate $R''$ from $\candset{\relation{}_{x_{\ell}}, \fdset{}}$. Note that the size of candidate set is $O(|\relation{}|^{k})$ (\Cref{lemma:candset_size}), so this enumeration will be $O(|\relation{}|^{2k})$. In line 5, it compute the union $R_0$ of  $R'$ and $R''$ in $O(|\relation{}|)$. In line 6, it inserts $R_0$ into the candidate set $\candset{\relation{}_{x_1, \dots, x_{\ell}}, \fdset{}}$ by \reprinsert{} in $O(k \cdot |\relation{}|^k)$ (\Cref{lem:complexity_of_reprinsert}). Combining them, we obtain \Cref{eq:colhs}.
\end{proof}
\subsubsection{Time Complexity of \reduce{} and \dpalgo{}}
In \Cref{lem:dpalgo_complexity}, we present the time complexity of \dpalgo{}, including the time complexity of \reduce{}, which serves as the core sub-procedure of \dpalgo{}.
\dpalgocomplexitylem*

{
\begin{proof}
    The complexity of \dpalgo{} consists of three parts:
    \begin{enumerate}
        \item \reduce{} computes a \candidateset{} in $O(m \cdot |\fdset{}| \cdot k \cdot |\relation{}|^{3k + 2})$ (proof below).
        \item \postclean{} computes a maximum subset that satisfies the \rc{} in $O(|\relation{}|^2)$ (\Cref{prop:post_clean}) for every candidate from a candidate set of size $O(|\relation{}|^k)$ (\Cref{lemma:candset_size}). In total, \postclean{} runs in $O(|\relation{}|^{k + 2})$.
        \item The selection of the maximum \rsrepair{} after post-cleaning is linear in the size of the \candidateset{}, $O(|\relation{}|^k)$.
    \end{enumerate}

    For part (1), \reduce{} is an if/else statement with three branches. Therefore, the complexity of \reduce{} consists of two parts: if-condition checking and branching. The if-condition checking part checks:
    \begin{itemize}
        \item if \fdset{} is empty in $O(1)$;
        \item otherwise, if \fdset{} has a \consensusfd{} by going ever each FD in \fdset{} in $O(|\fdset{}|)$;
        \item otherwise, if \fdset{} has a \commonlhs{} by checking each attribute for common columns in the LHS of each FD in \fdset{} in $O(m \cdot |\fdset{}|)$.
    \end{itemize}
    Depending on the results of condition checking, the recurrence relation of \reduce{} has three possibilities: 
    \begin{itemize}
        \item if \fdset{} is empty, then the algorithm simply returns $\{\relation{}\}$ and 
        \begin{equation}\label{eq:rdce_eq_r}
            \recurrence{\text{RDCE}}(\relation{}, \fdset{}) = O(|\relation{}|);
        \end{equation}
        \item otherwise, if \fdset{} has a \consensusfd{}, the sub-procedure \consensusreduction{} is called, as implied by \Cref{lemma:complexity_consensus} and \Cref{eq:cosns}, 
        \begin{equation}\label{eq:rdce_eq_cosns}
            \begin{aligned}
                            \recurrence{\text{RDCE}}(\relation{}, \fdset{}) &= \recurrence{\text{COSNS}}(\relation{}, \fdset{}) \\
            &= \sum\limits_{y_{\ell} \in \dom{(Y)}} \recurrence{\text{RDCE}}(\relation{}_{y_{\ell}}, \fdset{} - f) + O(k \cdot |\relation{}|^{2k + 1});
            \end{aligned}
        \end{equation}
        \item otherwise, if \fdset{} has a \commonlhs{}, the sub-routine \commonlhsreduction{} is called, implied by \Cref{lemma:complexity_commonlhs} and \Cref{eq:colhs}, 
        \begin{equation}\label{eq:rdce_eq_colhs}
        \begin{aligned}
            \recurrence{\text{RDCE}}(\relation{}, \fdset{}) &= \recurrence{\text{COLHS}}(\relation{}, \fdset{}) \\
            &= \sum\limits_{x_{\ell} \in \dom{(X)}} \recurrence{\text{RDCE}}(\relation{}_{x_\ell}, \fdset{}_{-X}) + O(k \cdot |\relation{}|^{3k + 1}).
        \end{aligned}           
        \end{equation}
    \end{itemize}
    Intuitively, \reduce{} can be viewed as a tree traversal, where each node is a call of \reduce{} and the root corresponds to $\reduce{(\relation{}, \fdset{})}$. Note that the call of $\reduce{}$ by each leaf node is on a clean sub-relation and empty \fdset{}. The number of leaves is bounded by $|\relation{}|$, as the associated sub-relations are disjoint. The tree height, equivalent to the number of reductions applied, is bounded by $O(m \cdot |\fdset{}|)$ as at least one attribute is removed from the \fdset{} per reduction. The number of nodes in each level must be larger than the number of nodes in the parent level, so the tree size is bounded by $O(m \cdot |\fdset{}| \cdot |\relation{}|)$ and therefore $\recurrence{\text{RDCE}}(\relation{}, \fdset{})$ is equal to the sum of the Big O terms of the associated $\recurrence{\text{RDCE}}(\cdot, \cdot)$ of all non-root tree nodes. Therefore, by combining \Cref{eq:rdce_eq_r,eq:rdce_eq_cosns,eq:rdce_eq_colhs}, $\recurrence{\text{RDCE}}(\relation{}, \fdset{})$ is bounded by $O((m \cdot |\fdset{}| \cdot |\relation{}|) \cdot (k \cdot |\relation{}|^{3k + 1}))$.
    To sum up, the branching step is $O(m \cdot |\fdset{}| \cdot k \cdot |\relation{}|^{3k + 2})$. Compared to this, the time complexity of if-condition checking, $O(m \cdot |\fdset{}|)$, can be ignored.

    Overall, the time complexity of \dpalgo{} is dominated by \reduce{}, which is $O(m \cdot |\fdset{}| \cdot k \cdot |\relation{}|^{3k + 2})$.
\end{proof}
}
\subsection{Proof of Lemma~\ref{lemma:candset}}\label{subsec:proof_lem_candset}
\candsetlem*
\begin{proof}
\Cref{alg:dpalgo} returns $\arg\max\limits_{s \in S}{|s|}$ as the output \rsrepair{} where $S$ is defined as follows:
        \begin{equation}\label{eq:dpalgo-correct}
        S \coloneq \{\postclean{(R', \rc{})} \mid \forall R' \in \candset{\relation{}, \fdset{}}\},
        \end{equation}
    
    Since $\candset{\relation{}, \fdset{}} \subseteq \allsrepair{\relation{}, \fdset{}} $, each $R' \in \candset{\relation{}, \fdset{}}$ is an \srepair{}. Moreover, since the final output of \Cref{alg:dpalgo} is obtained by applying \postclean{} on such \srepair{}s, by \Cref{prop:post_clean}, the output satisfies the RC $\rc$ and thus is an \rsrepair{}. Next, we prove the optimality of the final answer when $\candset{\relation{}, \fdset{}}$ is correctly computed.

    \par
    We argue that applying \Cref{eq:dpalgo-correct} on $\allsrepair{\relation{}, \fdset{}}$ instead of $\candset{\relation{}, \fdset{}}$ yields the optimal \rsrepair{}. To see this, note that any \rsrepair{} is also an \srepair{}, so an optimal \rsrepair{} $S^*$ must belong to $\allsrepair{\relation{}, \fdset{}}$ that already satisfies the RC $\rc$. If we return an \rsrepair{} $S^0$ by applying \postclean{} on every \srepair{} in $\allsrepair{\relation{}, \fdset{}}$ and choosing the one with the maximum size, $|S^0| \geq |S^*|$, and since $S^*$ is the optimal \rsrepair{}, $|S^0| = |S^*|$. Hence the final output $S^0$ would be an optimal \rsrepair{}.
    
    Finally, we argue that \srepair{}s in $\allsrepair{\relation{}, \fdset{}} \setminus \candset{\relation{}, \fdset{}}$ cannot derive an \rsrepair{} that deletes strictly fewer tuples. Consider any \srepair{} $R'' \in \allsrepair{\relation{}, \fdset{}} \setminus \candset{\relation{}, \fdset{}}$. By the candidate set definition (\Cref{def:candidate_set}), $R'' \not\in \candset{\relation{}, \fdset{}}$ due to one of the following reasons:
    \begin{enumerate}
        \item\label{item:repreq} There exists an $R' \in \candset{\relation{}, \fdset{}}$ where $R' \repreq R''$. In this case, we argue that $|\postclean{(R'', \rc{})}| = |\postclean{(R', \rc{})}|$. Because $R''$ and $R'$ will obtain the same $\{\tau_1, \dots, \tau_{k}\}$, which are the minimum numbers required for each sensitive value to satisfy the RC, in \postclean{} regardless of which $T$ is enumerated, therefore \postclean will return a \rsrepair{} with the same number of tuples for each distinct value of the sensitive attribute $A_s$ and consequently the same $T$ for $R'$ and $R''$ respectively.   
        \item There exists an $R' \in \candset{\relation{}, \fdset{}}$ where $R' \reprdom R''$. In this case, we argue that $|\postclean{(R'', \rc{})}| \geq |\postclean{(R', \rc{})}|$. Since there exists some $a_{c}$ where 
        \begin{equation}
            |R'| \cdot |\sigma_{A_s = a_{c}}{R'}| > |R''| \cdot |\sigma_{A_s = a_{c}}{R''}|,
        \end{equation}
         we first take tuples with $A_s$-value $a_{c}$ in $R'$ away as a backup, until $R'$ has the same value distribution of $A_s$ as $R''$. As implied by \Cref{item:repreq} above, now $R'$ and $R''$ will generate \rsrepair{}s with the same $T$. Moreover, $R'$'s backup offers $R'$ a chance to retain more tuples (consequently deleting fewer tuples) while not conflicting \rc{}. 
    \end{enumerate}
    
    Therefore, no matter how we satisfy the RC by additional deletions (through \postclean{}) on $R''$, we can do the same or even better on some candidate in $\candset{\relation{}, \fdset{}}$. Hence, $\candset{\relation{}, \fdset{}}$ is sufficient to compute the \rsrepair{} as the entire \allsrepair{\relation{}, \fdset{}} does.
\end{proof}

\cut{
A special note is that we do not consider the existence of representative dominance in this bound: for example, we a gap between the real size and the upper bound for a scenario in Example~\ref{example:loose_bound_for_candset}. However, we do not present a tighter bound which goes beyond the scope of this paper. 
    \begin{example}\label{example:loose_bound_for_candset}
        Suppose that relation $\relation{}$ has $3$ tuples with \texttt{'male'} in the sensitive attribute $A_s$ and another $3$ tuples with \texttt{'female'} (totally $6$ tuples). The largest possible size of a candidate set is $4$ (i.e. $3$ \texttt{'male'} + $0$ \texttt{'female'}, $2$ \texttt{'male'} + $1$ \texttt{'female'}, $1$ \texttt{'male'} + $2$ \texttt{'female'}, $0$ \texttt{'male'} + $3$ \texttt{'female'}), which is much less than $(\frac{6}{2} + 1)^2 = 16$.
    \end{example}
}

\subsection{Pseudocode of \reprinsert{}} \label{subsec:pseudo_reprinsert}
\begin{algorithm}
\caption{\reprinsert{$(\candset{}, R^*)$} 
}\label{alg:reprinsert}
    \begin{algorithmic}[1]
        \Require{a set \candset{} of candidates where no representative dominance or representative equivalence exist, a candidate $\relation{}^*$ to be inserted}
        \Ensure{the set of candidates after the insertion}
        \For{every $R' \in \candset{}$}
            \If{$R' \reprdom \relation{}^{*}$ or $R' \repreq R^{*}$}
                \State \Return $\candset{}$;
            \ElsIf{$\relation{}^* \reprdom R'$}
                \State $\candset{} \gets \candset{} \setminus \{R'\}$;
            \EndIf
        \EndFor
        \State \Return $\candset{} \cup \relation{}^{*}$;
    \end{algorithmic}
\end{algorithm}
    \Cref{alg:reprinsert} considers an insertion of a candidate $R^*$ into a set \candset{} of candidates and eliminates any representative dominance and representative equivalence. \cut{\Cref{alg:reprinsert} eliminates any representative dominance that might occur with $R^*$ and avoids the possibility that $R^*$ is representatively equivalent to another candidate through these operations:} First, in line 1, the algorithm enumerates each candidate $R'$ in \candset{}. Then in line 2, it checks if $R'$ representatively dominates $R^*$ or is representatively equivalent to $R^*$. If either condition is true, indicating that $R^*$ must be redundant or sub-optimal, \reprinsert{} returns the current \candset{}. Conversely, if $R^*$ representatively dominates $R'$ (line 4), $R'$ is removed from \candset{}, and the loop continues. Finally, the updated \candset{} is returned in line 8.

    \begin{lemma}\label{lem:complexity_of_reprinsert}
        \reprinsert{} terminates in $O(k \cdot |\relation{}|^k)$, where $k$ is the domain size of the sensitive attribute and $|\relation{}|$ is the input relation size.
    \end{lemma}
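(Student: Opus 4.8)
The plan is to bound the running time of \reprinsert{} (\Cref{alg:reprinsert}) by separating two contributions: the number of iterations of the for-loop, and the work performed inside each iteration.

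First I would bound the number of iterations. The loop in Line~1 ranges over the candidates currently in $\candset{}$, and by the precondition $\candset{}$ is a set of S-repairs of $\relation{}$ among which no two are representatively equivalent. Since every candidate is a subset of $\relation{}$ and the equivalence class of a candidate is determined by its vector of per-value tuple counts, the counting argument behind \Cref{lemma:candset_size} applies verbatim and gives $|\candset{}| = O(|\relation{}|^k)$, where $k = |\dom(A_s)|$. Hence the loop executes $O(|\relation{}|^k)$ times, and likewise the union $\candset{} \cup \{R^*\}$ in Line~8 and the total number of deletions in Line~5 touch $O(|\relation{}|^k)$ elements.

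Next I would bound the per-iteration cost. Each candidate is stored together with its \emph{representative signature}, i.e., the length-$k$ vector $\bigl(|\sigma_{A_s = a_1} R'|, \dots, |\sigma_{A_s = a_k} R'|\bigr)$; this is the only information about $R'$ on which $\repreq$ and $\reprdom$ depend (\Cref{def:candset}), and it is maintained whenever a candidate is produced by the reductions (a union or projection of already-tagged candidates). Given the signatures of $R'$ and $R^*$, each of the tests $R' \repreq R^*$, $R' \reprdom R^*$, and $R^* \reprdom R'$ is a coordinate-wise comparison of two length-$k$ vectors, costing $O(k)$ time; a deletion from $\candset{}$ costs $O(1)$ with an appropriate set representation. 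Thus each iteration costs $O(k)$, and multiplying by the $O(|\relation{}|^k)$ iterations yields the claimed $O(k \cdot |\relation{}|^k)$ bound.

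The one subtle point, and the only place the argument can go wrong, is the $O(k)$ cost of a signature comparison. If instead one recomputed $|\sigma_{A_s = a_\ell} R'|$ by scanning $R'$ on every comparison, the per-iteration cost would blow up to $\Theta(|\relation{}|)$ and the overall bound would degrade to $O(|\relation{}|^{k+1})$. So the proof hinges on the bookkeeping convention that candidates carry their per-value counts, which is exactly what \consensusreduction{} and \commonlhsreduction{} maintain when they build candidates; I would state this convention explicitly before invoking it.
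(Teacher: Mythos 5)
Your proposal is correct and follows essentially the same argument as the paper: the loop runs over at most $O(|\relation{}|^k)$ candidates by the counting bound of \Cref{lemma:candset_size}, and each representative-equivalence/dominance check costs $O(k)$, giving $O(k \cdot |\relation{}|^k)$. Your explicit remark that candidates must carry their per-value count vectors (so the $O(k)$ comparison cost is justified) makes precise a bookkeeping convention the paper leaves implicit, but it does not change the structure of the proof.
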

    \begin{proof}
        In line 1, we iterate over a set of candidates in $O(|\relation{}|^k)$ as implied by \Cref{lemma:candset_size}. Within the loop, we sequentially check the conditions (line 2 and line 4) in $O(k)$. Therefore, the overall time complexity is $O(k \cdot |\relation{}|^k)$.     
    \end{proof}
    

\subsection{Proof of \Cref{lem:reduce-correct}}\label{subsec:proof_reduce}
In \Cref{lem:reduce-correct}, we prove the correctness of \reduce{}, which relies on the correctness of \consensusreduction{} and the correctness of \commonlhsreduction{}. \Cref{fig:consensusreduction,fig:commonlhsreduction} show the workflow of these two reductions and repeat the notations used in the algorithms.
\begin{figure}[!ht]
    \centering
    \includegraphics[width=\columnwidth]{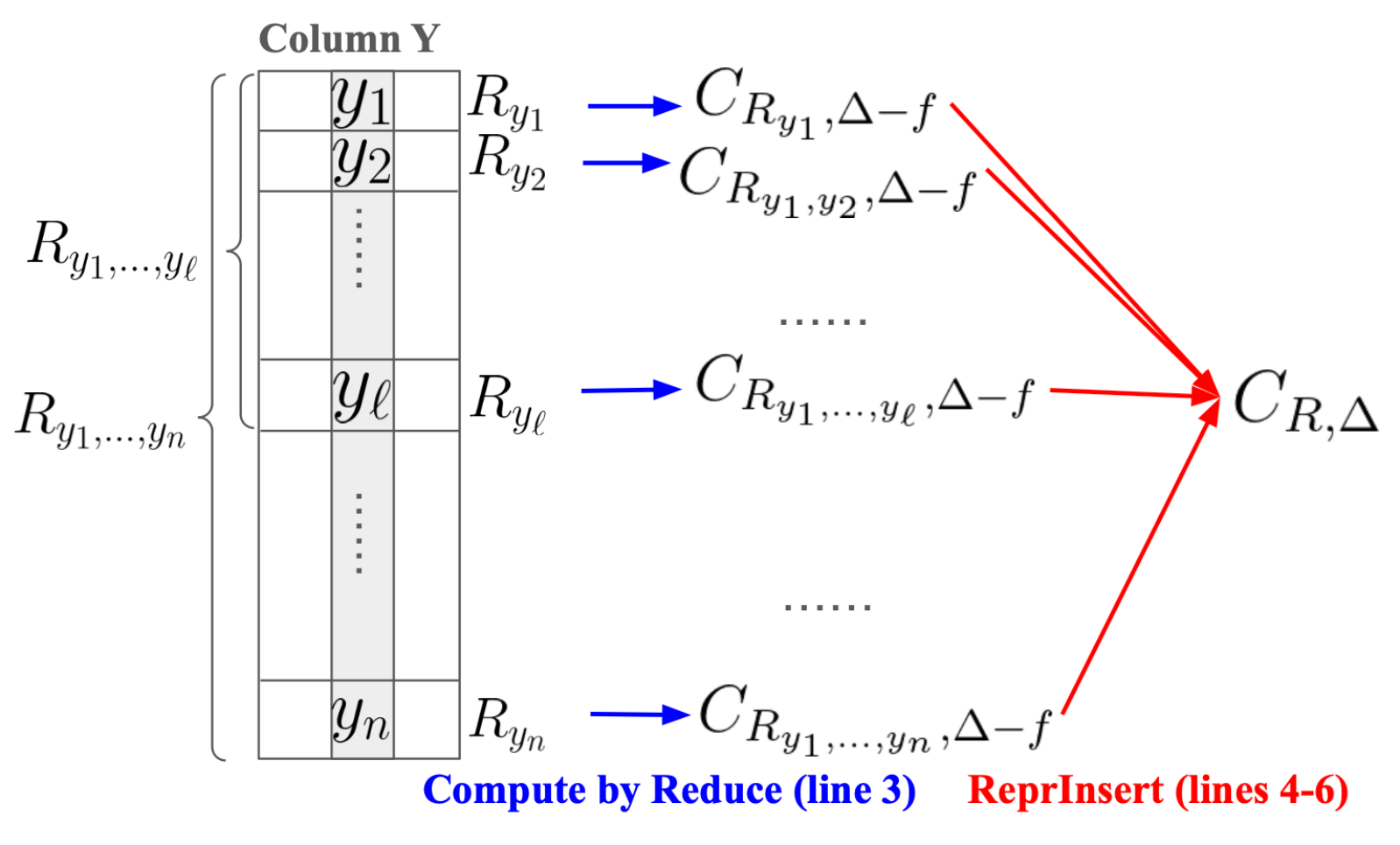}
    \caption{Notations and Workflow of \consensusreduction{}. $f$ is the \consensusfd{} and $Y$ is its RHS. $y_1, \dots, y_n$ are distinct values of column $Y$. $\relation{}_{y_{\ell}} = \sigma_{Y = y_{\ell}}{\relation{}}$ for every $\ell \in [1,n]$. $\candset{\relation{y_\ell}, \fdset{} - f}$ is the \candidateset{} of relation $\relation{y_{\ell}}$ and FD set $\fdset{} - f$. $\candset{\relation{}, \fdset{}}$ is the output of \consensusreduction{}.}
    \label{fig:consensusreduction}
    \Description[]{}
\end{figure}
\begin{figure}[!ht]
    \centering
    \includegraphics[width=\columnwidth]{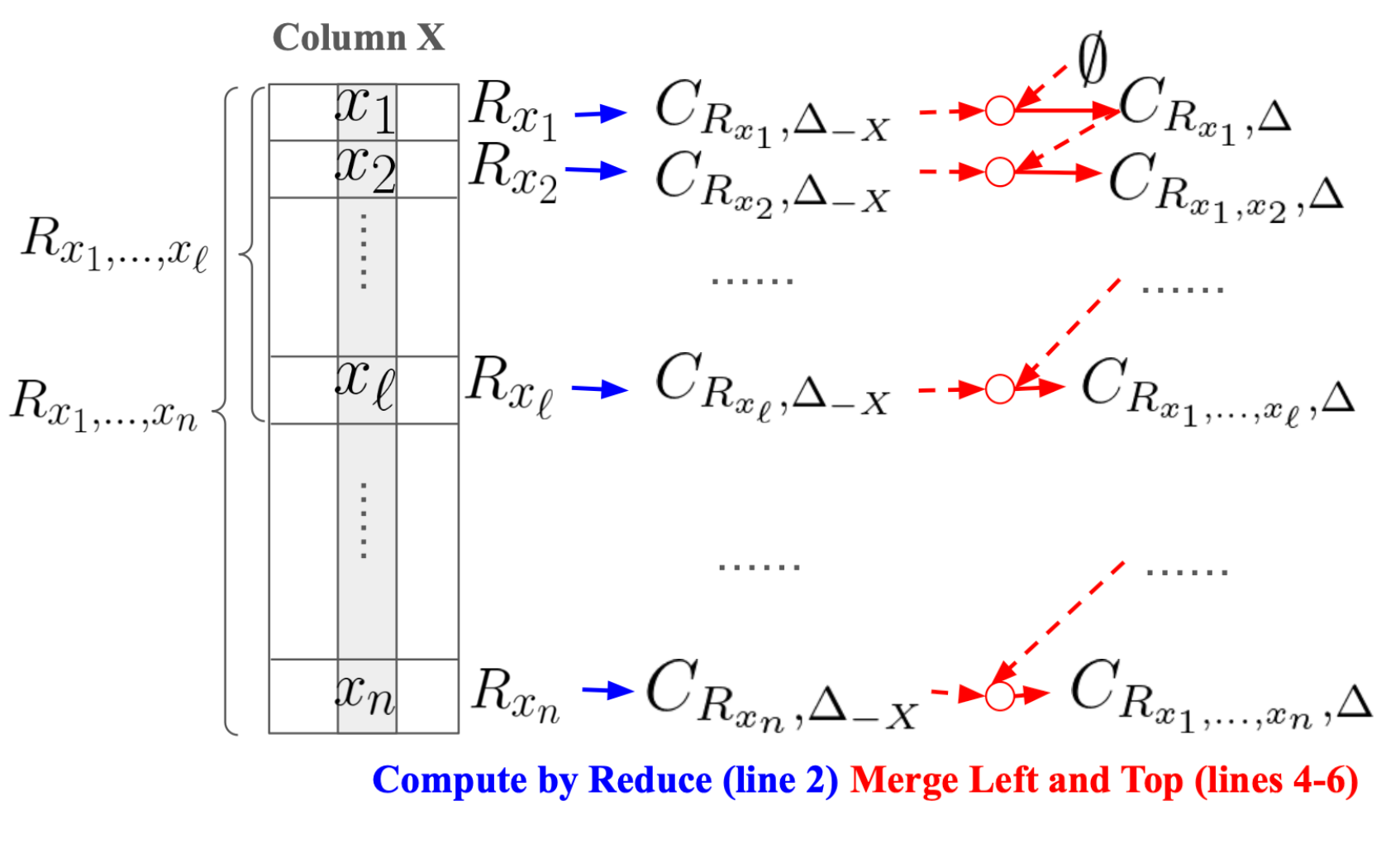}
    \caption{Notations and Workflow of \commonlhsreduction{}. $X$ is the \commonlhs{} column. $x_1, \dots, x_n$ are the distinct values of column $X$. $\relation{}_{x_{\ell}} = \sigma_{X = x_{\ell}}{\relation{}}$ for every $\ell \in [1,n]$. $\relation{x_1, \dots, x_{\ell}} = \sigma_{X = x_1 \lor \dots \lor X = x_{\ell}}{\relation{}}$ for every $\ell \in [1, n]$. $\candset{\relation{x_{\ell}}, \fdset{}_X}$ is the \candidateset{} of relation $\relation{x_{\ell}}$ and FD set $\fdset{}$. $\candset{\relation{x_1, \dots, x_{\ell}}, \fdset{}_{-X}}$ is the \candidateset{} of relation $\relation{x_1, \dots, x_{\ell}}$ and FD set $\fdset{}_{-X}$ (i.e. removing $X$ from \fdset{}). \candset{\relation{}, \fdset{}} is the output of \commonlhsreduction{}.}
    \label{fig:commonlhsreduction}
    \Description[]{}
\end{figure}
\cut{
    \subsubsection{Correctness of \consensusreduction{}}
    We propose \Cref{lemma:consensus_correctness} along with \Cref{fig:consensusreduction} to help explain the proof.
    \begin{lemma}\label{lemma:consensus_correctness}
        Given a relation $\relation{}$, an FD set $\fdset{}$, and a consensus FD $f: \emptyset \rightarrow Y$ with $\dom{(Y)} = \{y_1, \dots, y_n\}$, \Cref{alg:consensus_reduction} computes $\candset{\relation{}, \fdset{}}$ by adding each candidate from $\bigcup_{y_{\ell} \in \dom{(Y)}} \candset{\relation{}{y_{\ell}}, \fdset{} - f}$ into $\candset{\relation{}, \fdset{}}$ through \reprinsert{}.
    \end{lemma}
    \begin{figure}[!ht]
        \centering
        \includegraphics[width=\columnwidth]{figures/consensusreduction.png}
        \caption{Notations and Workflow of \consensusreduction{}}
        \label{fig:consensusreduction}
        \Description[]{}
    \end{figure}
    \begin{proof}
    We first discuss the relationship between $\candset{\relation{}, \fdset{}}$ and \\$\bigcup\limits_{\forall y_{\ell} \in \dom{(Y)}} \candset{\relation{}_{y_{\ell}}, \fdset{} - f}$. For any candidate $R'$ from $\candset{\relation{}, \fdset{}}$, since $R'$ is a \srepair{} of $(\relation{}, \fdset{})$:
        \begin{itemize}
            \item $R' \fdsatisfies \{f\}$, implying all tuples in $R'$ agree on the same $Y$-value (say $y_{\ell}$ w.l.o.g.) and consequently $R'$ is also an \srepair{} of $\relation{}_{y_{\ell}}$. 
            \item We claim that $R'$ is not dominated by any other candidates: otherwise if there exists an $R''$ from $\candset{\relation{}_{y_{\ell}}, \fdset{} - f}$ such that $R'' \reprdom R'$, then $R''$ must also be a candidate of $\candset{\relation{}, \fdset{}}$, contradicting the assumption that $R'$ is a candidate of $\candset{\relation{}, \fdset{}}$.
            \item $R' \fdsatisfies \fdset{} - f$.
        \end{itemize}
        Combining these three points, we claim that $R'$ either is a member of $\candset{\relation{}_{y_{\ell}}, \fdset{} - f}$ or shares the same value distribution of $A_s$ with another candidate in $\candset{\relation{}_{y_{\ell}}, \fdset{} - f}$.
    
        Hence, $\candset{\relation{}, \fdset{}}$ can be derived from $\bigcup\limits_{\forall y_{\ell} \in \dom{(Y)}} \candset{\relation{}_{y_{\ell}}, \fdset{} - f}$ by inserting each item through \reprinsert{}.
    \end{proof}
    
    \subsubsection{Correctness of \commonlhsreduction{}}
    Next, we propose \Cref{lemma:commonlhs_correctness} along with \Cref{fig:commonlhsreduction} to help explain the proof.
    
    \begin{lemma}\label{lemma:commonlhs_correctness}
        Given a relation \relation{}, an FD set \fdset{} and a \commonlhs{} $X$ with $\dom{(X)} = \{x_1, \dots, x_n\}$, \Cref{alg:common_lhs_reduction} computes $\candset{\relation{}, \fdset{}}$ through the following:
    
        For each $\ell \in [1, n]$, $\candset{\relation{}_{x_1, \dots, x_{\ell}}, \fdset{}}$ is derived from $\candset{\relation{}_{x_1, \dots, x_{\ell - 1}}, \fdset{}}$ and $\candset{\relation{}_{x_{\ell}}, \fdset_{- X}}$ by inserting each candidate in
        \begin{equation}
            \{R' \cup R'' \mid R' \in \candset{\relation{}_{x_{\ell}}, \fdset_{- X}}, R'' \in \candset{\relation{}_{x_1, \dots, x_{\ell - 1}}, \fdset{}}\}
        \end{equation}
    into $\candset{\relation{}_{x_1, \dots, x_{\ell}}, \fdset{}}$ using \reprinsert{}. Finally, $\candset{\relation{}, \fdset{}}$ is set to $\candset{\relation{}_{x_1, \dots, x_n}, \fdset{}}$.
    \end{lemma}
    
    \begin{figure}[!ht]
        \centering
        \includegraphics[width=\columnwidth]{figures/commonlhsreduction.png}
        \caption{Notations and Workflow of \commonlhsreduction{}}
        \label{fig:commonlhsreduction}
        \Description[]{}
    \end{figure}
    \begin{proof}
        We prove it by induction on  $\ell$. 
        
        \paratitle{$(\ell = 1)$} In this case, we argue that for any candidate $R'_1$ from $\candset{\relation_{x_1}, \fdset{}}$, there must exist an $R'_2$ from $\candset{\relation_{x_1}, \fdset{}_{-X}}$ where $R'_1 \repreq R'_2$, because:
        \begin{itemize}
            \item $R'_1$ is an \srepair{} of $\relation_{x_1}$ as $R'_1$ has no violation on $\fdset{}$ as well as $\fdset{}_{-X}$. 
            \item $R'_1$ is not dominated by any candidate in $\candset{\relation_{x_1}, \fdset{}_{-X}}$. Otherwise, there must exist a candidate from $\candset{\relation_{x_1}, \fdset{}}$ also dominates $R'_1$, which conflicts with the assumption.
            \item $R'_1$ is not in $\candset{\relation_{x_1}, \fdset{}_{-X}}$ only if $R'_1$ shares the same value distribution of $A_s$ with someone else.
        \end{itemize}
    
        \paratitle{$(\ell > 1)$} In this case, for any candidate $\bar{R}$ from $\candset{\relation{}_{x_1, \dots, x_{\ell}}, \fdset{}}$, we argue that:
        \begin{itemize}
            \item $\bar{R} = (\bar{R} \cap \relation_{x_{\ell}}) \cup (\bar{R} \cap \relation_{x_1, \dots, x_{\ell - 1}})$, because $\relation_{x_{\ell}}$ and $\relation_{x_1, \dots, x_{\ell - 1}}$ are two separate parts of $\relation{}$.
            \item $\bar{R} \cap \relation_{x_{\ell}}$ is a candidate of $(\relation_{x_{\ell}}, \fdset{}_{-X})$ (or shares a value distribution of $A_s$ with another candidate),
            \item and $\bar{R} \cap \relation_{x_1, \dots, x_{\ell - 1}}$ is a candidate of $(\relation_{x_1, \dots, x_{\ell - 1}}, \fdset{})$ (or shares a value distribution of $A_s$ with another candidate).
        \end{itemize}
        
        Firstly, an intersection with $\bar{R}$ is equivalent to taking a subset of $\bar{R}$, and we know that a subset of an \srepair{} is still an \srepair{} over the entire relation $R$ as well as the sub-relation $\relation_{x_{\ell}}$ (resp. $\relation_{x_1, \dots, x_{\ell - 1}}$).
        
        Secondly, $\bar{R} \cap \relation_{x_{\ell}}$ (resp. $\bar{R} \cap \relation_{x_1, \dots, x_{\ell - 1}}$) is not representatively dominated by any other candidate in $\candset{\relation{}_{x_{\ell}}, \fdset{}_{-X}}$ (resp. $\candset{\relation{}_{x_1, \dots, x_{\ell - 1}}, \fdset{}}$). Otherwise, if such a candidate $R_0$ exists in $\candset{\relation{}_{x_{\ell}}, \fdset{}_{-X}}$ (resp. $\candset{\relation{}_{x_1, \dots, x_{\ell - 1}}, \fdset{}}$), then $R_0 \cup (\bar{R} - \relation_{x_{\ell}})$ must be a candidate of $\candset{\relation{}_{x_1, \dots, x_{\ell}}, \fdset{}}$ that representatively dominates $\bar{R}$, resulting in the contradiction with the definition of $\bar{R}$.
    
        Finally, the only reason that $\bar{R}$ is not included in $$\{R' \cup R'' \mid \forall R' \in \candset{\relation{}_{x_{\ell}}, \fdset{}_{-X}}~\forall R'' \in \candset{\relation{}_{x_1, \dots, x_{\ell - 1}}, \fdset{}}\}$$ is that it shares the same value distribution of $A_s$ with others.
    
        Combining $\ell = 1$ and $\ell > 1$, we prove that Algorithm~\ref{alg:common_lhs_reduction} computes $\candset{\relation{}_{x_1, \dots, x_{\ell}}, \fdset{}}$ for every $\ell \in [1,n]$. Since $\relation{}_{x_1, \dots, x_{n}} = \relation{}$, we have $\candset{\relation{}, \fdset{}} \gets \candset{\relation{}_{x_1, \dots, x_n}, \fdset{}}$.

    \end{proof}
}
\subsubsection{Correctness of \reduce{} (\Cref{lem:reduce-correct})}
\reducecorrectlem*
\begin{proof}
    We prove it by induction on $w$, the number of reductions applied to \fdset{} by \reduce{}. Note that because \reduce{} works on \lhschain{}s, $\fdset{}$ must be able to reduced to the $\emptyset$ with a finite $w$. Assume that at each step of $w$, \reduce{} works on \relation{} and \fdset{}, which can be the same as inputted or different, i.e. a sub-relation and a portion of the input FD set.

    In the base case of induction, when $w = 0$, the FD set \fdset{} must be empty, so $\relation{}$ is the only candidate in $\candset{\relation{}, \fdset{}}$, which will be returned by \reduce{}.

    For the inductive step, assume \reduce{} works for all $w \in [0, \beta - 1]$, we prove that it also works for $w = \beta$. In this case, $\reduce{(\relation{}, \fdset{})}$ will apply one of the following reductions:
    
    \paratitle{(If $\fdset{}$ has a consensus FD $f: \emptyset \rightarrow Y$)} the condition of line 6 is satisfied and the sub-routine \consensusreduction{} is called. Similarly, \consensusreduction{} depends on $\reduce{(\relation_{y_{\ell}}, \fdset{} - f)}$ for every $y_{\ell} \in \dom{(Y)}$, whom work because they applies $w - 1$ reductions. Next, we argue that \consensusreduction{} derives a $\candset{\relation{}, \fdset{}}$ from the results of $\reduce{(\relation_{y_{\ell}}, \fdset{} - f)}$ for every $y_{\ell}$ in $\dom{(Y)}$: we first discuss the relationship between $\candset{\relation{}, \fdset{}}$ and $\bigcup\limits_{\forall y_{\ell} \in \dom{(Y)}} \candset{\relation{}_{y_{\ell}}, \fdset{} - f}$. For any candidate $R'$ from $\candset{\relation{}, \fdset{}}$, since $R'$ is a \srepair{} of $(\relation{}, \fdset{})$:
    \begin{itemize}
        \item $R' \fdsatisfies \{f\}$, implying all tuples in $R'$ agree on the same $Y$-value (say $y_{\ell}$ w.l.o.g.) and consequently $R'$ is also an \srepair{} of $\relation{}_{y_{\ell}}$. 
        \item We claim that $R'$ is not dominated by any other candidates: otherwise if there exists an $R''$ from $\candset{\relation{}_{y_{\ell}}, \fdset{} - f}$ such that $R'' \reprdom R'$, then $R''$ must also be a candidate of $\candset{\relation{}, \fdset{}}$, contradicting the assumption that $R'$ is a candidate of $\candset{\relation{}, \fdset{}}$.
        \item $R' \fdsatisfies \fdset{} - f$.
    \end{itemize}
    Combining these three points, we claim that $R'$ either is a member of $\candset{\relation{}_{y_{\ell}}, \fdset{} - f}$ (and consequently $\bigcup\limits_{\forall y_{\ell} \in \dom{(Y)}} \candset{\relation{}_{y_{\ell}}, \fdset{} - f}$) or shares the same distribution of $A_s$ values with another candidate in $\candset{\relation{}_{y_{\ell}}, \fdset{} - f}$ (and consequently $\bigcup\limits_{\forall y_{\ell} \in \dom{(Y)}} \candset{\relation{}_{y_{\ell}}, \fdset{} - f}$).

    Hence, $\candset{\relation{}, \fdset{}}$ can be derived from $\bigcup\limits_{\forall y_{\ell} \in \dom{(Y)}} \candset{\relation{}_{y_{\ell}}, \fdset{} - f}$ by inserting each candidate through \reprinsert{}, because
    \begin{itemize}
        \item there is neither representative dominance nor representative equivalence in $\candset{\relation{}, \fdset{}}$ (guaranteed by \reprinsert{}),
        \item any candidate $R'' \in \allsrepair{\relation{}, \fdset{}} \setminus \candset{\relation{}, \fdset{}}$ must be either representatively dominated by or representatively equivalent to one candidate in $\candset{\relation{}, \fdset{}}$, because $R'' \in \candset{\relation{}_{\pi_{Y}{R''}}, \fdset{} - f} \subseteq \bigcup\limits_{\forall y_{\ell} \in \dom{(Y)}} \candset{\relation{}_{y_{\ell}}, \fdset{} - f}$ as $R'' \fdsatisfies f$. Therefore, according to \reprinsert{}, $R''$, as a candidate from  $\bigcup\limits_{\forall y_{\ell} \in \dom{(Y)}} \candset{\relation{}_{y_{\ell}}, \fdset{} - f}$, does not exists in $\candset{\relation{}, \fdset{}}$ only because of one the two reasons mentioned above.
    \end{itemize}

    \paratitle{(If $\fdset{}$ has a common LHS $X$)} \commonlhsreduction{} is called (corresponding to lines 3-4). By the induction hypothesis, \commonlhsreduction${(\relation{}, \fdset{})}$ relies on \reduce$(\relation{}_{x_{\ell}}, \fdset{}_{-X})$ for each $x_{\ell} \in \dom{(X)}$. Note that those smaller instances work because they applies $w - 1$ reductions. Furthermore, we argue that \commonlhsreduction{} derives a $\candset{\relation{}, \fdset{}}$ from the results of $\reduce{(\relation{}_{x_{\ell}}, \fdset{}_{-X})}$ for every $x_{\ell}$ in $\dom{(X)}$. To achieve this, for each $\ell \in [1, n]$, $\candset{\relation{}_{x_1, \dots, x_{\ell}}, \fdset{}}$ is derived from $\candset{\relation{}_{x_1, \dots, x_{\ell - 1}}, \fdset{}}$ and $\candset{\relation{}_{x_{\ell}}, \fdset_{- X}}$ by inserting each candidate in
    \begin{equation*}
        \{R' \cup R'' \mid R' \in \candset{\relation{}_{x_{\ell}}, \fdset_{- X}}, R'' \in \candset{\relation{}_{x_1, \dots, x_{\ell - 1}}, \fdset{}}\}
    \end{equation*}
into $\candset{\relation{}_{x_1, \dots, x_{\ell}}, \fdset{}}$ using \reprinsert{} by \commonlhsreduction{}. We prove it by induction on  $\ell$. 
    
When $\ell = 1$, we argue that for any candidate $R'_1$ from $\candset{\relation_{x_1}, \fdset{}}$, there must exist an $R'_2$ from $\candset{\relation_{x_1}, \fdset{}_{-X}}$ where $R'_1 \repreq R'_2$, because:
\begin{itemize}
    \item $R'_1$ is an \srepair{} of $\relation_{x_1}$ as $R'_1$ has no violation on $\fdset{}$ as well as $\fdset{}_{-X}$. 
    \item $R'_1$ is not dominated by any candidate in $\candset{\relation_{x_1}, \fdset{}_{-X}}$. Otherwise, there must exist a candidate from $\candset{\relation_{x_1}, \fdset{}}$ also dominates $R'_1$, which conflicts with the assumption.
    \item $R'_1$ is not in $\candset{\relation_{x_1}, \fdset{}_{-X}}$ only if $R'_1$ shares the same value distribution of $A_s$ with someone else.
\end{itemize}
This step guarantees the properties of a \candidateset{} by \reprinsert{} as well.

    When $(\ell > 1)$, for any candidate $\bar{R}$ from $\candset{\relation{}_{x_1, \dots, x_{\ell}}, \fdset{}}$, we argue that:
    \begin{itemize}
        \item $\bar{R} = (\bar{R} \cap \relation_{x_{\ell}}) \cup (\bar{R} \cap \relation_{x_1, \dots, x_{\ell - 1}})$, because $\relation_{x_{\ell}}$ and $\relation_{x_1, \dots, x_{\ell - 1}}$ are two separate parts of $\relation{}$.
        \item $\bar{R} \cap \relation_{x_{\ell}}$ is a candidate of $(\relation_{x_{\ell}}, \fdset{}_{-X})$ (or shares a value distribution of $A_s$ with another candidate),
        \item and $\bar{R} \cap \relation_{x_1, \dots, x_{\ell - 1}}$ is a candidate of $(\relation_{x_1, \dots, x_{\ell - 1}}, \fdset{})$ (or shares a value distribution of $A_s$ with another candidate).
    \end{itemize}
    
    These three bullet points are proved by: firstly, an intersection with $\bar{R}$ is equivalent to taking a subset of $\bar{R}$, and we know that a subset of an \srepair{} is still an \srepair{} over the entire relation $R$ as well as the sub-relation $\relation_{x_{\ell}}$ (resp. $\relation_{x_1, \dots, x_{\ell - 1}}$). Secondly, $\bar{R} \cap \relation_{x_{\ell}}$ (resp. $\bar{R} \cap \relation_{x_1, \dots, x_{\ell - 1}}$) is not representatively dominated by any other candidate in $\candset{\relation{}_{x_{\ell}}, \fdset{}_{-X}}$ (resp. $\candset{\relation{}_{x_1, \dots, x_{\ell - 1}}, \fdset{}}$). Otherwise, if such a candidate $R_0$ exists in $\candset{\relation{}_{x_{\ell}}, \fdset{}_{-X}}$ (resp. $\candset{\relation{}_{x_1, \dots, x_{\ell - 1}}, \fdset{}}$), then $R_0 \cup (\bar{R} - \relation_{x_{\ell}})$ must be a candidate of $\candset{\relation{}_{x_1, \dots, x_{\ell}}, \fdset{}}$ that representatively dominates $\bar{R}$, resulting in the contradiction with the definition of $\bar{R}$.  Finally, the only reason that $\bar{R}$ is not included in $$\{R' \cup R'' \mid \forall R' \in \candset{\relation{}_{x_{\ell}}, \fdset{}_{-X}}~\forall R'' \in \candset{\relation{}_{x_1, \dots, x_{\ell - 1}}, \fdset{}}\}$$ is that it shares the same distribution of $A_s$ values with another candidate in this \candidateset{}.  Similarly, this step guarantees the properties of a \candidateset{} by \reprinsert{} as well.

    Combining $\ell = 1$ and $\ell > 1$, we prove that \commonlhsreduction{} computes $\candset{\relation{}_{x_1, \dots, x_{\ell}}, \fdset{}}$ for every $\ell \in [1,n]$. Since $\relation{}_{x_1, \dots, x_{n}} = \relation{}$, we have $\candset{\relation{}, \fdset{}} \gets \candset{\relation{}_{x_1, \dots, x_n}, \fdset{}}$.

    \smallskip
    Therefore, \reduce{} is correct when $w = \beta$.
    
\end{proof}

\subsection{Proof of Theorem~\ref{thm:poly-time-chain}}\label{subsec:proof_thm_poly_time_chain}
\polytimechainthm*
\begin{proof}
    We first prove the correctness and then the time complexity of \dpalgo.
    \paragraph*{(Correctness)} We argue that \dpalgo{} computes the optimal \rsrepair{} when \rc{} is fixed and \fdset{} forms an LHS chain. The proof consists of two parts: (a) \reduce{} correctly computes the \candidateset{} $\candset{\relation{}, \Delta}$ (proved in \Cref{lem:reduce-correct}) and (b) Lines 2-3 of \dpalgo{} compute the optimal \rsrepair{} by applying \postclean{} to $\candset{\relation{}, \Delta}$ (proved in \Cref{lemma:candset}).
    \paragraph*{(Time complexity)} The time complexity of \dpalgo{} is $O(m \cdot |\fdset{}| \cdot k \cdot |\relation{}|^{3k + 2})$ as implied by \Cref{lem:dpalgo_complexity}.
\end{proof}

\cut{
    \yuxi{the following is the full proof. Now it is separted into multiple lemmas}
    \begin{proof}
        \underline{Correctness:}  We argue that \dpalgo{} solves \prob{} when $\rho$ is fixed and $\Delta$ forms an LHS chain. The proof consists of two parts: (a) \reduce{} correctly computes the \candidateset{} $\candset{\relation{}, \Delta}$ and (b) Lines 2-3 of \dpalgo{} compute the optimal \rsrepair{} by applying \postclean{} to $\candset{\relation{}, \Delta}$.
    
        Part (b) follows from \Cref{lemma:candset}. To prove part (a), we use induction on $w$, the number of reductions applied to $\Delta$ by \reduce{}.
        In the base case of induction, when $w = 0$, the FD set \fdset{} must be empty, so $\relation{}$ is the only candidate in $\candset{\relation{}, \fdset{}}$, which will be returned by \reduce{}.
    
        For the inductive step, assume \reduce{} works for all $w \in [0, \beta - 1]$, we prove that it also works for $w = \beta$. In this case, $\reduce{(\relation{}, \fdset{})}$ starts by applying one of the following reduction:
        \begin{itemize}
            \item \sloppy If $\fdset{}$ has a \commonlhs{} $X$, \commonlhsreduction{} is called (Lines 3-4). By the induction hypothesis, $\commonlhsreduction{(\relation{}, \fdset{})}$ works correctly since it relies on $\reduce{(\relation{}_{x_{\ell}}, \fdset{}_{-X})}$ for each $x_{\ell} \in \dom{(X)}$. Note that those smaller instances work because they applies $w - 1$ reductions. Furthermore, Lemma~\ref{lemma:commonlhs_correctness} implies that \commonlhsreduction{} derives a $\candset{\relation{}, \fdset{}}$ from the results returned by $\reduce{(\relation{}_{x_{\ell}}, \fdset{}_{-X})}$ for every $x_{\ell}$. 
            \item If $\fdset{}$ has a \consensusfd{} $f: \emptyset \rightarrow Y$. the condition of Line 4 is not satisfied, while the condition of Line 6 is satisfied and the sub-routine \consensusreduction{} is called. Similarly, \consensusreduction{} works correctly since it depends on $\reduce{(\relation_{y_{\ell}}, \fdset{} - f)}$ for every $y_{\ell} \in \dom{(Y)}$, whom work because they applies $w - 1$ reductions. Also Lemma~\ref{lemma:consensus_correctness} indicates that \consensusreduction{} can compute a $\candset{\relation{}, \fdset{}}$ if we obtain $\reduce{(\relation_{y_{\ell}}, \fdset{} - f)}$ for all $y_{\ell}$.
        \end{itemize}
        
        \underline{Efficiency:} We argue that \dpalgo{} terminates in polynomial time in $m, |\relation{}|, |\fdset{}|$, and $k$: within \reduce{} (lines 1 to 9), the algorithm first checks if $\fdset{}$ is empty, which can be done in $O(1)$. Then in line 3, the algorithm checks if $\fdset{}$ has a \commonlhs{} by going over all attributes $A_1, \dots, A_s$ and checking each of them if it is a common column appears in the LHS of each FD in \fdset{}. Identifying a \commonlhs{} can be done in $O(m \cdot |\fdset{}|)$. Otherwise, in line 5, the algorithm then checks if a \consensusfd{} exists in \fdset{}, which can be done by checking for each FD in \fdset{} if its LHS is empty in $O(|\fdset{}|)$. Furthermore, this entire condition checking process is in $O(m \cdot |\fdset{}|)$, which is negligible to the time complexity of reductions. Hence, besides the condition checking, we have
        \begin{itemize}
            \item If \fdset{} is empty, then the algorithm simply returns $\{\relation{}\}$ in $O(|\relation{}|)$.
            \item If \fdset{} has a \commonlhs{}, the sub-routine \commonlhsreduction{} is called. As implied by Lemma~\ref{lemma:complexity_commonlhs}, the recurrence relation for this branch is $\recurrence{\text{COLHS}}(\relation{}, \fdset{}) =$ $$\sum\limits_{x_{\ell} \in \dom{(X)}} \recurrence{\text{RDCE}}(\relation{}_{x_1, \dots, x_\ell}, \fdset{}_{-X}) + O(k \cdot |\relation{}|^{3k + 1}).$$
            \item If \fdset{} has a \consensusfd{}, the sub-prcedure \consensusreduction{} is called. As implied by Lemma~\ref{lemma:complexity_consensus}, the recurrence relation for this branch is $\recurrence{\text{COSNS}}(\relation{}, \fdset{}) = $ $$\sum\limits_{y_{\ell} \in \dom{(Y)}} \recurrence{\text{RDCE}}(\relation{}_{y_{\ell}}, \fdset{} - f) + O(k \cdot |\relation{}|^{2k + 1}).$$
        \end{itemize}
        Hence, $\recurrence{\text{RDCE}}$ is bounded by a polynomial and thus \reduce{} terminates in polynomial time.
    
        Back to \dpalgo{}, in line 2, we apply \postclean{} to each candidate in $\candset{\relation{}, \fdset{}}$ in $O(|\relation{}|^{k + 2})$, as shown in \Cref{prop:post_clean} and \Cref{lemma:candset_size}. Then in line 3, we select the post-processed candidate of the largest size by a linear scan of $\candset{\relation{}, \fdset{}}$. Overall, \dpalgo{} terminates in polynomial time in $m, |R|, |\fdset{}|$, and $k$.
    \end{proof}
}

\section{Details from Section~5}
\subsection{LP Rounding-based Algorithms}\label{subsec:lp}
Besides \lprepr{}, we also show another polynomial-time rounding procedure, \lpgreedy{}, in \Cref{subsec:lpgreedy}. They differs in whether the rounding considers representation or not. From the experiments, we see that they have close performance while \lprepr{} is better in scenarios where the dataset contains a large number of errors.

\subsubsection{Representation-Aware Rounding (\lprepr{})}
To address the issue in \lpgreedy{}, we propose a representation-aware greedy algorithm, which is called \lprepr{}. Here, for rounding fractional $x_i$s to 1, tuples from underrepresented groups are prioritized. Inspired by stratified sampling \cite{neyman1992two}, we use the ratio $\frac{\sum_{i:t_i[A_s] = a_{\ell}}{x_i}}{p_\ell}$ to prioritize $x_i$ for rounding, where $p_\ell = \rc(a_\ell)$.
A group with smaller ratio is less representative in the retained tuples. From these less representative groups, \lprepr{} chooses $x_i$ associated with the fewest conflict constraints, then employing similar techniques as \lpgreedy{} does, including applying a final \postclean{}. 
While the computational cost is usually dominated by solving the LP, the rounding for \lprepr{} takes slightly longer than that in \lpgreedy{} due to the additional representation considerations in this step.
\begin{algorithm}
\caption{\lprepr$(\relation{}, \fdset{}, \rc{})$}\label{alg:lprepr}
    \begin{algorithmic}[1]
        \State Build and optimize LP;
        \If{Find a solution $\Vec{x} \in \mathbb{R}^{|\relation{}|}$}
            \While {$\exists~ 0 < x_i < 1$}
                \State $a_{\ell} \gets \arg \min_{a_{\ell} \in \dom{(A_s)}} \frac{\sum_{t_i[A_s] = a_{\ell}}{x_i}}{p_{\ell}}$;
                \State Identify $x_i$ with $t_i[A_s] = a_{\ell}$ and involved in the smallest number of LP constraints;
                \State $x_i \gets 1$ and $x_j \gets 0$ for all $j$ where there is a constraint $x_i + x_j \leq 1$ exists;
            \EndWhile
            \State $R' \coloneq \{t_i \mid x_i = 1, \forall i \in [1, |\relation{}|]\}$;
            \State \Return \postclean{$(R', \rc{})$};
        \EndIf
        \State \Return $\emptyset$;
    \end{algorithmic}
\end{algorithm}
To alleviate this issue in \lpgreedy{}, we propose a representat-ion-aware enhancement in \lprepr{} as outlined in \cref{alg:lprepr}. The primary modification lies in lines 4 and 5, where considerations of representation are incorporated by prioritizing the selection of tuples from underrepresented sensitive groups. \cut{during the selection of undecided $x_i$ values.} Inspired by stratified sampling in \cite{neyman1992two}, we introduce $\frac{\sum_{t_i[A_s] = a_{\ell}}{x_i}}{p_c}$, representing the ratio of the tuple with a given $A_s$-value $a_{\ell}$ to the expected proportion $p_{\ell}$. A smaller ratio indicates a less representative group in the retained tuples. From these less representative groups, \lprepr{} chooses $x_i$ associated with the fewest conflict constraints, then employing similar techniques as \lpgreedy{} does. While the computational cost remains primarily on solving the LP for most cases, the rounding takes slightly longer than that in \lpgreedy{} due to the additional representation considerations in this step. Similarly, the complexity of this rounding is $O(|\relation{}|^2)$.

\subsubsection{Greedy Rounding (\lpgreedy{})}\label{subsec:lpgreedy}



The greedy rounding algorithm \lpgreedy\
greedily chooses a fractional $x_i$, where $0 < x_i < 1$, that participates in the smallest number of LP constraints.
Then it rounds $x_i$ to 1 and for all constraints $x_i + x_j \leq 1$, it rounds $x_j$ to 0, thereby satisfying the linear constraints and consequently resolving the FD violations. 
Once all the FDs in $\fdset$ are satisfied, it calls \postclean{} (\Cref{subsec:postclean}) to ensure that $\rc$ is also satisfied. 
When dealing with datasets containing a large number of errors, this simple greedy rounding approach does not work well as it fails to give priority to sub-populations that are under-representative compared to the desired proportions specified by \rc{}.

\begin{algorithm}
\caption{\lpgreedy$(\relation{}, \fdset{}, \rc{})$}\label{alg:lpgreedy}
    \begin{algorithmic}[1]
        \State Build and optimize LP;
        \If{Find a solution $\Vec{x} \in \mathbb{R}^{|\relation{}|}$}
            \While {$\exists~0 < x_i < 1$}
                \State Identify $x_i$ that is involved in the smallest number of LP constraints;
                \State $x_i \gets 1$ and $x_j \gets 0$ for all $j$ where there is a constraint $x_i + x_j \leq 1$ exists;
            \EndWhile
            \State $R' \gets \{t_i \mid x_i = 1, \forall i \in [1, |\relation{}|]\}$;
            \State \Return \postclean{$(R', \rc{})$};
        \EndIf
        \State \Return $\emptyset$;
    \end{algorithmic}
\end{algorithm}

It is important to note that representation is not considered before line 8. The time cost consists of two parts. The time cost of building and optimizing the LP depends on the number of constraints, which is at most $O(|\relation{}|^2)$ in our case. On the other hand, the rounding step consists of an enumeration of undecided variables and their neighbors, where the complexity of the rounding step is $O(|\relation{}|^2)$.

However, we have observed performance degradation with the greedy rounding approach, especially when dealing with datasets containing a large number of errors. The greedy rounding fails to preserve the minor sub-groups because the central focus of rounding is primarily on maximizing the number of tuples retained.

\section{More Experimental Results}\label{sec:appendix_experiment}
\subsection{More results for \Cref{ex:intro-distribution}}\label{sec:more_intro_example}
\Cref{fig:combined_motivating_example} extend the \Cref{ex:intro-distribution}, showing results across different relative distributions. 
\begin{figure}[!tbp]
\centering
\begin{minipage}{1\linewidth}
\centering
\begin{subfigure}{0.45\linewidth}
\includegraphics[width=\linewidth]{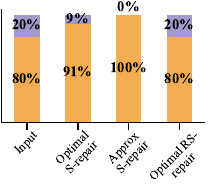}
\caption{Distribution of disability (20\%-80\%)}
\label{fig:motivating_example_a_appendix_0}
\end{subfigure}%
\hfill
\begin{subfigure}{0.5\linewidth}
\includegraphics[width=\linewidth]{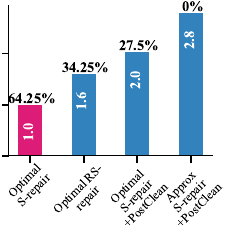}
\caption{Cost of representation (20\%-80\%)}
\label{fig:motivating_example_b_appendix_0}
\end{subfigure}%
\end{minipage}
\begin{minipage}{1\linewidth}
\centering
\begin{subfigure}{0.45\linewidth}
\includegraphics[width=\linewidth]{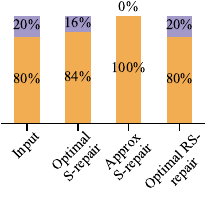}
\caption{Distribution of disability (33\%-67\%)}
\label{fig:motivating_example_a_appendix_1}
\end{subfigure}%
\hfill
\begin{subfigure}{0.5\linewidth}
\includegraphics[width=\linewidth]{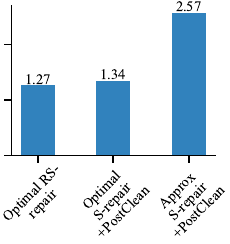}
\caption{Cost of representation (33\%-67\%)}
\label{fig:motivating_example_b_appendix_1}
\end{subfigure}%
\end{minipage}
\begin{minipage}{1\linewidth}
\centering
\begin{subfigure}{0.45\linewidth}
\includegraphics[width=\linewidth]{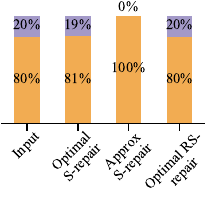}
\caption{Distribution of disability (50\%-50\%)}
\label{fig:motivating_example_a_appendix_2}
\end{subfigure}%
\hfill
\begin{subfigure}{0.5\linewidth}
\includegraphics[width=\linewidth]{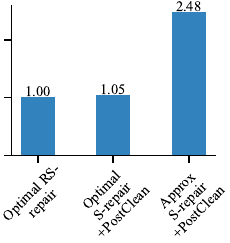}
\caption{Cost of representation (50\%-50\%)}
\label{fig:motivating_example_b_appendix_2}
\end{subfigure}%
\end{minipage}
\begin{minipage}{1\linewidth}
\centering
\includegraphics[width=0.5\linewidth]{figures/intro_example/intro_distribution_legend.pdf}
\end{minipage}

\caption{Cost of representation for ACS data and disability status across different relative noise distributions.}
\label{fig:combined_motivating_example}
\Description[]{}
\end{figure}

\subsection{Additional experiments from \Cref{subsec:exp_cost_repr}}
In this section, we present the deletion overhead, deletion percentage of \srepair{} and \rsrepair{} for varying representation and relative noise distribution for COMPAS data (5\% overall noise level, 10K tuples, chain and non-chain FDs) in \Cref{tab:del_overhead_comparison_appendix}. It demonstrates similar trends as the ACS data shown in \Cref{tab:del_overhead_comparison} in \Cref{subsec:exp_cost_repr}.
\begin{table*}[!htbp]
\centering
\caption{Deletion overhead for varying representations (80\%-20\% and 50\%-50\%) and relative noise distributions (COMPAS data, 5\% overall noise, 10k tuples). ``S-rep. \%'' and ``RS-rep. \%'' stand for deletion percentage of \srepair{} and and for \rsrepair{}, respectively.}
\label{tab:del_overhead_comparison_appendix}
\setlength{\tabcolsep}{2pt} 
\footnotesize
\begin{subtable}[!htp]{0.24\textwidth}
    \centering
    \caption{Chain FD set: 80\%-20\%}
    \label{tab:del_overhead_a_appendix}
    \begin{tabularx}{\textwidth}{|>{\centering\arraybackslash}p{0.9cm}|XXX|}
        \hline
        noise & del. ratio & S-rep. \% & RS-rep. \% \\
        \hline
        20\%-80\% & 2.459 & 8.31 & 20.45 \\
        40\%-60\% & 1.377 & 8.35 & 11.50 \\
        50\%-50\% & 1.016 & 8.35 & 8.48 \\
        60\%-40\% & 1.072 & 8.26 & 8.85 \\
        80\%-20\% & 1.178 & 8.24 & 9.70 \\
        \hline
    \end{tabularx}
\end{subtable}%
\hfill
\begin{subtable}[!htp]{0.24\textwidth}
    \centering
    \caption{Chain FD set: 50\%-50\%}
    \label{tab:del_overhead_b_appendix}
    \begin{tabularx}{\textwidth}{|>{\centering\arraybackslash}p{0.9cm}|XXX|}
        \hline
        noise (\%) & del. ratio & S-rep. \% & RS-rep. \% \\
        \hline
        20\%-80\% & 1.584 & 8.24 & 13.05 \\
        40\%-60\% & 1.191 & 8.39 & 10.00 \\
        50\%-50\% & 1.011 & 8.33 & 8.42 \\
        60\%-40\% & 1.186 & 8.35 & 9.90 \\
        80\%-20\% & 1.578 & 8.28 & 13.07 \\
        \hline
    \end{tabularx}
\end{subtable}%
\hfill
\begin{subtable}[!htp]{0.24\textwidth}
    \centering
    \caption{Non-chain FD set: 80\%-20\%}
    \label{tab:del_overhead_c_appendix}
    \begin{tabularx}{\textwidth}{|>{\centering\arraybackslash}p{0.9cm}|XXX|}
        \hline
        noise (\%) & del. ratio & S-rep. \% & RS-rep. \% \\
        \hline
        20\%-80\% & 1.965 & 24.40 & 47.95 \\
        40\%-60\% & 1.000 & 26.07 & 26.08 \\
        50\%-50\% & 1.077 & 26.03 & 28.05 \\
        60\%-40\% & 1.157 & 26.07 & 30.15 \\
        80\%-20\% & 1.289 & 25.57 & 32.95 \\
        \hline
    \end{tabularx}
\end{subtable}
\hfill
\begin{subtable}[!htp]{0.24\textwidth}
    \centering
    \caption{Non-chain FD set: 50\%-50\%}
    \label{tab:del_overhead_d_appendix}
    \begin{tabularx}{\textwidth}{|>{\centering\arraybackslash}p{0.9cm}|XXX|}
        \hline
        noise (\%) & del. ratio & S-rep. \% & RS-rep. \% \\
        \hline
        20\%-80\% & 1.635 & 24.74 & 40.45 \\
        40\%-60\% & 1.186 & 25.83 & 30.64 \\
        50\%-50\% & 1.000 & 25.68 & 25.68 \\
        60\%-40\% & 1.154 & 25.67 & 29.63 \\
        80\%-20\% & 1.624 & 24.56 & 39.89 \\
        \hline
    \end{tabularx}
\end{subtable}
\end{table*}

\subsection{Additional experiments from \Cref{subsec:exp_quality}}
We present the repair quality for ACS and COMPAS data of different value distributions of the sensitive attribute for 5\% noise level in \Cref{tab:appendix_quality}. It also shows similar trends to the results for 10\% noise level results in \Cref{tab:quality_acs_compas}. 

\begin{table*}[!htp]
\centering
\caption{Repair quality of different algorithms for ACS data and COMPAS data (5\% noise level) with chain and non-chain FD sets.}
\label{tab:appendix_quality}
\footnotesize
\setlength{\tabcolsep}{0pt} 
\renewcommand{\arraystretch}{1} 
\begin{subtable}{\textwidth}
    \centering
    \begin{tabularx}{\textwidth}{|l|*{4}{>{\centering\arraybackslash}X}| *{4}{>{\centering\arraybackslash}X}|*{4}{>{\centering\arraybackslash}X}|}
    \hline
    \multicolumn{13}{|c|}{\cellcolor{gray!50}\textbf{(a) ACS: chain FD set}}\\\hline
    {\bf Sensitive attribute distribution} & \multicolumn{4}{c|}{\textbf{80\%-20\%}} & \multicolumn{4}{c|}{\textbf{60\%-40\%}} & \multicolumn{4}{c|}{\textbf{50\%-50\%}} \\
    \hline
    \textbf{Algorithm / Size (K)} & \textbf{4} & \textbf{6} & \textbf{8} & \textbf{10} & \textbf{4} & \textbf{6} & \textbf{8} & \textbf{10} & \textbf{4} & \textbf{6} & \textbf{8} & \textbf{10} \\
    \hline
    \expglobalilp{} & 100 & 100 & 100 & 100 & 100 & 100 & 100 & 100 & 100 & 100 & 100 & 100 \\
    \rowcolor{gray!30} \textbf{\expdpalgo{} (= \expscalableheuristic{})} & \textbf{100} & \textbf{100} & \textbf{100} & \textbf{100} & \textbf{100} & \textbf{100} & \textbf{100} & \textbf{100} & \textbf{100} & \textbf{100} & \textbf{100} & \textbf{100} \\
    \bilp{}+\bpostclean{} & 93.48 & 93.73 & 94.62 & 94.65 & 98.95 & 99.30 & 99.21 & 99.28 & 99.94 & 99.95 & 99.99 & 99.93 \\
    \bdp{}+\bpostclean{} & 93.48 & 93.73 & 94.62 & 94.65 & 98.95 & 99.30 & 99.21 & 99.28 & 99.94 & 99.95 & 99.99 & 99.93 \\
    \bmuse{}+\bpostclean{} & 92.87 & 94.31 & 94.79 & - & - & - & - & - & - & - & - & -\\ 
\hline
\end{tabularx}
\end{subtable}

\begin{subtable}{\textwidth}
    \centering
    \begin{tabularx}{\textwidth}{|l|*{4}{>{\centering\arraybackslash}X}| *{4}{>{\centering\arraybackslash}X}|*{4}{>{\centering\arraybackslash}X}|}
    \hline
    
    \multicolumn{13}{|c|}{\cellcolor{gray!50}\textbf{(b) ACS: non-chain FD set}}\\\hline
    {\bf Sensitive attribute distribution} & \multicolumn{4}{c|}{\textbf{80\%-20\%}} & \multicolumn{4}{c|}{\textbf{60\%-40\%}} & \multicolumn{4}{c|}{\textbf{50\%-50\%}} \\
    \hline
    \textbf{Algorithm / Size (K)} & \textbf{4} & \textbf{6} & \textbf{8} & \textbf{10} & \textbf{4} & \textbf{6} & \textbf{8} & \textbf{10} & \textbf{4} & \textbf{6} & \textbf{8} & \textbf{10} \\
    \hline
    \expglobalilp{} & 100 & 100 & 100 & 100 & 100 & 100 & 100 & 100 & 100 & 100 & 100 & 100 \\
    \rowcolor{gray!30} \textbf{\expscalableheuristic{}} & \textbf{96.68} & \textbf{98.63} & \textbf{98.75} & \textbf{99.15} & \textbf{97.96} & \textbf{97.89} & \textbf{97.76} & \textbf{98.00} & \textbf{98.04} & \textbf{98.79} & \textbf{98.84} & \textbf{98.98} \\
    \explprepr{} & 81.21 & 90.07 & 90.71 & 90.51 & 93.78 & 88.28 & 87.11 & 92.73 & {\bf 99.36} & 98.67 & {\bf 99.45} & {\bf 99.08} \\
    \bilp{}+\bpostclean{} & 79.34 & 84.93 & 85.43 & 86.01 & 92.94 & 93.67 & 93.60 & 94.31 & {\bf 98.26} & 98.69 & 98.55 & {\bf 99.13}  \\
    \bapprox{}+\bpostclean{} & 0.43 & 6.34 & 6.72 & 5.09 & 36.68 & 37.84 & 36.74 & 36.94 & 48.86 & 50.74 & 49.72 & 50.50  \\
    \hline
    \end{tabularx}
    \end{subtable}

\begin{subtable}{\textwidth}
    \centering
    \begin{tabularx}{\textwidth}{|l|*{4}{>{\centering\arraybackslash}X}|*{4}{>{\centering\arraybackslash}X}|*{3}{>{\centering\arraybackslash}X}|}
    \hline
    \multicolumn{12}{|c|}{\cellcolor{gray!50}\textbf{(c) COMPAS: chain FD set}}\\\hline
    {\bf Sensitive attribute distribution} & \multicolumn{4}{c|}{\textbf{80\%-20\%}} & \multicolumn{4}{c|}{\textbf{60\%-40\%}} & \multicolumn{3}{c|}{\textbf{50\%-50\%}} \\
    \hline
    \textbf{Algorithm / Size (K)} & \textbf{4} & \textbf{10} & \textbf{20} & \textbf{30} & \textbf{4} & \textbf{10} & \textbf{20} & \textbf{30} & \textbf{4} & \textbf{10} & \textbf{20} \\
    \hline
    \expglobalilp{} & 100 & 100 & 100 & 100 & 100 & 100 & 100 & 100 & 100 & 100 & 100 \\
    \rowcolor{gray!30} \textbf{\expdpalgo{}} & \textbf{100} & \textbf{100} & \textbf{100} & \textbf{100} & \textbf{100} & \textbf{100} & \textbf{100} & \textbf{100} & \textbf{100} & \textbf{100} & \textbf{100} \\
    \bilp{}+\bpostclean{} & 99.61 & 100 & 100 & 100 & 100 & 100 & 100 & 100 & 100 & 100 & 100 \\
    \bdp{}+\bpostclean{} & 99.61 & 100 & 100 & 100 & 100 & 100 & 100 & 100 & 100 & 100 & 100 \\
    \bmuse{}+\bpostclean{} & 98.83 & - & - & - & - & - & - & - & - & - & - \\
    \hline
    \end{tabularx}
\end{subtable}

\begin{subtable}{\textwidth}
    \centering
    \begin{tabularx}{\textwidth}{|l|*{4}{>{\centering\arraybackslash}X}|*{4}{>{\centering\arraybackslash}X}|*{3}{>{\centering\arraybackslash}X}|}
    \hline
    \multicolumn{12}{|c|}{\cellcolor{gray!50}\textbf{(d) COMPAS: non-chain FD set}}\\\hline
    {\bf Sensitive attribute distribution} & \multicolumn{4}{c|}{\textbf{80\%-20\%}} & \multicolumn{4}{c|}{\textbf{60\%-40\%}} & \multicolumn{3}{c|}{\textbf{50\%-50\%}} \\
    \hline
    \textbf{Algorithm / Size (K)} & \textbf{4} & \textbf{10} & \textbf{20} & \textbf{30} & \textbf{4} & \textbf{10} & \textbf{20} & \textbf{30} & \textbf{4} & \textbf{10} & \textbf{20} \\
    \hline
    \expglobalilp{} & 100 & 100 & 100 & 100 & 100 & 100 & 100 & 100 & 100 & 100 & 100\\
    \rowcolor{gray!30} \textbf{\expscalableheuristic{}} & \textbf{99.54} & \textbf{96.06} & \textbf{98.15} & \textbf{97.44} & \textbf{99.28} & \textbf{98.70} & \textbf{97.60} & \textbf{97.08} & \textbf{99.13} & \textbf{99.60} & \textbf{99.33} \\
    \explprepr{}  & 100 & 100 & 100 & 100 & 100 & 100 & 100 & 100 & 100 & 100 & 100\\
    \bilp{}+\bpostclean{} & 99.08 & 95.87 & 93.78 & 91.35 & 99.19 & 98.59 & 97.47 & 96.98 & 99.80 & 99.73 & 99.55 \\
    \bapprox{}+\bpostclean{} & 16.09 & 13.26 & 10.44 & 8.34 & 33.12 & 30.99 & 30.86 & 29.97 & 43.62 & 46.07 & 44.76\\
    \hline
    \end{tabularx}
\end{subtable}
\end{table*}

\subsection{Additional experiments from \Cref{subsec:exp_scalability}}
\paragraph{Runtime on ACS and COMPAS Data}
\Cref{tab:runtime_cost_appendix} demonstrates the runtime for ACS and COMPAS data of different value distributions of sensitive attribute (5\% and 10\% noise level, chain and non-chain FD sets).

\begin{table*}[!htp]
\centering
\caption{Runtime (sec) of different algorithms for ACS data (5\% and 10\% noise levels) with chain and non-chain FD sets}
\label{tab:runtime_cost_appendix}
\footnotesize
\setlength{\tabcolsep}{3pt} 
\renewcommand{\arraystretch}{1.1} 
\begin{subtable}{\textwidth}
    \centering
    \begin{tabularx}{\textwidth}{|l|*{4}{>{\centering\arraybackslash}X}|*{4}{>{\centering\arraybackslash}X}|*{4}{>{\centering\arraybackslash}X}|}
    \hline
    \multicolumn{13}{|c|}{\cellcolor{gray!50}\textbf{(a) ACS: non-chain FD set (5\% noise)}}\\\hline
    {\bf Sensitive attribute distribution} & \multicolumn{4}{c|}{\textbf{80\%-20\%}} & \multicolumn{4}{c|}{\textbf{60\%-40\%}} & \multicolumn{4}{c|}{\textbf{50\%-50\%}} \\
    \hline
    \textbf{Algorithm / Size (K)} & \textbf{4} & \textbf{6} & \textbf{8} & \textbf{10} & \textbf{4} & \textbf{6} & \textbf{8} & \textbf{10} & \textbf{4} & \textbf{6} & \textbf{8} & \textbf{10} \\
    \hline
    \expglobalilp{} & 313.49 & 1003.48 & 3188.59 & 17840.45 & 64.97 & 260.79 & 1113.37 & 3781.58 & 30.08 & 161.76 & 903.63 & 2017.56 \\
    {\bf \expscalableheuristic{}} & 2.53 & 2.43 & 3.57 & 4.33 & 4.60 & 7.87 & 12.09 & 15.23 & 5.72 & 10.36 & 13.47 & 27.20 \\
    \explprepr{} & 31.51 & 69.82 & 117.88 & 206.39 & 21.69 & 52.70 & 93.31 & 174.77 & 11.36 & 28.30 & 48.38 & 133.10 \\
    \bilp{}+\bpostclean{} & 62.60 & 101.64 & 250.12 & 619.45 & 34.40 & 100.44 & 209.47 & 475.07 & 24.58 & 124.47 & 182.43 & 518.43 \\
    \bapprox{}+\bpostclean{} & 3.20 & 3.82 & 7.38 & 14.02 & 2.08 & 4.77 & 6.87 & 14.82 & 2.11 & 3.39 & 5.90 & 27.98 \\
    \hline
    \end{tabularx}
\end{subtable}

\begin{subtable}{\textwidth}
    \centering
    \begin{tabularx}{\textwidth}{|l|*{4}{>{\centering\arraybackslash}X}|*{4}{>{\centering\arraybackslash}X}|*{4}{>{\centering\arraybackslash}X}|}
    \hline
    \multicolumn{13}{|c|}{\cellcolor{gray!50}\textbf{(b) ACS: non-chain FD (10\% noise)}}\\\hline
    {\bf Sensitive attribute distribution} & \multicolumn{4}{c|}{\textbf{80\%-20\%}} & \multicolumn{4}{c|}{\textbf{60\%-40\%}} & \multicolumn{4}{c|}{\textbf{50\%-50\%}} \\
    \hline
    \textbf{Algorithm / Size (K)} & \textbf{4} & \textbf{6} & \textbf{8} & \textbf{10} & \textbf{4} & \textbf{6} & \textbf{8} & \textbf{10} & \textbf{4} & \textbf{6} & \textbf{8} & \textbf{10} \\
    \hline
    \expglobalilp{} & 1869.85 & 5072.64 & 58977.39 & 64039.06 & 261.93 & 1344.63 & 4293.89 & 44326.58 & 139.27 & 855.29 & 1227.35 & 3328.55 \\
    {\bf \expscalableheuristic{}} & 2.29 & 2.48 & 2.52 & 2.36 & 5.72 & 10.33 & 12.52 & 16.64 & 9.50 & 12.09 & 13.85 & 23.14 \\
    \explprepr{} & 33.83 & 87.84 & 165.72 & 278.52 & 39.40 & 97.13 & 248.89 & 397.02 & 44.67 & 114.99 & 117.65 & 250.03 \\
    \bilp{}+\bpostclean{} & 135.41 & 412.93 & 869.86 & 673.26 & 104.51 & 307.65 & 1000.11 & 2961.43 & 116.17 & 358.94 & 909.23 & 2404.93 \\
    \bapprox{}+\bpostclean{} & 3.19 & 7.13 & 13.88 & 20.72 & 3.34 & 7.31 & 13.37 & 19.87 & 3.84 & 11.57 & 11.42 & 18.89 \\
    \hline
    \end{tabularx}
\end{subtable}

\begin{subtable}{\textwidth}
    \centering
    \begin{tabularx}{\textwidth}{|l|*{4}{>{\centering\arraybackslash}X}|*{4}{>{\centering\arraybackslash}X}|*{4}{>{\centering\arraybackslash}X}|}
    \hline
    \multicolumn{13}{|c|}{\cellcolor{gray!50}\textbf{(c) ACS: chain FD (5\% noise)}}\\\hline
    {\bf Sensitive attribute distribution} & \multicolumn{4}{c|}{\textbf{80\%-20\%}} & \multicolumn{4}{c|}{\textbf{60\%-40\%}} & \multicolumn{4}{c|}{\textbf{50\%-50\%}} \\
    \hline
    \textbf{Algorithm / Size (K)} & \textbf{4} & \textbf{6} & \textbf{8} & \textbf{10} & \textbf{4} & \textbf{6} & \textbf{8} & \textbf{10} & \textbf{4} & \textbf{6} & \textbf{8} & \textbf{10} \\
    \hline
    \expglobalilp{} & 5.02 & 13.37 & 94.12 & 90.16 & 3.27 & 14.11 & 47.89 & 308.90 & 1.92 & 4.53 & 14.30 & 49.97 \\
    \rowcolor{gray!30} \textbf{\expdpalgo{} (= \expscalableheuristic{})} & 3.49 & 4.80 & 8.41 & 9.06 & 3.73 & 5.88 & 7.14 & 14.61 & 4.78 & 7.72 & 8.24 & 11.75 \\
    \bdp{}+\bpostclean{} & 0.41 & 0.50 & 0.59 & 0.62 & 0.81 & 0.52 & 0.56 & 0.88 & 0.53 & 0.55 & 0.60 & 0.69 \\
    \bilp{}+\bpostclean{} & 1.17 & 2.74 & 5.78 & 8.15 & 1.49 & 2.82 & 5.48 & 9.76 & 1.59 & 3.51 & 6.65 & 13.74 \\
    \bmuse{}+\bpostclean{} & 460.05 & 3904.73 & 15161.43 & - & - & - & - & - & 599.76 & 3850.74 & 14413.26 & - \\
    \hline
    \end{tabularx}
\end{subtable}

\begin{subtable}{\textwidth}
    \centering
    \begin{tabularx}{\textwidth}{|l|*{4}{>{\centering\arraybackslash}X}|*{4}{>{\centering\arraybackslash}X}|*{4}{>{\centering\arraybackslash}X}|}
    \hline
    \multicolumn{13}{|c|}{\cellcolor{gray!50}\textbf{(d) ACS: chain FD (10\% noise)}}\\\hline
    {\bf Sensitive attribute distribution} & \multicolumn{4}{c|}{\textbf{80\%-20\%}} & \multicolumn{4}{c|}{\textbf{60\%-40\%}} & \multicolumn{4}{c|}{\textbf{50\%-50\%}} \\
    \hline
    \textbf{Algorithm / Size (K)} & \textbf{4} & \textbf{6} & \textbf{8} & \textbf{10} & \textbf{4} & \textbf{6} & \textbf{8} & \textbf{10} & \textbf{4} & \textbf{6} & \textbf{8} & \textbf{10} \\
    \hline
    \expglobalilp{} & 5.74 & 64.79 & 413.00 & 1045.78 & 11.40 & 209.54 & 544.95 & 657.28 & 9.97 & 13.10 & 116.37 & 129.90 \\
    \rowcolor{gray!30} \textbf{\expdpalgo{} (= \expscalableheuristic{})} & 5.64 & 6.75 & 9.84 & 12.12 & 5.32 & 8.71 & 11.08 & 28.91 & 7.11 & 9.62 & 12.32 & 14.35 \\
    \bdp{}+\bpostclean{} & 0.55 & 0.64 & 0.69 & 0.71 & 0.55 & 0.64 & 0.86 & 1.02 & 0.63 & 0.67 & 0.73 & 0.91 \\
    \bilp{}+\bpostclean{} & 2.24 & 4.85 & 9.65 & 15.95 & 2.28 & 5.94 & 12.61 & 19.60 & 2.86 & 7.10 & 12.68 & 26.69 \\
    \bmuse{}+\bpostclean{} & 4558.15 & 16187.80 & - & - & 3075.89 & 23414.92 & - & - & 12328.36 & 14197.39 & - & - \\
    \hline
    \end{tabularx}
\end{subtable}

\begin{subtable}{\textwidth}
    \centering
    \begin{tabularx}{\textwidth}{|l|*{4}{>{\centering\arraybackslash}X}|*{4}{>{\centering\arraybackslash}X}|*{3}{>{\centering\arraybackslash}X}|}
    \hline
    \multicolumn{12}{|c|}{\cellcolor{gray!50}\textbf{(e) COMPAS: non-chain FD set (5\% noise)}}\\\hline
    {\bf Sensitive attribute distribution} & \multicolumn{4}{c|}{\textbf{80\%-20\%}} & \multicolumn{4}{c|}{\textbf{60\%-40\%}} & \multicolumn{3}{c|}{\textbf{50\%-50\%}} \\
    \hline
    \textbf{Algorithm / Size (K)} & \textbf{4} & \textbf{10} & \textbf{20} & \textbf{30} & \textbf{4} & \textbf{10} & \textbf{20} & \textbf{30} & \textbf{4} & \textbf{10} & \textbf{20} \\
    \hline
    \expglobalilp{} & 124.48 & 1724.06 & 3038.94 & 8083.23 & 114.57 & 1145.06 & 3689.82 & 6727.07 & 130.72 & 1625.59 & 6766.93\\
    \textbf{\expscalableheuristic{}} & 18.73 & 236.34 & 1413.43 & 5237.60 & 24.86 & 240.79 & 1826.81 & 5966.77 & 31.47 & 278.60 & 1985.86 \\
    \explprepr{} & 23.48 & 148.77 & 601.35 & 1434.94 & 26.45 & 186.49 & 1656.01 & 3558.59 & 30.14 & 186.87 & 1104.72\\
    \bilp{}+\bpostclean{} & 131.51 & 1428.27 & 5331.15 & 12498.97 & 101.97 & 1046.61 & 5898.54 & 13983.64 & 123.99 & 1280.14 & 11003.02 \\
    \bapprox{}+\bpostclean{} & 5.41 & 34.79 & 148.39 & 311.44 & 6.92 & 40.30 & 316.29 & 469.01 & 8.35 & 34.84 & 359.49 \\
    \hline
    \end{tabularx}
\end{subtable}

\begin{subtable}{\textwidth}
    \centering
    \begin{tabularx}{\textwidth}{|l|*{4}{>{\centering\arraybackslash}X}|*{4}{>{\centering\arraybackslash}X}|*{3}{>{\centering\arraybackslash}X}|}
    \hline
    \multicolumn{12}{|c|}{\cellcolor{gray!50}\textbf{(e) COMPAS: non-chain FD set (10\% noise)}}\\\hline
    {\bf Sensitive attribute distribution} & \multicolumn{4}{c|}{\textbf{80\%-20\%}} & \multicolumn{4}{c|}{\textbf{60\%-40\%}} & \multicolumn{3}{c|}{\textbf{50\%-50\%}} \\
    \hline
    \textbf{Algorithm / Size (K)} & \textbf{4} & \textbf{10} & \textbf{20} & \textbf{30} & \textbf{4} & \textbf{10} & \textbf{20} & \textbf{30} & \textbf{4} & \textbf{10} & \textbf{20} \\
    \hline
    \expglobalilp{} & 294.09 & 1620.48 & 5289.97 & 10235.06 & 221.40 & 1672.16 & 5294.01 & 10742.42 & 242.20 & 1415.72 & 8104.47 \\
    \textbf{\expscalableheuristic{}} & 13.70 & 82.28 & 646.44 & 2110.72 & 18.13 & 179.43 & 1518.88 & 3924.87 & 23.66 & 328.31 & 3365.94 \\
    \explprepr{} & 56.36 & 421.65 & 2881.62 & 18485.24 & 65.67 & 373.81 & 5688.91 & 19661.11 & 47.12 & 309.59 & 1789.07 \\
    \bilp{}+\bpostclean{} & 294.09 & 1620.48 & 5289.97 & 10235.06 & 221.40 & 1672.16 & 5294.01 & 10742.42 & 242.20 & 1415.72 & 8104.47 \\
    \bapprox{}+\bpostclean{} & 9.58 & 59.64 & 254.31 & 632.66 & 8.80 & 62.63 & 286.61 & 909.84 & 8.95 & 56.94 & 271.85 \\
    \hline
    \end{tabularx}
\end{subtable}

\begin{subtable}{\textwidth}
    \centering
    \begin{tabularx}{\textwidth}{|l|*{4}{>{\centering\arraybackslash}X}|*{4}{>{\centering\arraybackslash}X}|*{3}{>{\centering\arraybackslash}X}|}
    \hline
    \multicolumn{12}{|c|}{\cellcolor{gray!50}\textbf{(c) COMPAS: chain FD set (5\% noise)}}\\\hline
    {\bf Sensitive attribute distribution} & \multicolumn{4}{c|}{\textbf{80\%-20\%}} & \multicolumn{4}{c|}{\textbf{60\%-40\%}} & \multicolumn{3}{c|}{\textbf{50\%-50\%}} \\
    \hline
    \textbf{Algorithm / Size (K)} & \textbf{4} & \textbf{10} & \textbf{20} & \textbf{30} & \textbf{4} & \textbf{10} & \textbf{20} & \textbf{30} & \textbf{4} & \textbf{10} & \textbf{20} \\
    \hline
    \expglobalilp{} & 2.48 & 21.12 & 96.14 & 211.90 & 2.86 & 25.25 & 122.54 & 234.58 & 2.49 & 26.85 & 114.82 \\
    \textbf{\expdpalgo{} (= \expscalableheuristic{})} & 0.04 & 0.06 & 0.07 & 0.09 & 0.04 & 0.06 & 0.09 & 0.10 & 0.04 & 0.05 & 0.07 \\
    \bilp{}+\bpostclean{} & 2.09 & 18.06 & 86.95 & 177.53 & 2.81 & 20.09 & 121.65 & 214.16 & 2.58 & 19.78 & 95.52 \\
    \bdp{}+\bpostclean{} & 0.03 & 0.04 & 0.05 & 0.07 & 0.04 & 0.04 & 0.06 & 0.09 & 0.05 & 0.04 & 0.05 \\
    \bmuse{}+\bpostclean{} & 11071.22 & - & - & - & - & - & - & - & - & - & - \\
    \hline
    \end{tabularx}
\end{subtable}

\begin{subtable}{\textwidth}
    \centering
    \begin{tabularx}{\textwidth}{|l|*{4}{>{\centering\arraybackslash}X}|*{4}{>{\centering\arraybackslash}X}|*{3}{>{\centering\arraybackslash}X}|}
    \hline
    \multicolumn{12}{|c|}{\cellcolor{gray!50}\textbf{(c) COMPAS: chain FD set (10\% noise)}}\\\hline
    {\bf Sensitive attribute distribution} & \multicolumn{4}{c|}{\textbf{80\%-20\%}} & \multicolumn{4}{c|}{\textbf{60\%-40\%}} & \multicolumn{3}{c|}{\textbf{50\%-50\%}} \\
    \hline
    \textbf{Algorithm / Size (K)} & \textbf{4} & \textbf{10} & \textbf{20} & \textbf{30} & \textbf{4} & \textbf{10} & \textbf{20} & \textbf{30} & \textbf{4} & \textbf{10} & \textbf{20} \\
    \hline
    \expglobalilp{} & 5.32 & 65.48 & 226.84 & 443.41 & 7.18 & 63.17 & 234.84 & 417.15 & 5.97 & 89.14 & 244.40 \\
    \textbf{\expdpalgo{} (= \expscalableheuristic{})} & 0.05 & 0.07 & 0.08 & 0.16 & 0.05 & 0.06 & 0.07 & 0.08 & 0.04 & 0.05 & 0.07 \\
    \bilp{}+\bpostclean{} & 4.31 & 43.77 & 187.56 & 401.77 & 5.83 & 45.29 & 241.47 & 364.47 & 4.88 & 43.49 & 217.83 \\
    \bdp{}+\bpostclean{} & 0.03 & 0.04 & 0.05 & 0.08 & 0.04 & 0.04 & 0.05 & 0.07 & 0.06 & 0.04 & 0.05 \\
    \bmuse{}+\bpostclean{} & 38218.90 & - & - & - & - & - & - & - & - & - & - \\
    \hline
    \end{tabularx}
\end{subtable}
\end{table*}

\paragraph{Runtime on Flight Data}
\begin{figure}[!htbp]
    \centering
    \begin{minipage}{\linewidth}
        \begin{subfigure}{0.5\linewidth}
        \captionsetup{aboveskip=0pt}
        \includegraphics[width=\linewidth]{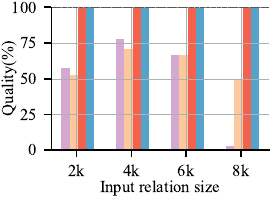}
        \caption{Repair quality}
            \label{fig:quality_flight}
        \end{subfigure}
        \begin{subfigure}{0.5\linewidth}
        \captionsetup{aboveskip=0pt}
            \includegraphics[width=\linewidth]{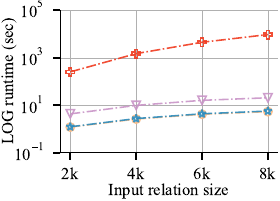}
            \caption{Runtime}
            \label{fig:time_flight}
        \end{subfigure}
    \end{minipage}
    \begin{minipage}{\linewidth}
        \centering
        \includegraphics[width=0.8\linewidth]{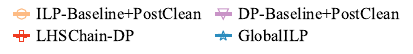}
    \end{minipage}
    \caption{Repair quality and runtime for Flight data (chain FDs)}
    \label{fig:quality_time_flight}
    \Description[]{}
\end{figure}

\Cref{fig:time_flight} shows the runtimes for the Flight dataset that has chain FDs. 
While both \dpalgo{} and \globalilp{} provides the optimal \rsrepair{}, 
\globalilp{} is faster than \dpalgo{}  (and other baselines) for smaller datasets \globalilp{}.
Since the tuples are quite distinct on the attribute "DF" (only 2 or 3 tuples sharing the same key), \dpalgo{} takes a much longer runtime to loop over all new repairs that are potential to be a candidate in the step of \commonlhsreduction{}. 
\dpalgo{} seems to take a longer runtime when the value of \commonlhs{} has a sparse domain.

\subsection{Experiments on Credit Card Transaction Dataset}\label{subsec:creditcard_experiment}
\subsubsection{Setup}
Credit card transaction~\cite{choksi_credit_2023} dataset provides detailed transaction records, including associated personal and merchant information, which can be used for fraud detection. 7 attributes are involved in our experiments.

We consider a non-chain FD set here: \{$\fd{Merchant}{Category}$;\:\\
$\fd{City}{State}$;\: $\fd{First,Last}{Gender}$;\: $\fd{First,Last}{City}$\}. we use an RC on the sensitive attribute $\att{Gender}$, $\{\%\texttt{Male} = \frac{20}{100}, \%\texttt{Female} = \frac{80}{100}\}$. The method of noise input generation is the same as that used in ACS and COMPAS data (\Cref{sec:expt-noise}).

\subsubsection{RS-repair Quality and Runtime Cost}

\begin{figure}[!htp]
\centering
\begin{minipage}{1\linewidth}
\centering
\begin{subfigure}{0.5\linewidth}
\includegraphics[width=\linewidth]{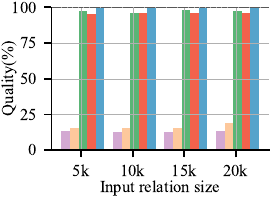}
\centering
\caption{Repair quality}
\label{fig:quality_credit}
\end{subfigure}%
\begin{subfigure}{0.5\linewidth}
\includegraphics[width=\linewidth]{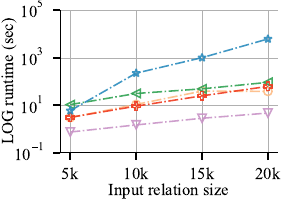}
\centering
\caption{Runtime}
\label{fig:runtime_credit}
\end{subfigure}%
\end{minipage}
\centering
\begin{minipage}{1\linewidth}
\centering
\includegraphics[width=0.9\linewidth]{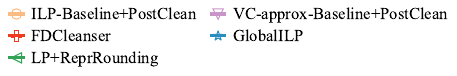}
\end{minipage}
\caption{Repair quality and runtime of different algorithms for Credit card transaction data (5\% noise)}
\label{fig:credicard_results}
\end{figure}
\Cref{fig:credicard_results} presents the repair quality and runtime of different algorithms in the Credit card transaction dataset under the noise level $5\%$. As shown in \Cref{fig:quality_credit}, both \explprepr{} and \expscalableheuristic{} achieve high quality above $95\%$, while \explprepr{} performs slightly better than \expscalableheuristic{}. However, the repair quality of \bilp{}+\\ \bpostclean{} and \bapprox{}+\bpostclean{} achieved is significantly lower than the optimal $100\%$ repair quality provided by \expglobalilp{}.

\Cref{fig:runtime_credit} demonstrates that the runtime of \expglobalilp{} increases exponentially as the input size increases, but \explprepr{} and \expdpalgo{} provide much more efficient alternatives. Moreover, \expdpalgo{} takes less time than \explprepr{} while achieving high repair quality as well.

\subsection{Additional experiments varying noise levels}

\begin{table*}[ht]
\centering
\caption{COMPAS: (Chain FD set) Repair Quality}
\label{tab:my_label4}
\footnotesize
\begin{tabularx}{\textwidth}{l|XXXXXXX|XXXXXXX}
\toprule
noise & \multicolumn{7}{c|}{15\%} & \multicolumn{7}{c}{20\%} \\
size & 4k & 6k & 8k & 10k & 20k & 30k & 40k & 4k & 6k & 8k & 10k & 20k & 30k & 40k \\
\midrule
\textbf{GlobalILP} & 100.00 & 100.00 & 100.00 & 100.00 & 100.00 & 100.00 & 100.00 & 100.00 & 100.00 & 100.00 & 100.00 & 100.00 & 100.00 & 100.00 \\
\textbf{LHSChain-DP} & 100.00 & 100.00 & 100.00 & 100.00 & 100.00 & 100.00 & 100.00 & 100.00 & 100.00 & 100.00 & 100.00 & 100.00 & 100.00 & 100.00 \\
\textbf{ILP-Baseline+PostClean} & 93.84 & 96.46 & 95.04 & 95.77 & 95.29 & 96.95 & 97.48 & 91.04 & 93.01 & 93.37 & 92.35 & 92.40 & 93.50 & 94.55 \\
\textbf{DP-Baseline+PostClean} & 93.84 & 96.46 & 95.04 & 95.77 & 95.29 & 96.95 & 97.48 & 91.04 & 93.01 & 93.37 & 92.35 & 92.40 & 93.50 & 94.55 \\
\textbf{MuSe-Baseline+PostClean} & 95.50 &   &   &   &   &   &   &   &   &   &   &   &   &   \\
\bottomrule
\end{tabularx}
\end{table*}

\begin{table*}[ht]
\centering
\caption{COMPAS: (Chain FD set) Run time cost}
\label{tab:my_label5}
\footnotesize
\begin{tabularx}{\textwidth}{l|XXXXXXX|XXXXXXX}
\toprule
noise & \multicolumn{7}{c|}{15\%} & \multicolumn{7}{c}{20\%} \\
size & 4k & 6k & 8k & 10k & 20k & 30k & 40k & 4k & 6k & 8k & 10k & 20k & 30k & 40k \\
\midrule
\textbf{GlobalILP} & 8.33 & 30.18 & 85.78 & 179.88 & 463.69 & 1376.47 & 1114.64 & 11.72 & 47.65 & 100.34 & 202.19 & 390.46 & 1590.58 & 2390.69 \\
\textbf{LHSChain-DP} & 0.05 & 0.06 & 0.08 & 0.07 & 0.11 & 0.15 & 0.18 & 0.06 & 0.08 & 0.08 & 0.11 & 0.11 & 0.28 & 0.36 \\
\textbf{ILP-Baseline+PostClean} & 6.70 & 18.10 & 36.64 & 65.74 & 336.20 & 770.82 & 1811.51 & 8.92 & 32.38 & 49.07 & 79.95 & 331.32 & 932.78 & 2463.53 \\
\textbf{DP-Baseline+PostClean} & 0.03 & 0.03 & 0.04 & 0.04 & 0.06 & 0.11 & 0.23 & 0.03 & 0.03 & 0.04 & 0.04 & 0.05 & 0.09 & 0.12 \\
\textbf{MuSe-Baseline+PostClean} & 74113.50 &   &   &   &   &   &   &   &   &   &   &   &   &   \\

\bottomrule
\end{tabularx}
\end{table*}

This section presents the quality and runtime for COMPAS data with larger noise levels, 15\% and 20\% for chain FDs in \Cref{fig:varynoise_quality,tab:my_label4,tab:my_label5}.
\begin{figure}[!htp]
\centering
\begin{minipage}{1\linewidth}
\centering
\begin{subfigure}{0.5\linewidth}
\includegraphics[width=\linewidth]{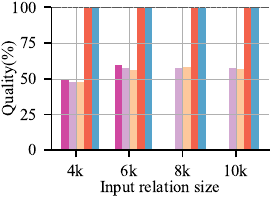}
\centering
\caption{ACS: chain FD set (15\% noise)}
\end{subfigure}%
\begin{subfigure}{0.5\linewidth}
\includegraphics[width=\linewidth]{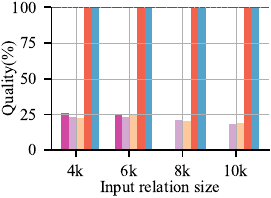}
\centering
\caption{ACS: chain FD set (20\% noise)}
\end{subfigure}%
\end{minipage}
\centering
\begin{minipage}{1\linewidth}
\centering
\includegraphics[width=0.8\linewidth]{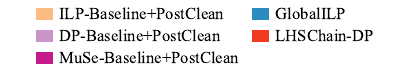}
\end{minipage}
\caption{Repair Quality while varying the noise level}
\label{fig:varynoise_quality}
\end{figure}

\cut{
\subsection{Previous result in main paper}

\begin{figure*}[!htb]
    \centering
    \begin{minipage}{\textwidth}
        \begin{subfigure}{0.25\textwidth}
        \captionsetup{aboveskip=0pt}
            \includegraphics[width=\textwidth]{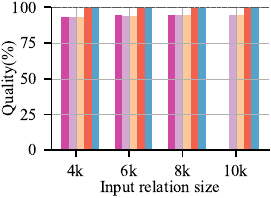}
            \caption{ACS: chain FD set (5\%)}
            \label{fig:quality_a}
        \end{subfigure}%
        \begin{subfigure}{0.25\textwidth}
        \captionsetup{aboveskip=0pt}
            \includegraphics[width=\textwidth]{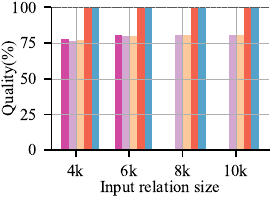}
            \caption{ACS: chain FD set (10\%)}
            \label{fig:quality_b}
        \end{subfigure}%
        \begin{subfigure}{0.25\textwidth}
        \captionsetup{aboveskip=0pt}
            \includegraphics[width=\textwidth]{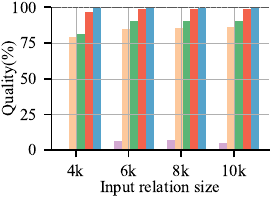}
            \caption{ACS: non-chain FD set (5\%)}
            \label{fig:quality_c}
        \end{subfigure}%
        \begin{subfigure}{0.25\textwidth}
        \captionsetup{aboveskip=0pt}
            \includegraphics[width=\textwidth]{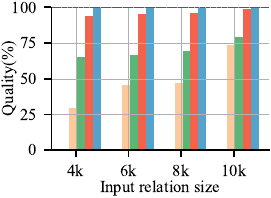}
            \caption{ACS: non-chain FD set (10\%)}
            \label{fig:quality_d}
        \end{subfigure}%
    \end{minipage}
    \begin{minipage}{\textwidth}
        \begin{subfigure}{0.25\textwidth}
        \captionsetup{aboveskip=0pt}
            \includegraphics[width=\textwidth]{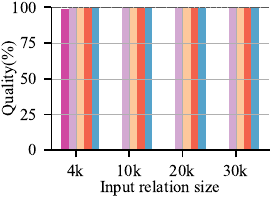}
            \caption{COMPAS: chain FD set (5\%)}
            \label{fig:quality_e}
        \end{subfigure}%
        \begin{subfigure}{0.25\textwidth}
        \captionsetup{aboveskip=0pt}
            \includegraphics[width=\textwidth]{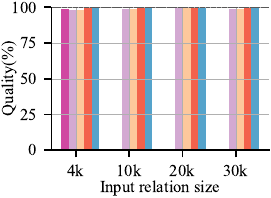}
            \caption{COMPAS: chain FD set (10\%)}
            \label{fig:quality_f}
        \end{subfigure}%
        \begin{subfigure}{0.25\textwidth}
        \captionsetup{aboveskip=0pt}
            \includegraphics[width=\textwidth]{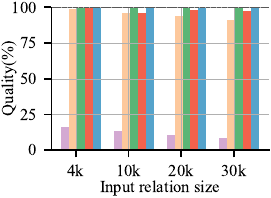}
            \caption{COMPAS: non-chain FD set (5\%)}
            \label{fig:quality_g}
        \end{subfigure}%
        \begin{subfigure}{0.25\textwidth}
        \captionsetup{aboveskip=0pt}
            \includegraphics[width=\textwidth]{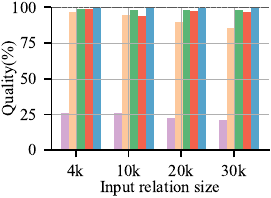}
            \caption{COMPAS: non-chain FD set (10\%)}
            \label{fig:quality_h}
        \end{subfigure}%
    \end{minipage}

    \begin{minipage}{\textwidth}
        \centering
        \includegraphics[width = 0.9\textwidth]{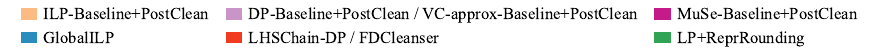}
    \end{minipage}
    \caption{Repair quality of different algorithms for ACS and COMPAS data, chain and non-chain FD sets, $5\%$ and $10\%$ noise levels. (The notation A1 / A2 means that the algorithm A1 is used for chain FD sets and A2 is used for non-chain FD sets.)}
    \label{fig:quality_combine}
    \Description[]{}
\end{figure*}

\begin{figure*}[!htb]
    \centering
    \begin{minipage}{\textwidth}
        \begin{subfigure}{0.25\textwidth}
        \captionsetup{aboveskip=0pt}
            \includegraphics[width=\textwidth]{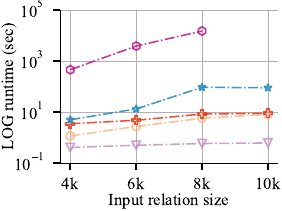}
            \caption{ACS: chain FD set (5\%)}
            \label{fig:time_a_appendix}
        \end{subfigure}%
        \begin{subfigure}{0.25\textwidth}
        \captionsetup{aboveskip=0pt}
            \includegraphics[width=\textwidth]{figures/experiments/revision/time_b_acs.pdf}
            \caption{ACS: chain FD set (10\%)}
            \label{fig:time_b_appendix}
        \end{subfigure}%
        \begin{subfigure}{0.25\textwidth}
        \captionsetup{aboveskip=0pt}
            \includegraphics[width=\textwidth]{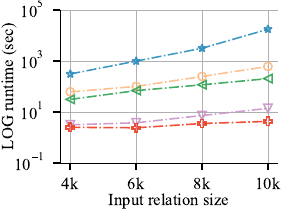}
            \caption{ACS: non-chain FD set (5\%)}
            \label{fig:time_c_appendix}
        \end{subfigure}%
        \begin{subfigure}{0.25\textwidth}
        \captionsetup{aboveskip=0pt}
            \includegraphics[width=\textwidth]{figures/experiments/revision/time_d_acs.pdf}
            \caption{ACS: non-chain FD set (10\%)}
            \label{fig:time_d_appendix}
        \end{subfigure}%
    \end{minipage}
    \begin{minipage}{\textwidth}
        \begin{subfigure}{0.25\textwidth}
        \captionsetup{aboveskip=0pt}
            \includegraphics[width=\textwidth]{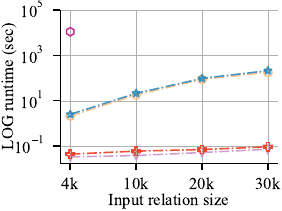}
            \caption{COMPAS: chain FD set (5\%)}
            \label{fig:time_e_appendix}
        \end{subfigure}%
        \begin{subfigure}{0.25\textwidth}
        \captionsetup{aboveskip=0pt}
            \includegraphics[width=\textwidth]{figures/experiments/revision/time_f_compas.pdf}
            \caption{COMPAS: chain FD set (10\%)}
            \label{fig:time_f_appendix}
        \end{subfigure}%
        \begin{subfigure}{0.25\textwidth}
        \captionsetup{aboveskip=0pt}
            \includegraphics[width=\textwidth]{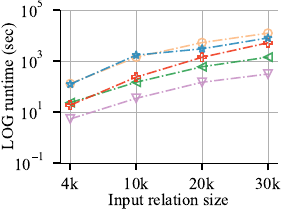}
            \caption{COMPAS: non-chain FD set (5\%)}
            \label{fig:time_g_appendix}
        \end{subfigure}%
        \begin{subfigure}{0.25\textwidth}
        \captionsetup{aboveskip=0pt}
            \includegraphics[width=\textwidth]{figures/experiments/revision/time_h_compas.pdf}
            \caption{COMPAS: non-chain FD set (10\%)}
            \label{fig:time_h_appendix}
        \end{subfigure}%
    \end{minipage}
    \begin{minipage}{\textwidth}
        \centering
        \includegraphics[width = 0.9\textwidth]{figures/experiments/revision/legend_time.pdf}
    \end{minipage}
    \caption{Runtime cost of different algorithms for ACS and COMPAS data, chain and non-chain FD sets, $5\%$ and $10\%$ noise levels.}
    \label{fig:time_combine}
    \Description[]{}
\end{figure*}
}
\section{Details from Section~8}
\subsection{Discussion on Multiple Sensitive Attributes}

We discuss several next steps for the extension of multiple sensitive attributes. First, a few definitions need to be extended: {(i)} $b$ sensitive attributes, denoted by $A_{s_1}, \dots, A_{s_b}$, of $\schema{}$ and $b$ associated RCs, denoted by $\rc{}_1, \dots, \rc{}_b$; {(ii)} the second item in \Cref{def:repr_optsrepair}, i.e., the definition of \rsrepair{}, will be changed to \textit{$R'$ satisfies all $b$ RCs, i.e., $\rc{}_1, \dots, \rc{}_b$.} In other words, a \rsrepair{} preserves the marginal distribution, given by the RC, for every sensitive attribute. 
Second, cases where handling a single sensitive attribute is NP-hard remain NP-hard when extended to multiple sensitive attributes. The rationale behind this is that the single sensitive attribute problem can be viewed as a special instance of the multiple sensitive attributes problem. 
Third, \globalilp{} and its ILP (\Cref{ilp:global}) can be extended to handle the problems with $b$ RCs, $\rc{}_1, \dots, \rc{}_b$. The constraint in the middle will be changed to:
$$\sum_{i: t_i[A_{s_j}] = a_\ell}{\hspace{-1em}x_i} ~~\geq~~ \rc{}_{j}(a_{\ell}) \times \sum_{i}{x_i} \quad \text{for all } j \in [1, b] \text{ and } a_\ell \in \dom(A_{s_j})$$
Fourth, the representation of a tuple becomes multi-dimensional, so the impact of deleting it or preserving it is also multi-dimensional, which further complicates the extension of \postclean{} and \candidateset{}.
To be specific, in the cases with a single sensitive attribute, \postclean{} can greedily decide on the tuples to delete until it satisfies the RC, but it is unclear that which set of tuples to remove when there are multiple RCs. The cause is, for multiple sensitive attributes, deleting a tuple reduces the representation of a combination of sensitive values. For example, considering a tuple with two sensitive values, $\texttt{Asian}$ and $\texttt{Female}$, deleting it might not be ideal if $\texttt{Asian}$ is over-represented on $\att{Race}$ but $\texttt{Female}$ is under-represented in $\att{Sex}$. Hence, the extension of DP is possible but not trivial. Essentially, we can keep track of more information for all the sensitive attributes, but it is unclear how to extend the definition of \candidateset{} (\Cref{def:candset}), which means it is uncertain to declare if one \srepair{} is representatively equivalent to or representatively dominates the other. Therefore, the size of the candidate set is unknown and consequently the complexity of DP is unknown.

\end{document}